\newtheorem{theorem}{Theorem}
\newtheorem{lemma}{Lemma}
\newtheorem{corollary}{Corollary}
\newtheorem{definition}{Definition}
\let\originalleft\left
\let\originalright\right
\renewcommand{\left}{\mathopen{}\mathclose\bgroup\originalleft}
\renewcommand{\right}{\aftergroup\egroup\originalright}
\newcommand{\mc}{\mathcal}
\newcommand{\tr}{\mathrm{Tr}}
\newcommand{\id}{\mathbb{I}}
\newcommand{\s}{\mathbb{S}}
\renewcommand{\d}{\mathrm{d}}
\newcommand{\normH}[1]{{\left\vert\kern-0.25ex\left\vert\kern-0.25ex\left\vert #1
   \right\vert\kern-0.25ex\right\vert\kern-0.25ex\right\vert}}
\begin{document}

\let\oldacl\addcontentsline
\renewcommand{\addcontentsline}[3]{}

% --------------------  TITLE  --------------------

\title{Hamiltonian simulation with random inputs}

\author{Qi Zhao}
\affiliation{Joint Center for Quantum Information and Computer Science, University of Maryland, College Park, Maryland 20742, USA}
\author{You Zhou}
\affiliation{Nanyang Quantum Hub, School of Physical and Mathematical Sciences, Nanyang Technological University, Singapore 637371}
\author{Alexander F. Shaw}
\affiliation{Joint Center for Quantum Information and Computer Science, University of Maryland, College Park, Maryland 20742, USA}
\author{Tongyang Li}
\affiliation{Center on Frontiers of Computing Studies, Peking University, Beijing 100871, China}
\affiliation{School of Computer Science, Peking University, Beijing 100871, China}
\affiliation{Center for Theoretical Physics, Massachusetts Institute of Technology, Cambridge, MA 02139, USA}
\author{Andrew M. Childs}
\affiliation{Joint Center for Quantum Information and Computer Science, University of Maryland, College Park, Maryland 20742, USA}
\affiliation{Department of Computer Science and Institute for Advanced Computer Studies, University of Maryland, College Park, Maryland 20742, USA}

\begin{abstract}
The algorithmic error of digital quantum simulations is usually explored in terms of the spectral norm distance between the actual and ideal evolution operators. In practice, this worst-case error analysis may be unnecessarily pessimistic. To address this, we develop a theory of average-case performance of Hamiltonian simulation with random initial states. We relate the average-case error to the Frobenius norm of the multiplicative error
and give upper bounds for the product formula (PF) and truncated Taylor series methods. As applications, we estimate average-case error for digital Hamiltonian simulation of general lattice Hamiltonians and $k$-local Hamiltonians. In particular, for the nearest-neighbor Heisenberg chain with $n$ spins, the error is quadratically reduced from $\mathcal O(n)$ in the worst case to $\mathcal O(\sqrt{n})$ on average for both the PF method and the Taylor series method. Numerical evidence suggests that this theory accurately characterizes the average error for concrete models. We also apply our results to error analysis in the simulation of quantum scrambling.

\end{abstract}

\maketitle

%%%%%%%%%%%%%%%%%%%%%%%%%%%%%%%%%%%%%%%%%%%%%%%%%%
\section{Introduction}
Simulating the time evolution of quantum systems is one of the most promising
applications of quantum computers \cite{Feynman1982}.
Quantum simulation allows quantum computers to efficiently mimic quantum dynamics, a task that is believed to be classically intractable.
Quantum simulation could be applied to study numerous systems, including spin models~\cite{somma2002simulating}, fermionic lattice models~\cite{PhysRevA.92.062318}, quantum chemistry~\cite{PhysRevA.90.022305,babbush2015chemical,babbush2014adiabatic},  and quantum field theories~\cite{jordan2012quantum}.

Following the first concrete digital quantum simulation algorithm proposed by Lloyd \cite{sethuniversal}, many improved algorithms have been developed \cite{berry2007efficient,berry2012black,TaylorSeries,PhysRevLett.118.010501,low2019hamiltonian}.
Algorithms are now known that have optimal or nearly optimal gate complexity with respect to several key parameters \cite{Berry15optimal,childs2019nearly,PhysRevLett.118.010501,low2019hamiltonian,low2019STOC}.
However, since outperforming classical computers requires controlling many qubits with high accuracy,
it is challenging to realize digital quantum simulation on current hardware.

In particular, consider a Hamiltonian of the form $H=\sum^L_{l=1} H_l$, $\|H_l\|\le 1$. Most known algorithms have algorithmic error scaling at least linearly with the number of terms $L$, making simulations of large systems costly or inaccurate \cite{TaylorSeries,PhysRevLett.118.010501,low2019hamiltonian,childs2018toward,childs2020theory}.
Typical error analysis quantifies error in terms of the spectral norm, which characterizes the worst-case input and output in a Hamiltonian simulation problem.
However, such an error bound can be pessimistic in practice, especially given prior knowledge of input states or measurements. In particular, error bounds can be tightened if the initial state is in a low-energy subspace \cite{sahinoglu2020hamiltonian}, within the $\eta$-electron manifold \cite{su2020nearly}, for measurements of local observables \cite{heyl2019quantum,chen2020quantum}, and for simulations of unbounded time-dependent Hamiltonians \cite{an2021time}.

Alternatively, instead of considering the worst-case error, it is natural to quantify performance in terms of the average error for instances drawn at random from some ensemble.
In this work, we study such average-case performance of Hamiltonian simulation algorithms. While mathematically one can consider states drawn from the Haar measure, our analysis only requires the much weaker 1-design property, i.e.,
indistinguishability from the Haar measure given a single copy of the state. Note that many easily prepared sets of states form 1-designs. For instance, a locally random state $\bigotimes_{i=1}^n U_i\ket{0}^{\otimes n}$, with each $U_i$ being a single-qubit Haar-random unitary, forms a 1-design. The uniform distribution over any orthonormal basis, such as the computational basis ensemble
$\mathcal{E}=\left\{(\frac{1}{2^n},\ket{0\ldots00}),(\frac{1}{2^n},\ket{0\ldots01}),\dots, (\frac{1}{2^n},\ket{1\ldots11})\right\}$, also gives a 1-design.
In general, we relate the average performance to the Frobenius norm of the multiplicative error.
Intuitively, whereas worst-case error bounds scale with the largest eigenvalue of the multiplicative error, the Frobenius norm captures the \emph{average} (specifically, root mean square) eigenvalue.

We upper bound the average error for both the $p$th-order product formula (PF$p$) method and the truncated Taylor series method \cite{TaylorSeries}.
For PF$p$, we give a bound in terms of the sum of the Frobenius norms of the $(p+1)$-layer nested commutators.
In particular, we give bounds for PF$1$ and PF2 with detailed prefactors.
Similarly to the worst-case analysis \cite{Tran_2020}, for the average case we also observe destructive error interference in a nearest-neighbor Hamiltonian with the PF1 method, further tightening the error bound.
In particular, we show that for a one-dimensional nearest-neighbor Heisenberg Hamiltonian with $n$ spins, for both the PF and Taylor series methods, average-case analysis gives error $\mathcal O(\sqrt{n})$, quadratically smaller than the worse-case error of $\mathcal O(n)$.

We also explore the simulation of general $k$-local Hamiltonians and power-law Hamiltonians.
Numerical results suggest that our analytical bounds are not far from tight in typical examples.
Moreover, our techniques can be directly used to tighten the Trotter error and reduce the gate complexity in studying out-of-time-order correlators in scrambling physics. We hope that our techniques can inspire further improvements to the analytical performance of quantum simulation algorithms by leveraging particular features of specific simulations.

%%%%%%%%%%%%%%%%%%%%%%%%%%%%%%%%%%%%%%%%%%%%%%%%%%
\section{Worst- and average-case error}
Quantum simulation aims to realize the time-evolution operator $U_0(t):=e^{-iHt}$ of a given Hamiltonian $H$. This task is crucial for studying both dynamic and static properties.
In most previous methods, the distance between the ideal evolution $U_0(t)$ and the approximated evolution $U(t)$ (as implemented by some Hamiltonian simulation algorithm) is quantified by the spectral (i.e., operator) norm $\|\cdot\|$, which is the largest singular value. This measure captures the error for the worst-case input state, since
\begin{align}
W(U,U_0)
:=\|U-U_0\|
&=\max_{\ket{\psi}}\|{U\ket{\psi}-U_0\ket{\psi}}\|_{2},
\end{align}
where $\|\cdot\|_{2}$ denotes the $\ell_2$ (i.e., Euclidean) norm.

However, the worst-case error may be a significant overestimate for some initial states \cite{an2021time}. Instead, we consider the typical error for input states chosen at random from some ensemble $\mathcal{E}=\{(p_{i}, \phi_{i})\}$, defined as
\begin{equation}
\begin{aligned}\label{Eq:definition}
R(U,U_0)&:= \mathbb{E}_{\mathcal{E}}[\mathcal{D}(U\ket{\psi},U_0\ket{\psi})],
\end{aligned}
\end{equation}
where $\mathcal{D}$ is a distance measure (which we take to be either the $\ell_2$ norm or the trace norm).
Specifically, we study the average error and its variance, and relate them to the Frobenius norm $\|X\|_F := \sqrt{\tr (XX^{\dagger})}$ (also known as the Hilbert-Schmidt norm).

For concreteness, consider the $\ell_2$ norm. Using the Cauchy-Schwartz inequality, we have the upper bound $R(U,U_0)\le \smash{\left(\mathbb{E}_{\psi}
\|{U\ket{\psi}-U_0\ket{\psi}}\|_{2}^2\right)^{1/2}}$ for the average $\ell_2$ error. We define the variable inside the square root as
\begin{equation}\label{Eq:defS}
S(\psi):=\|{U\ket{\psi}-U_0\ket{\psi}}\|_{2}^2=2-\braket{\psi|U^{\dagger}U_0+U^{\dagger}_0U|\psi}
\end{equation}
and calculate its expectation and variance as follows.
\begin{theorem}\label{M_Th:l2}
For an input state drawn randomly from a 1-design ensemble $\mathcal{E}$ in an $n$-qubit Hilbert space with dimension $d:=2^n$, the average $\ell_2$ error between the ideal evolution $U_0$ and its approximation $U$ satisfies
\begin{equation}
\begin{aligned}
R(U,U_0)\le [\mathbb{E}_{\psi} S(\psi)]^{\frac1{2}}=
\frac{1}{\sqrt{d}} \|\mathscr{M}\|_F,    \\
 \end{aligned}
\end{equation}
where $\mathscr{M}$ is the multiplicative error, i.e.,  $U=U_0(\id+\mathscr{M})$. When the input ensemble is a 2-design, the variance of $S(\psi)$ has the upper bound $ \mathrm{Var}(S(\psi)) \leq \frac{4\|\mathscr{M}\|_F^2}{d(d+1)}$.
\end{theorem}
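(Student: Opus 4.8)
The plan is to first collapse $S(\psi)$ into a single quadratic form in $\ket{\psi}$ and then feed it into the design identities. Since $U_0$ is unitary and $U = U_0(\id+\mathscr{M})$, we have $U\ket{\psi}-U_0\ket{\psi} = U_0\mathscr{M}\ket{\psi}$, so $S(\psi) = \|U_0\mathscr{M}\ket{\psi}\|_2^2 = \|\mathscr{M}\ket{\psi}\|_2^2 = \braket{\psi|\mathscr{M}^{\dagger}\mathscr{M}|\psi}$ (one may also reach this from Eq.~\eqref{Eq:defS} using $U^{\dagger}U_0+U_0^{\dagger}U = 2\id+\mathscr{M}+\mathscr{M}^{\dagger}$ together with $\mathscr{M}+\mathscr{M}^{\dagger} = -\mathscr{M}^{\dagger}\mathscr{M}$, which is the unitarity of $\id+\mathscr{M}$). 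The first moment is then immediate: a $1$-design satisfies $\mathbb{E}_{\psi}\ket{\psi}\bra{\psi} = \id/d$, so $\mathbb{E}_{\psi}S(\psi) = \tr(\mathscr{M}^{\dagger}\mathscr{M}\,\id/d) = \|\mathscr{M}\|_F^2/d$. Combining with $R(U,U_0)=\mathbb{E}_{\psi}\sqrt{S(\psi)}\le\sqrt{\mathbb{E}_{\psi}S(\psi)}$ (Cauchy--Schwarz, i.e.\ concavity of the square root) gives the claimed $R(U,U_0)\le\|\mathscr{M}\|_F/\sqrt{d}$.

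For the variance I would invoke the second-moment identity for a $2$-design, $\mathbb{E}_{\psi}(\ket{\psi}\bra{\psi})^{\otimes 2} = (\id+\mathbb{S})/(d(d+1))$, where $\mathbb{S}$ is the swap operator on the doubled space. Writing $A:=\mathscr{M}^{\dagger}\mathscr{M}\succeq 0$, we have $S(\psi)^2 = \tr[(A\otimes A)(\ket{\psi}\bra{\psi})^{\otimes 2}]$, hence $\mathbb{E}_{\psi}S(\psi)^2 = [\tr(A)^2+\tr(A^2)]/(d(d+1))$ using $\tr[(A\otimes A)\mathbb{S}]=\tr(A^2)$. Subtracting $(\mathbb{E}_{\psi}S(\psi))^2=\tr(A)^2/d^2$, the coefficient of $\tr(A)^2$ becomes $1/(d(d+1))-1/d^2 = -1/(d^2(d+1))<0$, so $\mathrm{Var}(S(\psi)) = \tr(A^2)/(d(d+1)) - \tr(A)^2/(d^2(d+1)) \le \tr(A^2)/(d(d+1))$.

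It remains to bound $\tr(A^2)$ by $4\|\mathscr{M}\|_F^2 = 4\tr(A)$. This is the only genuinely problem-specific step: because $U$ and $U_0$ are unitary, $\mathscr{M}=U_0^{\dagger}U-\id$ is normal with $\|\mathscr{M}\|\le\|U_0^{\dagger}U\|+1=2$, so every eigenvalue $\lambda_i$ of $A$ obeys $0\le\lambda_i\le 4$; therefore $\tr(A^2)=\sum_i\lambda_i^2\le 4\sum_i\lambda_i=4\tr(A)$. Substituting gives $\mathrm{Var}(S(\psi))\le 4\|\mathscr{M}\|_F^2/(d(d+1))$.

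I expect the routine arithmetic of the $2$-design moment (keeping the $\id+\mathbb{S}$ structure and the $d(d+1)$ normalization straight, and checking that the $\tr(A)^2$ contributions cancel with the correct sign) to be the most error-prone part of writing this out carefully; the spectral bound $\|\mathscr{M}\|\le 2$ is the one substantive ingredient and follows at once from unitarity, with no dependence on which algorithm produced $U$.
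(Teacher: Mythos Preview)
Your proof is correct. The expectation part is essentially the paper's argument, though you take the slight shortcut of rewriting $S(\psi)=\braket{\psi|\mathscr{M}^{\dagger}\mathscr{M}|\psi}$ directly rather than working with $2-\braket{\psi|U_0^{\dagger}U+U^{\dagger}U_0|\psi}$ and then invoking unitarity at the end.

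For the variance, your route is genuinely different from the paper's. The paper applies the $2$-design identity to $(A+A^{\dagger})^{\otimes 2}$ with $A=U_0^{\dagger}U=\id+\mathscr{M}$, obtains an expression containing $\tr(\mathscr{M}^2)+\tr(\mathscr{M}^{\dagger 2})$, and closes with the Cauchy--Schwarz-type inequality $\tr(\mathscr{M}^2),\tr(\mathscr{M}^{\dagger 2})\le\tr(\mathscr{M}\mathscr{M}^{\dagger})$. You instead apply the $2$-design identity directly to the positive operator $A=\mathscr{M}^{\dagger}\mathscr{M}$, obtain $\mathrm{Var}(S)\le\tr(A^{2})/(d(d+1))$, and close with the spectral bound $\tr(A^{2})\le\|A\|\tr(A)\le 4\tr(A)$ from $\|\mathscr{M}\|\le 2$. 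Your approach is arguably cleaner because it stays with a single Hermitian PSD operator throughout and avoids juggling the separate $\mathscr{M}$ and $\mathscr{M}^{\dagger}$ traces; the paper's approach, on the other hand, never invokes an operator-norm bound on $\mathscr{M}$ and relies only on the Frobenius inner-product inequality. Both arrive at exactly the same constant $4/(d(d+1))$.
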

Armed with the mean and variance, we know that for an input state $\phi$ drawn from $\mathcal{E}$, the squared error $S(\phi)$ is far from the mean value $\mathbb{E}_{\psi} S(\psi)$ with small probability due to the Chebyshev inequality.

This theorem can also be extended to the case where the input state is chosen at random from a subsystem of an $n$-qubit Hilbert space. Specifically, we have the upper bound
$R^{\Pi}(U,U_0)\le \frac{1}{\sqrt{d_1}} \|\mathscr{M}\Pi\|_F$, where $\Pi$ projects onto a subsystem of dimension $d_1$.
These subsystem results can be used to analyze the Trotter error in out-of-time-order correlator (OTOC) problems, as discussed in Section~\ref{sec:applications}. More generally, in Appendix~\ref{Sec:Average} we give results for the trace norm, approximate 1-design ensembles, and the average error with random inputs and random projections.

%%%%%%%%%%%%%%%%%%%%%%%%%%%%%%%%%%%%%%%%%%%%%%%%%%
\section{Average error in Hamiltonian simulation algorithms}
Two major classes of digital Hamiltonian simulation algorithms include product formula (PF) simulations and methods using linear combinations of unitaries (LCU) to directly implement the Taylor series \cite{TaylorSeries}. Here we focus on the average error in PF$p$ and Taylor series methods.

For a short evolution time $t$, the PF1 algorithm for a Hamiltonian $\sum_{l=1}^LH_l$ applies the unitary operation
\begin{equation}
\mathscr{U}_1(t):=e^{-iH_1t}e^{-iH_2t}\cdots e^{-iH_Lt}
=\overrightarrow{\prod_l}e^{-iH_lt}.
\end{equation}
Here the right arrow indicates the product is in the order of increasing indices. Similarly, we write $\overleftarrow{\prod}_l$ to denote a product in decreasing order.
Suzuki's high-order product formulas are defined recursively by
\begin{align}\label{M_Eq:highorder}
\mathscr{U}_2(t)&:=\overrightarrow{\prod_l}e^{-iH_lt/2} \overleftarrow{\prod_l}e^{-iH_lt/2},    \\
\mathscr{U}_{2k}(t)&:= [\mathscr{U}_{2k-2}(p_k t)]^2 \mathscr{U}_{2k-2}((1-4p_k)t) [\mathscr{U}_{2k-2}(p_k t)]^2, \nonumber
\end{align}
where $p_k:=\frac{1}{4-4^{1/(2k-1)}}$ for $k>1$ \cite{suzuki1991general}.
Overall, with $S=2\cdot 5^{k-1}$ stages, the evolution has the form
\begin{equation}
  \mathscr{U}_{2k}(t)=  \prod_{s=1}^{S} \prod_{l=1}^{L} e^{-it a_s H_{\pi_s(l)}}.
  \label{M_eq:pf2k}
\end{equation}
In each stage $s$, we implement evolution according to the terms in increasing or decreasing index order (specified by $\pi_s$, which is either trivial or the reversal) for time $t a_s$.
For a long time $t$, we divide the evolution into $r$ steps and apply the given product formula $r$ times, approximating the evolution as $\mathscr{U}_{2k}^r(t/r)$.
We find the following bound on the average-case performance of PF$p$, where $p=1$ or $p=2k$.

\begin{theorem}\label{M_Th:general}
For the PF$p$ simulation $\mathscr{U}_{p}^r(t/r)$ specified by \eqref{M_eq:pf2k},
the average error in the $\ell_2$ norm for a 1-design input ensemble has the asymptotic upper bound
$R= \mathcal O\left(T_p{t^{p+1}}/{r^{p}}\right)$,
where
\begin{align}\label{eq:Th2sum}
\hspace{-2mm}
T_p:=\sum_{l_1,\dots,l_{p+1}=1}^L
\frac{1}{\sqrt{d}} \left\|[H_{l_1},[H_{l_2},\dots,[H_{l_p},H_{l_{p+1}}]]] \right\|_F.
\end{align}
Therefore $r=\mathcal O\bigl(T_p^{\frac{1}{p}}t^{1+\frac{1}{p}}\varepsilon^{-\frac{1}{p}}\bigr)$ segments suffice to ensure average error at most $\varepsilon$.
Furthermore, the same bounds hold for the average error with respect to the trace norm.
\end{theorem}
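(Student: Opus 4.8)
The plan is to reduce to Theorem~\ref{M_Th:l2} and then bound the Frobenius norm of the PF$p$ multiplicative error by carrying the commutator-scaling analysis of Trotter error over to the Frobenius norm. By Theorem~\ref{M_Th:l2} we have $R\le\frac{1}{\sqrt d}\|\mathscr M\|_F$, where $\id+\mathscr M=e^{iHt}\mathscr U_p^r(t/r)$; since $\|\cdot\|_F$ is unitarily invariant, $\|\mathscr M\|_F=\|\mathscr U_p^r(t/r)-e^{-iHt}\|_F$. First I would peel off the $r$ Trotter steps: writing $V:=\mathscr U_p(t/r)$ and $V_0:=e^{-iHt/r}$, both unitary, the telescoping identity $V^r-V_0^r=\sum_{j=0}^{r-1}V^{\,j}(V-V_0)V_0^{\,r-1-j}$ together with unitary invariance of $\|\cdot\|_F$ gives $\|\mathscr M\|_F\le r\,\|\mathscr A\|_F$, where $\mathscr A:=\mathscr U_p(t/r)-e^{-iHt/r}$ is the single-step additive error.

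The core step is to show $\|\mathscr A\|_F=\mathcal O\bigl((t/r)^{p+1}\sqrt d\,T_p\bigr)$. Here I would invoke the variation-of-parameters representation of product-formula error (the ``theory of Trotter error''): for a $p$th-order formula, $\mathscr A(\tau)$ equals a finite sum of nested time integrals, over a region of volume $\mathcal O(\tau^{p+1})$, of operators of the form $W\,\mathscr C\,W'$, where $W,W'$ are unitary partial products of the exponentials appearing in $\mathscr U_p$ \eqref{M_eq:pf2k} and $\mathscr C$ is built from the $(p+1)$-fold nested commutators $[H_{l_1},[H_{l_2},\dots,[H_{l_p},H_{l_{p+1}}]]]$. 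For PF1 this is elementary; for PF$2k$ one uses that the order conditions force all Taylor coefficients of $e^{iH\tau}\mathscr U_p(\tau)$ of degrees $1,\dots,p$ to vanish, so the leading surviving contribution collapses into these nested commutators with a constant $C_p$ depending only on $p$ (and the fixed structure constants for PF$2k$). Because $W,W'$ are unitary, applying the triangle inequality to the integral and using $\|W\,\mathscr C\,W'\|_F=\|\mathscr C\|_F$ strips them off, and $\|\mathscr C\|_F\le\sum_{l_1,\dots,l_{p+1}}\|[H_{l_1},[H_{l_2},\dots,[H_{l_p},H_{l_{p+1}}]]]\|_F=\sqrt d\,T_p$, so that
\begin{equation*}
\|\mathscr A\|_F\le C_p\,(t/r)^{p+1}\sqrt d\,T_p .
\end{equation*}

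Combining the two steps, $R\le\frac{1}{\sqrt d}\|\mathscr M\|_F\le\frac{r}{\sqrt d}\|\mathscr A\|_F\le C_p\,T_p\,t^{p+1}/r^{p}$, which is the claimed $\mathcal O(T_p t^{p+1}/r^{p})$ bound; requiring the right-hand side to be at most $\varepsilon$ yields $r=\mathcal O\bigl(T_p^{1/p}t^{1+1/p}\varepsilon^{-1/p}\bigr)$. For the trace norm, the trace distance between the pure states $U\ket\psi$ and $U_0\ket\psi$ equals $\sqrt{1-|\braket{\psi|U_0^\dagger U|\psi}|^2}$, which is at most $\|U\ket\psi-U_0\ket\psi\|_2$; taking the expectation over $\mathcal E$ and reusing the $\ell_2$ bound (together with the trace-norm extension of Theorem~\ref{M_Th:l2} from the appendix) gives the same asymptotics.

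I expect the main obstacle to be the second step: carrying the Trotter-error operator calculus through with $\|\cdot\|$ replaced by $\|\cdot\|_F$, so that $\|\mathscr A\|_F$ is controlled by the \emph{Frobenius} norms of nested commutators, i.e.\ by $T_p$, rather than by an unstructured $\mathcal O(\tau^{p+1}L^{p+1})$ estimate. The manipulations carry over because the only properties actually used are unitary invariance and the mixed submultiplicativity $\|AB\|_F\le\min\{\|A\|\,\|B\|_F,\|A\|_F\,\|B\|\}$, both of which hold for $\|\cdot\|_F$; the most delicate bookkeeping is for PF$2k$, where one must track the recursion \eqref{M_Eq:highorder} and its time-reversal symmetry to verify that the degree-$(p+1)$ remainder indeed reorganizes into the nested commutators summed in $T_p$.
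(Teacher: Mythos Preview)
Your proposal is correct and rests on the same core input as the paper's proof: the integral representation of the PF$p$ error from the commutator-scaling Trotter theory of \cite{childs2020theory}, which writes the single-step error as time integrals of unitarily conjugated $(p{+}1)$-fold nested commutators. Where you differ is in the bookkeeping. You work linearly with $\|\cdot\|_F$ throughout: you telescope $V^r-V_0^r$ and use unitary invariance of $\|\cdot\|_F$ for the multi-segment step, and inside a single segment you apply the triangle inequality to the integral and use $\|W\mathscr C W'\|_F=\|\mathscr C\|_F$. The paper instead expands $\tr(\mathscr M\mathscr M^\dagger)$ as a double sum over commutator indices, bounds each cross term $|\tr(E_{\vec j}N_{\vec j}F_{\vec j}F_{\vec j'}^\dagger N_{\vec j'}^\dagger E_{\vec j'}^\dagger)|$ via the Cauchy--Schwarz inequality (their Lemma~\ref{Lemma:traceproduct}) so that the double sum factors into the square of a single sum, and handles the $r$ segments at the level of the averaged error $R$ using the left-invariance of the 1-design average (their Lemma~\ref{Lemma:segment2}). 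The two routes give identical bounds; yours is slightly more elementary since it avoids the bilinear expansion and never needs to invoke invariance of the input ensemble for the multi-segment step. The paper's route, on the other hand, makes explicit that the multi-segment triangle inequality already holds for $R$ itself (not just its Frobenius-norm upper bound), which is what justifies the 1-design claim in the theorem statement without passing through the full-space $\|\mathscr M\|_F$.
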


For comparison, the worst-case spectral norm error \cite{childs2020theory} is $W(\mathscr{U}_{p}^r(t/r),U_0(t))
=\mathcal O\left(\alpha_{\mathrm{comm},p}t^{p+1}/r^{p}\right)$,
where
\begin{align}
\hspace{-1mm}\alpha_{\mathrm{comm},p}:=\sum_{l_1,\dots,l_{p+1}=1}^L \left\|[H_{l_1},[H_{l_2},\dots,[H_{l_p},H_{l_{p+1}}]]]\right\|.
\end{align}
Observe that $T_p \le \alpha_{\mathrm{comm},p}$, and $\alpha_{\mathrm{comm},p}$ can be much larger than $T_p$.
For instance, for an $n$-qubit nearest-neighbor Hamiltonian,  $\alpha_{\mathrm{comm},p}=\mathcal O(n)$ and $T_p=\mathcal O(\sqrt{n})$. More examples are given in Section~\ref{sec:applications} below.

Note that the summation in Eq.~\eqref{eq:Th2sum} in Theorem~\ref{M_Th:general} appears outside of the Frobenius norm of nested commutators. In the following, for PF1 and PF2 we tighten the bounds by moving some summations inside the Frobenius norm and giving concrete prefactors.

\begin{theorem}[Triangle bound]\label{Th:PF12}
For the PF1 and PF2 algorithms, the average error in the $\ell_2$ norm has the upper bounds
\begin{equation*}
R(\mathscr{U}_1^r(t/r),U_0(t))\le \frac{t^2}{r}T'_1,~R(\mathscr{U}_2^r(t/r),U_0(t))\le \frac{t^3}{r^2}T'_2,
\end{equation*}
where
\begin{align}
T'_1&:=\frac{1}{2\sqrt{d}}\sum_{l_1=1}^{L-1}
\|[H_{l_1},\sum_{l_2=l_1+1}^{L}H_{l_2}]\|_F,\\
T'_2&:=\frac{1}{12\sqrt{d}} \sum_{l_1=1}^{L}\|[\sum_{l_2=l_1+1}^L H_{l_2},[\sum_{l_2=l_1+1}^L H_{l_2},H_{l_1}]]\|_F\nonumber\\
&\quad+  \frac{1}{24\sqrt{d}}  \sum_{l_1=1}^{L}
\| [H_{l_1},[H_{l_1},\sum_{l_2=l_1+1}^L H_{l_2}]]  \|_F.
\end{align}
\end{theorem}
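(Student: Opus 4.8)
The plan is to chain three reductions. First, pass from the average $\ell_2$ error to the Frobenius norm of the multiplicative error via Theorem~\ref{M_Th:l2}; then apply a triangle (telescoping) inequality across the $r$ Trotter segments; then a second telescoping that peels off the summands $H_l$ one at a time, reducing everything to a single Frobenius-norm bound on the error of splitting one exponential. Concretely, with $U=\mathscr{U}_p^r(t/r)$ and $U_0=e^{-iHt}$ the multiplicative error is $\mathscr{M}=e^{iHt}\mathscr{U}_p^r(t/r)-\id$, so by unitary invariance of $\|\cdot\|_F$, $\|\mathscr{M}\|_F=\|\mathscr{U}_p^r(t/r)-e^{-iHt}\|_F$; writing $\mathscr{V}:=\mathscr{U}_p(t/r)$, $W:=e^{-iHt/r}$ and using the identity $\mathscr{V}^r-W^r=\sum_{j=1}^{r}W^{j-1}(\mathscr{V}-W)\mathscr{V}^{r-j}$ together with $\|AXB\|_F\le\|A\|\,\|X\|_F\,\|B\|$ and unitarity of $W,\mathscr{V}$, we get $\|\mathscr{M}\|_F\le r\,\|\mathscr{U}_p(t/r)-e^{-iHt/r}\|_F$. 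Combined with Theorem~\ref{M_Th:l2}, this leaves, for $\tau:=t/r$, the single-step bounds
\[
\tfrac{1}{\sqrt d}\|\mathscr{U}_1(\tau)-e^{-iH\tau}\|_F\le\tfrac{\tau^2}{2\sqrt d}\sum_{l_1=1}^{L-1}\bigl\|[H_{l_1},\textstyle\sum_{l_2>l_1}H_{l_2}]\bigr\|_F
\]
and the depth-$2$ analogue matching $T'_2$; multiplying by $r$ and using $r\tau^2=t^2/r$, $r\tau^3=t^3/r^2$ then gives the theorem.

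For the term-by-term peeling, set $G_l:=\sum_{l'\ge l}H_{l'}$ (so $G_1=H$, $G_{L+1}=0$). For PF1 let $X_l:=\bigl(\prod_{l'=1}^{l}e^{-iH_{l'}\tau}\bigr)e^{-iG_{l+1}\tau}$, so $X_0=e^{-iH\tau}$, $X_L=\mathscr{U}_1(\tau)$, and, since $G_l=H_l+G_{l+1}$,
\[
X_l-X_{l-1}=\Bigl(\textstyle\prod_{l'<l}e^{-iH_{l'}\tau}\Bigr)\bigl(e^{-iH_l\tau}e^{-iG_{l+1}\tau}-e^{-i(H_l+G_{l+1})\tau}\bigr).
\]
Taking Frobenius norms and dropping the unitary prefactor gives $\|\mathscr{U}_1(\tau)-e^{-iH\tau}\|_F\le\sum_{l=1}^{L}\bigl\|e^{-iH_l\tau}e^{-iG_{l+1}\tau}-e^{-i(H_l+G_{l+1})\tau}\bigr\|_F$, the $l=L$ term vanishing since $G_{L+1}=0$. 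For PF2 I would use the symmetric telescoping $Z_l:=\bigl(\prod_{l'=1}^{l-1}e^{-iH_{l'}\tau/2}\bigr)e^{-iG_l\tau}\bigl(\overleftarrow{\prod}_{l'=1}^{l-1}e^{-iH_{l'}\tau/2}\bigr)$, which satisfies $Z_1=e^{-iH\tau}$, $Z_{L+1}=\mathscr{U}_2(\tau)$, and $Z_{l+1}-Z_l$ is a unitary conjugate (hence Frobenius-norm-equal) of the Strang error $e^{-iH_l\tau/2}e^{-iG_{l+1}\tau}e^{-iH_l\tau/2}-e^{-i(H_l+G_{l+1})\tau}$; so $\|\mathscr{U}_2(\tau)-e^{-iH\tau}\|_F\le\sum_{l=1}^{L}\bigl\|e^{-iH_l\tau/2}e^{-iG_{l+1}\tau}e^{-iH_l\tau/2}-e^{-i(H_l+G_{l+1})\tau}\bigr\|_F$.

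What remains are two atomic Frobenius-norm estimates, with $A=H_l$, $B=G_{l+1}$:
\[
\|e^{-iA\tau}e^{-iB\tau}-e^{-i(A+B)\tau}\|_F\le\tfrac{\tau^2}{2}\|[A,B]\|_F ,
\]
\[
\|e^{-iA\tau/2}e^{-iB\tau}e^{-iA\tau/2}-e^{-i(A+B)\tau}\|_F\le\tfrac{\tau^3}{24}\|[A,[A,B]]\|_F+\tfrac{\tau^3}{12}\|[B,[B,A]]\|_F .
\]
For the first I would use variation of parameters: with $f(\tau)=e^{-iA\tau}e^{-iB\tau}$, $g(\tau)=e^{-i(A+B)\tau}$, one gets the exact identity $f(\tau)-g(\tau)=-\int_0^\tau\!\int_0^s e^{-i(A+B)(\tau-s)}e^{-iAu}[A,B]e^{iAu}f(s)\,\mathrm{d}u\,\mathrm{d}s$; since all flanking factors are unitary the integrand has Frobenius norm exactly $\|[A,B]\|_F$, and the integration volume $\int_0^\tau\!\int_0^s\mathrm{d}u\,\mathrm{d}s=\tau^2/2$ gives the bound with no remainder. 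The second estimate is analogous in spirit but is the crux: writing $e^{-iA\tau/2}e^{-iB\tau}e^{-iA\tau/2}$ as a time-ordered exponential and using the second-order condition (the effective generator equals $A+B$ to first order in $\tau$, which makes the leading pieces of the inner integral cancel), one expresses the error as a triple nested integral whose integrand, after bookkeeping, is a sum of unitary conjugates of $[A,[A,B]]$ and $[B,[B,A]]$; bounding term by term in Frobenius norm and computing the corresponding integration volumes yields the constants $\tfrac1{24}$ and $\tfrac1{12}$. Equivalently, one can observe that the spectral-norm Trotter-error derivation of Ref.~\cite{childs2020theory} for PF1 and PF2 uses only unitary invariance and submultiplicativity of the norm against unitary conjugating factors, so it goes through verbatim with $\|\cdot\|$ replaced by $\|\cdot\|_F$ on the commutator (keeping the conjugators in operator norm).

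The main obstacle is exactly this last point for PF2: arranging the exact integral representation of the Strang error so that the surviving nested commutators are precisely $[A,[A,B]]$ and $[B,[B,A]]$ (not mixed orderings) and the constants come out to the claimed $\tfrac1{24}$ and $\tfrac1{12}$ rather than looser ones. Everything else — the reduction through Theorem~\ref{M_Th:l2}, the $r$-segment triangle inequality, and the one-term-at-a-time peeling — is routine once one exploits that $\|\cdot\|_F$ is invariant under left and right multiplication by unitaries.
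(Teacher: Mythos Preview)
Your proposal is correct and reaches the same bounds with the same constants, but the organization differs from the paper's in a way worth noting. The paper (Appendix~\ref{Sec:PF12}) carries out both triangle inequalities---across the $r$ segments (Lemmas~\ref{Lemma:segment1}, \ref{Lemma:segment2}) and across the $L$ Hamiltonian summands (Lemma~\ref{Lemma:multiterm})---at the level of the average error $R$, invoking left-invariance of the Haar measure (or the 1-design property) at each step to drop the unitary pre/post-factors. You instead do both telescopings directly on the Frobenius norm of the additive error, using only the unitary invariance $\|UXV\|_F=\|X\|_F$, and invoke Theorem~\ref{M_Th:l2} just once at the end. This is cleaner: in particular, your identity $\mathscr{V}^r-W^r=\sum_j W^{j-1}(\mathscr{V}-W)\mathscr{V}^{r-j}$ shows that $\|\mathscr{M}(t)\|_F\le r\,\|\mathscr{M}(t/r)\|_F$ \emph{does} hold, whereas the paper (opening of Appendix~\ref{Sec:one-multi}) explicitly flags that this inequality ``may not hold'' and develops Lemmas~\ref{Lemma:segment1}--\ref{Lemma:segment2} to work around it. Your route avoids that detour.

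For the atomic two-term bounds the two proofs coincide: the paper's Lemma~\ref{Lemma:AB} and the PF2 computation in Theorem~\ref{Th:PF2} use exactly the integral representations from Ref.~\cite{childs2020theory} that you invoke (Eq.~\eqref{Mhigh} and Eq.~\eqref{Eq:PF2additive}), together with the Cauchy--Schwarz bound of Lemma~\ref{Lemma:traceproduct}, which is precisely your observation that the spectral-norm Trotter derivation ``uses only unitary invariance\ldots so it goes through verbatim with $\|\cdot\|$ replaced by $\|\cdot\|_F$.'' The constants $1/12$ and $1/24$ arise in the paper from the triple-integral volume $t^3/6$ and the coefficients $1/2$, $1/4$ on the two commutator pieces in Eq.~\eqref{Eq:PF2additive}, so you need not redo that bookkeeping.
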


Numerical results shown in Fig.~\ref{Fig:PF12} suggest that these tighter bounds can be close to optimal for PF2.
However, we find a gap between the triangle bounds (green curve) and empirical results (blue curve) for PF1 with a nearest-neighbor Hamiltonian.
This phenomenon results from destructive error interference between different segments that is not captured when applying the triangle inequality, as also seen in previous worst-case analysis \cite{Tran_2020}. Here we further tighten the $t$-dependence of the average error for PF1
from $\mathcal O\bigl(\sqrt{n}\frac{t^2}{r}\bigr)$ to $\mathcal O\bigl(\sqrt{n}\bigl(\frac{t}{r}+\frac{t^3}{r^2}\bigr)\bigr)$.

\begin{theorem}[Interference bound]\label{M_Th:interference}
Consider the Hamiltonian $H=\sum_{j,j+1}H_{j,j+1}$ where
$H_{j,j+1}$ acts nontrivially on qubits $j, j+1$, and $\|H_{j,j+1}\|\le 1$. Let
$U(t):=\left(e^{-iAt/r}e^{-iBt/r}\right)^r$ where $A:=\sum_{\text{odd}~j} H_{j,j+1}$, $B:=\sum_{\text{even}~j} H_{j,j+1}$.
If $\|[A,B]\|t^2/2r$ is at most a constant smaller than 1,
the average error of PF1 is
\begin{equation}
 R(U(t),U_0(t))=\mathcal{O}\left( \sqrt{n}\left(\frac{t}{r}+\frac{t^3}{r^2}\right)\right).
\end{equation}
\end{theorem}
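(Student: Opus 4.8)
The plan is to reduce the statement to Theorems~\ref{M_Th:l2} and~\ref{Th:PF12} via a gauge transformation that converts a single PF1 step into the symmetric second-order step, paying only a boundary correction. By Theorem~\ref{M_Th:l2}, writing $U(t)=U_0(t)(\id+\mathscr{M})$ gives $R(U(t),U_0(t))\le\frac{1}{\sqrt{d}}\|\mathscr{M}\|_F=\frac{1}{\sqrt{d}}\|U(t)-U_0(t)\|_F$ by unitary invariance of $\|\cdot\|_F$ (the trace-norm case is analogous, via the standard comparison of trace and $\ell_2$ distances on pure states). So it suffices to bound $\frac{1}{\sqrt{d}}\|U(t)-U_0(t)\|_F$. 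Put $\tau:=t/r$ and $W:=e^{-iA\tau/2}$, and check the identity $W^\dagger\bigl(e^{-iA\tau}e^{-iB\tau}\bigr)W=e^{-iA\tau/2}e^{-iB\tau}e^{-iA\tau/2}=\mathscr{U}_2(\tau)$, the symmetric PF2 step for the two-term Hamiltonian $H=A+B$. Hence $U(t)=W\,\mathscr{U}_2^r(t/r)\,W^\dagger$, and adding and subtracting $W U_0(t)W^\dagger$ yields $\|U(t)-U_0(t)\|_F\le\|\mathscr{U}_2^r(t/r)-U_0(t)\|_F+\|[W,U_0(t)]\|_F$.

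For the bulk term, I would invoke the Frobenius-norm bound established in the proof of Theorem~\ref{Th:PF12} (see also \cite{childs2020theory}) for the splitting $H_1=A$, $H_2=B$: $\frac{1}{\sqrt{d}}\|\mathscr{U}_2^r(t/r)-U_0(t)\|_F\le\frac{t^3}{r^2}T'_2$ with $T'_2=\frac{1}{12\sqrt{d}}\|[B,[B,A]]\|_F+\frac{1}{24\sqrt{d}}\|[A,[A,B]]\|_F$ --- i.e.\ the per-step second-order error bound, whose integral/commutator representation merely sandwiches nested commutators between unitaries (so it passes verbatim to $\|\cdot\|_F$), summed over the $r$ steps by a telescoping identity. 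Since $A$ and $B$ are sums of nearest-neighbor terms, each of $[B,[B,A]]$ and $[A,[A,B]]$ is a sum of $\mathcal O(n)$ geometrically local operators whose Hilbert--Schmidt overlaps are supported only on nearby pairs, so $\frac{1}{\sqrt{d}}\|[B,[B,A]]\|_F$ and $\frac{1}{\sqrt{d}}\|[A,[A,B]]\|_F$ are $\mathcal O(\sqrt{n})$ by the same locality counting that yields $T_p=\mathcal O(\sqrt{n})$ for nearest-neighbor lattices (Section~\ref{sec:applications}). Thus the bulk term is $\mathcal O(\sqrt{n}\,t^3/r^2)$.

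For the boundary term, $A$ commutes with $W$, so $[W,U_0]W^\dagger=WU_0W^\dagger-U_0=-i\int_0^{\tau/2}e^{-iAs}[A,U_0]\,e^{iAs}\,\mathrm{d}s$, giving $\|[W,U_0]\|_F\le\frac{\tau}{2}\|[A,U_0]\|_F\le\tau\|A\|_F$. The crucial choice is to bound $\|[A,U_0]\|_F$ by the \emph{time-independent} quantity $2\|A\|_F$, rather than by $t\|[A,B]\|_F$ (which would only recover the weaker $\mathcal O(\sqrt{n}\,t^2/r)$ of the triangle bound of Theorem~\ref{Th:PF12}). Taking the $H_{j,j+1}$ traceless without loss of generality (shifting each by a multiple of $\id$ multiplies $U$ and $U_0$ by the same phase and changes nothing), $A$ is a sum of $n/2$ traceless operators on pairwise-disjoint supports, hence $\|A\|_F\le\sqrt{dn/2}$, so the boundary term is $\mathcal O(\sqrt{n}\,t/r)$ after dividing by $\sqrt{d}$. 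Adding the two contributions gives $R(U(t),U_0(t))=\mathcal O\bigl(\sqrt{n}(t/r+t^3/r^2)\bigr)$. The hypothesis that $\|[A,B]\|t^2/2r$ be bounded by a constant below $1$ is used only to keep the expansion remainders --- in particular the operator-norm size $\|[W,U_0]\|\le\frac{t^2}{2r}\|[A,B]\|$ of the boundary conjugation, and the multiplicative error itself --- in the regime where the leading estimates above dominate, so that the implied constants are uniform.

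The step I expect to be the main obstacle is the boundary term: obtaining $\mathcal O(\sqrt{n}\,t/r)$ instead of $\mathcal O(\sqrt{n}\,t^2/r)$ rests on (i) recognizing that one PF1 step differs from the symmetric step precisely by conjugation with $e^{-iA\tau/2}$, so the boundary defect is governed by $\|[A,U_0]\|$ rather than by $t\|[A,B]\|$, and (ii) then using $\|[A,U_0]\|_F\le 2\|A\|_F$ together with $\|A\|_F=\mathcal O(\sqrt{dn})$, which follows from the disjoint supports of the odd-bond terms --- this is exactly the destructive error interference across segments that the plain triangle inequality misses, mirroring the worst-case analysis of \cite{Tran_2020}. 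A secondary technical point is to verify that the second-order per-step error estimate transfers cleanly from operator to Frobenius norm, which it does because its standard integral representation does nothing but sandwich the relevant nested commutators between unitary factors.
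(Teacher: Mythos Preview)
Your proof is correct and takes a genuinely different route from the paper's. The paper adapts the Tran~et~al.\ interference machinery~\cite{Tran_2020} to the Frobenius norm: it expands $\mathscr{M}_r=\sum_k\Delta_k$ in powers of the one-step error $\mathscr{M}_1$, writes $\mathscr{M}_1=U_0^\dagger(-[H,S]-V)$, and converts the Riemann sum $\sum_{j}(U_0^\dagger)^{j}[H,S]U_0^{j}$ into $\tfrac{r}{t}\sum_k I_t(F^{\circ k}(S))$ via a recursive telescoping identity; controlling the resulting pieces requires several technical lemmas and the smallness hypothesis $\|[A,B]\|t^2/2r<1$ to sum the geometric series $\sum_k(r\|\mathscr{M}_1\|)^k$. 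Your argument instead exploits the elementary conjugation identity $e^{-iA\tau}e^{-iB\tau}=W\mathscr{U}_2(\tau)W^\dagger$ with $W=e^{-iA\tau/2}$, which globalizes to $U(t)=W\,\mathscr{U}_2^r(t/r)\,W^\dagger$ and reduces the problem to the PF2 bulk error (already $\mathcal O(\sqrt n\,t^3/r^2)$ by Theorem~\ref{Th:PF12} transferred to $\|\cdot\|_F$) plus a single boundary commutator $\|[W,U_0(t)]\|_F\le (t/r)\|A\|_F=\mathcal O(\sqrt{dn}\,t/r)$. This is dramatically shorter and more transparent; the paper's approach, being the Frobenius-norm version of a general machinery, may extend more readily beyond the two-term even/odd splitting or to settings where no gauge-to-PF2 trick is available.

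One remark: your last paragraph invoking the hypothesis $\|[A,B]\|t^2/2r<1$ is actually superfluous. Every step of your argument --- the exact PF2 integral representation, the telescoping $\|\mathscr{U}_2^r-U_0^r\|_F\le r\|\mathscr{U}_2-U_0\|_F$, and the boundary estimate $\|[W,U_0]\|_F\le(t/r)\|A\|_F$ --- is unconditional, so your proof in fact establishes the bound \emph{without} that assumption, which is strictly stronger than the paper's statement.
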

This refined error analysis better reflects the empirical results (see Fig.~\ref{Fig:PF12}). We explain this in more detail in Section~\ref{Sec:numerical} and Appendix~\ref{Sec:interference}.

Finally, we can also bound the average error in the $K$th-order truncated Taylor series method, reducing
the worst-case error $\mathcal O(\alpha t \frac{(\ln 2)^{K+1}}{(K+1)!}) $ to $\mathcal O(\max_i \sqrt{\alpha_i \alpha} \, t \frac{(\ln 2)^{K+1}}{(K+1)!})$ where $H= \sum_{l=1}^L \alpha_l H_l$ and $\alpha:=\sum_{i=1}^L \alpha_i$. However, since the gate complexity of the LCU method is logarithmic in ${1}/{\varepsilon}$, the gate reduction for the Taylor series method is mild.

\begin{figure*}[!htb]
\centering
\includegraphics[width=1\textwidth]{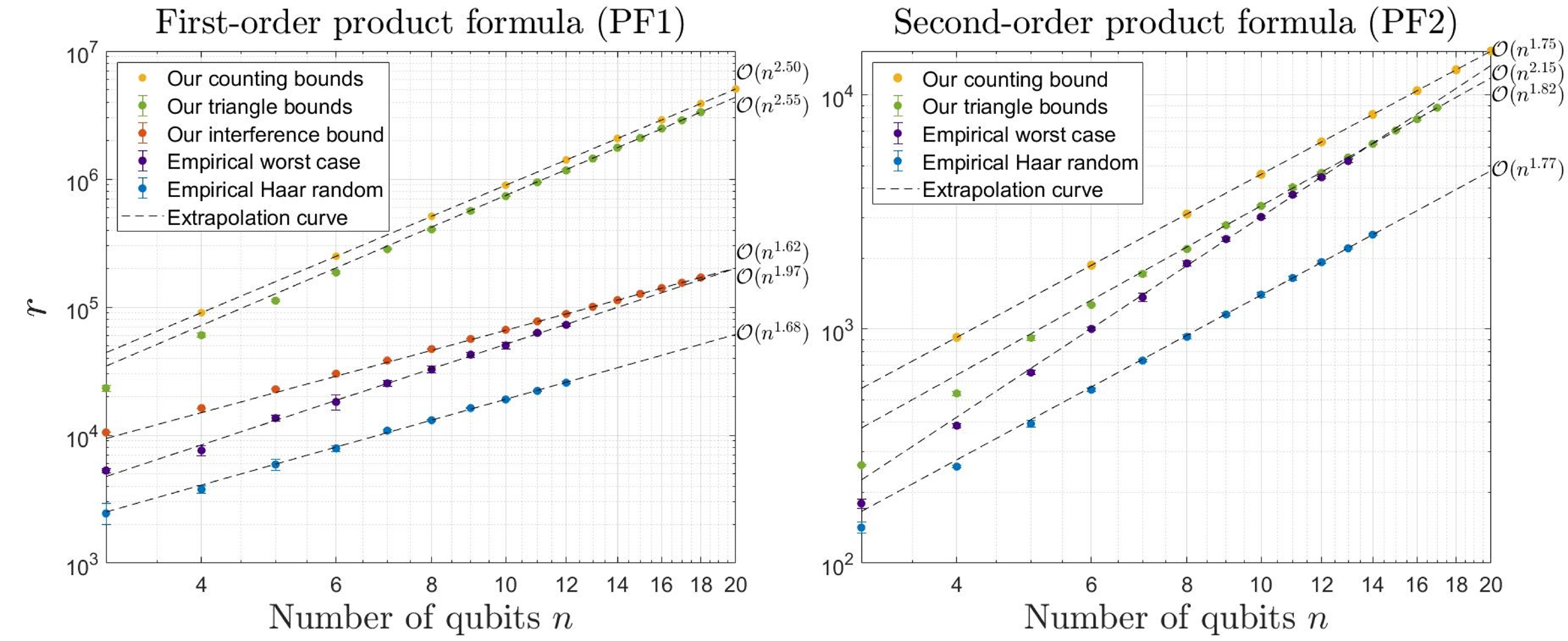}
\caption{Comparison of minimum $r$ using different error bounds for the one-dimensional Heisenberg model in Eq.~\eqref{M_Eq:heisenberg}. For each system size, we generate five Hamiltonians $H_i$ with random
coefficients.
We plot the mean and standard deviation of $r(t,\varepsilon,H_i)$ defined in Eq.~\eqref{Eq:averagetrotter}. Here the triangle and interference bounds correspond to the results in Theorem~\ref{Th:PF12} and Theorem~\ref{M_Th:interference}, respectively. The counting bound (presented in Appendix~\ref{app_count_near}) directly estimates the triangle bound of Theorem~\ref{Th:PF12} without numerically evaluating the trace of the commutators. For PF1, the asymptotic scaling of the interference bound of Theorem~\ref{M_Th:interference} (extrapolated from small system size) matches the empirical result, but is loose by a factor of $3.4$ at $n=12$. For PF2, the asymptotic scaling of the triangle bound in Theorem~\ref{Th:PF12} matches the empirical results, and is only loose by a factor of $2.44$ at $n=14$.}
\label{Fig:PF12}
\end{figure*}

%%%%%%%%%%%%%%%%%%%%%%%%%%%%%%%%%%%%%%%%%%%%%%%%%%
\section{Applications}\label{sec:applications}
\paragraph{Lattice Hamiltonians.}
For a lattice Hamiltonian with nearest-neighbor interactions, we have
$T_p=\mathcal O(\sqrt{n})$ for the $p$th-order Trotter algorithm ($p\ge 1$). For comparison, the corresponding worst-case parameter is $\alpha_{\mathrm{comm},p}= \mathcal O(n)$.
Consequently, the asymptotic gate complexity is reduced
from $\mathcal O(n^2t^2)$ in the worst case to $\mathcal O(n^{1.5}t^2)$ on average for PF1 (triangle bound), and from $\mathcal O(n^{1.5}t^{1.5})$ in the worst case to $\mathcal O(n^{1.25}t^{1.5})$ on average for PF2. These theoretical bounds for PF methods agree well with the empirical results shown in Fig.~\ref{Fig:PF12}.

\paragraph{$k$-local Hamiltonians.}
Consider a $k$-local Hamiltonian $H=\sum_{l_1,\dots,l_k}
H_{l_1,\dots,l_k}$ acting on $n$ qubits, where each $H_{l_1,\dots,l_k}$
acts nontrivially on at most $k$ qubits. We show that the average error is related to the sum of Frobenius norms  $\|H\|_{1,F}:=\sum_{l_1,\dots,l_k}\|H_{l_1,\dots,l_k}\|_F$, an induced permutation norm $\normH{H}_{\mathrm{per}}$ defined in Eq.~\eqref{eq:perNorm}, and the induced one-norm $\normH{H}_1$ defined in Ref.~\cite{childs2020theory}. Specifically, we find $T_1'= \mathcal O\bigl(\frac{1}{\sqrt{d}}\|H\|_{1,F}\,\normH{H}_{\mathrm{per}}\bigr)$ and $T_2'=\mathcal O\bigl(\frac{1}{\sqrt{d}}\|H\|_{1,F}\, \normH{H}_1\,\normH{H}_{\mathrm{per}}\bigr)$. For comparison, the commutator bounds in the worst case are $\alpha_{\mathrm{comm},1}=\mathcal O\left(\|H\|_1\,\normH{H}_{1}\right)$ and $\alpha_{\mathrm{comm},2}=\mathcal O\bigl(\|H\|_1\,\normH{H}_1^2\bigr)$. For simplicity, consider the case where $\|H_{l_1,\dots,l_k}\|\le 1$ for each term.
Then $\|H\|_1\le n^k$, $\frac{1}{\sqrt{d}}\|H\|_{1,F}\le n^k$, $\normH{H}_{\mathrm{per}}=\mathcal O(n^{\frac{k-1}{2}})$, and   $\normH{H}_1=\mathcal O(n^{k-1})$. We summarize the resulting errors for the PF1 and PF2 methods in Table~\ref{Table:klocal}.

\begin{table}[htb]
\renewcommand{\arraystretch}{1.5}
\begin{tabular}{|c|c|c|}
\hline
Order & Worst-case error \cite{childs2020theory} & Average error \\ \hline
$p=1$   &   $\mathcal O\bigl(\frac{t^2}{r}n^{2k-1}\bigr) $        &    $ \mathcal O\bigl(\frac{t^2}{r}n^{\frac{3k-1}{2}}\bigr)  $         \\ \hline
$p=2$   &$\mathcal O\bigl(\frac{t^3}{r^2}n^{3k-2}\bigr)$   &    $\mathcal O\bigl(\frac{t^3}{r^2}n^{\frac{5k-3}{2}}\bigr)$          \\ \hline
\end{tabular}
\caption{Errors for $k$-local Hamiltonian simulation with PF1 and PF2 methods.}
\label{Table:klocal}
\end{table}

\paragraph{Power-law interactions.}
Consider power-law interactions on a $D$-dimensional lattice $\Lambda \subset \mathbb{R}^D$, with 2-site interactions $H = \sum_{i,j\in\Lambda} H_{i,j}$. Suppose the interaction strength decays as the power law
\begin{equation}\label{M_eq:power}
\|H_{i,j}\| \le \left\{
\begin{aligned}
&1 ~&i=j  \\
&\frac{1}{\|i-j\|^{\alpha}} ~&i\neq j
\end{aligned}
\right.
\end{equation}
for some $\alpha \ge 0$,
where $\|i-j\|$ denotes the Euclidean distance.
We show in Appendix~\ref{Sec:App_powerlaw} that $\normH{H}^2_{\mathrm{per}}\le \normH{H}_1$.
We present the error scaling for power-law interactions with $0\le \alpha<D$, $\alpha=D$, and $\alpha>D$ in Table~\ref{Table:power}. The comparison to empirical performance shown in Fig.~\ref{Fig:PFpower} suggests that our theoretical bounds are reasonably tight.

\begin{table*}[htb]
\renewcommand{\arraystretch}{1.5}
\begin{tabular}{|c|c|c|c|c|c|c|}
\hline
\multirow{2}{*}{Order} & \multicolumn{2}{c|}{$0\le \alpha<D$}      & \multicolumn{2}{c|}{$\alpha=D$}      & \multicolumn{2}{c|}{$\alpha>D$}      \\ \cline{2-7}
                       & Worst-case error& Average error & Worst-case error& Average error & Worst-case error & Average error \\ \hline
$p=1$                    &   $\mathcal O (\frac{t^2}{r}n^{3-2\alpha/ D})$         &       $\mathcal O \bigl(\frac{t^2}{r}n^{\frac{5}{2}-3\alpha/2D} \bigr)$        &    $\mathcal O \bigl(\frac{t^2}{r}n\log^{2}(n)\bigr)   $     &       $\mathcal O \bigl(\frac{t^2}{r}n\log^{\frac{3}{2}}(n)\bigr)$         &    $\mathcal O  \bigl(\frac{t^2}{r}n\bigr)$          &      $\mathcal O  \bigl(\frac{t^2}{r}n\bigr)$         \\ \hline
$p=2 $                   &      $\mathcal O (\frac{t^3}{r^2}n^{4-3\alpha/ D})$      &      $\mathcal O \bigl(\frac{t^3}{r^2}n^{\frac{7}{2}-5\alpha/2D }\bigr)$         &    $\mathcal O \bigl(\frac{t^3}{r^2}n\log^{3}(n)\bigr)$        &    $\mathcal O \bigl(\frac{t^3}{r^2}n\log^{\frac{5}{2}}(n)\bigr)$            &  $\mathcal O  \bigl(\frac{t^3}{r^2}n\bigr)$         &           $\mathcal O  \bigl(\frac{t^3}{r^2}n\bigr)$    \\ \hline
\end{tabular}\caption{Errors for PF1 and PF2 simulations of power-law interaction Hamiltonians.}\label{Table:power}
\end{table*}

\paragraph{Out-of-time-order correlators.}
As a final example, consider the (infinite-temperature) out-of-time-order correlator (OTOC) for
two commuting local observables $X$ and $Y$,
$\langle O(t) \rangle :=    \langle Y^{\dagger}(t) X^{\dagger}  Y(t) X \rangle$ where  $Y(t):= e^{iHt}Ye^{-iHt}$ in the Heisenberg picture \cite{shenker2014black,maldacena2016bound}.
To measure the OTOC in an experiment \cite{li2017measuring}, an initial state $\rho \otimes I_{d_1}/d_1$ is prepared where $\rho$ is the state of the first qubit
and $I_{d_1}/d_1$ is the maximally mixed state of the remaining $n-1$ qubits, where $d_1 := 2^{n-1}$. After the unitary evolution $V_0:=e^{iHt}Ye^{-iHt}$ (assuming $Y$ is unitary), the measurement $X$ is performed on the first qubit.
In practice, $e^{iHt}$ and $e^{-iHt}$ can be approximated via PF methods, giving $V=V_0(I+\mathscr{M})$.
Viewing the initial state $\rho \otimes I_{d_1}/d_1$ as a mixture of randomly chosen states on an $(n-1)$-qubit subsystem, we can use our results to quantify the Trotter error, reducing it from $\|\mathscr{M}\|$ to $\frac{1}{\sqrt{d_1}}\|\mathscr{M}\|_F$.
For example, suppose $H$ is a nearest-neighbor Hamiltonian (as described above) and $e^{iHt}$ and $e^{-iHt}$ are approximated via PF2. Then for a given constant Trotter error $\varepsilon$, we can reduce the gate complexity of an OTOC measurement from $\mathcal O(n^{1.5}t^{1.5})$ with worst-case analysis to $\mathcal O(n^{1.25}t^{1.5})$.

Similar techniques can be used to quantify the Trotter error in algorithms for estimating $\tr(e^{-iHt})$,
with applications to trace estimation in the one clean qubit model \cite{PhysRevLett.81.5672}.
See Appendix~\ref{App_trace} for further details.

\section{Numerical results}\label{Sec:numerical}

To investigate the tightness of our theoretical bounds, we compare them with empirical error data.
We measure the complexity in terms of the \textit{average Trotter number} for the $p$th-order formula $\mathscr{U}_p$ for a given instance of Hamiltonian $H_i$,
\begin{align}\label{Eq:averagetrotter}
\hspace{-2mm} r(t,\varepsilon,H_i)&:=\min\{r\!:\! \mathbb{E}_{\psi}\|\left(\mathscr{U}^r_p(\tfrac{t}{r}) \!-\! e^{-itH_i}\right)\!\ket{\psi}\|_{2}\!\le\! \varepsilon \},
\end{align}
which guarantees the expectation value of the error is below a given simulation accuracy $\varepsilon$. For each instance of $H_i$, we set evolution time $t=n$, error threshold $\varepsilon= 10^{-3}$, and generate 20 Haar-random inputs.
For the worst-case empirical analysis, we directly compute the spectral norm error and obtain worst-case Trotter number $r_W(t,\varepsilon,H_i) :=\min\{r: \|\mathscr{U}^r _p(\frac{t}{r})- e^{-itH_i}\|\le \varepsilon\}$.
In this section, we plot the mean and standard deviation of $r(t,\varepsilon,H_i)$ and $r_W(t,\varepsilon,H_i)$.

We first consider the one-dimensional Heisenberg model with a random magnetic field,
\begin{equation}\label{M_Eq:heisenberg}
  \begin{aligned}
   \hspace{-2mm} H = \sum^{n-1}_{j=1} \left(X_j X_{j+1} + Y_jY_{j+1} + Z_j Z_{j+1}\right) +\sum^{n}_{j=1} h_j Z_j,
\end{aligned}
\end{equation}
with uniformly random coefficients $h_j \in [-1,1]$.
The Hamiltonian summands can be partitioned into two sets in an even-odd pattern \cite{Childs2019Product}.
Fig.~\ref{Fig:PF12} compares the empirical Trotter number for PF1 and PF2 with the theoretical bounds in Theorems \ref{Th:PF12} and \ref{M_Th:interference}.
For PF1, the interference bound curve matches the empirical result well in the regime $n\le 18$.
However, because of the additional assumption of Theorem~\ref{M_Th:interference} that $\|[A,B]\|t^2/2r=\mathcal O(nt^2/r)$ must be less than a constant smaller than 1, for large $t$ and $n$, $\mathcal O(nt^2/r)$ will dominate, significantly increasing $r$.
The analysis in the worst case also suffers from a similar problem \cite{Tran_2020}. We elaborate on this point in Appendix~\ref{Sec:numerical_inter}. We conjecture that the additional assumptions could be relaxed for both worst- and average-case error, but we leave this as a question for future research.

\begin{figure*}[!htb]
\centering
\includegraphics[width=1\textwidth]{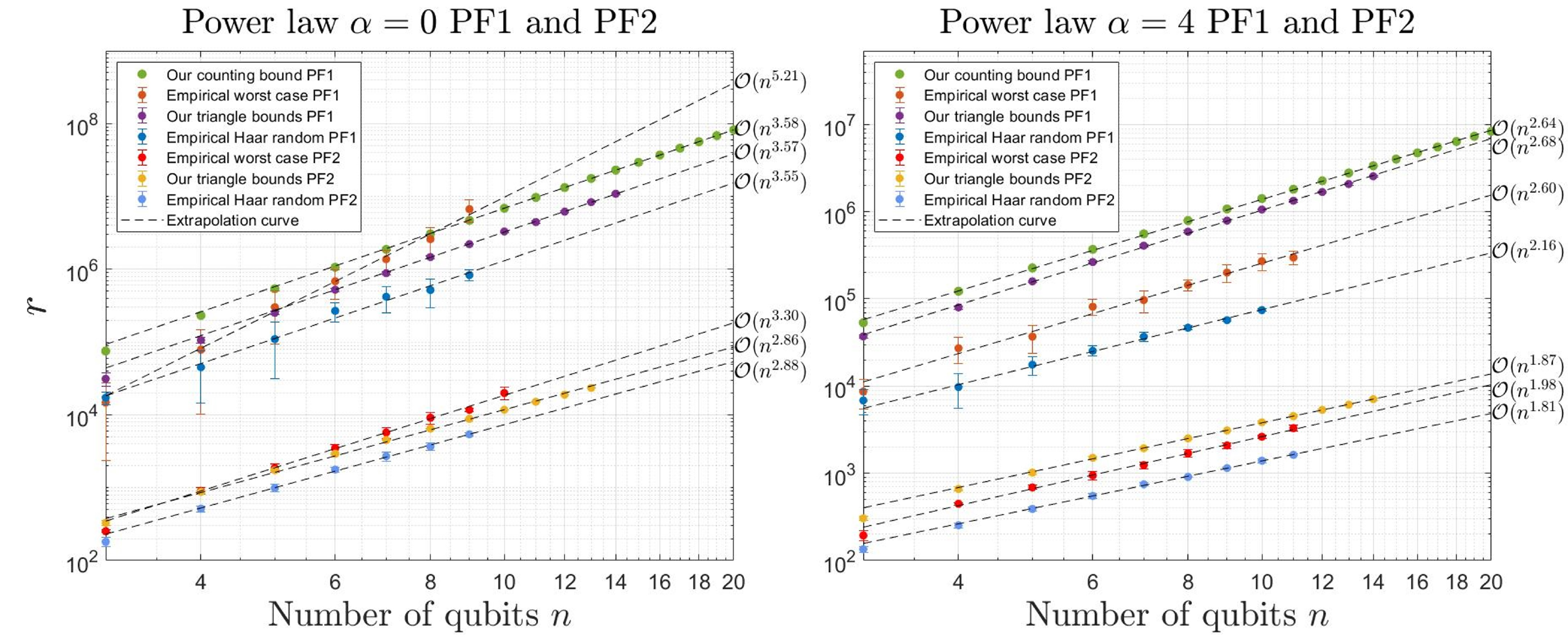}
\caption{Comparison of minimum $r$ using different error bounds for the one-dimensional Heisenberg model with power-law interactions in Eq.~\eqref{M_Eq:powerlaw}.
Here the triangle and counting bounds correspond to the results in Theorem~\ref{Th:PF12} and Appendix~\ref{app_count_power}, respectively. For $\alpha=0$, our theoretical asymptotic scaling using Theorem \ref{Th:PF12} agrees well with empirical results; it is loose by factors of $2.68$ and $1.64$ for PF1 ($n=7$) and PF2 ($n=9$), respectively.
For $\alpha=4$, the results for PF2 also show good agreement, being loose by a factor of $2.73$ at $n=11$. We find that there is a clear gap between the empirical results and theoretical predictions for PF1. We speculate that this is because we use the triangle inequality in the theoretical analysis, and that the error between Trotter steps also interferes instead of adding linearly for this type of Hamiltonian.
}\label{Fig:PFpower}
\end{figure*}

We also consider the one-dimensional Heisenberg model with power-law interactions, with the Hamiltonian
\begin{equation}\label{M_Eq:powerlaw}
\begin{aligned}
\hspace{-3mm}\sum^{n-1}_{j=1}\sum^n_{k = j+1} \! \frac{1}{|j - k|^\alpha} \left(X_j X_{k} + Y_j Y_{k} + Z_j Z_{k}\right)+\sum^{n}_{j=1}h_j Z_j\!\!
\end{aligned}
\end{equation}
with uniformly randomly coefficients $h_j\in [-1,1]$ and $\alpha$ a parameter controlling the decay of the interactions. We consider a rapidly decaying power law with $\alpha=4$ and the infinite-range case with $\alpha=0$.
We analyze product formulas with $X$-$Y$-$Z$ order \cite{childs2018toward}, as shown in Fig.~\ref{Fig:PFpower}. We again find that our error bounds are reasonably tight in many cases, but are somewhat loose for PF1 with shorter-range interactions.

In Appendix~\ref{sec:numerical_empirical}, we also show empirical results for PF4 and PF6 (which align well with our asymptotic results in Theorem~\ref{M_Th:general}), other 1-design inputs, and the standard deviations of random inputs.

%%%%%%%%%%%%%%%%%%%%%%%%%%%%%%%%%%%%%%%%%%%%%%%%%%
\section{Conclusions and open problems}
In this work, we have developed
a theory of average error for Hamiltonian simulation with random input states. Though previous methods already provide optimal performance in terms of worst-case error, we find further improvement when considering the average case.
We also show that the error in an OTOC measurement is related to the error for random initial states, and thereby reduce the gate complexity of OTOC simulations using product formulas, a potentially promising approach to demonstrating quantum information scrambling.
Our techniques might also be extended to quantify the algorithmic error in imaginary time evolution \cite{motta2020determining} and quantum Monte Carlo methods \cite{bravyi2015monte,bravyi2017poly}.
More detailed analysis of the average error in subsystems or subspaces, such as in a low-energy subspace, is also worth further exploration.
In Appendix~\ref{sec:cauchycompare}, we also explore the possibility of direct calculation of $R(U,U_0)$ without using the Cauchy inequality, which could potentially lead to tighter bounds.

\section*{Acknowledgements}
We are grateful to Jiaqi Leng and Xiaodi Wu for useful discussions. We also thank Jiaqi Leng for assistance with our numerical results.

After this work was completed, we became aware of related work by Chi-Fang Chen and Fernando Brandão that also analyzes Trotter error with random input states. We thank them for letting us know about their work.

QZ and AFS acknowledge the support of the Department of Defense through the QuICS Hartree Postdoctoral Fellowship and Lanczos Graduate Fellowship, respectively.  YZ is supported by the National Research Foundation of Singapore under its NRF-ANR joint program (NRF2017-NRF-ANR004 VanQuTe), the Quantum Engineering Program QEP-SP3, the Singapore Ministry of Education Tier 1 grant RG162/19, FQXi-RFP-IPW-1903 from the Foundational Questions Institute and Fetzer Franklin Fund, a donor advised fund of Silicon Valley Community Foundation. Any opinions, findings and conclusions or recommendations expressed in this material are those of the author(s) and do not reflect the views of the National Research Foundation, Singapore. TL was supported by the NSF grant PHY-1818914 and a Samsung Advanced Institute of Technology Global Research Partnership. AMC received support from the National Science Foundation (grants CCF-1813814 and OMA-2120757) and the Department of Energy, Office of Science, Office of Advanced Scientific Computing Research, Quantum Algorithms Teams and Accelerated Research in Quantum Computing programs.

\bibliographystyle{apsrev4-2}
\bibliography{bibsimulation}

\onecolumngrid
\newpage
% \appendix

\renewcommand{\addcontentsline}{\oldacl}
\renewcommand{\tocname}{Appendix Contents}
\tableofcontents

\clearpage

\begin{appendix}

\section{Average error and its variance}\label{Sec:Average}

We consider the average error for input states chosen at random from some ensemble.

\begin{definition}
For an ensemble $\mc{E}$ of quantum states, the average error between $U$ and $U_0$ with respect to the distance measure $\mc{D}$ is
\begin{equation}
\begin{aligned}
 R_{\mc{D}}^{\mc{E}}(U_0,U)&:= \mathbb{E}_{\mathcal{E}}[\mathcal{D}(U\ket{\psi},U_0\ket{\psi})].
\end{aligned}
\end{equation}
\end{definition}
Here $\mathcal{E}$ can be a discrete or continuous ensemble. Hereafter we take $\mathcal{D}$ to be the $\ell_2$ norm or the trace norm, but one can similarly consider other distance measures.

While it is simple to consider Haar-random inputs, this distribution is challenging to realize in practice. Fortunately, our results hold under weaker assumptions on the distribution of inputs---in particular, it suffices for the input to be a 1-design. In general, a complex projective $t$-design is a distribution that agrees with the Haar ensemble for any homogeneous degree-$t$ polynomial of the state and its conjugate. Formally, a probability distribution over quantum states $\mathcal{E}=\{(p_{i}, \phi_{i})\}$ is a complex projective $t$-design if
\begin{equation}
\begin{aligned}\label{Eq:tdesign}
\sum_{\phi_i \in \mathcal{E}} p_i\ket{\phi_i}\bra{\phi_i}^{\otimes t}=\int_{\mathrm{Haar}} \ket{\psi}\bra{\psi}^{\otimes t} \d \psi,
\end{aligned}
\end{equation}
where the sum on the left is over the (discrete) ensemble $\mathcal{E}$, and the integral on the right is over the Haar measure. It is clear that if $\mathcal{E}$ is a $t$-design then it is also a $(t-1)$-design. For a more detailed introduction to $t$-designs, see for example Refs.~\cite{low2010pseudo,zhu2016clifford}.

Let $\mc{H}_d$ denote a $d$-dimensional Hilbert space.
The integral over Haar-random states of the projector onto $t$ copies of the state is proportional to the projector $\Pi_+$ onto the symmetric subspace of $\mc{H}_d^{\otimes t}$ \cite{Harrow2013symmetric}:
\begin{equation}\label{Eq:tdesignsym}
\begin{aligned}
\int_{\mathrm{Haar}} \ket{\psi}\bra{\psi}^{\otimes t} \d \psi=\frac{\Pi_+}{D_+}&=\frac{\sum_{\pi\in S_t}W_{\pi}}{t!D_+}.
\end{aligned}
\end{equation}
Here $D_+:=\binom{d+t-1}{t}$ is the dimension of the symmetric subspace, $S_t$ is the symmetric group of order $t$, and $W_{\pi}$ is the unitary representation of $\pi\in S_t$ that permutes the states of each copy in $\mc{H}_d^{\otimes t}$ according to $\pi$.
For $t=1$, the only element in the symmetric group is the identity, so
\begin{equation}\label{Eq:onedesignsym}
\begin{aligned}
\int_{\mathrm{Haar}} \ket{\psi}\bra{\psi}\d \psi=\id/d.
\end{aligned}
\end{equation}
For $t=2$, $S_2$ has two elements, the permutations $(1)(2)$ and $(1,2)$. We have $W_{(1)(2)}=\id^{\otimes 2}$ and $W_{(1,2)}=\s$ (the swap operator on the 2-copy space, with $\s \ket{\psi_1}\ket{\psi_2}= \ket{\psi_2}\ket{\psi_1}$). Therefore
\begin{equation}\label{Eq:twodesignsym}
\begin{aligned}
\int_{\mathrm{Haar}} \ket{\psi}\bra{\psi}^{\otimes 2}\d \psi=\frac{\id^{\otimes 2}+\s}{d(d+1)}.
\end{aligned}
\end{equation}

A locally random state $\bigotimes_{i=1}^n u_i\ket{0}^{\otimes n}$, with each $u_i$ being a single-qubit Haar random unitary,
is a 1-design. This also remains a 1-design if we choose the local unitary from $\{\id_i,X_i\}$ with equal probability,
so the computational basis (or the uniform distribution over any basis of the Hilbert space) forms 1-design.
The set of random stabilizer states, which can be obtained by applying a random Clifford circuit to the state $\ket{0}^{\otimes n}$, forms 3-design \cite{Webb15,Zhu15}. Note that if the state ensemble $\mathcal{E}=\{(p_i,\phi_i)\}$ is a $t$-design, then for any unitary operation $V$, the ensemble $\mathcal{E}'=\{(p_i,V\phi_iV^{\dag})\}$ is also a $t$-design.

We can also extend our results to the approximate case.
We say that an ensemble $\mathcal{E}=\{(p_{i}, \phi_{i})\}$ is an $\epsilon$-approximate $t$-design if
\begin{equation}\label{eq:AppDesign}
\begin{aligned}
(1-\epsilon)\int_{\mathrm{Haar}} \ket{\psi}\bra{\psi}^{\otimes t} \d \psi\le \sum_{\phi_i \in \mathcal{E}} p_i\ket{\phi_i}\bra{\phi_i}^{\otimes t}\le (1+\epsilon)\int_{\mathrm{Haar}} \ket{\psi}\bra{\psi} ^{\otimes t}\d \psi.
\end{aligned}
\end{equation}

We can relate the average error for a Haar-random state to the average over a $t$-design ensemble. Considering the $\ell_2$ norm for example, the average error is
\begin{equation}\label{eq:l2upp}
\begin{aligned}
R_{\ell_2}^{\mathrm{Haar}}(U_0,U)&=\mathbb{E}_{\psi \in \mathrm{Haar}}\|(U-U_0)\ket{\psi}\|_{\ell_2}\\
&=\mathbb{E}_{\psi \in \mathrm{Haar}} \sqrt{2-\braket{\psi |U^{\dagger}U_0|\psi}-\braket{\psi |U^{\dagger}_0U|\psi}}\\
&\le \left(2-\mathbb{E}_{\psi \in \mathrm{Haar}} \braket{\psi |U^{\dagger}U_0|\psi}+\braket{\psi |U^{\dagger}_0U|\psi} \right)^{\frac{1}{2}} =: \tilde{R}_{\ell_2}(U_0,U),
\end{aligned}
\end{equation}
where the bound is due to the Cauchy–Schwarz inequality.

Since this bound only involves a first-order function of $\psi$, the average error with respect to a 1-design $\mc{E}$ satisfies
\begin{equation}\label{eq:l2upp1}
\begin{aligned}
R_{\ell_2}^{\mc{E}}(U_0,U)\leq \tilde{R}_{\ell_2}(U_0,U).
\end{aligned}
\end{equation}
We can also in principle evaluate the error by considering exact formulas for evaluating integrals of the form
\begin{equation}
\label{eq:haarintegral}
    \int_\mathrm{Haar} f(\bra{\psi} A \ket{\psi} ) \, \d \psi,
\end{equation}
where $A$ is a Hermitian operator (see for example \cite[Eq.~19]{Jones_1991}). In the present case, we have $A = 2 - U^\dagger U_0 - U_0^\dagger U$ and $f(x)=\sqrt{x}$. These formulas are in terms of the eigenvalues of $A$ and antiderivatives of $f$, and can be difficult to evaluate in practice. However, one may estimate the integral by sampling. We elaborate on this approach and how it compares to the Cauchy-Schwarz bound $\tilde R_{\ell_2}$ in Appendix \ref{sec:cauchycompare}.

%=================================================
\subsection{Average \texorpdfstring{$\ell_2$}{l2} norm  with random inputs}\label{subsec:l-2}
Here we focus on quantifying the error performance in terms of the $\ell_2$ norm. Recall that Eq.~\eqref{eq:l2upp} bounds the error by \begin{equation}\label{eq:l2upp-defn}
\begin{aligned}
\tilde{R}_{\ell_2}(U_0,U)=\left(2-\mathbb{E}_{\psi} \braket{\psi|U^{\dagger}U_0+U^{\dagger}_0U|\psi} \right)^{\frac{1}{2}}
= [\mathbb{E}_{\psi} S(\psi)]^{\frac1{2}},
\end{aligned}
\end{equation}
where $S(\psi) := 2 - \braket{\psi|U^{\dagger}U_0+U^{\dagger}_0U|\psi}$. The following Lemma computes the mean and variance of $S$ for a 1-design and a 2-design, respectively, in terms of the Frobenius norm $\|X\|_F := \sqrt{\tr (XX^{\dagger})}$.

\begin{lemma}\label{Lemma:S}
For two $d$-dimensional unitaries $U$ and $U_0$ with $U=U_0(\id+\mathscr{M})$, the expectation of $S$ with respect to a 1-design ensemble $\mathcal{E}$ is
\begin{equation}\label{eq:avSresult}
  \mathbb{E}_{\psi} S(\psi)=\frac{1}{d} \|\mathscr{M}\|_F^2.
\end{equation}
If $\mathcal{E}$ is also a 2-design, the variance has the upper bound
\begin{equation}\label{eq:VarSresult}
   \mathrm{Var}(S(\psi)) \leq \frac{4\|\mathscr{M}\|_F^2}{d(d+1)}.
\end{equation}
\end{lemma}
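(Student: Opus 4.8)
The plan is to first rewrite $S(\psi)$ in a form that depends only on the multiplicative error $\mathscr{M}$, and then apply the $1$- and $2$-design moment formulas \eqref{Eq:onedesignsym} and \eqref{Eq:twodesignsym}. Substituting $U = U_0(\id+\mathscr{M})$ and using $U_0^\dagger U_0 = \id$ gives $U^\dagger U_0 + U_0^\dagger U = 2\id + \mathscr{M} + \mathscr{M}^\dagger$, so $S(\psi) = -\braket{\psi|\mathscr{M}+\mathscr{M}^\dagger|\psi}$. Unitarity of $U$ forces $(\id+\mathscr{M})^\dagger(\id+\mathscr{M})=\id$, i.e.\ $\mathscr{M}+\mathscr{M}^\dagger = -\mathscr{M}^\dagger\mathscr{M}$, and hence
\[
S(\psi) = \braket{\psi|\mathscr{M}^\dagger\mathscr{M}|\psi} = \|\mathscr{M}\ket{\psi}\|_2^2 \ge 0 .
\]
I will also use the crude bound $\|\mathscr{M}\| = \|U_0^\dagger U - \id\| \le 2$ later for the variance estimate.

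For the mean, write $\braket{\psi|\mathscr{M}^\dagger\mathscr{M}|\psi} = \tr\!\big(\mathscr{M}^\dagger\mathscr{M}\,\ket{\psi}\bra{\psi}\big)$ and push the expectation inside the trace; the $1$-design identity $\mathbb{E}_\psi[\ket{\psi}\bra{\psi}]=\id/d$ immediately yields $\mathbb{E}_\psi S(\psi) = \tfrac1d\tr(\mathscr{M}^\dagger\mathscr{M}) = \tfrac1d\|\mathscr{M}\|_F^2$. For the variance, note $S(\psi)^2 = \tr\!\big((\mathscr{M}^\dagger\mathscr{M})^{\otimes 2}\,\ket{\psi}\bra{\psi}^{\otimes 2}\big)$; the $2$-design identity $\mathbb{E}_\psi[\ket{\psi}\bra{\psi}^{\otimes 2}] = (\id^{\otimes 2}+\s)/(d(d+1))$ combined with the swap identity $\tr\!\big((A\otimes B)\s\big)=\tr(AB)$ gives
\[
\mathbb{E}_\psi[S(\psi)^2] = \frac{\|\mathscr{M}\|_F^4 + \tr\!\big((\mathscr{M}^\dagger\mathscr{M})^2\big)}{d(d+1)} .
\]
Subtracting $(\mathbb{E}_\psi S(\psi))^2 = \|\mathscr{M}\|_F^4/d^2$ and using that the net coefficient of $\|\mathscr{M}\|_F^4$, namely $\tfrac1{d(d+1)}-\tfrac1{d^2}=-\tfrac1{d^2(d+1)}$, is negative, we obtain $\mathrm{Var}(S(\psi)) \le \tr\!\big((\mathscr{M}^\dagger\mathscr{M})^2\big)/(d(d+1))$; finally $\tr\!\big((\mathscr{M}^\dagger\mathscr{M})^2\big) \le \|\mathscr{M}\|^2\tr(\mathscr{M}^\dagger\mathscr{M}) \le 4\|\mathscr{M}\|_F^2$ delivers the stated bound.

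I expect the calculation to be essentially mechanical once $S$ is written as $\|\mathscr{M}\ket{\psi}\|_2^2$. The only two points requiring any care are (i) invoking unitarity of $U$ to collapse $-(\mathscr{M}+\mathscr{M}^\dagger)$ into $\mathscr{M}^\dagger\mathscr{M}$ (without which $S$ is not manifestly a squared norm and the averages look messier), and (ii) choosing the crude estimate $\|\mathscr{M}\|\le 2$ to pass from $\tr((\mathscr{M}^\dagger\mathscr{M})^2)$ to $4\|\mathscr{M}\|_F^2$ in the variance step, together with remembering to discard the (harmless) negative $\|\mathscr{M}\|_F^4$ contribution; everything else is just linearity of the trace plus the explicit $t$-design moments quoted above.
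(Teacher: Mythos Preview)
Your proof is correct and follows essentially the same approach as the paper: both use unitarity of $\id+\mathscr{M}$ to rewrite $S(\psi)$ in terms of $\mathscr{M}^\dagger\mathscr{M}$ (the paper phrases it as $\mathscr{M}\mathscr{M}^\dagger=-(\mathscr{M}+\mathscr{M}^\dagger)$, which is the same identity since $\id+\mathscr{M}$ is normal), and then apply the $1$- and $2$-design moment formulas. Your organization is a bit cleaner because you recognize $S(\psi)=\|\mathscr{M}\ket{\psi}\|_2^2$ from the outset, which makes the second-moment computation a one-liner; the paper instead expands $\mathbb{E}_\psi(2-S)^2$ in terms of $A=\id+\mathscr{M}$ and only simplifies afterward. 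The one genuine difference is in the last variance step: the paper bounds $\tr(\mathscr{M}^2)+\tr(\mathscr{M}^{\dagger 2})+2\tr(\mathscr{M}\mathscr{M}^\dagger)$ by $4\tr(\mathscr{M}\mathscr{M}^\dagger)$ via the Cauchy--Schwarz inequality $|\tr(\mathscr{M}^2)|\le\tr(\mathscr{M}\mathscr{M}^\dagger)$, whereas you bound the (equal) quantity $\tr\big((\mathscr{M}^\dagger\mathscr{M})^2\big)$ by $\|\mathscr{M}\|^2\|\mathscr{M}\|_F^2\le 4\|\mathscr{M}\|_F^2$ using the crude operator-norm estimate $\|\mathscr{M}\|\le 2$. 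Both routes land on the same bound.
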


\begin{proof}
For simplicity, let $A:=U^{\dagger}_0U=\id+\mathscr{M}$ in this proof. The expectation of $S(\psi)$ is
\begin{equation}\label{eq:avSpsi}
\begin{aligned}
\mathbb{E}_{\psi} S(\psi)
&= 2-\mathbb{E}_{\psi}\tr\left[(A+A^{\dag}) \ket{\psi}\bra{\psi} \right]\\
&=2-\tr\left[(A+A^{\dag})\id/d\right]\\
&=2-\frac{\tr(A)+\tr(A^{\dag})}{d}=-\frac{\tr(\mathscr{M})+\tr(\mathscr{M}^\dagger)}{d}\\
&=\frac{\tr(\mathscr{M}\mathscr{M}^{\dagger})}{d}=\frac{1}{d} \|\mathscr{M}\|_F^2.
\end{aligned}
\end{equation}
Here in the final line we use the fact that $\mathscr{M}\mathscr{M}^{\dagger}=-(\mathscr{M}+\mathscr{M}^\dagger)$ by the unitarity of $U^{\dagger}_0U$.

To calculate the second moment of $S(\psi)$, we first compute
\begin{equation}
\begin{aligned}\label{}
\mathbb{E}_{\psi}(2-S(\psi))^2 &=
\mathbb{E}_{\psi} \tr\left[(A+A^{\dag})^{\otimes 2}\ket{\psi}\bra{\psi}^{\otimes 2}\right]\\
&=\frac{1}{d(d+1)}\tr\left[(A+A^{\dag})^{\otimes 2}(\id^{\otimes 2}+\s)\right]\\
&=\frac{[\tr(A)+\tr(A^{\dag})]^2+\tr[(A+A^{\dag})^2]}{d(d+1)}\\
&=\frac{d^2(2-\mathbb{E}_{\psi}S(\psi))^2+\tr(A^2)+\tr(A^{\dag2})+2d}{d(d+1)},
\end{aligned}
\end{equation}
where the second equality is by Eq.~\eqref{Eq:twodesignsym} and the last equality uses Eq.~\eqref{eq:avSpsi}.
As a result, the variance is
\begin{equation}
\begin{aligned}
\mathrm{Var}(S(\psi))&=\mathrm{Var}(2-S(\psi))\\
&=\mathbb{E}_{\psi}(2-S(\psi))^2-(2-\mathbb{E}_{\psi}S(\psi))^2\\
&=\frac{\tr(A^2)+\tr(A^{\dag2})+2d-d(2-\mathbb{E}_{\psi}S(\psi))^2}{d(d+1)}\\
&=\frac{\tr(\mathscr{M}^2)+\tr(\mathscr{M}^{\dag2})-2[\tr(\mathscr{M})+\tr(\mathscr{M}^{\dag})]}{d(d+1)}-\frac{[\tr(\mathscr{M})+\tr(\mathscr{M}^{\dag})]^2}{d^2(d+1)}\\
&\le  \frac{\tr(\mathscr{M}^2)+\tr(\mathscr{M}^{\dag2})+2\tr(\mathscr{M}\mathscr{M}^{\dag})}{d(d+1)}\\
&\le  \frac{4\tr(\mathscr{M}\mathscr{M}^{\dag})}{d(d+1)},
\end{aligned}
\end{equation}
where in the first inequality we drop the second positive term, and in the second inequality we use the fact that $\tr(\mathscr{M}^2),\tr(\mathscr{M}^{\dag2})\leq \tr(\mathscr{M}\mathscr{M}^{\dag})$.
\end{proof}

\begin{theorem}\label{Th:l2}
For random inputs from a 1-design ensemble $\mc{E}$, the average $\ell_2$ distance between $d$-dimensional unitaries $U$ and $U_0$ with $U=U_0(\id+\mathscr{M})$ is upper bounded by
\begin{equation}
 R_{\ell_2}^{\mc{E}}(U_0,U)\le \frac{1}{\sqrt{d}}\|\mathscr{M}\|_F.
\end{equation}
\end{theorem}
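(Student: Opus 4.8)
The plan is to obtain this statement as an immediate composition of the Cauchy--Schwarz bound of Eq.~\eqref{eq:l2upp} with Lemma~\ref{Lemma:S}. First I would recall that, as established in Eq.~\eqref{eq:l2upp1}, for any 1-design ensemble $\mc{E}$ we have $R_{\ell_2}^{\mc{E}}(U_0,U)\le \tilde R_{\ell_2}(U_0,U)=[\mathbb{E}_{\psi}S(\psi)]^{1/2}$, where $S(\psi)=2-\braket{\psi|U^{\dagger}U_0+U^{\dagger}_0U|\psi}$. The key observation justifying this is that $S(\psi)$ is an affine (degree-one) function of the projector $\ket{\psi}\bra{\psi}$, so its expectation over any 1-design agrees with its expectation over the Haar measure; hence the Cauchy--Schwarz step in Eq.~\eqref{eq:l2upp}, which was stated for the Haar ensemble, carries over verbatim to $\mc{E}$.

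Next I would invoke Lemma~\ref{Lemma:S}, which evaluates this expectation exactly. Writing $A:=U^{\dagger}_0U=\id+\mathscr{M}$, one uses $\mathbb{E}_{\psi}\tr[A\ket{\psi}\bra{\psi}]=\tr(A)/d$ from Eq.~\eqref{Eq:onedesignsym}, together with the identity $\mathscr{M}\mathscr{M}^{\dagger}=-(\mathscr{M}+\mathscr{M}^{\dagger})$ forced by the unitarity of $A$, to collapse the expression to $\mathbb{E}_{\psi}S(\psi)=\tfrac{1}{d}\tr(\mathscr{M}\mathscr{M}^{\dagger})=\tfrac{1}{d}\|\mathscr{M}\|_F^2$. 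Taking the square root then yields $R_{\ell_2}^{\mc{E}}(U_0,U)\le \tfrac{1}{\sqrt d}\|\mathscr{M}\|_F$, which is the claim.

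There is essentially no obstacle: all of the substantive content is already contained in Lemma~\ref{Lemma:S} (the 1-design averaging identity and the unitarity manipulation) and in the Cauchy--Schwarz inequality used to pass from $\mathbb{E}_\psi\sqrt{S(\psi)}$ to $\sqrt{\mathbb{E}_\psi S(\psi)}$. The only point meriting a line of justification is that the Cauchy--Schwarz step needs nothing stronger than the 1-design property, which holds precisely because $S$ is linear in $\ket{\psi}\bra{\psi}$; the variance bound of Theorem~\ref{M_Th:l2}, by contrast, is the place where the 2-design hypothesis genuinely enters.
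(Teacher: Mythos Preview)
Your proposal is correct and follows essentially the same route as the paper: combine the Cauchy--Schwarz bound \eqref{eq:l2upp1}/\eqref{eq:l2upp-defn} with the expectation formula \eqref{eq:avSresult} from Lemma~\ref{Lemma:S}, then take the square root. The additional remark that only the 1-design property is needed because $S(\psi)$ is linear in $\ket{\psi}\bra{\psi}$ is a nice clarification but not a departure from the paper's argument.
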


\begin{proof}
The theorem follows from combining the definition of the upper bound in Eq.~\eqref{eq:l2upp1} and Eq.~\eqref{eq:l2upp-defn}, and the average value in Eq.~\eqref{eq:avSresult} of Lemma \ref{Lemma:S}.
\end{proof}

Theorem \ref{Th:l2} upper bounds the expected $\ell_2$ norm using the Cauchy–Schwarz inequality as in Eq.~\eqref{eq:l2upp}.
Next we use the variance result in Eq.~\eqref{eq:VarSresult} of Lemma \ref{Lemma:S} to show that for a 2-design ensemble, the bound is typical, and therefore tight.

\begin{corollary}\label{l2lower}
For random inputs from a 2-design ensemble $\mc{E}$,
if $\varepsilon := \tilde{R}_{\ell_2}(U_0,U) \gg \frac{1}{\sqrt{d}}$, then $R^\mc{E}_{\ell_2}(U_0,U)=\Theta (\varepsilon)$.
\end{corollary}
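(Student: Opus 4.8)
The plan is to combine the Cauchy--Schwarz upper bound from Theorem~\ref{Th:l2} with a matching lower bound obtained from the variance estimate in Lemma~\ref{Lemma:S}, using Chebyshev's inequality (equivalently, the Paley--Zygmund-type argument). Since $\varepsilon^2 = \tilde R_{\ell_2}(U_0,U)^2 = \mathbb{E}_\psi S(\psi) = \frac1d\|\mathscr{M}\|_F^2$, the variance bound of Lemma~\ref{Lemma:S} reads $\mathrm{Var}(S(\psi)) \le \frac{4\|\mathscr{M}\|_F^2}{d(d+1)} = \frac{4\varepsilon^2}{d+1}$. Thus the standard deviation of $S(\psi)$ is at most $\frac{2\varepsilon}{\sqrt{d+1}}$, which is much smaller than the mean $\varepsilon^2$ precisely when $\varepsilon \gg \frac{1}{\sqrt d}$ (more precisely when $\varepsilon^2 \gg \frac{1}{\sqrt{d+1}}\cdot\varepsilon$, i.e. $\varepsilon\sqrt{d+1}\gg 1$). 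This is the quantitative content of the hypothesis.

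First I would apply Chebyshev's inequality to $S(\psi)$: for any $\lambda > 0$,
\begin{equation*}
\Pr_\psi\!\left[\,|S(\psi) - \mathbb{E}_\psi S(\psi)| \ge \lambda\,\right] \le \frac{\mathrm{Var}(S(\psi))}{\lambda^2} \le \frac{4\varepsilon^2}{(d+1)\lambda^2}.
\end{equation*}
Choosing $\lambda = \varepsilon^2/2$ gives a failure probability at most $\frac{16}{(d+1)\varepsilon^2}$, which is $o(1)$ under the assumption $\varepsilon\sqrt{d}\to\infty$. Hence with probability $1-o(1)$ over $\psi$ we have $S(\psi) \ge \varepsilon^2/2$, i.e. $\|(U-U_0)\ket\psi\|_2 \ge \varepsilon/\sqrt2$. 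Then I would lower-bound the average:
\begin{equation*}
R^{\mc E}_{\ell_2}(U_0,U) = \mathbb{E}_\psi \|(U-U_0)\ket\psi\|_2 \ge \frac{\varepsilon}{\sqrt2}\cdot\Pr_\psi\!\left[S(\psi)\ge \varepsilon^2/2\right] \ge \frac{\varepsilon}{\sqrt2}(1-o(1)),
\end{equation*}
so $R^{\mc E}_{\ell_2}(U_0,U) = \Omega(\varepsilon)$. Combined with the upper bound $R^{\mc E}_{\ell_2}(U_0,U)\le \tilde R_{\ell_2}(U_0,U)=\varepsilon$ from Theorem~\ref{Th:l2}, this yields $R^{\mc E}_{\ell_2}(U_0,U)=\Theta(\varepsilon)$, as claimed.

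The main subtlety — rather than a genuine obstacle — is bookkeeping around the meaning of ``$\gg$'': I need to make explicit that the statement is asymptotic in $d$ and that $\varepsilon$ is allowed to depend on $d$, with the hypothesis $\varepsilon\sqrt d\to\infty$ ensuring the Chebyshev tail vanishes. One should also note that $\varepsilon$ is bounded above (since $S(\psi)\le 4$ always, $\varepsilon\le 2$), so no issue arises from $\varepsilon$ being large. A minor point is that Chebyshev only controls the lower tail weakly; using the full variance bound with the crude choice $\lambda=\varepsilon^2/2$ is enough, and there is no need for a sharper concentration inequality. No two-sided estimate beyond Lemma~\ref{Lemma:S} is required, so the proof is short once the hypothesis is interpreted correctly.
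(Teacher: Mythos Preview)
Your proposal is correct and follows essentially the same approach as the paper: both use the Cauchy--Schwarz upper bound from Theorem~\ref{Th:l2} together with Chebyshev's inequality applied to $S(\psi)$ with threshold $\lambda=\varepsilon^2/2$, obtaining the identical tail bound $\frac{16}{(d+1)\varepsilon^2}$. The only cosmetic difference is that the paper fixes a concrete constant (e.g.\ $\varepsilon\ge 5/\sqrt{d}$) to make the tail probability at most $16/25$, whereas you phrase the hypothesis asymptotically as $\varepsilon\sqrt{d}\to\infty$; both are valid readings of ``$\gg$''.
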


\begin{proof}

For simplicity we denote the average of $S(\psi)$ as $\bar{S}:=\mathbb{E}_{\psi} S(\psi)$.
By definition,
\begin{equation}
\begin{aligned}
R^\mc{E}_{\ell_2}(U_0,U)
&= \sum_i p_i\sqrt{S(\psi_i)}\\
&=\sum_{\{i|S(\psi)>\bar{S} +a\}} p_i\sqrt{S(\psi_i)}+\sum_{\{i|\bar{S}+a
\ge S(\psi_i)\ge  \bar{S}-a\}} p_i\sqrt{S(\psi_i)}+\sum_{\{i|S(\psi_i)<\bar{S} -a\}} p_i\sqrt{S(\psi_i)}\\
&\geq \sum_{\{i|\bar{S}+a
\ge S(\psi_i)\ge  \bar{S}-a\}} p_i\sqrt{S(\psi_i)}\\
&\ge  \sum_{\{i|\bar{S}+a
\ge S(\psi_i)\ge  \bar{S}-a\}} p_i \ \ \sqrt{\bar{S}-a}.
\end{aligned}
\end{equation}
Here in the second line we break the integral in three parts depending on the value of $S(\psi)$ with respect to $\bar{S}$ (and a parameter $a$ to be determined), in the first inequality we only keep the integral near $\bar{S}$, and in the second inequality we use $\sqrt{\bar{S}-a}$ as a lower bound.

Suppose $\bar{S}=\varepsilon^2$ so that $\mathrm{Var}(S)\le \frac{4\varepsilon^2}{d+1}$ by Lemma \ref{Lemma:S}. Choosing $a=\varepsilon^2/2$ and using the Chebyshev inequality, we have
\begin{equation}
\begin{aligned}
    \mathrm{Pr}\{|S(\psi_i)-\bar{S}|> a\} \le \frac{\mathrm{Var}(S)}{a^2}\le \frac{16}{(d+1)\varepsilon^2}.\\
   \end{aligned}
\end{equation}
When for example $\varepsilon \geq \frac{5}{\sqrt{d}}$, we have $\mathrm{Pr}\{|S-\bar{S}|> a\}\leq\frac{16}{25}$. Then we obtain the lower bound
\begin{equation}
\begin{aligned}
R(U_0,U)\ge  [1- Pr\{|S-\bar{S}|\ge a\}]\sqrt{\varepsilon^2/2}=\frac{9}{25\sqrt{2}}\varepsilon=\Omega(\varepsilon).
\end{aligned}
\end{equation}
With the upper bound $R^\mc{E}_{\ell_2}(U_0,U)\leq \varepsilon$ in Eq.~\eqref{eq:l2upp-defn} , we thus have $R^\mc{E}_{\ell_2}(U_0,U)=\Theta(\varepsilon)$ as claimed.
\end{proof}

Note that the assumption $\tilde{R}_{\ell_2}(U_0,U) =\varepsilon \gg \frac{1}{\sqrt{d}}$ is reasonable in typical applications of Hamiltonian simulation.
We usually require that the error is below some fixed threshold---or that it is polynomially small if Hamiltonian simulation is used within some efficient subroutine---whereas $d$ is exponential in the number of qubits.
A similar argument holds for the assumption in Corollary~\ref{Coro:fidelower}.

%=================================================
\subsection{Average trace norm distance with random inputs}

Here we continue to study the average error, but now using the trace norm (also called the Schatten-1 norm). Define
\begin{equation}
\begin{aligned}\label{}
R_t^{\mc{E}}(U_0,U)&:= \mathbb{E}_{\psi \in \mc{E}}\left[\left\|U\ket{\psi}\bra{\psi}U^{\dagger}-U_0\ket{\psi}\bra{\psi}U_0^{\dagger}\right\|_1\right].
\end{aligned}
\end{equation}
First, by the relationship between the $\ell_2$ norm and the trace norm, we have the following result.

\begin{corollary}\label{Cor:trNorm}
For random inputs from a 1-design ensemble $\mc{E}$, the average trace norm between $d$-dimensional unitaries $U$ and $U_0$ with $U=U_0(\id+\mathscr{M})$ is upper bounded by
\begin{equation}
R_t^\mc{E}(U_0,U)\leq 2R_{\ell_2}^\mc{E}(U_0,U)\le \frac{2}{\sqrt{d}}\|\mathscr{M}\|_F.
\end{equation}
\end{corollary}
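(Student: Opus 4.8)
The plan is to deduce this corollary directly from Theorem~\ref{Th:l2} by combining it with the standard comparison between the trace distance and the $\ell_2$ distance of two pure states; no new averaging argument or design property is needed beyond what Theorem~\ref{Th:l2} already provides.

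First I would record the pointwise bound. For any normalized $\ket{\psi}$, write $\ket{\alpha}:=U\ket{\psi}$ and $\ket{\beta}:=U_0\ket{\psi}$. Using the exact identities $\bigl\|\ket{\alpha}\bra{\alpha}-\ket{\beta}\bra{\beta}\bigr\|_1 = 2\sqrt{1-|\braket{\alpha|\beta}|^2}$ and $\bigl\|\ket{\alpha}-\ket{\beta}\bigr\|_2^2 = 2-2\,\mathrm{Re}\braket{\alpha|\beta}$, the inequality $\bigl\|\ket{\alpha}\bra{\alpha}-\ket{\beta}\bra{\beta}\bigr\|_1 \le 2\,\bigl\|\ket{\alpha}-\ket{\beta}\bigr\|_2$ is equivalent, after squaring both sides, to $2\,\mathrm{Re}\braket{\alpha|\beta}\le 1+|\braket{\alpha|\beta}|^2$, which holds since $1+|\braket{\alpha|\beta}|^2\ge 1+(\mathrm{Re}\braket{\alpha|\beta})^2\ge 2\,\mathrm{Re}\braket{\alpha|\beta}$ by AM--GM.

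Next I would take the expectation of this pointwise bound over $\ket{\psi}$ drawn from the ensemble $\mc{E}$. By the definitions of $R_t^{\mc{E}}(U_0,U)$ and $R_{\ell_2}^{\mc{E}}(U_0,U)$ together with linearity of expectation, this yields $R_t^{\mc{E}}(U_0,U)\le 2R_{\ell_2}^{\mc{E}}(U_0,U)$; note that this step uses no design property. Finally, applying Theorem~\ref{Th:l2} to the 1-design ensemble $\mc{E}$ gives $R_{\ell_2}^{\mc{E}}(U_0,U)\le \frac{1}{\sqrt{d}}\|\mathscr{M}\|_F$, and chaining the two inequalities completes the proof.

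There is no substantive obstacle here; the only point requiring a little care is to use the global-phase-invariant form of the pure-state identities (with $|\braket{\alpha|\beta}|^2$ rather than $\mathrm{Re}\braket{\alpha|\beta}$), since a cruder bound would still close the argument but with a worse constant than the stated factor of $2$.
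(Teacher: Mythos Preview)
Your proposal is correct and follows essentially the same approach as the paper: establish the pointwise inequality $\|\ket{\alpha}\bra{\alpha}-\ket{\beta}\bra{\beta}\|_1\le 2\|\ket{\alpha}-\ket{\beta}\|_2$ for pure states, average over $\mc{E}$, and then invoke Theorem~\ref{Th:l2}. The paper simply quotes this pure-state inequality as a standard fact, whereas you additionally supply a short AM--GM verification of it.
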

The proof is based on the fact that for two pure states $\ket{\psi}$ and $\ket{\phi}$, the trace distance is upper bounded by the $\ell_2$ norm as
\begin{equation}
  \frac{1}{2}  \|\ket{\psi}\bra{\psi}-\ket{\phi}\bra{\phi}\|_1\le  \|\ket{\psi}-\ket{\phi}\|_{\ell_2}.
\end{equation}
Note that there is a factor of $1/2$ relating the trace distance and the trace norm.
We can apply the inequality for the states $U_0\ket{\psi}$ and $U\ket{\psi}$ for every $\ket{\psi}$ from $\mc{E}$, so the same follows for the ensemble average.  As a result, we can bound $R_t^\mc{E}(U(t),U_0(t))$ using the upper bound on $\ell_2$ norm in Theorem \ref{Th:l2}.

On the other hand, we can also use the relationship between trace distance and fidelity to bound the trace norm error. We first define the average fidelity as
\begin{equation}
\begin{aligned}
R_f^{\mc{E}}(U_0,U)&:= \mathbb{E}_{\psi\in \mc{E} }[F(U_0\ket{\psi},U\ket{\psi})]=\mathbb{E}_{\psi\in \mc{E} } [\bra{\psi}U_0^{\dagger}U\ket{\psi} \bra{\psi}U^{\dagger}U_0\ket{\psi}].\\
\end{aligned}
\end{equation}
We have the following result for $R_f(U,U_0)$.

\begin{lemma}\label{Th:FidGlobal}
For random inputs from a 2-design ensemble $\mc{E}$, the average fidelity between $d$-dimensional unitaries $U$ and $U_0$ with $U=U_0(\id+\mathscr{M})$ is
\begin{equation}\label{Eq:avg-fidelity}
 R_f^{\mc{E}}(U_0,U)
 =1-\frac{\|\mathscr{M}\|_F^2}{d+1}+\frac{|\tr(\mathscr{M})|^2}{d(d+1)}.
\end{equation}
\end{lemma}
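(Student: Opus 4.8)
The plan is to compute $R_f^{\mc{E}}(U_0,U)$ directly by expanding the fidelity expression and integrating term by term using the 2-design property. First I would write $F(U_0\ket{\psi},U\ket{\psi}) = |\braket{\psi|U_0^\dagger U|\psi}|^2 = \braket{\psi|A|\psi}\braket{\psi|A^\dagger|\psi}$ where $A := U_0^\dagger U = \id + \mathscr{M}$. The key observation is that this is a homogeneous degree-2 polynomial in both $\ket{\psi}$ and its conjugate, so the average over any 2-design equals the Haar average, and we can use Eq.~\eqref{Eq:twodesignsym}, $\int_{\mathrm{Haar}} \ket{\psi}\bra{\psi}^{\otimes 2}\d\psi = (\id^{\otimes 2}+\s)/(d(d+1))$.

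The main computational step is to rewrite $\braket{\psi|A|\psi}\braket{\psi|A^\dagger|\psi} = \tr\bigl[(A\otimes A^\dagger)(\ket{\psi}\bra{\psi}^{\otimes 2})\bigr]$ and take the expectation, giving
\begin{equation*}
R_f^{\mc{E}}(U_0,U) = \frac{\tr\bigl[(A\otimes A^\dagger)(\id^{\otimes 2}+\s)\bigr]}{d(d+1)} = \frac{\tr(A)\tr(A^\dagger) + \tr(AA^\dagger)}{d(d+1)},
\end{equation*}
where I use the standard identity $\tr[(X\otimes Y)\s] = \tr(XY)$. Then I would substitute $A = \id + \mathscr{M}$: we have $\tr(A) = d + \tr(\mathscr{M})$, so $\tr(A)\tr(A^\dagger) = |d + \tr(\mathscr{M})|^2 = d^2 + d(\tr\mathscr{M} + \tr\mathscr{M}^\dagger) + |\tr(\mathscr{M})|^2$, and $\tr(AA^\dagger) = \tr[(\id+\mathscr{M})(\id+\mathscr{M}^\dagger)] = d + \tr(\mathscr{M}) + \tr(\mathscr{M}^\dagger) + \|\mathscr{M}\|_F^2$.

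The simplification then relies on the same unitarity fact used in Lemma~\ref{Lemma:S}: since $A = U_0^\dagger U$ is unitary, $\mathscr{M}\mathscr{M}^\dagger = -(\mathscr{M}+\mathscr{M}^\dagger)$, hence $\|\mathscr{M}\|_F^2 = -(\tr\mathscr{M} + \tr\mathscr{M}^\dagger)$. Using this to eliminate $\tr\mathscr{M} + \tr\mathscr{M}^\dagger = -\|\mathscr{M}\|_F^2$ in both pieces, the numerator becomes $d^2 - d\|\mathscr{M}\|_F^2 + |\tr\mathscr{M}|^2 + d - \|\mathscr{M}\|_F^2 + \|\mathscr{M}\|_F^2 = d^2 + d - d\|\mathscr{M}\|_F^2 + |\tr\mathscr{M}|^2$. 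Dividing by $d(d+1)$ gives $1 - \frac{\|\mathscr{M}\|_F^2}{d+1} + \frac{|\tr\mathscr{M}|^2}{d(d+1)}$, as claimed. I don't anticipate a serious obstacle here — the argument is a direct 2-design integral followed by bookkeeping; the only point requiring care is keeping track of the cross terms $\tr\mathscr{M} + \tr\mathscr{M}^\dagger$ and correctly invoking the unitarity relation to collapse them, exactly as in the proof of Lemma~\ref{Lemma:S}.
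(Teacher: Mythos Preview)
Your proposal is correct and follows essentially the same approach as the paper: write the fidelity as $\tr[(A\otimes A^\dagger)\psi^{\otimes 2}]$ with $A=U_0^\dagger U$, apply the 2-design formula Eq.~\eqref{Eq:twodesignsym}, and simplify using the unitarity relation $\mathscr{M}\mathscr{M}^\dagger=-(\mathscr{M}+\mathscr{M}^\dagger)$. The only cosmetic difference is that the paper immediately uses $\tr(AA^\dagger)=d$ (since $A$ is unitary) rather than expanding and then cancelling, but the substance is identical.
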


\begin{proof}
For simplicity, denote $A:=U_0^{\dagger}U$ and thus $A^{\dag}=U^{\dagger}U_0$. The average fidelity is
\begin{equation}\label{AFidG}
\begin{aligned}
R_f^{\mc{E}}(U_0,U)
=\mathbb{E}_{\psi} \left\{\tr[(A\otimes A^{\dag})\psi^{\otimes 2} ]\right\}=\frac{1}{d(d+1)}\tr\left[(A\otimes A^{\dag})(\id^{\otimes 2}+\s)\right]=\frac{|\tr(A)|^2+d}{d(d+1)}.
\end{aligned}
\end{equation}
We then replace $A$ with $U_0^{\dagger}U=\id+\mathscr{M}$ and use $\mathscr{M}\mathscr{M}^{\dagger}=-(\mathscr{M}+\mathscr{M}^\dagger)$ (by unitarity of $A$) to obtain the result in Eq.~\eqref{Eq:avg-fidelity}.
\end{proof}

For any two quantum states $\rho$ and $\sigma$, the fidelity $F(\rho,\sigma):=\tr\left(\sqrt{\sqrt{\sigma}\rho\sqrt{\sigma}}\right)^2$ and the trace distance between the states are related as
\begin{equation}\label{eq:FidTr}
\begin{aligned}
1-\sqrt{F(\rho,\sigma)}\leq \frac1{2}\|\rho-\sigma\|_1\leq \sqrt{1-F(\rho,\sigma)}.
\end{aligned}
\end{equation}
Furthermore, for two pure states, the right inequality is saturated. Using Eq.~\eqref{eq:FidTr}, we can give both upper and lower bounds on the average trace norm in terms of the average fidelity:
\begin{equation}
\begin{aligned}\label{Eq:Rtupp}
R_t^{\mc{E}}(U_0,U)&= \mathbb{E}_{\psi}[\|U\ket{\psi}\bra{\psi}U^{\dagger}-U_0\ket{\psi}\bra{\psi}U_0^{\dagger}\|_1]\\
&=\mathbb{E}_{\psi} [2\sqrt{1-F(U_0\ket{\psi},U\ket{\psi})}]\\
&\le 2\left(1-\mathbb{E}_{\psi} [F(U_0\ket{\psi},U\ket{\psi})]
\right)^{\frac{1}{2}}=2(1-R_f^{\mc{E}}(U_0,U))^{\frac{1}{2}};
\end{aligned}
\end{equation}
and
\begin{equation}
\begin{aligned}\label{Eq:Rtlower}
R_t^{\mc{E}}(U_0,U)&\geq 2\mathbb{E}_{\psi} \left(1-\sqrt{F(U_0\ket{\psi},U\ket{\psi})}\right)\\
&\ge 2\left(1-\sqrt{\mathbb{E}_{\psi} F(U_0\ket{\psi},U\ket{\psi})}
\right)=2\left(1-\sqrt{R_f^{\mc{E}}(U_0,U)}\right),\\
\end{aligned}
\end{equation}
where the last inequality in both cases follows from the Cauchy-Schwartz inequality. By inserting the expression for $R_f(U,U_0)$ from Lemma \ref{Th:FidGlobal}, we obtain the following.

\begin{theorem}\label{Th:t}
For random inputs from a 2-design ensemble $\mc{E}$, the average trace norm distance between $d$-dimensional unitaries $U$ and $U_0$ with $U=U_0(\id+\mathscr{M})$ has the bounds
\begin{equation}\label{uplowTr}
 2\left[1-\sqrt{1-\frac{\|\mathscr{M}\|^2_F}{d+1}+\frac{|\tr(\mathscr{M})|^2}{d(d+1)}}\right] \le R_{t}^{\mc{E}}(U,U_0)\le 2\sqrt{\frac{\|\mathscr{M}\|^2_F}{d+1}-\frac{|\tr(\mathscr{M})|^2}{d(d+1)}}.
\end{equation}
\end{theorem}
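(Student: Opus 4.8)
The plan is to combine the two chains of inequalities \eqref{Eq:Rtupp} and \eqref{Eq:Rtlower} already derived above --- which sandwich $R_t^{\mc{E}}(U_0,U)$ between $2\bigl(1-\sqrt{R_f^{\mc{E}}(U_0,U)}\bigr)$ and $2\bigl(1-R_f^{\mc{E}}(U_0,U)\bigr)^{1/2}$ --- with the closed-form value of the average fidelity given by Lemma~\ref{Th:FidGlobal}. All of the genuinely analytic content has already been established: the Fuchs--van de Graaf inequalities \eqref{eq:FidTr}, the fact that the upper one is saturated for pure states (applied to $U\ket{\psi}$ and $U_0\ket{\psi}$), and the two applications of Cauchy--Schwarz/Jensen that move the expectation over $\psi$ inside the square root. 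What remains is a one-line algebraic substitution.

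Concretely, I would first recall from Lemma~\ref{Th:FidGlobal} that $R_f^{\mc{E}}(U_0,U)=1-\frac{\|\mathscr{M}\|_F^2}{d+1}+\frac{|\tr(\mathscr{M})|^2}{d(d+1)}$, hence $1-R_f^{\mc{E}}(U_0,U)=\frac{\|\mathscr{M}\|_F^2}{d+1}-\frac{|\tr(\mathscr{M})|^2}{d(d+1)}$. Substituting this expression for $1-R_f^{\mc{E}}$ into \eqref{Eq:Rtupp} yields the right-hand side of \eqref{uplowTr}, and substituting $R_f^{\mc{E}}$ itself into \eqref{Eq:Rtlower} yields the left-hand side, completing the proof.

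The only points I would be careful about are bookkeeping and well-definedness. On bookkeeping: keeping the factors of $2$ straight (trace \emph{norm} versus trace \emph{distance}), and checking that the two Cauchy--Schwarz steps point in compatible directions so the sandwich is consistent. On well-definedness: both radicands should be manifestly nonnegative --- the one in the lower bound is just $R_f^{\mc{E}}\in[0,1]$ (an average of pure-state fidelities), and the one in the upper bound is $1-R_f^{\mc{E}}\ge 0$, which also follows directly from $|\tr(\mathscr{M})|^2\le d\,\|\mathscr{M}\|_F^2$ (Cauchy--Schwarz), giving $\frac{|\tr(\mathscr{M})|^2}{d(d+1)}\le\frac{\|\mathscr{M}\|_F^2}{d+1}$. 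There is no real obstacle: the theorem is essentially a corollary of Lemma~\ref{Th:FidGlobal} together with the inequalities \eqref{Eq:Rtupp}--\eqref{Eq:Rtlower}.
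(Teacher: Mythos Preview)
Your proposal is correct and mirrors the paper's own argument exactly: the paper also just substitutes the closed-form average fidelity from Lemma~\ref{Th:FidGlobal} into the two inequalities \eqref{Eq:Rtupp} and \eqref{Eq:Rtlower}. Your additional remarks on the nonnegativity of the radicands are a nice sanity check that the paper omits.
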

The upper bound in Eq.~\eqref{uplowTr} is slightly enhanced compared with the one in Corollary \ref{Cor:trNorm}. Note that the upper bound is smaller than $\frac{2}{\sqrt{d+1}}\|\mathscr{M}(t)\|_F$, by throwing away the second positive term.

For completeness, we also calculate the variance of the average fidelity, and use it to give tighter lower bound for 4-design ensembles.

\begin{lemma}\label{Lem:Rfvariance}
For random $d$-dimensional inputs from a 4-design ensemble $\mc{E}$, the variance of the fidelity is
\begin{equation}\label{eq:varF1}
\mathrm{Var}\left[F(U_0\ket{\psi},U\ket{\psi})\right]=\mathbb{E}_{\psi\in \mc{E}}[F^2(U_0\ket{\psi},U\ket{\psi})]-R_f^{\mc{E}}(U_0,U)^2
\end{equation}
with
\begin{align}
&\mathbb{E}_{\psi}[F^2(U_0\ket{\psi},U\ket{\psi})]\\
&=\frac{1}{(d+3)(d+2)(d+1)d}\{
|\tr(\id+\mathscr{M})|^4+ |\tr(\id+\mathscr{M})|^2(4d+8) +\tr[(\id+\mathscr{M}(t))^2]\tr[(\id+\mathscr{M}^{\dag})]^2 \nonumber \\
&\quad+\tr[(\id+\mathscr{M})]^2\tr[(\id+\mathscr{M}^{\dag})^2]+\tr[(\id+\mathscr{M})^2]\tr[(\id+\mathscr{M}^{\dag})^2]+2d^2+6d \}.\label{Eq:VarF}
\end{align}

\end{lemma}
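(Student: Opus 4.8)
The plan is to evaluate $\mathbb{E}_\psi[F^2(U_0\ket{\psi},U\ket{\psi})]$ by lifting the first-moment computation of Lemma~\ref{Th:FidGlobal} to the fourth tensor copy of the Hilbert space. Writing $A:=U_0^\dagger U=\id+\mathscr{M}$, for pure states we have $F(U_0\ket{\psi},U\ket{\psi})=|\bra{\psi}A\ket{\psi}|^2=\bra{\psi}A\ket{\psi}\,\bra{\psi}A^\dagger\ket{\psi}$, so
\begin{equation*}
F^2(U_0\ket{\psi},U\ket{\psi})=\tr\!\left[(A\otimes A\otimes A^\dagger\otimes A^\dagger)\,\ket{\psi}\!\bra{\psi}^{\otimes4}\right].
\end{equation*}
This is homogeneous of degree $4$ in the components of $\ket{\psi}$ and, independently, of degree $4$ in those of $\bra{\psi}$, so the $4$-design hypothesis lets us substitute $\mathbb{E}_\psi[\ket{\psi}\!\bra{\psi}^{\otimes4}]=\Pi_+/D_+=\tfrac1{4!\,D_+}\sum_{\pi\in S_4}W_\pi$ via Eq.~\eqref{Eq:tdesignsym}, where $4!\,D_+=(d+3)(d+2)(d+1)d$ is exactly the denominator in Eq.~\eqref{Eq:VarF}.

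Next I would invoke the standard identity $\tr[(M_1\otimes M_2\otimes M_3\otimes M_4)W_\pi]=\prod_{\text{cycles }(i_1\cdots i_k)\text{ of }\pi}\tr[M_{i_1}M_{i_2}\cdots M_{i_k}]$ and specialize to $M_1=M_2=A$, $M_3=M_4=A^\dagger$, so that each of the $24$ permutations contributes a product of traces of words in $A$ and $A^\dagger$. I would then organize the sum by cycle type: the identity gives $|\tr A|^4$; among the six transpositions, $(12)$ and $(34)$ give $\tr(A^2)\tr(A^\dagger)^2+\tr(A)^2\tr(A^{\dagger2})$ while the four ``mixed'' transpositions each give $\tr(AA^\dagger)\,|\tr A|^2$; the eight $3$-cycles each collapse, via $A^2A^\dagger=AA^\dagger A=A$, to $|\tr A|^2$; the three double transpositions give $\tr(A^2)\tr(A^{\dagger2})+2\,\tr(AA^\dagger)^2$; and each of the six $4$-cycles collapses to $\tr(AA^\dagger)$, since every cyclic word on two $A$'s and two $A^\dagger$'s reduces to $\id$ by unitarity. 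Inserting $\tr(AA^\dagger)=d$ and summing reproduces the braced expression of Eq.~\eqref{Eq:VarF}; dividing by $(d+3)(d+2)(d+1)d$ gives $\mathbb{E}_\psi[F^2]$, and Eq.~\eqref{eq:varF1} is then just the definition $\mathrm{Var}[F]=\mathbb{E}_\psi[F^2]-(\mathbb{E}_\psi[F])^2$ together with $\mathbb{E}_\psi[F]=R_f^{\mc{E}}(U_0,U)$ from Lemma~\ref{Th:FidGlobal}.

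The argument is conceptually routine; I expect the main source of error (rather than difficulty) to be the bookkeeping of the $S_4$ sum --- keeping the noncommuting words $A^i(A^\dagger)^j$ in the correct cyclic order before using $AA^\dagger=A^\dagger A=\id$, and matching the multiplicities $1,6,8,3,6$ of the five cycle types to the handful of surviving trace monomials. Note that the stated formula is already expressed in terms of $A=\id+\mathscr{M}$ rather than fully expanded in $\mathscr{M}$, so (unlike in Lemma~\ref{Th:FidGlobal}) the only uses of unitarity are the collapses $\tr(AA^\dagger)=d$ and $\tr(A^2(A^\dagger)^2)=d$ together with the $3$-cycle simplifications above; one could equally leave $\tr(AA^\dagger)$ symbolic. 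Finally, only the $\mathbb{E}_\psi[F^2]$ term genuinely requires a $4$-design, since the subtracted term $R_f^{\mc{E}}(U_0,U)^2$ needs only the $2$-design property already used in Lemma~\ref{Th:FidGlobal}.
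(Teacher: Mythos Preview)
Your proposal is correct and follows essentially the same approach as the paper: write $F^2$ as a trace against $\psi^{\otimes4}$, replace the latter by the symmetric projector via the $4$-design property, and evaluate the resulting sum over $S_4$ using the cycle-trace identity. The paper simply says ``by exhaustive analysis of the $4!$ permutations,'' whereas you spell out the organization by cycle type (and your choice of ordering $A\otimes A\otimes A^\dagger\otimes A^\dagger$ versus the paper's $A\otimes A^\dagger\otimes A\otimes A^\dagger$ is immaterial since $\Pi_+$ is permutation-invariant).
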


\begin{proof}
Letting $A:=U_0^{\dagger}U$, the first term reads
\begin{equation}\label{4copyF}
\begin{aligned}
\mathbb{E}_{\psi} [F^2(U_0\ket{\psi},U\ket{\psi})]
&=\mathbb{E}_{\psi} [(\bra{\psi}A\ket{\psi} \bra{\psi}A^{\dag}\ket{\psi})^2] \\
&=\tr[(A\otimes A^{\dag}\otimes A\otimes A^{\dag})\mathbb{E}_{\psi}
\psi^{\otimes 4}]\\
&=\frac1{4!D_+}\sum_{\pi\in S_4}\tr\left[W_{\pi}A\otimes A^{\dag}\otimes A\otimes A^{\dag}\right],
\end{aligned}
\end{equation}
where $D_+=\binom{d+3}{4}$ is the dimension of the symmetric subspace, according to the 4-design case in Eq.~\eqref{Eq:tdesignsym}. It is not hard to see that $\tr\left[W_{\pi}A\otimes A^{\dag}\otimes A\otimes A^{\dag}\right]$ is directly related to the cycle structure of $\pi$; for example, we have $\tr(AA^{\dag}A)\tr(A)$ for $\pi=(123)(4)$.
By exhaustive analysis of the $4!$ permutations in $S_4$, we find
\begin{align}
&\mathbb{E}_{\psi} [F^2(U_0\ket{\psi},U\ket{\psi})\nonumber] \\
&\quad=\frac{|\tr(A)|^4+ |\tr(A)|^2(4d+8) +\tr(A^2)\tr(A^\dag)^2+\tr(A)^2\tr(A^{\dag2})+2d^2+\tr(A^2)\tr(A^{\dag2})+6d}{(d+3)(d+2)(d+1)d}.
\end{align}
Then the claim follows by writing $A=\id+\mathscr{M}$.
\end{proof}

Next we estimate the variance of the average fidelity under additional assumptions.

\begin{lemma}\label{Le:varF}
For random inputs from a $d$-dimensional 4-design ensemble $\mc{E}$,
the  variance of the fidelity with random inputs in Eq.~\eqref{eq:varF1} is $\mathrm{Var}\left[F(U_0\ket{\psi},U\ket{\psi})\right]=\mathcal{O}({1}/{d})$
 when $\|\mathscr{M}(t)\| \ll 1$.
\end{lemma}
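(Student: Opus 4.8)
The plan is to work directly from the exact second- and fourth-moment formulas already established, so that proving Lemma~\ref{Le:varF} reduces to asymptotic bookkeeping under the hypothesis $\|\mathscr{M}\| \ll 1$. First I would record the only facts I need: the exact value of $R_f^{\mc{E}}(U_0,U)$ from Lemma~\ref{Th:FidGlobal}, the exact value of $\mathbb{E}_{\psi}[F^2]$ from Lemma~\ref{Lem:Rfvariance}, and the unitarity identity $\mathscr{M}\mathscr{M}^{\dagger} = -(\mathscr{M}+\mathscr{M}^{\dagger})$. From unitarity and the submultiplicativity of the spectral norm, $\|\mathscr{M}\|\ll 1$ forces $\|\mathscr{M}\|_F^2 = \tr(\mathscr{M}\mathscr{M}^\dagger) \le d\|\mathscr{M}\|^2$, hence $\|\mathscr{M}\|_F^2/d = \mathcal{O}(\|\mathscr{M}\|^2) = o(1)$; likewise $|\tr\mathscr{M}| \le d\|\mathscr{M}\|$, so $|\tr\mathscr{M}|^2/d^2 = \mathcal{O}(\|\mathscr{M}\|^2)$, and similarly $|\tr(\mathscr{M}^2)| \le d\|\mathscr{M}\|^2$. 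These give the orders of every trace monomial appearing below.

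Next I would substitute $A = \id + \mathscr{M}$ and expand both moments in powers of $\mathscr{M}$, tracking which terms are $\mathcal{O}(1)$, which are $\mathcal{O}(\|\mathscr{M}\|)$ or smaller relative to the leading behavior. From Lemma~\ref{Th:FidGlobal}, $R_f^{\mc{E}} = 1 - \frac{\|\mathscr{M}\|_F^2}{d+1} + \frac{|\tr\mathscr{M}|^2}{d(d+1)} = 1 - \mathcal{O}(\|\mathscr{M}\|^2/d)$, so $R_f^{\mc{E}} = 1 - \delta$ with $\delta = \mathcal{O}(\|\mathscr{M}\|^2)$ — actually $\delta = \mathcal{O}(\|\mathscr{M}\|^2/1)$ in the worst case but at least $\delta = o(1)$. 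Then $\big(R_f^{\mc{E}}\big)^2 = 1 - 2\delta + \delta^2$. For $\mathbb{E}_\psi[F^2]$, I would expand $|\tr(\id+\mathscr{M})|^4$, $|\tr(\id+\mathscr{M})|^2$, $\tr[(\id+\mathscr{M})^2]\tr[(\id+\mathscr{M}^\dagger)]^2$, etc., always peeling off the pure-identity piece (which contributes $d^4$, $d^2$, $d^3$, $d^2$ respectively to the numerator, matching $(d+3)(d+2)(d+1)d = d^4(1+o(1))$ so that the zeroth-order term of $\mathbb{E}_\psi[F^2]$ is $1$) and bounding every correction term that contains at least one factor of $\mathscr{M}$ using the norm estimates above. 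The outcome should be $\mathbb{E}_\psi[F^2] = 1 - 2\delta + \mathcal{O}(\|\mathscr{M}\|^2/d)$ with the \emph{same} $\delta$ to leading order, so that in the difference $\mathrm{Var}[F] = \mathbb{E}_\psi[F^2] - (R_f^{\mc{E}})^2$ the $1$ and the $-2\delta$ cancel, leaving only $\mathcal{O}(\|\mathscr{M}\|^2/d) + \mathcal{O}(\delta^2) = \mathcal{O}(1/d)$.

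The main obstacle I anticipate is the cancellation bookkeeping: both $\mathbb{E}_\psi[F^2]$ and $(R_f^{\mc{E}})^2$ are individually $1 + \mathcal{O}(1)$ quantities whose difference is much smaller, so a naive term-by-term bound of each summand by $\mathcal{O}(1)$ is useless. I need to exhibit the matching of the $\mathcal{O}(\delta)$ contributions exactly. Concretely, the $|\tr(\id+\mathscr{M})|^2(4d+8)/[(d+3)(d+2)(d+1)d]$ term and the combinatorial factors from the denominator expansion $1/[(d+3)(d+2)(d+1)d] = d^{-4}(1 - 6/d + \dots)$ must be collected so that their first-order-in-$\mathscr{M}$ pieces reproduce $-2\|\mathscr{M}\|_F^2/(d+1) + 2|\tr\mathscr{M}|^2/[d(d+1)]$ coming from $(R_f^{\mc{E}})^2$; using $\mathrm{Re}\,\tr\mathscr{M} = -\tfrac12\|\mathscr{M}\|_F^2$ (the real part of the unitarity identity, traced) is what converts the various $\tr\mathscr{M}$-linear terms into $\|\mathscr{M}\|_F^2$-quadratic terms, making the cancellation visible. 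Once that identity is in hand, the remaining terms are all genuinely $\mathcal{O}(\|\mathscr{M}\|^2/d)$ or $\mathcal{O}(\|\mathscr{M}\|^4)$, and the bound $\mathrm{Var}[F] = \mathcal{O}(1/d)$ follows. I would present this as: expand, apply the unitarity/trace identities to rewrite linear-in-$\mathscr{M}$ terms, cancel the leading $1 - 2\delta$, and bound the residue.
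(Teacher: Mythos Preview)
Your overall plan—substitute $A=\id+\mathscr{M}$ into the exact formulas of Lemmas~\ref{Th:FidGlobal} and~\ref{Lem:Rfvariance} and cancel leading pieces—is sound, but the cancellation you describe stops one order too early and the final step does not follow. You propose to show $\mathbb{E}_\psi[F^2]=1-2\delta+\mathcal{O}(\|\mathscr{M}\|^2/d)$, cancel the $1-2\delta$ against $(R_f^{\mc{E}})^2=1-2\delta+\delta^2$, and conclude $\mathrm{Var}[F]=\mathcal{O}(\|\mathscr{M}\|^2/d)+\mathcal{O}(\delta^2)=\mathcal{O}(1/d)$. But $\delta=\mathcal{O}(\|\mathscr{M}\|^2)$ with no factor of $1/d$ (the worst case is $\|\mathscr{M}\|_F^2=d\|\mathscr{M}\|^2$, giving $\delta\sim\|\mathscr{M}\|^2$), so $\delta^2=\mathcal{O}(\|\mathscr{M}\|^4)$ is a constant independent of $d$ and is \emph{not} $\mathcal{O}(1/d)$. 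The $\delta^2$ term must itself cancel between $\mathbb{E}_\psi[F^2]$ and $(R_f^{\mc{E}})^2$; matching only to first order in $\delta$ leaves a residual that is too large.

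The paper avoids this trap by reparametrizing: set $|\tr(\id+\mathscr{M})|=d(1+x)$ and $|\tr[(\id+\mathscr{M})^2]|=d(1+y)$, so that $x,y=\mathcal{O}(\|\mathscr{M}\|)$ are bounded and the formulas become polynomials in $(1+x)$, $(1+y)$ with explicit $1/d$ prefactors. Then $R_f^{\mc{E}}=(1+x)^2+\mathcal{O}(1/d)$ and $\mathbb{E}_\psi[F^2]=(1+x)^4+\mathcal{O}(1/d)$, and the leading $(1+x)^4$ cancels \emph{exactly} in the difference—this single block contains all of $1-2\delta+\delta^2$ (and higher) simultaneously. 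Your direct expansion in $\mathscr{M}$ can be made to work, but you must track the cancellation to all orders in $\delta$, not just the first; the cleanest way to do that is precisely the $(1+x)$ substitution the paper uses.
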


\begin{proof}
For convenience, we define $x$ and $y$ such that
\begin{equation}
\begin{aligned}
&|\tr(\id+\mathscr{M})|=d(1+x),\quad |\tr[(\id+\mathscr{M})^2]|=d(1+y).\\
\end{aligned}
\end{equation}
According to the result of $R_f^{\mc{E}}(U_0,U)$ from Eq.~\eqref{AFidG} along with $\overline{F^2}$ from Eq.~\eqref{Eq:VarF},
we have
\begin{align}
 R_f^{\mc{E}}(U_0,U)&=\frac{|\tr(\id+\mathscr{M})|^2+d}{d(d+1)}=\frac{d^2(1+x)^2+d}{d(d+1)}=(1+x)^2+\frac{1}{d}, \\
\overline{F^2}&\le
\frac{1}{(d+3)(d+2)(d+1)d}[
d^4(1+x)^4+ (4d+8)d^2(1+x)^2 + 2(1+y) (1+x)^2 d^3 +\mathcal{O} (d^2)]\nonumber \\
&=(1+x)^4+4(1+x)^2/d+2(1+y) (1+x)^2/d+\mathcal{O} (1/d^2).
\end{align}
Thus the variance is
\begin{equation}
\begin{aligned}
\mathrm{Var}\left[F(U_0\ket{\psi},U\ket{\psi})\right]&=\overline{F^2}-R_f^{\mc{E}}(U_0,U)^2\\
&\le (1+x)^4+4(1+x)^2/d+2(1+y) (1+x)^2/d-(1+x)^4-2(1+x)^2/d+\mathcal{O} (1/d^2)\\
&=(4+2y)(1+x)^2/d+\mathcal{O} (1/d^2)\\
&=\mathcal{O}({1}/{d})
\end{aligned}
\end{equation}
as claimed.
\end{proof}

Note that the leading terms cancel to give a variance scaling like $\frac{1}{d}$, independent of the particular approximate implementation $U$.
Similarly to Corollary~\ref{l2lower}, we can also show that the bound is tight with an additional assumption.
\begin{corollary}\label{Coro:fidelower}
For random inputs from a $d$-dimensional 4-design ensemble $\mc{E}$,
if $R^\mc{E}_f(U_0,U)=1-\varepsilon^2$ with $\varepsilon \gg (\frac{1}{d})^{\frac{1}{4}}$, then the average trace distance satisfies $R_t(U_0,U)=\Theta (\varepsilon)$.
\end{corollary}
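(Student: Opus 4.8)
The plan is to prove the two bounds $R_t^{\mc{E}}(U_0,U)=O(\varepsilon)$ and $R_t^{\mc{E}}(U_0,U)=\Omega(\varepsilon)$ separately, mirroring the argument of Corollary~\ref{l2lower}. First I would set up notation: write $F(\psi):=F(U_0\ket{\psi},U\ket{\psi})=|\bra{\psi}U_0^\dagger U\ket{\psi}|^2$, so that by Lemma~\ref{Th:FidGlobal} its mean is $\bar F:=\mathbb{E}_\psi F(\psi)=R_f^{\mc{E}}(U_0,U)=1-\varepsilon^2$. Since $U_0\ket{\psi}$ and $U\ket{\psi}$ are pure states, the right-hand inequality in Eq.~\eqref{eq:FidTr} is saturated, so in fact $R_t^{\mc{E}}(U_0,U)=\mathbb{E}_\psi\bigl[2\sqrt{1-F(\psi)}\bigr]$ exactly.

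The upper bound is immediate: by concavity of the square root (equivalently, the Cauchy–Schwarz step in Eq.~\eqref{Eq:Rtupp}), $R_t^{\mc{E}}(U_0,U)\le 2\sqrt{1-\bar F}=2\varepsilon$, giving $R_t^{\mc{E}}(U_0,U)=O(\varepsilon)$.

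The lower bound is the crux, and it is where the naive estimate fails: plugging $\bar F=1-\varepsilon^2$ into Eq.~\eqref{Eq:Rtlower} yields only $R_t^{\mc{E}}(U_0,U)\ge 2(1-\sqrt{1-\varepsilon^2})\approx\varepsilon^2$, which is quadratically too small. Instead I would exploit the concentration of $F(\psi)$ around $\bar F$. By Lemma~\ref{Le:varF}, $\mathrm{Var}(F)=\mathcal{O}(1/d)$ (in the regime $\|\mathscr{M}\|\ll 1$ relevant to an accurate simulation). Applying Chebyshev's inequality with deviation $a=\varepsilon^2/2$ gives $\mathrm{Pr}\{F(\psi)>1-\varepsilon^2/2\}\le \mathrm{Var}(F)/a^2=\mathcal{O}(1/(d\varepsilon^4))$. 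The hypothesis $\varepsilon\gg(1/d)^{1/4}$ is precisely what forces $d\varepsilon^4\to\infty$, so this tail probability is $o(1)$, hence at most $\tfrac12$ for $d$ large enough. On the complementary event, which has probability at least $\tfrac12$, we have $1-F(\psi)\ge\varepsilon^2/2$ and therefore $2\sqrt{1-F(\psi)}\ge\sqrt{2}\,\varepsilon$; discarding the rest of the expectation (all terms being nonnegative) yields $R_t^{\mc{E}}(U_0,U)\ge\tfrac12\sqrt{2}\,\varepsilon=\Omega(\varepsilon)$. Combining with the upper bound gives $R_t^{\mc{E}}(U_0,U)=\Theta(\varepsilon)$.

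The main obstacle is exactly this lower bound: the elementary fidelity–trace-distance inequalities only deliver $\Omega(\varepsilon^2)$, and recovering the correct $\Omega(\varepsilon)$ scaling requires the fourth-moment variance estimate of Lemma~\ref{Le:varF} together with a careful check that the assumption $\varepsilon\gg d^{-1/4}$ is what makes the Chebyshev tail negligible — the analogue of the role played by $\varepsilon\gg d^{-1/2}$ in Corollary~\ref{l2lower}. Everything else (the concavity upper bound and the discard-the-small-part step) is routine.
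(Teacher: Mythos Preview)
Your proposal is correct and follows essentially the same argument as the paper: both prove the upper bound via concavity (Eq.~\eqref{Eq:Rtupp}) and the lower bound by Chebyshev with deviation $a=\varepsilon^2/2$ using the variance estimate of Lemma~\ref{Le:varF}, so that the tail probability is $\mathcal{O}(1/(d\varepsilon^4))=o(1)$ under $\varepsilon\gg d^{-1/4}$. Your one-sided tail formulation is slightly cleaner than the paper's three-part split, but the idea is identical.
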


\begin{proof}
Here we take $F(U_0\ket{\psi},U\ket{\psi})]$ as a random variable with respect to the random input state $\ket{\psi}$, and abbreviate it as $F(\psi)$. By definition,
\begin{align}
R^\mc{E}_t(U_0,U)
&=\sum_i p_i\sqrt{1-F(\psi_i)}\nonumber \\
&=\sum_{\{i|F(\psi_i)> R_f+a\}} p_i\sqrt{1-F(\psi_i)}+\sum_{\{i|R_f+a\ge F(\psi_i)> R_f-a\}} p_i\sqrt{1-F(\psi_i)}+ \sum_{\{i|F(\psi_i)< R_f-a\}} p_i\sqrt{1-F(\psi_i)}\nonumber \\
&\ge \Pr[R_f+a
\ge F>  R_f-a] \sqrt{1-R_f-a}+
\Pr[ F\le  R_f-a] \sqrt{1-R_f+a}\nonumber \\
&\ge  \left(1-\Pr[|F-R_f|\ge a]\right)\sqrt{1-R_f-a}.
\end{align}

Assuming that
$R_f=1-\varepsilon^2$, and choosing $a=\varepsilon^2/2$, the Chebyshev inequality gives $\Pr[|F-R_f|\ge a] = \mathcal O (\frac{1}{d\varepsilon^4})$ by the variance in Lemma \ref{Le:varF}.
As a result, as $\varepsilon \gg (\frac{1}{d})^{\frac{1}{4}}$,
\begin{equation}
\begin{aligned}
R_t(U_0,U)\ge  [1- o(1)]      \sqrt{\varepsilon^2/2}=\mathcal O(\varepsilon).
\end{aligned}
\end{equation}
On the other hand, the inequality in Eq.~\eqref{Eq:Rtupp} gives
\begin{equation}
\begin{aligned}
R_t(U_0,U)\le  2\varepsilon.
\end{aligned}
\end{equation}
Hence the result follows.
\end{proof}

%=================================================
\subsection{Average measurement error}
In the previous sections, we mainly considered the scenario where the input state is sampled from some ensemble, and characterize the average behavior. If we fix the input state but choose a final measurement at random, we can analyze the performance in a similar way. Here we define the average error with \emph{both} random measurements and random inputs:
\begin{equation}\label{eq:defRM}
\begin{aligned}
R_m(U_0,U)&:=\mathbb{E}_{\psi,O}\left|\tr \left[O\left(U\ket{\psi}\bra{\psi}U^{\dagger}-U_0\ket{\psi}\bra{\psi}U_0^{\dagger}\right)\right]\right|,\\
\end{aligned}
\end{equation}
where the measurement $O$ is chosen from some ensemble. For example, we could consider the projective measurement $O=\ket{\phi}\bra{\phi}$ with $\phi$ sampled from $\mc{E}$. Similarly to previous calculations, by using the Cauchy-Schwartz inequality, we can bound $R_m(U_0,U)$ by
\begin{equation}\label{Eq:RMs}
\begin{aligned}
R_m(U_0,U)&=\mathbb{E}_{\psi,O} \sqrt{  \left(\bra{\psi}U^{\dagger}O U-U_0^{\dagger}O U_0\ket{\psi}\right)^2 }\\
&\le\sqrt{  \mathbb{E}_{\psi,O} \left(\bra{\psi}U^{\dagger}O U-U_0^{\dagger}O U_0\ket{\psi}\right)^2 }=:\sqrt{R_{m,s}(U_0,U)}.
\end{aligned}
\end{equation}
We can evaluate the expectation inside the square root, $R_{m,s}(U_0,U)$, as follows.
\begin{lemma}\label{lemma:RMs}
For random inputs from a $d$-dimensional 2-design ensemble and a random projective measurement $O=\ket{\phi}\bra{\phi}$ with $\phi$ also sampled from a 2-design ensemble,
\begin{equation}
\begin{aligned}
R_{m,s}(U_0,U)=\frac{2}{d(d+1)^2}\|\mathscr{M}\|_F^2-\frac{2}{d^2(d+1)^2}\tr(\mathscr{M})\tr(\mathscr{M}^{\dag}),
\end{aligned}
\end{equation}
where $\mathscr{M}$ is the multiplicative error with $U=U_0(\id+\mathscr{M})$.
\end{lemma}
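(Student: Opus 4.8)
The plan is to carry out the two independent 2-design averages one after the other, using the symmetric-subspace identity \eqref{Eq:twodesignsym} twice and exploiting unitarity to collapse almost everything. First I would fix the measurement $O=\ket{\phi}\bra{\phi}$ and write $\Delta:=U^{\dagger}OU-U_0^{\dagger}OU_0$, so that $R_{m,s}(U_0,U)=\mathbb{E}_{\psi,O}(\bra{\psi}\Delta\ket{\psi})^2$. Averaging over $\psi$ drawn from a 2-design, $\mathbb{E}_{\psi}(\bra{\psi}\Delta\ket{\psi})^2=\tr\!\left[(\Delta\otimes\Delta)\tfrac{\id^{\otimes 2}+\s}{d(d+1)}\right]=\tfrac{(\tr\Delta)^2+\tr(\Delta^2)}{d(d+1)}$, exactly as in the $S(\psi)$ computation of Lemma~\ref{Lemma:S}.

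Next I would simplify the two traces. By cyclicity, $\tr\Delta=\tr(U^{\dagger}OU)-\tr(U_0^{\dagger}OU_0)=\tr O-\tr O=0$, so only $\tr(\Delta^2)$ contributes. Expanding $\Delta^2$ and repeatedly using $UU^{\dagger}=U_0U_0^{\dagger}=\id$ together with $O^2=O$, the four terms collapse to $\tr(\Delta^2)=2-2\tr(U_0^{\dagger}OU_0U^{\dagger}OU)$. Setting $B:=UU_0^{\dagger}$ (unitary, and unitarily similar to $\id+\mathscr{M}$, hence $\tr B=d+\tr\mathscr{M}$ and $\tr(BB^{\dagger})=d$) and using that $O$ is a rank-one projector, $\tr(U_0^{\dagger}OU_0U^{\dagger}OU)=\tr(OBOB^{\dagger})=|\bra{\phi}B\ket{\phi}|^2$, so $\tr(\Delta^2)=2\bigl(1-|\bra{\phi}B\ket{\phi}|^2\bigr)$.

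Then I would perform the $\phi$-average, again with \eqref{Eq:twodesignsym}: $\mathbb{E}_{\phi}|\bra{\phi}B\ket{\phi}|^2=\tr\!\left[(B\otimes B^{\dagger})\tfrac{\id^{\otimes2}+\s}{d(d+1)}\right]=\tfrac{|\tr B|^2+d}{d(d+1)}$, which is the same fidelity-type integral already evaluated in Eq.~\eqref{AFidG}. Combining, $R_{m,s}(U_0,U)=\tfrac{1}{d(d+1)}\mathbb{E}_{\phi}[\tr(\Delta^2)]=\tfrac{2}{d(d+1)}\bigl(1-\tfrac{|\tr B|^2+d}{d(d+1)}\bigr)=\tfrac{2(d^2-|\tr B|^2)}{d^2(d+1)^2}$. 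Finally, expanding $|\tr B|^2=d^2+d(\tr\mathscr{M}+\tr\mathscr{M}^{\dagger})+|\tr\mathscr{M}|^2$ and using unitarity of $\id+\mathscr{M}$ to write $\tr\mathscr{M}+\tr\mathscr{M}^{\dagger}=-\tr(\mathscr{M}\mathscr{M}^{\dagger})=-\|\mathscr{M}\|_F^2$ gives $d^2-|\tr B|^2=d\|\mathscr{M}\|_F^2-\tr(\mathscr{M})\tr(\mathscr{M}^{\dagger})$, which yields the claimed expression.

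I expect the computation to be essentially mechanical; the only points needing care are (i) taking the two 2-design expectations in the right order (fix $O$, average over $\psi$, then average over $\phi$, which is legitimate since the two ensembles are independent), and (ii) the bookkeeping that reduces $\tr(\Delta^2)$ — it is essential to invoke unitarity at each step and the idempotence of the rank-one projector $O$, after which the problem reduces to the fidelity integral already computed in Eq.~\eqref{AFidG}.
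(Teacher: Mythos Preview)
Your proof is correct and follows essentially the same route as the paper: average over $\psi$ first via the 2-design identity \eqref{Eq:twodesignsym}, use $\tr\Delta=0$ to reduce to $\tr(\Delta^2)$, then average over $\phi$ with the same identity and invoke unitarity of $\id+\mathscr{M}$ to finish. Your packaging is slightly cleaner than the paper's in that you defer the substitution $U=U_0(\id+\mathscr{M})$ to the very end and recognize the $\phi$-average as the fidelity integral of Eq.~\eqref{AFidG}, whereas the paper expands in $\mathscr{M}$ before taking the $\phi$-average; but this is a cosmetic difference, not a different argument.
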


\begin{proof}
Let $B:=U^{\dagger}O U- U_0^{\dagger}O U_0$ and
compute
\begin{equation}
\begin{aligned}
&\tr\left[ B^{\otimes 2} \mathbb{E}_{\psi} \ket{\psi}\bra{\psi}^{\otimes 2}\right]
=\frac{1}{d(d+1)}\tr\left[B^{\otimes 2}(\id^{\otimes 2}+\s)\right]=\frac{1}{d(d+1)}\left[\tr(B)^2+\tr(B^2)\right]\\
&=\frac{2\tr(O^2)-2\tr(OU_0U^{\dag}OUU_0^{\dag})}{d(d+1)},
\end{aligned}
\end{equation}
where in the last line we use the fact $\tr(B)=0$. With $U=U_0(\id+\mathscr{M})$, the above equation becomes
\begin{equation}
\begin{aligned}
\frac{-2}{d(d+1)}\tr(O^2U_0\mathscr{M}U_0^{\dag}+O^2U_0\mathscr{M}^{\dag}U_0^{\dag}+O U_0\mathscr{M}U_0^{\dag}OU_0\mathscr{M}U_0^{\dag}).
\end{aligned}
\end{equation}
Averaging over the observable $O=\ket{\phi}\bra{\phi}$, we find
\begin{align}
R_{m,s}(U_0,U)&=\frac{-2}{d(d+1)}\mathbb{E}_{\phi}\tr\left(\phi U_0\mathscr{M}U_0^{\dag}+\phi U_0\mathscr{M}^{\dag}U_0^{\dag}+\phi U_0\mathscr{M}^{\dag}U_0^{\dag}\phi U_0\mathscr{M}U_0^{\dag}\right)\nonumber  \\
&=-\frac{2}{d(d+1)}\tr\left(\frac{\id}{d} U_0\mathscr{M}U_0^{\dag}+\frac{\id}{d} U_0\mathscr{M}^{\dag}U_0^{\dag}\right)-\frac{2}{d(d+1)}\tr\left[ \mathbb{E}_{\phi}\phi^{\otimes 2}\left(U_0\mathscr{M}^{\dag}U_0^{\dag}\otimes U_0\mathscr{M}U_0^{\dag}\right)\s \right]\nonumber \\
&=-\frac{2}{d^2(d+1)}\tr(U_0\mathscr{M}U_0^{\dag}+U_0\mathscr{M}^{\dag}U_0^{\dag})-\frac{2}{d^2(d+1)^2}[\tr(U_0\mathscr{M}^{\dag}\mathscr{M}U_0^{\dag})+\tr(U_0\mathscr{M}U_0^{\dag})\tr(U_0\mathscr{M}^{\dag}U_0^{\dag})]\nonumber \\
&=-\frac{2}{d^2(d+1)}\tr(\mathscr{M}+\mathscr{M}^{\dag})-\frac{2}{d^2(d+1)^2}[\tr(\mathscr{M}^{\dag}\mathscr{M})+\tr(\mathscr{M})\tr(\mathscr{M}^{\dag})]\nonumber \\
&=\frac{2}{d(d+1)^2}\tr(\mathscr{M}^{\dag}\mathscr{M})-\frac{2}{d^2(d+1)^2}\tr(\mathscr{M})\tr(\mathscr{M}^{\dag}),
\end{align}
where in the third line we use the 1-design and 2-design property for the first and the second terms, respectively, and the last line uses the unitarity condition $\tr(\mathscr{M}+\mathscr{M}^{\dag})=-\tr(\mathscr{M}^{\dag}\mathscr{M})$.
\end{proof}

By combining Lemma \ref{lemma:RMs} and Eq.~\eqref{Eq:RMs}, we obtain a bound for $R_{m}(U(t),U_0(t))$:

\begin{theorem}\label{th:RMupp}
For random inputs from a $d$-dimensional 2-design ensemble and a random projective measurement $O=\ket{\phi}\bra{\phi}$ with $\phi$ also sampled from a 2-design ensemble, the average error is upper bounded as
\begin{equation}
 R_{m}(U_0,U)\le \left[\frac{2}{d(d+1)^2}\|\mathscr{M}\|^2_F-\frac{2}{d^2(d+1)^2}\tr(\mathscr{M})\tr(\mathscr{M}^{\dag})\right] ^{\frac{1}{2}},
\end{equation}
where $\mathscr{M}$ is the multiplicative error with $U(t)=U_0(\id+\mathscr{M})$.
\end{theorem}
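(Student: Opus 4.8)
The plan is to combine the Cauchy--Schwarz reduction already recorded in Eq.~\eqref{Eq:RMs} with the moment computation of Lemma~\ref{lemma:RMs}. First I would note that, since $x\mapsto\sqrt{x}$ is concave, Jensen's inequality (equivalently Cauchy--Schwarz) gives
\[
R_m(U_0,U)=\mathbb{E}_{\psi,O}\bigl|\bra{\psi}B\ket{\psi}\bigr|\le\Bigl(\mathbb{E}_{\psi,O}\bra{\psi}B\ket{\psi}^2\Bigr)^{1/2}=\sqrt{R_{m,s}(U_0,U)},
\]
where $B:=U^{\dagger}OU-U_0^{\dagger}OU_0$ is Hermitian with $\tr(B)=0$. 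Hence it suffices to establish the stated closed form for $R_{m,s}(U_0,U)$, after which the theorem follows by taking a square root.

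To evaluate $R_{m,s}$ I would write $\bra{\psi}B\ket{\psi}^2=\tr\bigl[B^{\otimes2}(\ket{\psi}\bra{\psi})^{\otimes2}\bigr]$ and average over $\psi$ using the $2$-design identity Eq.~\eqref{Eq:twodesignsym}, $\mathbb{E}_{\psi}(\ket{\psi}\bra{\psi})^{\otimes2}=(\id^{\otimes2}+\s)/d(d+1)$, together with the swap trick $\tr[(X\otimes Y)\s]=\tr(XY)$. Because $\tr(B)=0$, only the swap term contributes, so $\mathbb{E}_{\psi}\bra{\psi}B\ket{\psi}^2=\tr(B^2)/d(d+1)$. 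Expanding $B^2$ and using cyclicity of the trace collapses this to $[2\tr(O^2)-2\tr(OU_0U^{\dagger}OUU_0^{\dagger})]/d(d+1)$.

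Next I would insert $U=U_0(\id+\mathscr{M})$ and average over the random projector $O=\ket{\phi}\bra{\phi}$. The terms linear in $O$ require only $\mathbb{E}_{\phi}\phi=\id/d$ (the $1$-design property), whereas the term quadratic in $O$, namely $\tr\bigl[\mathbb{E}_{\phi}\phi^{\otimes2}\,(U_0\mathscr{M}^{\dagger}U_0^{\dagger}\otimes U_0\mathscr{M}U_0^{\dagger})\,\s\bigr]$, again calls for Eq.~\eqref{Eq:twodesignsym}; applying it and the swap trick produces a combination of $\tr(\mathscr{M}^{\dagger}\mathscr{M})$ and $\tr(\mathscr{M})\tr(\mathscr{M}^{\dagger})$ with the right prefactors. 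A final use of the unitarity relation $\tr(\mathscr{M}+\mathscr{M}^{\dagger})=-\tr(\mathscr{M}^{\dagger}\mathscr{M})$, which follows from $(\id+\mathscr{M})^{\dagger}(\id+\mathscr{M})=\id$, cancels the leftover linear contributions and yields exactly $R_{m,s}(U_0,U)=\tfrac{2}{d(d+1)^2}\|\mathscr{M}\|_F^2-\tfrac{2}{d^2(d+1)^2}\tr(\mathscr{M})\tr(\mathscr{M}^{\dagger})$.

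The main obstacle is purely bookkeeping: tracking which factors of $U_0$ and $U_0^{\dagger}$ cancel under the trace, and---most delicately---contracting the $2$-design average of the $\phi^{\otimes2}$ term with the swap operator in the correct order, since exchanging the two tensor slots turns $\tr(\mathscr{M}^{\dagger}\mathscr{M})$ into $\tr(\mathscr{M}^{\dagger})\tr(\mathscr{M})$ and vice versa. Beyond this Weingarten-type calculation at $t=2$ there is no conceptual difficulty.
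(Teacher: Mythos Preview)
Your proposal is correct and follows essentially the same route as the paper: the theorem is obtained by combining the Cauchy--Schwarz reduction $R_m\le\sqrt{R_{m,s}}$ of Eq.~\eqref{Eq:RMs} with the exact evaluation of $R_{m,s}$ in Lemma~\ref{lemma:RMs}, and your sketch of that evaluation (average over $\psi$ via the $2$-design identity, use $\tr B=0$, then average over $\phi$ and invoke unitarity of $\id+\mathscr{M}$) matches the paper's proof of the lemma step for step.
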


We can also bound $R_{m}(U(t),U_0(t))$ using the result on the average trace norm in Theorem \ref{Th:t}.
\begin{equation}\label{}
\begin{aligned}
\left|\tr \left[O\left(U\ket{\psi}\bra{\psi}U^{\dagger}-U_0\ket{\psi}\bra{\psi}U_0^{\dagger}\right)\right]\right|\leq \|O\| \|U\ket{\psi}\bra{\psi}U^{\dagger}-U_0\ket{\psi}\bra{\psi}U_0^{\dagger}\|_1.
\end{aligned}
\end{equation}
Thus, if $\|O\|\le 1$ for any $O$ in the ensemble (as for a projective measurement), then $R_m(U_0,U)\leq R_t(U_0,U)$. Compared with the upper bound on $R_t(U_0,U)$ in Theorem \ref{Th:t}, Theorem \ref{th:RMupp} here shows a further enhancement, since we also consider the average effect of the measurement, not just the random input. One can also study other kinds of measurement ensembles, such as local measurement or general positive operator valued measurements, and again expect average-case enhancement.

%=================================================
\subsection{Random inputs from subspaces and subsystems}

In the previous sections, we assume that input states are randomly chosen from the entire $n$-qubit Hilbert space $\mc{H}_d$.
We now consider the possibility that they are randomly chosen from a subspace and obey the 1-design condition there. We let $\Pi=\Pi^2$ denote the projection onto this subspace. The average $\ell_2$ error within the subspace is
\begin{equation}
\begin{aligned}
R_{\ell_2}^{\Pi}(U_0,U)
&:=\mathbb{E}_{\psi \in \Pi}\|(U-U_0)\ket{\psi}\|_{\ell_2}\\
&=\mathbb{E}_{\psi \in \Pi} \sqrt{2-\braket{\psi |U^{\dagger}U_0|\psi}-\braket{\psi |U^{\dagger}_0U|\psi}}\\
&\le \left(2-\mathbb{E}_{\psi \in \Pi} \braket{\psi |U^{\dagger}U_0|\psi}+\braket{\psi |U^{\dagger}_0U|\psi} \right)^{\frac{1}{2}} \\
&= \sqrt{\tr \left(\mathscr{M}\mathscr{M}^{\dagger}\frac{\Pi}{d_1}\right)}
=\frac{1}{\sqrt{d_1}} \|\mathscr{M}\Pi\|_F,
\end{aligned}
\end{equation}
where we use $\mathbb{E}_{\psi \in \Pi} (\ket{\psi}\bra{\psi})= \Pi/d_1$.

As a special case, we give a concrete upper bound on the average error for random inputs chosen from a \emph{subsystem}. Specifically, we consider an $n$-qubit input $\ket{\phi}_{[k]}\otimes \ket{\psi}_{[n-k]}$ where $\ket{\phi}_{[k]}$ is an arbitrary input state of the first $k$ qubits and $\ket{\psi}_{[n-k]}$ is a randomly chosen state of the remaining $n-k$ qubits. In this case, the input comes from the subspace $\ket{\phi}_{[k]} \otimes \mc{H}_2^{\otimes (n-k)}$.
For any $\ket{\phi}_{[k]}$, we have the upper bound
\begin{equation}\label{eq:subsysError}
\begin{aligned}
R_{\ell_2}^{\mathrm{sub},k}(U_0,U)&:= \mathbb{E}_{\psi_{[n-k]}}\Big\|(U-U_0) \ket{\phi}_{[k]}\ket{\psi}_{[n-k]}\Big\|_{\ell_2}\\
&\le \left(2-\mathbb{E}_{\psi_{[n-k]}} \bra{\phi}_{[k]} \bra{\psi}_{[n-k]} U^{\dagger}U_0 + U^{\dagger}_0U  \ket{\phi}_{[k]}\ket{\psi}_{[n-k]} \right)^{\frac{1}{2}}\\
&=\sqrt{\tr \left(\mathscr{M}\mathscr{M}^{\dagger}\ket{\phi}\bra{\phi}_{[k]}\otimes \frac{I_{d_1}}{d_1}\right)}\\
&\le \frac{1}{\sqrt{d_1}} \|\mathscr{M}\|_F,
\end{aligned}
\end{equation}
where $d_1=2^{n-k}$ is the dimension of an $(n-k)$-qubit state. The last inequality follows by observing that $\mathscr{M}\mathscr{M}^{\dagger}$ is positive semidefinite, and that $\ket{\phi}\bra{\phi}_{[k]}\otimes \frac{I_{d_1}}{d_1}<\frac{I_{d}}{d_1}$ for any state $\ket{\phi}_{[k]}$. We remark that this upper bound is not tight and one may enhance it by using knowledge of $\ket{\phi}_{[k]}$. However, the bound is reasonably tight if $k$ is small, as will be the case where we bound the error of out-of-time-order correlators in Appendix~\ref{sec:OTOC}.

%=================================================
\subsection{Multi-segment performance via the triangle inequality}\label{Sec:one-multi}

For worst-case analysis, the error $\|\mathscr{M}\|$ can first be estimated within a small time evolution segment, and then the total error in a long-time evolution can be bounded by the triangle inequality as $\|\mathscr{M}(t)\|\le r \|\mathscr{M}(t/r)\|$ (i.e., the error is subadditive). However, for average-case error as measured by the Frobenius norm $\|\mathscr{M}\|_F$, the inequality $\|\mathscr{M}(t)\|_F\le r \|\mathscr{M}(t/r)\|_F$ may not hold.
Instead, in this section we first show that the average error with Haar-random inputs $R(U(t),U_0(t))$ is subadditive, due to the left-invariance property of the Haar measure. For the $\ell_2$ norm, we further show that the upper bound $\tilde{R}_{\ell_2}(U(t),U_0(t))$ in Eq.~\eqref{eq:l2upp1} is subadditive under the weaker condition that the input state comes from a 1-design ensemble. This allows us to quantify the total average error by studying the average error in one simulation segment, which is useful for our later analysis.

\begin{lemma}\label{Lemma:segment1}
For random inputs from the Haar ensemble, for any $t \in \mathbb{R}$ and $r \in \mathbb{N}$, the average error satisfies
\begin{equation}
R(U^r(t/r),U_0^r(t/r))\le r  R(U(t/r),U_0(t/r)),
\end{equation}
where the average error measure $R$ can be $R_{\ell_2}$ or $R_{t}$ corresponding to $\ell_2$ norm or trace norm, respectively.
\end{lemma}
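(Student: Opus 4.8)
The plan is to combine a telescoping identity with the left-invariance of the Haar measure. Abbreviate $U:=U(t/r)$ and $U_0:=U_0(t/r)$, so the left-hand side is $\mc D$ applied to $U^r\ket{\psi}$ and $U_0^r\ket{\psi}$ averaged over $\psi$. I would start from the operator identity
\begin{equation}
U^r-U_0^r=\sum_{j=0}^{r-1}U^{\,r-1-j}\,(U-U_0)\,U_0^{\,j},
\end{equation}
which is verified by a one-line telescoping. Applying this to an arbitrary input $\ket{\psi}$ and writing $\ket{\psi_j}:=U_0^{\,j}\ket{\psi}$, the $j$-th summand equals $U^{\,r-1-j}\bigl[(U-U_0)\ket{\psi_j}\bigr]$; for the trace-norm version one instead telescopes $U^r\ket{\psi}\bra{\psi}U^{r\dagger}-U_0^r\ket{\psi}\bra{\psi}U_0^{r\dagger}$ into a sum of increments $U^{\,r-1-j}\bigl[\,U\ket{\psi_j}\bra{\psi_j}U^\dagger-U_0\ket{\psi_j}\bra{\psi_j}U_0^\dagger\,\bigr]\bigl(U^{\,r-1-j}\bigr)^{\dagger}$.

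Next I would invoke (i) the triangle inequality for the chosen distance $\mc D$---valid since both the $\ell_2$ distance between unit vectors and the trace distance between density operators are genuine metrics---and (ii) the unitary invariance of the $\ell_2$ and trace norms, which discards the leftmost factor $U^{\,r-1-j}$ from each summand. This yields, for every fixed $\ket{\psi}$, the pointwise bound
\begin{equation}
\mc D\bigl(U^r\ket{\psi},U_0^r\ket{\psi}\bigr)\le\sum_{j=0}^{r-1}\mc D\bigl(U\ket{\psi_j},U_0\ket{\psi_j}\bigr),\qquad \ket{\psi_j}=U_0^{\,j}\ket{\psi}.
\end{equation}
Taking the expectation over a Haar-random $\ket{\psi}$ and using linearity, it remains to show $\mathbb E_\psi\,\mc D(U\ket{\psi_j},U_0\ket{\psi_j})=\mathbb E_\psi\,\mc D(U\ket{\psi},U_0\ket{\psi})=R(U,U_0)$ for each $j$. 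This is precisely where the Haar hypothesis enters: since $U_0^{\,j}$ is a fixed unitary, left-invariance of the Haar measure implies $\ket{\psi_j}=U_0^{\,j}\ket{\psi}$ is again Haar-distributed, so the change of variables leaves the expectation unchanged. Summing the $r$ identical terms gives $R(U^r,U_0^r)\le r\,R(U,U_0)$, as claimed.

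The argument is short, so the only real care is in (a) arranging the trace-norm telescoping so each increment is again a single-segment error, i.e.\ a conjugation of the \emph{pure} state $\ket{\psi_j}\bra{\psi_j}$ rather than of some mixed intermediate operator, and (b) noting that the reduction genuinely needs full left-invariance of the Haar measure and not merely the 1-design property: for a generic 1-design the states $U_0^{\,j}\ket{\psi}$ need not be equidistributed with $\ket{\psi}$, which is why the companion statement for 1-designs (proved next) instead works at the level of the squared upper bound $\tilde R_{\ell_2}$, whose expectation depends on only a first-order function of the input state.
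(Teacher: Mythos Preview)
Your proposal is correct and follows essentially the same approach as the paper: telescoping $U^r$ against $U_0^r$, applying the triangle inequality for the metric $\mc D$ together with unitary invariance of the norm to peel off the outer $U^{r-1-j}$, and then invoking left-invariance of the Haar measure to identify each of the $r$ summands with $R(U,U_0)$. Your write-up is in fact slightly more explicit than the paper's, which collapses the unitary-invariance and Haar-invariance steps into a single line.
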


\begin{proof}
For simplicity, in this proof we denote $U:=U(t/r)$ and $U_0:=U_0(t/r)$,
so that $U$ is the unitary evolution actually implemented in a segment and $U_0$ is the corresponding ideal evolution.
Notice that the trace norm and $\ell_2$ norm both induce metrics satisfying the triangle inequality, so
\begin{equation}
\begin{aligned}
R(U^r(t/r),U_0^r(t/r))&= \mathbb{E}_{\psi}[\mc{D}(U^r\ket{\psi},U_0^r\ket{\psi})]\\
&=\int_{\psi} \mc{D}(U^r\ket{\psi},U_0^r\ket{\psi}) \d\psi\\
&\le \sum_{k=0}^{r-1} \int_{\psi} \mc{D}(U^{r-k}U_0^{k}\ket{\psi},U^{r-k-1}U_0^{k+1}\ket{\psi}) \d\psi\\
&=r\int_{\psi} \mc{D}(U\ket{\psi},U_0\ket{\psi}) \d\psi\\
&=rR(U(t/r),U_0(t/r)).
\end{aligned}
\end{equation}
Here $\mc{D}$ denotes the relevant distance, coming from either the $\ell_2$ norm or the trace norm.
The second-to-last equality is due to the invariance of the Haar measure.
\end{proof}

For a discrete ensemble $\mathcal{E}$, the average error $R^{\mathcal{E}}$ cannot have the corresponding left-invariance property. However, we can use this property of its upper bound $\tilde{R}_{\ell_2}$ in Eq.~\eqref{eq:l2upp1}.
Now we show how to carry out this approach over multiple segments for $R^{\mathcal{E}}_{\ell_2}$.

\begin{lemma}\label{Lemma:segment2}
For random inputs from a 1-design ensemble $\mathcal{E}$,
for any $t \in \mathbb{R}$ and $r \in \mathbb{N}$,
\begin{equation}
R^{\mathcal{E}}_{\ell_2}(U^r(t/r),U_0^r(t/r))\le r  \tilde{R}_{\ell_2}(U(t/r),U_0(t/r)),
\end{equation}
where $\tilde{R}_{\ell_2}$ is defined in Eq.~\eqref{eq:l2upp}.
\end{lemma}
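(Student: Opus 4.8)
The plan is to prove this by passing through the Cauchy--Schwarz upper bound $\tilde{R}_{\ell_2}$, which---unlike the discrete-ensemble error $R^{\mc{E}}_{\ell_2}$---is a unitarily invariant norm distance and hence automatically subadditive under composition. First I would invoke Eq.~\eqref{eq:l2upp1}, which holds precisely because $\mc{E}$ is a 1-design and $\tilde{R}_{\ell_2}$ involves only first moments of $\psi$, to write
\[
R^{\mc{E}}_{\ell_2}(U^r(t/r),U_0^r(t/r))\le \tilde{R}_{\ell_2}(U^r(t/r),U_0^r(t/r)).
\]
So it suffices to show $\tilde{R}_{\ell_2}(U^r,U_0^r)\le r\,\tilde{R}_{\ell_2}(U,U_0)$, writing $U:=U(t/r)$, $U_0:=U_0(t/r)$.

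The key observation is that $\tilde{R}_{\ell_2}$ is, up to the normalization $1/\sqrt{d}$, just the Frobenius norm of $U-U_0$. Combining Eq.~\eqref{eq:l2upp-defn} with Lemma~\ref{Lemma:S} gives $\tilde{R}_{\ell_2}(U_0,U)=\frac{1}{\sqrt{d}}\|\mathscr{M}\|_F$ with $U=U_0(\id+\mathscr{M})$; since $\mathscr{M}=U_0^{\dagger}(U-U_0)$ and the Frobenius norm is unitarily invariant, $\|\mathscr{M}\|_F=\|U-U_0\|_F$, so $\tilde{R}_{\ell_2}(U_0,U)=\frac{1}{\sqrt{d}}\|U-U_0\|_F$. (If one prefers not to cite Lemma~\ref{Lemma:S}, the same identity follows directly from Eq.~\eqref{eq:l2upp} using $\mathbb{E}_{\psi\in\mathrm{Haar}}\ket{\psi}\bra{\psi}=\id/d$, which gives $\tilde{R}_{\ell_2}(U_0,U)^2=2-\frac{2}{d}\mathrm{Re}\,\tr(U_0^{\dagger}U)=\frac{1}{d}\|U-U_0\|_F^2$.)

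Next I would telescope. Using $U^r-U_0^r=\sum_{k=0}^{r-1}U^{r-1-k}(U-U_0)U_0^{k}$ and the triangle inequality for $\|\cdot\|_F$,
\[
\|U^r-U_0^r\|_F\le \sum_{k=0}^{r-1}\big\|U^{r-1-k}(U-U_0)U_0^{k}\big\|_F
= \sum_{k=0}^{r-1}\|U-U_0\|_F = r\,\|U-U_0\|_F,
\]
where each summand collapses because $U^{r-1-k}$ and $U_0^{k}$ are unitary. Dividing by $\sqrt{d}$ yields $\tilde{R}_{\ell_2}(U^r,U_0^r)\le r\,\tilde{R}_{\ell_2}(U,U_0)$, and chaining with the first displayed inequality gives the lemma. (This mirrors Lemma~\ref{Lemma:segment1}, where the telescoping relied on left-invariance of the Haar measure; here the telescoping instead rides on unitary invariance of the Frobenius norm, which is exactly what survives when we replace the non-invariant $R^{\mc{E}}_{\ell_2}$ by $\tilde{R}_{\ell_2}$.)

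The argument has no genuinely hard step; the only point requiring care is recognizing that one \emph{must} first move to $\tilde{R}_{\ell_2}$ before telescoping, since $R^{\mc{E}}_{\ell_2}$ for a discrete ensemble is not left-invariant and the naive term-by-term bound would introduce states outside $\mc{E}$.
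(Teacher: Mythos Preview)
Your proof is correct. It differs from the paper's only in the order of operations: the paper first telescopes at the level of $R^{\mc{E}}_{\ell_2}$ (using unitary invariance of the $\ell_2$ norm to strip the leading $U^{r-k-1}$), obtaining $\sum_k R^{\mc{E}}_{\ell_2}(UU_0^k,U_0^{k+1})$, and only then applies the Cauchy--Schwarz bound \eqref{eq:l2upp1} term by term, finally invoking the 1-design property to show $\tilde{R}_{\ell_2}(UU_0^k,U_0^{k+1})=\tilde{R}_{\ell_2}(U,U_0)$ for each $k$. You instead apply \eqref{eq:l2upp1} once at the outset and then telescope the Frobenius norm directly, having recognized the identity $\tilde{R}_{\ell_2}(U_0,U)=\frac{1}{\sqrt d}\|U-U_0\|_F$. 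The two arguments use the same ingredients (telescoping, triangle inequality, unitary invariance, 1-design) and yield the same bound; your packaging via the Frobenius norm is arguably a little cleaner, since the subadditivity step becomes the standard operator-norm telescoping rather than a state-by-state argument.
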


\begin{proof}
For simplicity, we again denote $U:=U(t/r)$ and $U_0:=U_0(t/r)$.
Then we have
\begin{equation}
\begin{aligned}
R_{\ell_2}(U^r,U_0^r)&=\mathbb{E}_{\psi\in \mc{E}} \|(U^r-U_0^r)\ket{\psi}\|_{\ell_2}\\
&\le \mathbb{E}_{\psi\in \mc{E}} \sum_{k=0}^{r-1} \|(U^{r-k}U_0^{k}-U^{r-k-1}U_0^{k+1})\ket{\psi}\|_{\ell_2}\\
&= \mathbb{E}_{\psi\in \mc{E}} \sum_{k=0}^{r-1} \|(UU_0^{k}-U_0^{k+1})\ket{\psi}\|_{\ell_2}\\
&= \sum_{k=0}^{r-1} R^{\mathcal{E}}_{\ell_2}(UU_0^k,U_0^{k+1})\\
&\le \sum_{k=0}^{r-1}
\tilde{R}_{\ell_2}(UU_0^k,U_0^{k+1})\\
&=r\tilde{R}_{\ell_2}(U,U_0).
\end{aligned}
\end{equation}
Here the first inequality is due to the triangle inequality and the second inequality is by Eq.~\eqref{eq:l2upp1}. In the last equality, for any $k$, $\tilde{R}_{\ell_2}(UU_0^k,U_0^{k+1})= \tilde{R}_{\ell_2}(U,U_0)$. This is
because $\tilde{R}_{\ell_2}$ in Eq.~\eqref{eq:l2upp} only involves a linear function of $\psi$, and thus the 1-design property is sufficient to apply invariance of the measure.
\end{proof}

Furthermore, for approximate 1-design inputs as defined in Eq.~\eqref{eq:AppDesign}, we have
\begin{align}
R^{\mathcal{E}}_{\ell_2}(U^r,U_0^r)
&\le \sum_{k=0}^{r-1} \left(2-\mathbb{E}_{\psi \in \mathcal{E}} \braket{\psi |(U^{\dagger}_0)^kU^{\dagger}U_0^{k+1}|\psi}+\braket{\psi |(U^{\dagger}_0)^{k+1}UU_0^k|\psi} \right)^{\frac{1}{2}}\nonumber \\
&= \sum_{k=0}^{r-1} \left(2-\mathbb{E}_{\psi \in \mathrm{Haar}} \braket{\psi |(U^{\dagger}_0)^kU^{\dagger}U_0^{k+1}|\psi}+\braket{\psi |(U^{\dagger}_0)^{k+1}UU_0^k|\psi}+  \tr\{[(U^{\dagger}_0)^kU^{\dagger}U_0^{k+1}+(U^{\dagger}_0)^{k+1}UU_0^k]C\}
\right)^{\frac{1}{2}}\nonumber \\
&\le r\left(2-\int_{\psi}\braket{\psi |U^{\dagger}U_0|\psi}+\braket{\psi |U^{\dagger}_0U|\psi}\d\psi +2d \|C\| \right)^{\frac{1}{2}}\nonumber \\
&\le  r\left(\frac{1}{d} \|\mathscr{M}(t/r)\|_F^2+2\epsilon \right)^{\frac{1}{2}},
\end{align}
where $C:=\sum_{\phi_i \in\mathcal{E}} p_i\ket{\phi_i}\bra{\phi_i}-\int_{\mathrm{Haar}} \ket{\psi}\bra{\psi} \d \psi$, $\|V\|=1$.
The second inequality holds because
\begin{equation}
\begin{aligned}
\tr((V+V^{\dag})C)\leq 2d\|V\|\|C\|=2d\|C\|,
\end{aligned}
\end{equation}
with $V=(U^{\dagger}_0)^kU^{\dagger}U_0^{k+1}$. Note that $C$ is Hermitian and $-\epsilon \id/d \leq  C \leq \epsilon \id/d$ according to the definition in Eq.~\eqref{eq:AppDesign}. The third inequality is due to $2d\|C\|\le 2\epsilon$.

Finally, we show the triangle inequality for inputs drawn from a 1-design ensemble of a subsystem.
\begin{lemma}\label{Lemma:subsystem}
For an input state $\ket{\phi}_{[k]}\otimes \ket{\psi}_{[n-k]}$ in an n-qubit Hilbert space, where $\ket{\phi}_{[k]}$ is a fixed state of $m$ qubits and $\ket{\psi}_{[n-k]}$
is drawn randomly from a 1-design  ensemble $\mathcal{E}$ on the remaining $n-k$ qubits, the  average  error  between  the  ideal evolution $U_0(t) =U_0^r(t/r)$ and its approximation $U(t) =U^r(t/r)$ is upper bounded as
\begin{equation}
R_{\ell_2}^{\mathrm{sub},k}(U(t),U_0(t))\le \frac{r}{\sqrt{d_1}} \|\mathscr{M}(t/r)\|_F,
\end{equation}
where $d_1:=2^{n-m}$.
\end{lemma}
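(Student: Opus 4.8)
The plan is to adapt the telescoping argument from the proof of Lemma~\ref{Lemma:segment2} to the subsystem setting, exploiting the fact that conjugating the single-step multiplicative error $\mathscr{M}(t/r)$ by a power of the ideal evolution $U_0(t/r)$ leaves both its Frobenius norm and the unitarity relation $\mathscr{M}\mathscr{M}^{\dagger}=-(\mathscr{M}+\mathscr{M}^{\dagger})$ intact. Throughout, write $U:=U(t/r)$, $U_0:=U_0(t/r)$, $\mathscr{M}:=\mathscr{M}(t/r)$, and $A:=U_0^{\dagger}U=\id+\mathscr{M}$.

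First I would fix the input $\ket{\Psi}:=\ket{\phi}_{[k]}\otimes\ket{\psi}_{[n-k]}$ and telescope
\[
(U^r-U_0^r)\ket{\Psi}=\sum_{j=0}^{r-1}U^{\,r-1-j}(U-U_0)U_0^{\,j}\ket{\Psi}.
\]
Applying the $\ell_2$-norm triangle inequality, discarding the unitary prefactor $U^{\,r-1-j}$ (which does not change the $\ell_2$ norm), and taking the expectation over the random subsystem state gives
\[
R_{\ell_2}^{\mathrm{sub},k}(U(t),U_0(t))\le\sum_{j=0}^{r-1}\mathbb{E}_{\psi_{[n-k]}}\big\|(U-U_0)U_0^{\,j}\ket{\Psi}\big\|_{\ell_2}.
\]
For each $j$ I would invoke Cauchy--Schwarz as in Eq.~\eqref{eq:l2upp}, reducing the task to bounding $\mathbb{E}_{\psi_{[n-k]}}\|(U-U_0)U_0^{\,j}\ket{\Psi}\|_{\ell_2}^2$. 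Expanding the square, and using $(U-U_0)^{\dagger}(U-U_0)=2\id-A-A^{\dagger}$, this equals $-\bra{\Psi}(\mathscr{M}_j+\mathscr{M}_j^{\dagger})\ket{\Psi}$ with $\mathscr{M}_j:=(U_0^{\,j})^{\dagger}\mathscr{M}\,U_0^{\,j}$, since $(U_0^{\,j})^{\dagger}(A+A^{\dagger})U_0^{\,j}=2\id+\mathscr{M}_j+\mathscr{M}_j^{\dagger}$. Using the 1-design property $\mathbb{E}_{\psi_{[n-k]}}\ket{\Psi}\bra{\Psi}=\ket{\phi}\bra{\phi}_{[k]}\otimes I_{d_1}/d_1$ and the unitarity relation for $\mathscr{M}_j$ (inherited from $\mathscr{M}$ under conjugation by the unitary $U_0^{\,j}$), the expectation collapses to $\tr[\mathscr{M}_j\mathscr{M}_j^{\dagger}(\ket{\phi}\bra{\phi}_{[k]}\otimes I_{d_1}/d_1)]$; since $\mathscr{M}_j\mathscr{M}_j^{\dagger}\succeq 0$ and $\ket{\phi}\bra{\phi}_{[k]}\otimes I_{d_1}/d_1\preceq I_d/d_1$, this is at most $\tfrac{1}{d_1}\tr(\mathscr{M}_j\mathscr{M}_j^{\dagger})=\tfrac{1}{d_1}\tr(\mathscr{M}\mathscr{M}^{\dagger})=\tfrac{1}{d_1}\|\mathscr{M}\|_F^2$ by cyclicity of the trace. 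Hence each of the $r$ summands is at most $\tfrac{1}{\sqrt{d_1}}\|\mathscr{M}(t/r)\|_F$, which yields the claim.

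The main obstacle is that this does \emph{not} follow by applying the single-segment subsystem bound \eqref{eq:subsysError} term by term: the intermediate states $U_0^{\,j}\ket{\phi}_{[k]}\otimes\ket{\psi}_{[n-k]}$ are in general not product states across the cut, so Eq.~\eqref{eq:subsysError} cannot be invoked on them directly. The resolution, exactly as in Lemma~\ref{Lemma:segment2}, is that the quantity inside the Cauchy--Schwarz bound is \emph{linear} in the input density matrix, so the 1-design property of $\ket{\psi}_{[n-k]}$ alone suffices to replace the random factor by $I_{d_1}/d_1$ regardless of what $U_0^{\,j}$ does; the conjugated error $\mathscr{M}_j$ then has exactly the same Frobenius norm and unitarity relation as $\mathscr{M}$, so all $r$ terms are controlled uniformly. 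The only other point requiring a little care is the operator-domination step $\ket{\phi}\bra{\phi}_{[k]}\otimes I_{d_1}/d_1\preceq I_d/d_1$, which is what makes the bound independent of $\ket{\phi}_{[k]}$ (and, as already remarked after Eq.~\eqref{eq:subsysError}, loose unless the fixed subsystem is small).
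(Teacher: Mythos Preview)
Your proof is correct and follows essentially the same route as the paper: telescope $U^r-U_0^r$, apply the triangle inequality and Cauchy--Schwarz termwise, use the 1-design property on the $(n-k)$-qubit factor to replace $\ket{\psi}\bra{\psi}_{[n-k]}$ by $I_{d_1}/d_1$, and then the operator domination $\ket{\phi}\bra{\phi}_{[k]}\otimes I_{d_1}/d_1\preceq I_d/d_1$. The only cosmetic difference is that the paper keeps $\mathscr{M}\mathscr{M}^{\dagger}$ fixed and conjugates the state by $U_0^{\,j}$, whereas you conjugate $\mathscr{M}$ to form $\mathscr{M}_j$ and keep the state fixed; these are identical by cyclicity of the trace, and your added remark on why the single-segment bound~\eqref{eq:subsysError} cannot be applied directly to $U_0^{\,j}\ket{\Psi}$ is a useful clarification.
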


\begin{proof}
We have
\begin{equation}
\begin{aligned}
R_{\ell_2}^{\mathrm{sub},k}(U(t),U_0(t))&=\mathbb{E}_{\psi_{[n-k]}} \left\|(U^{r}-U_0^{r})\ket{\phi}_{[k]}\ket{\psi}_{[n-k]}\right\|_{\ell_2}\\
&= \mathbb{E}_{\psi_{[n-k]}}  \left\|\sum_{k=0}^{r-1} (U^{r-k}U_0^k-U^{r-(k+1)}U_0^{k+1})\ket{\phi}_{[k]}\ket{\psi}_{[n-k]}\right\|_{\ell_2}\\
&\le \mathbb{E}_{\psi_{[n-k]}} \sum_{k=0}^{r-1} \left\|(U^{r-k}U_0^k-U^{r-(k+1)}U_0^{k+1})\ket{\phi}_{[k]}\ket{\psi}_{[n-k]}\right\|_{\ell_2}\\
&= \mathbb{E}_{\psi_{[n-k]}} \sum_{k=0}^{r-1} \left\|U^{r-(k+1)}(U-U_0)\  U_0^{k}\ket{\phi}_{[k]}\ket{\psi}_{[n-k]}\right\|_{\ell_2}\\
&\le \sum_{k=0}^{r-1} \sqrt{\tr[ (\mathscr{M}(t/r)\mathscr{M}^{\dagger}(t/r)\ U_0^k\ket{\phi_{m}}\bra{\phi_{m}}\otimes \frac{I_{d_1}}{d_1}(U_0^k)^{\dagger}]}\\
&\le \sum_{k=0}^{r-1} \sqrt{\tr [\mathscr{M}(t/r)\mathscr{M}^{\dagger}(t/r)\ U_0^k I_{2^m} \otimes \frac{I_{d_1}}{d_1}(U_0^k)^{\dagger}]}\\
&\le \frac{r}{\sqrt{d_1}} \|\mathscr{M}(t/r)\|_F
\end{aligned}
\end{equation}
as claimed.
\end{proof}

%%%%%%%%%%%%%%%%%%%%%%%%%%%%%%%%%%%%%%%%%%%%%%%%%%
\section{General theory of average error in Hamiltonian simulation algorithms}

In this section, we give a general theory of average error for the $p$th-order product formula method (PF$p$) and the Taylor series method.

%=================================================
\subsection{Product formula method}\label{Sec:generalp}

The first-order product formula is defined as
\begin{equation}
\mathscr{U}_1(t)=e^{-iH_1t}e^{-iH_2t}\cdots e^{-iH_Lt}
=\overrightarrow{\prod_l}e^{-iH_lt}.
\end{equation}
Here the arrow denotes the ordering of the product, i.e., the direction in which indices increase.
The $2k$th-order product formulas are defined
recursively by
\begin{equation}
\begin{aligned}\label{Eq:highorder}
\mathscr{U}_2(t)&=\overrightarrow{\prod_l}e^{-iH_lt/2} \overleftarrow{\prod_l}e^{-iH_lt/2},\\
\mathscr{U}_{2k}(t)&= [\mathscr{U}_{2k-2}(p_k t)]^2 \mathscr{U}_{2k-2}((1-4p_k)t) [\mathscr{U}_{2k-2}(p_k t)]^2,
\end{aligned}
\end{equation}
where $p_k:=\frac{1}{4-4^{1/(2k-1)}}$ and $k>1$ \cite{suzuki1991general}.
Overall, we have $S=2\cdot 5^{k-1}$ stages (operators of the form $\mathscr{U}_1$ or its reverse ordering) and the evolution can be rewritten as
\begin{equation}\label{eq:pf2k}
  \mathscr{U}_{2k}(t)=  \prod_{s=1}^{S} \prod_{l=1}^{L} e^{-it a_s H_{\pi_s(l)}}.
\end{equation}
Here $\pi_s$ is the identity permutation or reversal permutation and $t a_s$ denotes the simulation time in different stages.

To evolve for a long time $t$, we divide the evolution into $r$ steps and apply the above product formulas $r$ times, giving $\mathscr{U}_{2k}^r(t/r)$.
We give the following bound on the average-case performance of PF$p$ with $p=2k$.

\begin{theorem}\label{Th:general}
For the PF$p$ simulation $\mathscr{U}_{p}^r(t/r)$ specified by \eqref{eq:pf2k},
the average error in the $\ell_2$ norm for a $d$-dimensional 1-design input ensemble has the asymptotic upper bound
$R_{\ell_2}= \mathcal O\left(T_p{t^{p+1}}/{r^{p}}\right)$
where
\begin{equation}\label{eq:Tp}
T_p:=\sum_{l_1,\dots,l_{p+1}=1}^L
\frac{1}{\sqrt{d}} \left\|[H_{l_1},[H_{l_2},\dots,[H_{l_p},H_{l_{p+1}}]]] \right\|_F.
\end{equation}
Therefore $r=\mathcal O\bigl(T_p^{\frac{1}{p}}t^{1+\frac{1}{p}}\varepsilon^{-\frac{1}{p}}\bigr)$ segments suffice to ensure average error at most $\varepsilon$.
\end{theorem}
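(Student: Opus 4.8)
The plan is to reduce the $r$-segment error to a single-segment Frobenius-norm error, and then to establish a Frobenius-norm analogue of the commutator-scaling Trotter bound of Ref.~\cite{childs2020theory}. First I would set $\tau := t/r$ and let $\mathscr{M}(\tau)$ denote the single-segment multiplicative error, $\mathscr{U}_p(\tau) = U_0(\tau)(\id + \mathscr{M}(\tau))$. Invoking the subadditivity Lemma~\ref{Lemma:segment2} together with the identity $\tilde{R}_{\ell_2}(\mathscr{U}_p(\tau),U_0(\tau)) = \frac{1}{\sqrt{d}}\|\mathscr{M}(\tau)\|_F$ (Eq.~\eqref{eq:l2upp-defn} combined with Lemma~\ref{Lemma:S}) gives $R_{\ell_2}(\mathscr{U}_p^r(\tau),U_0^r(\tau)) \le \frac{r}{\sqrt{d}}\|\mathscr{M}(\tau)\|_F$. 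Since $U_0(\tau)$ is unitary and the Frobenius norm is unitarily invariant, $\|\mathscr{M}(\tau)\|_F = \|\mathscr{U}_p(\tau) - U_0(\tau)\|_F$, so it would remain to prove the single-segment additive bound $\|\mathscr{U}_p(\tau) - U_0(\tau)\|_F = \mathcal{O}\bigl(\sqrt{d}\,T_p\,\tau^{p+1}\bigr)$.

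For this single-segment bound I would reuse the exact error representation of the $p$th-order formula \eqref{eq:pf2k} from Ref.~\cite{childs2020theory}: the order conditions cancel all contributions through $\mathcal{O}(\tau^p)$, and the remainder admits an integral representation in which each summand has the form (a product of exponentials and/or $U_0$) $\times$ (a nested commutator $[H_{l_1},[H_{l_2},\dots,[H_{l_p},H_{l_{p+1}}]]]$) $\times$ (a product of exponentials and/or $U_0$), integrated over a region of total volume $\mathcal{O}(\tau^{p+1})$. The one new ingredient is that the flanking factors are products of unitaries, hence have operator norm $1$; applying $\|ABC\|_F \le \|A\|\,\|B\|_F\,\|C\|$ termwise under the integral makes those factors drop out, so the innermost commutator contributes its Frobenius norm rather than its operator norm. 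Pulling the sum over the $L^{p+1}$ index tuples out of the Frobenius norm by the triangle inequality then reproduces the worst-case argument of Ref.~\cite{childs2020theory} verbatim with $\|[\cdots]\|$ replaced by $\|[\cdots]\|_F$, giving $\|\mathscr{U}_p(\tau) - U_0(\tau)\|_F = \mathcal{O}\bigl(\tau^{p+1}\sum_{l_1,\dots,l_{p+1}}\|[H_{l_1},[H_{l_2},\dots,[H_{l_p},H_{l_{p+1}}]]]\|_F\bigr) = \mathcal{O}(\sqrt{d}\,T_p\,\tau^{p+1})$.

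Assembling the pieces with $\tau = t/r$ gives $R_{\ell_2} = \mathcal{O}\bigl(\tfrac{r}{\sqrt d}\cdot\sqrt d\,T_p(t/r)^{p+1}\bigr) = \mathcal{O}(T_p t^{p+1}/r^p)$; setting this equal to $\varepsilon$ and solving yields $r = \mathcal{O}\bigl(T_p^{1/p}t^{1+1/p}\varepsilon^{-1/p}\bigr)$. The trace-norm statement would then follow immediately from $R_t \le 2R_{\ell_2}$ (Corollary~\ref{Cor:trNorm}), with the factor $2$ absorbed into the $\mathcal{O}$.

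The hard part will be the middle step: checking that in the error representation of Ref.~\cite{childs2020theory} every operator adjacent to a nested commutator is genuinely a product of unitaries (so that $\|A\| = \|C\| = 1$ and the substitution $\|ABC\|_F \le \|B\|_F$ is legitimate), and that the relevant commutator structure is exactly the $(p+1)$-fold nested one entering $T_p$. Once that is established, the rest — bounding the simplex volume to extract the extra factor of $\tau$, moving the index sum outside the Frobenius norm, and inverting the bound for $r$ — is routine.
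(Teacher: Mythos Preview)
Your proposal is correct and follows the same overall architecture as the paper: reduce to a single segment via Lemma~\ref{Lemma:segment2}, invoke the integral error representation of Ref.~\cite{childs2020theory} in which the nested commutators $N_{\vec j_{p+1}}$ are flanked by unitaries $E_{\vec j_{p+1}}$ and $F_{\vec j_{p+1}}$, extract the $\tau^{p+1}$ from the simplex integral, and sum over index tuples.

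The one genuine difference is in how you bound $\|\mathscr{M}(\tau)\|_F$. The paper squares first, writing $\tr(\mathscr{M}\mathscr{M}^\dag)$ as a double sum of cross terms $\tr(E_{\vec j}N_{\vec j}F_{\vec j}F_{\vec j'}^\dag N_{\vec j'}^\dag E_{\vec j'}^\dag)$, and then bounds each cross term by $\|N_{\vec j}\|_F\|N_{\vec j'}\|_F$ via the Cauchy--Schwarz Lemma~\ref{Lemma:traceproduct}. You instead apply the triangle inequality and unitary invariance of $\|\cdot\|_F$ directly to the sum, giving $\|\mathscr{M}\|_F\le\sum_{\vec j}\int\|N_{\vec j}\|_F$. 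Both routes yield exactly the same final bound $(\sum_{\vec j}\|N_{\vec j}\|_F)^2$ for $\|\mathscr{M}\|_F^2$, but yours is more elementary: it bypasses Lemma~\ref{Lemma:traceproduct} and makes transparent that the worst-case argument of Ref.~\cite{childs2020theory} carries over verbatim with $\|\cdot\|$ replaced by $\|\cdot\|_F$. The paper's squared-then-Cauchy--Schwarz pattern is not wasted effort, though, since that technique is reused elsewhere (e.g.\ in the interference analysis of Appendix~\ref{Sec:interference}) where a direct triangle inequality would be too crude.
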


\begin{proof}
Here we define the order of tuples $(s,l)$ as
$(s,l)\prec (s',l')$ when $s<s'$ or $s=s',l<l'$ and $(s,l)\preceq (s',l')$ when $s<s'$ or $s=s',l\le l'$.
According to Theorem 3 in Ref.~\cite{childs2020theory}, the multiplicative error $\mathscr{M}(t)$
can be expressed as
\begin{equation}
\mathscr{M}(t)=e^{iHt} \int_0^t \d\tau_1 e^{-i(t-\tau_1)H} \mathscr{U}_p(\tau_1) \mathscr{N}(\tau_1)=\int_0^t \d\tau_1 e^{i\tau_1H} \mathscr{U}_p(\tau_1) \mathscr{N}(\tau_1),
\end{equation}
where $\mathscr{U}_p$ is the $p$th-order Trotter formula of Eq.~\eqref{eq:pf2k} and
\begin{equation}
\begin{aligned}
 \mathscr{N}(\tau_1)= &\sum_{(s,l)}
 \overrightarrow{\prod}_{(s',l')\prec(s,l)}
~ e^{\tau_1 a_s H_{\pi_{s'}(l')}} \left(a_s H_{\pi_s(l)}\right)       \overleftarrow{\prod}_{(s',l')\prec(s,l)}
~ e^{-\tau_1 a_s H_{\pi_{s'}(l')}}  \\
&-  \overrightarrow{\prod}_{(s',l')}
~ e^{\tau_1 a_s H_{\pi_{s'}(l')}} H \overleftarrow{\prod}_{(s',l')}
~ e^{-\tau_1 a_s H_{\pi_{s'}(l')}},
\end{aligned}
\end{equation}
where $\mathscr{N}(\tau_1)=\mathcal O(\tau^p_1)$.
Here we define the vector $\vec{j}_{p+1}=(j_1,j_2,\dots,j_{p+1})$ with $p+1$ entries, $j_1,j_2,\dots,j_{p+1} \in \{(s,l): s\in \{1,\dots,S\},l\in\{1,\dots,L\}\}$ and the corresponding nested commutators as
\begin{equation}
 N_{\vec{j}_{p+1}}=[H_{j_1},[H_{j_2},\dots,[H_{j_p},H_{j_{p+1}}]]].
\end{equation}
According to Theorem 5 in Ref.~\cite{childs2020theory}, we rewrite
 $\mathscr{N}(\tau_1)$ and $\mathscr{M}(t)$ as
\begin{equation}
\begin{aligned}
 \mathscr{N}(\tau_1)&= \sum_{i=1,2} \sum_{\vec{j}_{p+1}\in \Gamma_i} \int_0^{\tau_1}\d\tau_2       (\tau_1-\tau_2)^{q(\vec{j}_{p+1})-1}\tau_1^{p-q(\vec{j}_{p+1})} c_{\vec{j}_{p+1}} F_{\vec{j}_{p+1}}^{\dagger} N_{\vec{j}_{p+1}} F_{\vec{j}_{p+1}};\\
 \mathscr{M}(t)&=\int_0^t \d\tau_1 \int_0^{\tau_1}\d\tau_2
 \sum_{i=1,2} \sum_{\vec{j}_{p+1}\in \Gamma_i} (\tau_1-\tau_2)^{q(\vec{j}_{p+1})-1}\tau_1^{p-q(\vec{j}_{p+1})} c_{\vec{j}_{p+1}} E_{\vec{j}_{p+1}} N_{\vec{j}_{p+1}} F_{\vec{j}_{p+1}}.
 \end{aligned}
\end{equation}
Here $\Gamma_1$ and $\Gamma_2$ correspond to the first and second part in  $\mathscr{N}(\tau_1)$, respectively:
\begin{equation}
    \begin{aligned}
\Gamma_1&:=\{(j_1,j_2,\dots,j_{p+1}) : j_1\preceq j_2\preceq\dots \preceq j_{p+1}\},    \\
\Gamma_2&:=\{(j_1,j_2,\dots,j_{p+1}) : j_1\preceq j_2\dots \preceq j_{p}, \, j_{p+1}=(1,l_{p+1})\} .
    \end{aligned}
\end{equation}
The $c_{\vec{j}_{p+1}}$ are real coefficients that are functions of $\vec{j}_{p+1}$ and $p$, satisfying $|c_{\vec{j}_{p+1}}|\le 1$. The function
$q(\vec{j}_{p+1})$
is the maximal number $q$ satisfying $j_1=j_2=\dots=j_q$. The $F_{\vec{j}_{p+1}}$ are unitary, constructed as products of terms of the form $e^{-iH_l\tau_1}$, and $E_{\vec{j}_{p+1}}:= e^{-i(t-\tau_1)H} \mathscr{U}_p(\tau_1)F_{\vec{j}_{p+1}}^{\dagger}$.

Consequently, we find
\begin{align}
& \mathscr{M}(t)\mathscr{M}^{\dag}(t)\nonumber \\
&= \int_0^t \d\tau_1 \int_0^{\tau_1}\d\tau_2
 \int_0^t \d\tau_1' \int_0^{\tau_1'}\d\tau_2'
 \sum_{i,i'=1,2} \sum_{\vec{j}_{p+1}\in \Gamma_i,~\vec{j}_{p+1}'\in \Gamma_{i'}} (\tau_1-\tau_2)^{q(\vec{j}_{p+1})-1}\tau_1^{p-q(\vec{j}_{p+1})}
 (\tau_1'-\tau_2')^{q(\vec{j}'_{p+1})-1}(\tau_1')^{p-q(\vec{j}_{p+1})}\nonumber \\
&\qquad c_{\vec{j}_{p+1}}c_{\vec{j}_{p+1}'} E_{\vec{j}_{p+1}} N_{\vec{j}_{p+1}} F_{\vec{j}_{p+1}}
 F_{\vec{j}_{p+1}'}^{\dagger} N_{\vec{j}_{p+1}'}^{\dagger}E_{\vec{j}_{p+1}'}^{\dagger}.
 \end{align}
 Because the $E$ and $F$ operators are all unitary, using Lemma~\ref{Lemma:traceproduct} below, we have
 \begin{equation}\label{Eq:ENF}
    \left |\tr( E_{\vec{j}_{p+1}}  N_{\vec{j}_{p+1}} F_{\vec{j}_{p+1}}F_{\vec{j}_{p+1}'}^{\dagger} N_{\vec{j}_{p+1}'}^{\dagger} E^{\dagger}_{\vec{j}_{p+1}'})\right|\le \sqrt{\tr(N_{\vec{j}_{p+1}}N^{\dagger}_{\vec{j}_{p+1}})}\sqrt{\tr(N_{\vec{j}'_{p+1}}N^{\dagger}_{\vec{j}'_{p+1}})}.
 \end{equation}
We therefore have the upper bound
\begin{align}\label{Eq:mainENF}
&\tr(\mathscr{M}(t)\mathscr{M}^{\dag}(t))\nonumber\\
&\le \int_0^t \d\tau_1 \int_0^{\tau_1}\d\tau_2
 \int_0^t \d\tau_1' \int_0^{\tau_1'}\d\tau_2'
 \sum_{i,i'=1,2} \sum_{\vec{j}_{p+1}\in \Gamma_i,~\vec{j}_{p+1}'\in \Gamma_{i'}} (\tau_1-\tau_2)^{q(\vec{j}_{p+1})-1}\tau_1^{p-q(\vec{j}_{p+1})}
 (\tau_1'-\tau_2')^{q(\vec{j}'_{p+1})-1}(\tau_1')^{p-q(\vec{j}_{p+1})}\nonumber\\
 &\qquad c_{\vec{j}_{p+1}}c_{\vec{j}_{p+1}'} \sqrt{\tr(N_{\vec{j}_{p+1}}N^{\dagger}_{\vec{j}_{p+1}})}\sqrt{\tr(N'_{\vec{j}_{p+1}}N'^{\dagger}_{\vec{j}_{p+1}})} \nonumber\\
 &\le t^{2p+2}   \sum_{i,i'=1,2} \sum_{\vec{j}_{p+1}\in \Gamma_i,~\vec{j}_{p+1}'\in \Gamma_{i'}} \sqrt{\tr(N_{\vec{j}_{p+1}}N^{\dagger}_{\vec{j}_{p+1}})}\sqrt{\tr(N'_{\vec{j}_{p+1}}N'^{\dagger}_{\vec{j}_{p+1}})}\nonumber\\
 &\le  4t^{2p+2} \sum_{\vec{j}_{p+1},\vec{j}_{p+1}'} \sqrt{\tr(N_{\vec{j}_{p+1}}N^{\dagger}_{\vec{j}_{p+1}})}\sqrt{\tr(N'_{\vec{j}_{p+1}}N'^{\dagger}_{\vec{j}_{p+1}})}\nonumber\\
&= 4t^{2p+2}   \left[\sum_{\vec{j}_{p+1}}\sqrt{\tr(N_{\vec{j}_{p+1}}N^{\dagger}_{\vec{j}_{p+1}})}\right]^2\nonumber \\
&= 4t^{2p+2}S^{2p+2}
\left[\sum_{l_1,\dots,l_{p+1}=1}^L\sqrt{\tr(|[H_{l_1},[H_{l_2},\dots,[H_{l_p},H_{l_{p+1}}]]]|^2)}\right]^2,
\end{align}
where the last equation follows because each $H_l$ could appear in $S$ different stages, so there are $S^{p+1}$ possibilities in total for each $[H_{l_1},[H_{l_2},\dots,[H_{l_p},H_{l_{p+1}}]]]$.

For long-time evolution, we divide the evolution time $t$ into $r$ segments. Our assumption that $p$ is a constant implies $S^{2p+2}=\mathcal O(1)$, so for each segment,
\begin{equation}
 R_{\ell_2}(t/r)\le \sqrt{\frac{\tr\left(\mathscr{M}(t/r)\mathscr{M}(t/r)^{\dag}\right)}{d}}
 =\mathcal O\left(T_p\frac{t^{p+1}}{r^{p+1}}\right).
\end{equation}
Lemma~\ref{Lemma:segment1} then implies $R_{\ell_2}(t)= \mathcal O\bigl(T_p\frac{t^{p+1}}{r^{p}}\bigr)$,
as claimed.
\end{proof}

Finally, we complete the proof of Lemma~\ref{Lemma:traceproduct} which can be used to show the inequality Eq.~\eqref{Eq:ENF}
in the above analysis.
\begin{lemma}\label{Lemma:traceproduct}
Let $M$ and $N$ be complex matrices and $U_1$ and $U_2$ be unitaries, all of which are in $\mathbb{C}^{d\times d}$.
Then
\begin{equation}
|\tr(MN)|\leq \sqrt{\tr(MM^{\dag})}\sqrt{\tr(NN^{\dag})}
\quad\text{and}\quad
|\tr(U_1MU_2N)|\leq \sqrt{\tr(MM^{\dag})}\sqrt{\tr(NN^{\dag})}.
\end{equation}
\end{lemma}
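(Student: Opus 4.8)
The plan is to derive both inequalities from the Cauchy--Schwarz inequality for the Hilbert--Schmidt inner product $\langle A,B\rangle := \tr(A^{\dag}B)$, whose induced norm is precisely the Frobenius norm, $\|A\|_F = \sqrt{\tr(A^{\dag}A)} = \sqrt{\tr(AA^{\dag})}$ (the two expressions agree by cyclicity of the trace).

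For the first inequality, I would rewrite the trace as an inner product: $\tr(MN) = \tr\bigl((M^{\dag})^{\dag}N\bigr) = \langle M^{\dag}, N\rangle$, so Cauchy--Schwarz immediately gives $|\tr(MN)| \le \|M^{\dag}\|_F\,\|N\|_F$. Then I would simplify $\|M^{\dag}\|_F = \sqrt{\tr(M^{\dag}M)} = \sqrt{\tr(MM^{\dag})}$ and $\|N\|_F = \sqrt{\tr(NN^{\dag})}$, once more using cyclicity of the trace, to obtain the stated bound.

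For the second inequality, I would reduce to the first by absorbing the unitaries. Setting $\widetilde{M} := U_1 M U_2$, we have $\tr(U_1 M U_2 N) = \tr(\widetilde{M}N)$, so the first inequality yields $|\tr(\widetilde{M}N)| \le \sqrt{\tr(\widetilde{M}\widetilde{M}^{\dag})}\,\sqrt{\tr(NN^{\dag})}$. The one remaining step is the unitary invariance of the Frobenius norm: $\tr(\widetilde{M}\widetilde{M}^{\dag}) = \tr(U_1 M U_2 U_2^{\dag} M^{\dag} U_1^{\dag}) = \tr(U_1 M M^{\dag} U_1^{\dag}) = \tr(MM^{\dag})$, using $U_2 U_2^{\dag} = \id$ and then cyclicity to remove $U_1$. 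Substituting gives the claim.

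I do not anticipate a genuine obstacle here; the proof is routine. The only points requiring care are tracking the daggers correctly when identifying $\tr(MN)$ with a Hilbert--Schmidt inner product, and consistently combining cyclicity of the trace with unitarity when rewriting quantities of the form $\tr(MM^{\dag})$. An alternative route through the singular value decompositions of $M$ and $N$ is possible but is strictly more cumbersome than the Cauchy--Schwarz argument, so I would not pursue it.
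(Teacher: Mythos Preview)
Your proposal is correct and mirrors the paper's proof: both invoke Cauchy--Schwarz for the Hilbert--Schmidt inner product for the first inequality, then reduce the second to the first via unitary invariance of the Frobenius norm.
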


\begin{proof}
The inequality $|\tr(MN)|\leq \sqrt{\tr(MM^{\dag})}\sqrt{\tr(NN^{\dag})}$ follows from the Cauchy-Schwarz inequality.
Using this inequality, we also have
\begin{equation}
   |\tr(U_1MU_2N)|\leq \sqrt{\tr(U_1MM^{\dag}U_1^{\dag})}\sqrt{\tr(NN^{\dag})}=\sqrt{\tr(MM^{\dag})}\sqrt{\tr(NN^{\dag})},
\end{equation}
because $U_1$ and $U_2$ are unitary.
\end{proof}
To clarify the improvement,
we compare our result to the analogous result regarding worst-case error. According to \cite{childs2020theory}, the worst-case error $W(U_0(t),\mathscr{U}_p^r(t/r)):= \|U_0(t)-\mathscr{U}_p^r(t/r)\|$ is
\begin{equation}
W(U_0(t),\mathscr{U}_p^r(t/r))=\mathcal O\left(\alpha_{\mathrm{comm}, p}t^{p+1}/r^{p}\right),~
\alpha_{\mathrm{comm},p}=\sum_{l_1,\dots,l_{p+1}=1}^L \left\|[H_{l_1},[H_{l_2},\dots,[H_{l_p},H_{l_{p+1}}]]]\right\|.
\end{equation}
Note that $\alpha_{\mathrm{comm},p}$ can, in principle, be much larger than $T_p$. In Appendix~\ref{Sec:app}, we calculate $T_p$ and compare it with $\alpha_{\mathrm{comm},p}$ for different types of Hamiltonians.

%=================================================
\subsection{Taylor series method}

In the truncated Taylor series method \cite{TaylorSeries}, we consider Hamiltonians $H$ that can be expressed as a sum of unitaries $H_i$ with positive coefficients $\alpha_i$, i.e., $H=\sum_{i=1}^L \alpha_i H_i$.
We first divide the evolution time $t$ into $r$ segments, where
$r =\frac{\alpha t}{\ln2}$
with $\alpha: =\sum_{i=1}^L \alpha_i$. In each segment, we truncate the Taylor expansion of $e^{-iH\tau}$ with $\tau=t/r$ to $K$th order, giving
\begin{equation}
 \widetilde{U}(\tau) :=  \sum_{j=0}^{K}\frac{(-i\tau H)^j}{j!}=U_0-E,
\end{equation}
where the truncation error is $ E:=\sum_{j=K+1}^{\infty}\frac{(-i\tau H)^j}{j!}$. We have $\|E\|\le \delta$ where $\delta=2\frac{(\alpha \tau)^{K+1}}{(K+1)!}=2\frac{(\ln 2)^{K+1}}{(K+1)!}$. A step of postselected oblivious amplitude amplification \cite{TaylorSeries} implements the evolution
\begin{equation}
\begin{aligned}
\widetilde{V}(\tau):=\frac{3}{2}\widetilde{U}(\tau)-\frac{1}{2}\widetilde{U}(\tau)\widetilde{U}(\tau)^\dag \widetilde{U}(\tau).\\
\end{aligned}
\end{equation}
We rewrite this operator as $\widetilde{V}(\tau)=U_0(\tau)(I+ \mathscr{M})$, where the multiplicative error is
\begin{equation}
\begin{aligned}
\mathscr{M}&=-\frac{3}{2}U_0^\dag E + \frac{1}{2}U_0^\dag
(\widetilde{U}\widetilde{U}^\dag \widetilde{U}- U_0)\\
&=-\frac{3}{2}U_0^\dag E - \frac{1}{2}U_0^\dag (2E+U_0EU_0^\dag - EE^\dag U_0-EU_0^\dag E-U_0E^\dag E+ EE^\dag E)\\
&=-\frac{5}{2}U_0^\dag E - \frac{1}{2}E U_0^\dag +R_{2}.
\end{aligned}
\end{equation}
Here $R_2$ captures the higher-order terms of $E$ such that $\|R_{2}\|=\mathcal O(\delta^2)$. We leave the evolution time $\tau$ implicit for simplicity.

According to the definition of Frobenius norm, we have
\begin{equation}
\begin{aligned}
R_{\ell_2}\le \|\mathscr{M}\|_F^2/d=\tr(\mathscr{M}\mathscr{M}^{\dagger})/d\le 9\tr(EE^\dag)/d+6\|R_{2}\| \|E\|+\|R_2\|^2.
\end{aligned}
\end{equation}
The leading term can be upper bounded using
\begin{equation}
\begin{aligned}
\tr(EE^\dag)&=\sum_{j,j'=K+1}^{\infty}\frac{(-1)^{j+j'} (i\tau )^{j+j'}}{j!j'!} \tr(H^{j+j'})\\
&\le d \max_i (\alpha_i/\alpha)  \sum_{j,j'=K+1}^{\infty}\frac{(\alpha\tau )^{j+j'}}{j!j'!} \\
&\le d \max_i (\alpha_i/\alpha) \, \delta^2,
\end{aligned}
\end{equation}
where the first inequality is due to Lemma~\ref{Lemma:Taylor-bound} below.
Consequently, we have $\|\mathscr{M}\|_F/\sqrt{d}=\mathcal O(\delta \max_i \sqrt{\alpha_i/\alpha})$
and the total average error is upper bounded by $R_{\ell_2}\le  r \|\mathscr{M}\|_F/\sqrt{d}= \mathcal O(r \max_i \sqrt{\alpha_i/\alpha} ~ \delta)= \mathcal O(\max_i \sqrt{\alpha_i \alpha}t \frac{(\ln 2)^{K+1}}{(K+1)!}) $. In comparison, the worst-case upper bound is
$\mathcal O(\alpha t \frac{(\ln 2)^{K+1}}{(K+1)!}) $.
If all the coefficients $\alpha_i$ are similar, then the average-case error is an improvement over the worst-case error by a factor of $\mathcal O(\sqrt{L})$.

\begin{lemma}\label{Lemma:Taylor-bound}
Consider a $d$-dimensional Hamiltonian $H=\sum_{i=1}^L \alpha_i P_i$ where the $P_i$ are Pauli operators with positive coefficients $\alpha_i>0$, and $\alpha:= \sum_{i=1}^L \alpha_i$. Then for any positive integer $j$,
\begin{equation}
\tr(H^j)\le d\alpha^{j-1} \max_i \alpha_i.
\end{equation}
\end{lemma}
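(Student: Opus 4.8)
The plan is to expand $H^{j}$ in the Pauli basis and use the trace–orthogonality of Pauli strings, so that the only index tuples that survive the trace are those whose Pauli product is proportional to the identity. Concretely, I would first write
\[
\tr(H^{j})=\sum_{i_{1},\dots,i_{j}=1}^{L}\alpha_{i_{1}}\cdots\alpha_{i_{j}}\,\tr\!\left(P_{i_{1}}\cdots P_{i_{j}}\right),
\]
and bound $\tr(H^{j})\le\bigl|\tr(H^{j})\bigr|\le\sum_{i_{1},\dots,i_{j}}\alpha_{i_{1}}\cdots\alpha_{i_{j}}\,\bigl|\tr(P_{i_{1}}\cdots P_{i_{j}})\bigr|$. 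Then I would invoke the structure of Pauli operators: any product $P_{i_{1}}\cdots P_{i_{j}}$ equals $c\,Q$ for some phase $c\in\{\pm1,\pm i\}$ and some Hermitian Pauli string $Q$, and $\tr(Q)=0$ unless $Q=\id$. Hence $\bigl|\tr(P_{i_{1}}\cdots P_{i_{j}})\bigr|$ equals $d$ when $P_{i_{1}}\cdots P_{i_{j}}\propto\id$ and $0$ otherwise, so
\[
\tr(H^{j})\le d\sum_{\substack{i_{1},\dots,i_{j}:\\ P_{i_{1}}\cdots P_{i_{j}}\propto\,\id}}\alpha_{i_{1}}\cdots\alpha_{i_{j}}.
\]

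The combinatorial heart of the argument is then the following. For each fixed choice of $i_{1},\dots,i_{j-1}$, the condition $P_{i_{1}}\cdots P_{i_{j}}\propto\id$ is equivalent to requiring $P_{i_{j}}$ to be (a phase times) the Pauli string underlying $P_{i_{1}}\cdots P_{i_{j-1}}$ — here one uses that each Pauli string is its own inverse and that two Pauli strings commute up to sign. Since the $P_{i}$ are distinct Pauli strings, this pins down the index $i_{j}$ uniquely, if a valid choice exists at all; therefore $\sum_{i_{j}:\,P_{i_{1}}\cdots P_{i_{j}}\propto\,\id}\alpha_{i_{j}}\le\max_{i}\alpha_{i}$. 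Summing the remaining $j-1$ indices freely and recognizing $\bigl(\sum_{i}\alpha_{i}\bigr)^{j-1}=\alpha^{j-1}$,
\[
\sum_{\substack{i_{1},\dots,i_{j}:\\ P_{i_{1}}\cdots P_{i_{j}}\propto\,\id}}\alpha_{i_{1}}\cdots\alpha_{i_{j}}
\le\Bigl(\sum_{i_{1},\dots,i_{j-1}}\alpha_{i_{1}}\cdots\alpha_{i_{j-1}}\Bigr)\max_{i}\alpha_{i}
=\alpha^{j-1}\max_{i}\alpha_{i},
\]
which combined with the previous display gives $\tr(H^{j})\le d\,\alpha^{j-1}\max_{i}\alpha_{i}$. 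Note this covers $j=1$ uniformly, since then the ``fixed prefix'' is empty and at most one index has $P_{i}=\id$.

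I expect the main obstacle to be making the middle step precise, namely (i) justifying that a product of Pauli strings is a phase times a single Pauli string and that distinct Pauli strings are trace–orthogonal, and (ii) the uniqueness of the completing index $i_{j}$ for a fixed prefix. Point (ii) genuinely relies on the $P_{i}$ being \emph{distinct} as Pauli strings: if one allows repeated Paulis carrying separate coefficients the stated inequality can fail (e.g.\ $H=\alpha_{1}X+\alpha_{2}X$), so I would explicitly remark that we may assume without loss of generality that the Pauli decomposition of $H$ uses distinct strings. Everything else — the triangle inequality and the expansion of $\alpha^{j-1}$ — is routine, and in fact this argument yields the slightly stronger bound $\bigl|\tr(H^{j})\bigr|\le d\,\alpha^{j-1}\max_{i}\alpha_{i}$, which is what the subsequent estimate of $\tr(EE^{\dagger})$ actually uses.
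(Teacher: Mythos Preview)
Your proof is correct and follows essentially the same approach as the paper: expand $H^{j}$, use trace-orthogonality of Pauli strings, and observe that for each fixed prefix $i_{1},\dots,i_{j-1}$ at most one index $i_{j}$ can make the product proportional to the identity. You are in fact more careful than the paper on two points---you track the phase in the Pauli product (so the surviving trace is $\pm d$ or $\pm id$ rather than literally $d$) and you flag that distinctness of the $P_{i}$ is needed for the uniqueness step---and your observation that the argument actually yields $|\tr(H^{j})|\le d\alpha^{j-1}\max_{i}\alpha_{i}$ is exactly what the subsequent bound on $\tr(EE^{\dagger})$ requires.
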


\begin{proof}
We consider the $(j-1)$th power of $H$,
\begin{equation}
H^{j-1} =  \sum_{i_1,\dots,i_{j-1}=1}^L \alpha_{i_1} \dots \alpha_{i_{j-1}}P_{i_1} \dots P_{i_{j-1}},
\end{equation}
which has at most $L^{j-1}$ terms. For each term $P_{i_1} \dots P_{i_{j-1}}$ in $H^{j-1}$, there is at most one Pauli operator $P$ in $H$ satisfying $P_{i_1} \dots P_{i_{j-1}}=P$, which implies $\tr(P_{i_1} \dots P_{i_{j-1}} P)=d  \neq 0$. Thus the coefficients of nonzero terms in $H^j$ can be bounded by \begin{equation}
\tr(H^j)\le   d\sum_{i_1,\dots,i_{j-1}=1}^L \alpha_{i_1} \dots \alpha_{i_{j-1}}   \max_i \alpha_i =d \alpha^{j-1}    \max_i \alpha_i
\end{equation}
as claimed.
\end{proof}

The above analysis shows that the error is reduced from
$\mathcal O(\alpha t \frac{(\ln 2)^{K+1}}{(K+1)!})$ in the worst case to
$\mathcal O(\max_i \sqrt{\alpha_i \alpha}t \frac{(\ln 2)^{K+1}}{(K+1)!})$ on average. However, this improvement does not significantly influence the gate complexity of the LCU method, since this complexity is logarithmic in $1/\varepsilon$. Therefore we mainly explore improvements in product formula methods, which have gate complexities with inferior asymptotic error scaling but may nevertheless sometimes perform better in practice.

%%%%%%%%%%%%%%%%%%%%%%%%%%%%%%%%%%%%%%%%%%%%%%%%%%
\section{Average error in first- and second-order product formula algorithms}\label{Sec:PF12}

In this section,
we tighten the error bounds for PF1 and PF2 and calculate their prefactors.
We first consider evolving Hamiltonians with two terms, $H=A+B$, for one time step (Lemma \ref{Lemma:AB}). We then generalize this result to Hamiltonians with multiple terms, $H=\sum_l^L H_l$ (Lemma \ref{Lemma:multiterm}). Finally, we apply these results to derive bounds for evolution with multiple time steps (Theorems \ref{Th:PF1} and \ref{Th:PF2}). We use $|X|^2$ to denote $XX^{\dagger}$ for a complex operator $X$.
\begin{lemma}\label{Lemma:AB}
(First-order product formula, PF1)
For a two-term Hamiltonian $H=A+B$, the first-order product formula $\mathscr{U}_1(t)=e^{-iBt}e^{-iAt}$ satisfies $\mathscr{U}_1(t)=U_0(t)(I+\mathscr{M}(t))$, where
\begin{equation}
\|\mathscr{M}(t)\|_F^2\le \frac{t^4}{4} \tr(|[A,B]|^2),
\end{equation}
and the average $\ell_2$ norm distance assuming a $d$-dimensional 1-design input ensemble $R_{\ell_2}(\mathscr{U}_1(t),U_0(t))$ satisfies
\begin{align}\label{eq:ABtrbound}
R_{\ell_2}(\mathscr{U}_1(t),U_0(t))&\le \frac{t^2}{2} \left(\frac{\tr(|[A,B]|^2)}{d}\right)^{\frac{1}{2}}.
\end{align}
\end{lemma}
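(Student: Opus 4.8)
The plan is to obtain an exact integral representation for the multiplicative error $\mathscr{M}(t)$ of PF1 on a two-term Hamiltonian, then bound its Frobenius norm by pulling unitary factors out of the trace via Lemma~\ref{Lemma:traceproduct}. Concretely, write $\mathscr{U}_1(t) = e^{-iBt}e^{-iAt}$ and $U_0(t) = e^{-i(A+B)t}$, so that $\mathscr{M}(t) = U_0(t)^\dagger \mathscr{U}_1(t) - I$. I would differentiate $U_0(\tau)^\dagger \mathscr{U}_1(\tau)$ with respect to $\tau$ and integrate from $0$ to $t$; after one round this yields a single-integral expression whose integrand, upon a second application of the fundamental theorem of calculus (using that the leading-order term vanishes), contains a nested commutator $[A,B]$ conjugated by unitaries. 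This is exactly the structure exploited in Theorem~\ref{Th:general} (specializing the general PF$p$ analysis with $p=1$, $L=2$), so I would cite Theorem~3 and Theorem~5 of Ref.~\cite{childs2020theory} to write
\begin{equation*}
\mathscr{M}(t) = \int_0^t \d\tau_1 \int_0^{\tau_1} \d\tau_2 \; c\, E(\tau_1,\tau_2)\, [A,B]\, F(\tau_1,\tau_2),
\end{equation*}
with $E,F$ unitary and $|c|\le 1$, from which $\|\mathscr{M}(t)\|_F^2 = \tr(\mathscr{M}\mathscr{M}^\dagger)$ factors into a double integral over $\tau_1,\tau_2,\tau_1',\tau_2'$ of traces of the form $\tr(E [A,B] F F'^\dagger [A,B]^\dagger E'^\dagger)$.

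Next I would apply Lemma~\ref{Lemma:traceproduct} to each such trace to get the bound $\sqrt{\tr(|[A,B]|^2)}\cdot\sqrt{\tr(|[A,B]|^2)} = \tr(|[A,B]|^2)$, uniformly in the integration variables. The four nested integrals $\int_0^t \d\tau_1 \int_0^{\tau_1}\d\tau_2 \int_0^t \d\tau_1' \int_0^{\tau_1'}\d\tau_2'$ of the constant $1$ contribute a factor $(t^2/2)^2 = t^4/4$. Collecting these gives $\|\mathscr{M}(t)\|_F^2 \le \frac{t^4}{4}\tr(|[A,B]|^2)$, which is the first claimed inequality. The factor of $1/2$ is precisely the price of ordering the two integration variables $\tau_2 \le \tau_1$; I would be careful to track it rather than bounding $\tau_1$ by $t$ too early. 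The second claim then follows immediately by plugging this into Theorem~\ref{Th:l2} (which gives $R_{\ell_2}(\mathscr{U}_1(t),U_0(t)) \le \frac{1}{\sqrt d}\|\mathscr{M}(t)\|_F$) and taking the square root.

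The main obstacle I anticipate is not the commutator expansion itself—this is standard and available from \cite{childs2020theory}—but rather being precise about the coefficient $1/2$ inside the Frobenius-norm bound, since the general Theorem~\ref{Th:general} only tracks the asymptotic order and absorbs such prefactors into the $\mathcal{O}(\cdot)$. Getting the constant right requires re-deriving the $L=2$, $p=1$ case by hand: there are two summands $(s,l)$ (one for $A$, one for $B$), and after the double Taylor-with-remainder step the relevant set $\Gamma_1 \cup \Gamma_2$ collapses to essentially a single term proportional to $[A,B]$ with coefficient $c$ of magnitude $1$, and the $(\tau_1-\tau_2)^{q-1}\tau_1^{p-q}$ weight becomes simply $1$ (since $q=1$, $p=1$). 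Verifying that no extra multiplicative constant sneaks in—so that the integral weight is exactly $t^4/4$ and not, say, $t^4$—is the delicate bookkeeping step, but it is routine once the $L=2$ specialization is written out explicitly.
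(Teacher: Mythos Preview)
Your proposal is correct and follows essentially the same approach as the paper: obtain the explicit double-integral representation of $\mathscr{M}(t)$ with $[A,B]$ sandwiched by unitaries (the paper cites the formula directly from \cite{childs2020theory} rather than specializing the general PF$p$ framework, but the content is identical), expand $\tr(\mathscr{M}\mathscr{M}^\dagger)$ as a quadruple integral, apply Lemma~\ref{Lemma:traceproduct} to each integrand, integrate to get $t^4/4$, and finish with Theorem~\ref{Th:l2}. The paper's explicit formula has coefficient exactly $1$ (not merely $|c|\le 1$), which resolves the bookkeeping concern you flagged.
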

\begin{proof}
From \cite{childs2020theory}, we have
\begin{equation}
\begin{aligned}\label{Mhigh}
\mathscr{M}(t)&=\int_0^t \d\tau_1  e^{iH\tau_1}e^{-iB\tau_1}\left(e^{iB\tau_1}iAe^{-iB\tau_1}    -iA\right)e^{-iA\tau_1}\\
&=\int_0^t \d\tau_1  \int_0^{\tau_1} \d\tau_2~e^{iH\tau_1}e^{-iB(\tau_1-\tau_2)}
[iB,iA] e^{-iB\tau_2}e^{-iA\tau_1}.
\end{aligned}
\end{equation}
Therefore
\begin{align}
&\tr(\mathscr{M}(t)\mathscr{M}^{\dag}(t))\nonumber \\
&=\tr\left\{\int_0^t \d\tau_1 \int_0^{\tau_1}  \d\tau_2
\int_0^t \d\tau_1' \int_0^{\tau_1'}  \d\tau_2'
~e^{iH\tau_1}e^{-iB(\tau_1-\tau_2)}[iB,iA] e^{-iB\tau_2}e^{-iA\tau_1}
e^{iA\tau_1'} e^{iB\tau_2'}[iB,iA]^{\dagger}e^{iB(\tau_1'-\tau_2')}
e^{-iH\tau_1'}\right\}\nonumber \\
&=\int_0^t \d\tau_1 \int_0^{\tau_1}  \d\tau_2
\int_0^t \d\tau_1' \int_0^{\tau_1'}  \d\tau_2'
~\tr\left(e^{iH\tau_1}e^{-iB(\tau_1-\tau_2)}[iB,iA] e^{-iB\tau_2}e^{-iA\tau_1}
e^{iA\tau_1'} e^{iB\tau_2'}[iB,iA]^{\dagger}e^{iB(\tau_1'-\tau_2')}
e^{-iH\tau_1'}\right)\nonumber\\
&=\int_0^t \d\tau_1 \int_0^{\tau_1}  \d\tau_2
\int_0^t \d\tau_1' \int_0^{\tau_1'}  \d\tau_2'
~\tr\left(P[iB,iA]Q[iB,iA]^{\dagger}
\right),
\end{align}
where $P=e^{iB(\tau_1'-\tau_2')}e^{-iH\tau_1'}e^{iH\tau_1}e^{-iB(\tau_1-\tau_2)}$ and $Q=e^{-iB\tau_2}e^{-iA\tau_1}
e^{iA\tau_1'} e^{iB\tau_2'}$.
By taking $P[iB,iA]Q=M$ and $[iB,iA]^{\dagger}=N$ as in Lemma \ref{Lemma:traceproduct}, and using the fact that $P$, $Q$ are unitary,
we have
\begin{equation}
\begin{aligned}
 \left|\tr\left(P[iB,iA]Q[iB,iA]^{\dagger}
\right)\right|&\le   \sqrt{\tr(P[iB,iA]QQ^{\dagger}[iB,iA]^{\dagger} P^{\dagger})}\sqrt{\tr([iB,iA][iB,iA]^{\dag})}\\
&\le  \tr(|[A,B]|^2).
\end{aligned}
\end{equation}
Thus we have the upper bound
\begin{equation}
\begin{aligned}\label{}
\left|\tr(\mathscr{M}(t)\mathscr{M}^{\dag}(t))\right|
&=\left|\int_0^t \d\tau_1 \int_0^{\tau_1}  \d\tau_2
\int_0^t \d\tau_1' \int_0^{\tau_1'}  \d\tau_2'
~\tr\left(P[iB,iA]Q[iB,iA]^{\dagger}
\right)\right|\\
&\leq \int_0^t \d\tau_1 \int_0^{\tau_1}  \d\tau_2
\int_0^t \d\tau_1' \int_0^{\tau_1'}  \d\tau_2'
~\left|\tr\left(P[iB,iA]Q[iB,iA]^{\dagger}
\right)\right|\\
&\le \frac{t^4}{4} \tr (|[A,B]|^2 ).
\end{aligned}
\end{equation}
Finally, using Theorem~\ref{Th:l2}, we have
\begin{align}\label{Eq;PF1AB}
R_{\ell_2}(\mathscr{U}_1(t),U_0(t))&\le \frac{t^2}{2} \left(\frac{\tr(|[A,B]|^2)}{d}\right)^{\frac{1}{2}}
\end{align}
as claimed.
\end{proof}

Next, we generalize this average performance with $H=A+B$ to a Hamiltonian which is the sum of multiple terms in the following lemma. We let $U_0[j,k](t):= \exp(-i\sum_{l=j}^{k}H_lt)$. We also define a corresponding product formula $U[j,k](t):=\prod_{l=j}^k e^{-iH_lt}$, with $U[k,k]=e^{-iH_{k}t}$ and $U[1,0](t)=\id$.

\begin{lemma}\label{Lemma:multiterm}
Consider a Hamiltonian $H=\sum_{l=1}^L H_l$.
The
average distance of
the first-order product formula
$\mathscr{U}_1(t)=e^{-iH_1t}e^{-iH_2t}\cdots e^{-iH_Lt}$
from the ideal evolution
$U_0(t)= \exp(-i\sum_{l=1}^{L}H_lt)$ satisfies
\begin{equation}
  R(\mathscr{U}_1(t),U_0(t))\le \sum_{k=1}^{L-1} R(U[k,k]U_0[k+1,L]\ket{\psi},U_0[k,L]\ket{\psi}),
\end{equation}
where $R$ can be $R_{\ell_2}$, $R_{t}$, or $R_{m}$ corresponding to the $\ell_2$ norm, the trace norm, or the absolute value with random projections, respectively.
\end{lemma}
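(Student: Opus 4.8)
The plan is a hybrid (telescoping) argument. For $k=0,1,\dots,L$ define the intermediate unitary $V_k(t):=U[1,k](t)\,U_0[k+1,L](t)$, i.e.\ the first $k$ exponentials $e^{-iH_1t}\cdots e^{-iH_kt}$ applied after the ideal evolution $\exp(-i\sum_{l=k+1}^{L}H_lt)$ of the remaining terms. With the conventions $U[1,0](t)=\id$ and $U_0[L+1,L](t)=\id$, the endpoints are $V_0(t)=U_0(t)$ and $V_L(t)=\mathscr{U}_1(t)$, so $V_0,V_1,\dots,V_L$ interpolates between the exact evolution and the product formula, changing one exponential at a time.

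First I would telescope: for any state $\ket{\psi}$ one has $\mathscr{U}_1(t)\ket{\psi}-U_0(t)\ket{\psi}=\sum_{k=1}^{L}\bigl(V_k(t)-V_{k-1}(t)\bigr)\ket{\psi}$. Since each of the three distance measures obeys the triangle inequality (the $\ell_2$ and trace norms are genuine norms, and for a fixed observable $O$ the functional $|\tr(O\,\cdot\,)|$ is subadditive), this gives $\mathcal{D}(\mathscr{U}_1(t)\ket{\psi},U_0(t)\ket{\psi})\le\sum_{k=1}^{L}\mathcal{D}(V_k(t)\ket{\psi},V_{k-1}(t)\ket{\psi})$; taking the ensemble average (and, for $R_m$, also the average over $O$) yields $R(\mathscr{U}_1(t),U_0(t))\le\sum_{k=1}^{L}\mathbb{E}_{\mathcal{E}}\,\mathcal{D}(V_k(t)\ket{\psi},V_{k-1}(t)\ket{\psi})$.

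Next I would simplify the $k$-th summand using the factorizations $V_k(t)=U[1,k-1](t)\cdot\bigl(U[k,k](t)\,U_0[k+1,L](t)\bigr)$ and $V_{k-1}(t)=U[1,k-1](t)\cdot U_0[k,L](t)$, so the two arguments differ only by left-multiplication by the common unitary $U[1,k-1](t)$. Each distance measure is invariant under a common unitary acting on both inputs: $\|Wx-Wy\|_2=\|x-y\|_2$, $\|W\rho W^{\dagger}-W\sigma W^{\dagger}\|_1=\|\rho-\sigma\|_1$, and for $R_m$ one transfers $W$ onto the observable via $\tr(OW(\cdot)W^{\dagger})=\tr((W^{\dagger}OW)(\cdot))$, using that conjugating the measurement ensemble by a fixed unitary produces another ensemble with the same design (hence the same averaged bound). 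Stripping $U[1,k-1](t)$ therefore gives $\mathbb{E}_{\mathcal{E}}\,\mathcal{D}(V_k(t)\ket{\psi},V_{k-1}(t)\ket{\psi})=R\bigl(U[k,k](t)U_0[k+1,L](t),\,U_0[k,L](t)\bigr)$. Finally the $k=L$ term vanishes, since $U[L,L](t)U_0[L+1,L](t)=e^{-iH_Lt}=U_0[L,L](t)$, so the sum may be cut off at $L-1$, which is exactly the claimed inequality.

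The only genuinely delicate point is the invariance step for $R_m$, because there $\mathcal{D}$ secretly carries an average over measurements; I would dispatch it by the remark that a design ensemble conjugated by a fixed unitary is again a design with the same mean and variance, so the per-segment bound is unaffected. Everything else is bookkeeping with the telescoping indices and the $U[1,0]=U_0[L+1,L]=\id$ conventions, so no substantial difficulty is expected.
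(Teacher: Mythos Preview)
Your proposal is correct and follows essentially the same telescoping/hybrid argument as the paper: define $V_k=U[1,k]U_0[k+1,L]$, apply the triangle inequality for $\mathcal{D}$, and strip the common unitary prefix $U[1,k-1]$ using unitary invariance of the distance. The paper's write-up is terser (it sums directly to $L-1$ and leaves the invariance step implicit), while you spell out the $k=L$ cancellation and the $R_m$ case more carefully, but the underlying idea is identical.
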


\begin{proof}
We have
\begin{equation}
\begin{aligned}
R(\mathscr{U}_1(t),U_0(t))&=R(U[1,L],U_0[1,L])\\
&= \int_{\psi} \mc D(U[1,L]\ket{\psi},U_0[1,L]\ket{\psi}) \d\psi\\
&\le \sum_{k=1}^{L-1} \int_{\psi} \mc D(U[1,k]U_0[k+1,L]\ket{\psi},U[1,k-1]U_0[k,L]\ket{\psi}) \d\psi\\
&=\sum_{k=1}^{L-1} \int_{\psi} \mc D(U[k,k]U_0[k+1,L]\ket{\psi},U_0[k,L]\ket{\psi}) \d\psi\\
&= \sum_{k=1}^{L-1} R(U[k,k]U_0[k+1,L]\ket{\psi},U_0[k,L]\ket{\psi})
\end{aligned}
\end{equation}
as claimed.
\end{proof}

Using Lemmas \ref{Lemma:AB} and \ref{Lemma:multiterm}, we generalize the results to many-term Hamiltonians as follows.

\begin{lemma}\label{Lemma:one segment}
For a $d$-dimensional Hamiltonian $H=\sum_{l=1}^{L} H_l$,
the average $\ell_2$ norm error  of  $\mathscr{U}_1(t)=e^{-iH_1t}e^{-iH_2t}\cdots e^{-iH_Lt} $ with respect to $U_0(t) = e^{-itH}$ for a 1-design input ensemble satisfies
\begin{equation}
R_{\ell_2}(\mathscr{U}_1(t),U_0(t))\le \frac{t^2}{2}\sum_{l_1=1}^{L-1}\frac{1}{\sqrt{d}}\|[H_{l_1},\sum_{l_2=l_1+1}^{L}H_{l_2}]\|_F.
\end{equation}
\end{lemma}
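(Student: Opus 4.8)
The plan is to combine the two preceding lemmas essentially verbatim: Lemma~\ref{Lemma:multiterm} reduces the multi-term PF1 error to a telescoping sum of two-term errors, and Lemma~\ref{Lemma:AB} bounds each such two-term error by the Frobenius norm of a commutator. So this is a bookkeeping argument with no real analytic content; the effort is in matching conventions.

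First I would apply Lemma~\ref{Lemma:multiterm} with the distance $\mc{D}$ taken to be the $\ell_2$ distance, which gives
\begin{equation*}
R_{\ell_2}(\mathscr{U}_1(t),U_0(t))\le \sum_{k=1}^{L-1} R_{\ell_2}\bigl(U[k,k]\,U_0[k+1,L],\,U_0[k,L]\bigr).
\end{equation*}
For a fixed $k$, set $A:=H_k$ and $B:=\sum_{l=k+1}^{L}H_l$. From the definitions of $U[\cdot,\cdot]$ and $U_0[\cdot,\cdot]$ we have $U[k,k]\,U_0[k+1,L]=e^{-iAt}e^{-iBt}$ and $U_0[k,L]=e^{-i(A+B)t}$, so the $k$th summand is exactly the average $\ell_2$ error of a first-order product formula for the two-term Hamiltonian $A+B$ with a $1$-design input ensemble --- precisely the object controlled by Lemma~\ref{Lemma:AB}, up to exchanging the roles of $A$ and $B$. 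That exchange is harmless because $[B,A]=-[A,B]$ implies $|[A,B]|^2=|[B,A]|^2$, so $\tr(|[A,B]|^2)$ is symmetric in $A,B$. Invoking Lemma~\ref{Lemma:AB} therefore bounds the $k$th summand by $\tfrac{t^2}{2}\bigl(\tr(|[H_k,\sum_{l=k+1}^{L}H_l]|^2)/d\bigr)^{1/2}=\tfrac{t^2}{2}\cdot\tfrac{1}{\sqrt d}\,\|[H_k,\sum_{l=k+1}^{L}H_l]\|_F$, the last equality being just the definition of the Frobenius norm.

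Summing over $k=1,\dots,L-1$ and relabelling $k\to l_1$ yields the claimed inequality. The only points needing care are (i) matching the index ordering of Lemma~\ref{Lemma:AB}, which writes the product formula with the ``second'' factor on the left ($e^{-iBt}e^{-iAt}$), to the ordering $e^{-iAt}e^{-iBt}$ that arises here (handled by the $A\leftrightarrow B$ symmetry noted above), and (ii) verifying that the telescoping in Lemma~\ref{Lemma:multiterm} genuinely produces two-term sub-evolutions $e^{-iAt}e^{-iBt}$ versus $e^{-i(A+B)t}$, with the shared prefix $U[1,k-1]$ cancelling because the $\ell_2$ distance is invariant under a common left unitary. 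Both are immediate from the definitions, so there is no substantive obstacle.
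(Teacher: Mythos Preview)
Your proposal is correct and essentially identical to the paper's proof: the paper also applies Lemma~\ref{Lemma:multiterm} and then Lemma~\ref{Lemma:AB} to each summand, the only cosmetic difference being that it sets $A=\sum_{l_2=l_1+1}^{L}H_{l_2}$ and $B=H_{l_1}$ (matching Lemma~\ref{Lemma:AB}'s ordering $e^{-iBt}e^{-iAt}$ directly) rather than invoking the $A\leftrightarrow B$ symmetry of $\tr(|[A,B]|^2)$ as you do.
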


\begin{proof}
From Lemma~\ref{Lemma:multiterm},
\begin{equation}
\begin{aligned}
R(U[1,L],U_0[1,L])\le \sum_{l_1=1}^{L-1} R(e^{-iH_{l_1}t} U_0[l_1+1,L],U_0[l_1,L]).
\end{aligned}
\end{equation}
Taking $A=\sum_{l_2=l_1+1}^{L}H_{l_2}$ and $B=H_{l_1}$ into Lemma \ref{Lemma:AB}, we have
\begin{equation}
 R_{\ell_2}(e^{-iH_{l_1}t} U_0[l_1+1,L]\ket{\psi},U_0[l_1,L]\ket{\psi})
 \le \frac{t^2}{2} \left(\frac{\tr(|[H_{l_1},\sum_{l_2=l_1+1}^{L}H_{l_2}]|^2)}{d}\right)^{\frac{1}{2}}.
\end{equation}
Then we can obtain the final result inductively.
\end{proof}

All of the above lemmas in this section describe time evolution by a single short time step (or segment). Here we apply the multi-segment results of Appendix~\ref{Sec:one-multi} to get an arbitrary-time bound for PF1.

\begin{theorem}[Triangle bound]\label{Th:PF1}
Let $H=\sum_{l=1}^{L} H_l$ be a $d$-dimensional Hamiltonian. For PF1 with $r$ segments, i.e., $\mathscr{U}^r_1(t/r)$ where $t/r < 1$,
the average $\ell_2$ norm error for a 1-design input ensemble $R_{\ell_2}(\mathscr{U}^r_1(t/r),U_0(t))$ has the upper bound
\begin{equation}
R_{\ell_2}(\mathscr{U}^r_1(t/r),U_0(t))\le \frac{t^2}{2r}\sum_{l_1=1}^{L-1}
\frac{1}{\sqrt{d}}
\|[H_{l_1},\sum_{l_2=l_1+1}^{L}H_{l_2}]\|_F.
\end{equation}
\end{theorem}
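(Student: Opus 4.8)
The plan is to chain together two facts already in hand: the multi-segment bound of Lemma~\ref{Lemma:segment2}, which controls an $r$-segment evolution by $r$ times the single-segment Cauchy--Schwarz surrogate $\tilde R_{\ell_2}$, and the single-segment estimate behind Lemmas~\ref{Lemma:multiterm} and~\ref{Lemma:one segment}. Write $\tau := t/r$ and note that $U_0(t) = e^{-iHt} = U_0(\tau)^r$, so the left-hand side is exactly $R_{\ell_2}(\mathscr{U}_1^r(\tau), U_0^r(\tau))$. Applying Lemma~\ref{Lemma:segment2} with $U := \mathscr{U}_1(\tau)$ and $U_0 := U_0(\tau)$ gives
\[
R_{\ell_2}(\mathscr{U}_1^r(\tau), U_0^r(\tau)) \le r\,\tilde R_{\ell_2}(\mathscr{U}_1(\tau), U_0(\tau)),
\]
so it remains to show $\tilde R_{\ell_2}(\mathscr{U}_1(\tau), U_0(\tau)) \le \frac{\tau^2}{2}\sum_{l_1=1}^{L-1}\frac{1}{\sqrt d}\|[H_{l_1},\sum_{l_2=l_1+1}^{L}H_{l_2}]\|_F$; substituting $\tau = t/r$ and multiplying by $r$ then produces the stated prefactor $\frac{t^2}{2r}$.

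For the single-segment piece I would reuse the telescoping decomposition of Lemma~\ref{Lemma:multiterm}: write $\mathscr{U}_1(\tau) = U[1,L](\tau)$ and interpolate to $U_0[1,L](\tau)$ through the hybrids $U[1,k](\tau)U_0[k+1,L](\tau)$, so that the $k$-th increment is, after stripping a left unitary and a right factor $U_0[k+1,L](\tau)$, a comparison of the two-term product formula $e^{-iH_k\tau}e^{-i(\sum_{l>k}H_l)\tau}$ with $e^{-i(\sum_{l\ge k}H_l)\tau}$. The crucial point---and the reason the whole argument is phrased through $\tilde R_{\ell_2}$ rather than the bare discrete-ensemble average $R_{\ell_2}$---is that $\tilde R_{\ell_2}$ is subadditive under the 1-design hypothesis alone: its defining expression in Eq.~\eqref{eq:l2upp1} is linear in $\ket{\psi}\bra{\psi}$, so invariance of the 1-design average under the inserted unitaries is automatic, whereas $R_{\ell_2}$ itself need not be subadditive over a discrete ensemble. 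Each two-term increment is then controlled by Lemma~\ref{Lemma:AB} with $B = H_k$ and $A = \sum_{l=k+1}^{L}H_l$, whose Frobenius bound $\|\mathscr{M}(\tau)\|_F^2 \le \frac{\tau^4}{4}\tr(|[A,B]|^2)$ feeds into Theorem~\ref{Th:l2} to give the increment bound $\frac{\tau^2}{2}\frac{1}{\sqrt d}\|[H_k,\sum_{l>k}H_l]\|_F$; summing the $L-1$ increments closes the single-segment estimate.

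The telescoping identity and the parameter substitution are routine; the step that needs genuine care is keeping the logical chain $R_{\ell_2}(\text{multi-segment}) \le r\,\tilde R_{\ell_2}(\text{one segment}) \le r\sum_{l_1}(\text{two-term increment})$ honest, i.e., never silently invoking subadditivity of $R_{\ell_2}$ itself. I would therefore organize the write-up so that every triangle-inequality step is applied either inside the $\ell_2$ norm before the ensemble average is taken, or directly to $\tilde R_{\ell_2}$, and only use the exact evaluation $\tilde R_{\ell_2} = (\mathbb{E}_\psi S(\psi))^{1/2} = d^{-1/2}\|\mathscr{M}\|_F$ at the end of each two-term increment. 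The hypothesis $t/r < 1$ is not actually required for the Frobenius bound (Lemma~\ref{Lemma:AB} is an exact identity), but I would retain it so that the segments are ``short'' and the statement matches the asymptotic bounds used elsewhere in the paper.
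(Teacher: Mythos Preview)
Your approach is correct and essentially the same as the paper's: combine the single-segment estimate (Lemma~\ref{Lemma:one segment}, obtained via the telescoping of Lemma~\ref{Lemma:multiterm} and the two-term bound of Lemma~\ref{Lemma:AB}) with the multi-segment subadditivity lemma, then substitute $\tau=t/r$. Your choice to route the multi-segment step through Lemma~\ref{Lemma:segment2} and the surrogate $\tilde R_{\ell_2}$ is in fact the cleaner way to make the argument airtight for a general 1-design ensemble, whereas the paper's one-line proof cites Lemma~\ref{Lemma:segment1}, which as stated is for the Haar ensemble.
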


\begin{proof}
The result directly follows by combining Lemmas \ref{Lemma:one segment} and \ref{Lemma:segment1}.
\end{proof}

We prove a bound for PF2 in a similar fashion.

\begin{theorem}[Triangle bound]\label{Th:PF2}
Let $H=\sum_{l=1}^{L} H_l$ be a $d$-dimensional Hamiltonian. For PF2 with $r$ segments, i.e., $\mathscr{U}^r_2(t/r)$, the
average $\ell_2$ norm error for a 1-design input ensemble has the upper bound
\begin{equation}
R_{\ell_2}(\mathscr{U}^r_2(t/r),U_0(t))\le \frac{t^3}{r^2}\left\{\frac{1}{12} \sum_{l_1=1}^{L}
\frac{1}{\sqrt{d}}\|[\sum_{l_2=l_1+1}^L H_{l_2},[\sum_{l_2=l_1+1}^L H_{l_2},H_{l_1}]]  \|_F+ \frac{1}{24}  \sum_{l_1=1}^{L}\frac{1}{\sqrt{d}}
\|[H_{l_1},[H_{l_1},\sum_{l_2=l_1+1}^L H_{l_2}]]\|_F\right\}.
\end{equation}
\end{theorem}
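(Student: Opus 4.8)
The plan is to replay, one ingredient at a time, the three-step argument behind Theorem~\ref{Th:PF1}: a single-step two-term bound (the PF2 analogue of Lemma~\ref{Lemma:AB}), a telescoping over the $L$ summands (the PF2 analogue of Lemma~\ref{Lemma:multiterm}, giving the analogue of Lemma~\ref{Lemma:one segment}), and finally the multi-segment lemmas of Appendix~\ref{Sec:one-multi}.

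\emph{Step 1 (two-term Strang splitting).} For $H=A+B$, write the symmetric second-order formula as $\mathscr{U}_2(t)=e^{-iBt/2}e^{-iAt}e^{-iBt/2}=U_0(t)(\id+\mathscr{M}(t))$. Mimicking the proof of Lemma~\ref{Lemma:AB}, I would start from the exact variation-of-parameters representation of $\mathscr{M}(t)$ for a symmetric $p{=}2$ formula from Ref.~\cite{childs2020theory} (the same representation specialized in the proof of Theorem~\ref{Th:general}): $\mathscr{M}(t)$ is an integral over a time simplex of terms $c\,E\,N\,F$, with $E,F$ unitaries built from the $e^{-iH_l\tau}$ and $N\in\{[A,[A,B]],\,[B,[B,A]]\}$. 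Grouping the integrand into these two families, $\mathscr{M}=\mathscr{M}_A+\mathscr{M}_B$, applying the triangle inequality for $\|\cdot\|_F$ together with $\|ENF\|_F=\|N\|_F$, and evaluating the simplex weights should give
\begin{equation*}
\|\mathscr{M}(t)\|_F\le \tfrac{t^3}{12}\,\|[A,[A,B]]\|_F+\tfrac{t^3}{24}\,\|[B,[B,A]]\|_F ,
\end{equation*}
the prefactors matching the known third-order Strang error terms $[A,[A,B]]$ and $[B,[B,A]]$ with coefficients $\tfrac1{12}$ and $\tfrac1{24}$. Combined with Theorem~\ref{Th:l2} this bounds $R_{\ell_2}$ (equivalently $\tilde R_{\ell_2}$) for a single step on two terms.

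\emph{Step 2 (telescoping to $L$ terms).} Writing $\mathscr{U}_2(t)=\bigl(\overrightarrow{\prod_{j<L}}e^{-iH_jt/2}\bigr)\,e^{-iH_Lt}\,\bigl(\overleftarrow{\prod_{j<L}}e^{-iH_jt/2}\bigr)$, I would telescope $\mathscr{U}_2(t)-U_0(t)$ through the hybrid operators $Q_k$ that run the first $k{-}1$ half-steps, the \emph{exact} evolution $e^{-it\sum_{l\ge k}H_l}$ of the remaining block, and the first $k{-}1$ half-steps in reverse; note $Q_1=U_0(t)$ and $Q_L=\mathscr{U}_2(t)$. The increment $Q_{k+1}-Q_k$ factors as a fixed unitary prefix times the two-term Strang error of Step~1 with $B=H_k$ and $A=\sum_{l_2=k+1}^{L}H_{l_2}$ times a reversed prefix; since the prefixes act only on the input state, bounding the genuine metric $R^{\mathcal E}_{\ell_2}$ of each increment by $\tilde R_{\ell_2}$ (Eq.~\eqref{eq:l2upp1}) and then invoking the 1-design invariance of $\tilde R_{\ell_2}$ (as in Lemma~\ref{Lemma:segment2}) strips them. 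Summing over $k=1,\dots,L-1$ yields the single-segment bound $R_{\ell_2}(\mathscr{U}_2(t),U_0(t))\le t^3\,T'_2$, i.e.\ the $r{=}1$ case of the theorem, with $T'_2$ exactly as stated.

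\emph{Step 3 (many segments) and the hard part.} Applying Lemma~\ref{Lemma:segment1} (or its 1-design version Lemma~\ref{Lemma:segment2}) with $t\mapsto t/r$ turns the single-segment bound into $R_{\ell_2}(\mathscr{U}_2^r(t/r),U_0(t))\le r\,\tilde R_{\ell_2}(\mathscr{U}_2(t/r),U_0(t/r))\le r\,(t/r)^3 T'_2=\tfrac{t^3}{r^2}T'_2$, as claimed. The routine parts here are the telescoping and the measure-invariance bookkeeping, which copy the PF1 arguments almost verbatim. The delicate part is Step~1: pinning down the \emph{exact} constants $\tfrac1{12},\tfrac1{24}$ requires carefully specializing the general error representation of Ref.~\cite{childs2020theory} to the symmetric $p{=}2$ case, correctly assigning each simplex term to its commutator family, and controlling the cross terms $\tr(E_A N_A F_A F_B^{\dagger} N_B^{\dagger} E_B^{\dagger})$ that appear when one expands $\mathscr{M}\mathscr{M}^{\dagger}$; those are tamed by Cauchy--Schwarz (Lemma~\ref{Lemma:traceproduct}), which is precisely what lets the sums (over the two families, and after Step~2 over $l_1$) be pulled outside the Frobenius norm while the inner sums $\sum_{l_2>l_1}H_{l_2}$ stay intact.
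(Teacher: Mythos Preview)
Your proposal is correct and follows essentially the same three-step structure as the paper's own proof: a two-term Strang bound obtained from the explicit triple-integral error representation in Ref.~\cite{childs2020theory} (which the paper cites as ``Appendix~L'' and which yields precisely the $\tfrac{1}{12}$ and $\tfrac{1}{24}$ coefficients via the simplex volume $t^3/6$ and the factors of $\tfrac12,\tfrac14$ inside the commutators), then a telescoping through hybrids $U_2[l_1]=e^{-iH_{l_1}t/2}e^{-it\sum_{l>l_1}H_l}e^{-iH_{l_1}t/2}$ versus $U_0[l_1]$, and finally the multi-segment Lemma~\ref{Lemma:segment1}/\ref{Lemma:segment2}. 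The only cosmetic difference is that you note the Frobenius bound in Step~1 can be obtained either by the Minkowski/triangle inequality on the integral plus $\|ENF\|_F=\|N\|_F$, or by expanding $\tr(\mathscr{M}\mathscr{M}^\dagger)$ and invoking Lemma~\ref{Lemma:traceproduct} on the cross terms; the paper does the latter, and the two routes give the identical bound.
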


\begin{proof}
We first consider a short time $t$, $H=A+B$, and $\mathscr{U}_2=e^{-iAt/2} e^{-iBt} e^{-iAt/2} $.
According to Appendix L of  Ref.~\cite{childs2020theory},
\begin{equation}
\begin{aligned}\label{Eq:PF2additive}
\mathscr{U}_2(t)=e^{-iHt}+&\int_0^t \d\tau_1\int_0^{\tau_1}\d\tau_2\int_0^{\tau_2} \d\tau_3 e^{-i(t-\tau_1)H}e^{-i\tau_1A/2}\\
&\cdot \left( e^{-i\tau_3B}\left[-iB,\left[-iB,-i\frac{A}{2}\right]\right]e^{i\tau_3B}+ e^{i\tau_3A/2}\left[i\frac{A}{2},\left[i\frac{A}{2},iB\right]\right]e^{-i\tau_3A/2}\right) e^{-\tau_1B} e^{-i\tau_1A/2}.
\end{aligned}
\end{equation}
The Frobenius norm of the additive error $\mathscr{A}=\mathscr{U}_2(t)-e^{-iHt}$ satisfies
\begin{align}
\tr(\mathscr{M}\mathscr{M}^{\dag})&=\tr (\mathscr{A}\mathscr{A}^{\dagger})=
\tr\Bigg\{\int_0^t \d\tau_1\int_0^{\tau_1}\d\tau_2\int_0^{\tau_2} \d\tau_3 \int_0^t \d\tau_1'\int_0^{\tau_1'}\d\tau_2'\int_0^{\tau_2'} \d\tau_3'\nonumber\\
&\left(E_1[-iB,[-iB,-i\frac{A}{2}]]F_1+ E_2[i\frac{A}{2},[i\frac{A}{2},iB]]F_2\right)\left(F_1^\dagger[-iB,[-iB,-i\frac{A}{2}]]^{\dagger} E_1^\dagger+ F_2^\dagger[i\frac{A}{2},[i\frac{A}{2},iB]]^\dagger E_2^\dagger\right)\Bigg\}\nonumber\\
&\le \frac{t^6}{36}
\left(\sqrt{\tr(|[-iB,[-iB,-i\frac{A}{2}]]|^2)}+
\sqrt{\tr(|[i\frac{A}{2},[i\frac{A}{2},iB]]|^2)}\right)^2 \nonumber\\
&=\frac{t^6}{36}
\left(\frac{1}{2}\sqrt{\tr|[B,[B,A]]|^2}+\frac{1}{4}
\sqrt{\tr|[A,[A,B]]|^2}\right)^2,
\end{align}
where $E_1$, $E_2$, $F_1$, $F_2$ are all unitary according to Eq.~\eqref{Eq:PF2additive} and the inequality is due to Lemma~\ref{Lemma:traceproduct}. Then we have
\begin{equation}\label{Eq:PF2AB}
R_{\ell_2}\le     \frac{t^3}{12} \left(\frac{\tr(|[B,[B,A]]|^2)}{d}\right)^{\frac{1}{2}}+ \frac{t^3}{24} \left(\frac{\tr(|[A,[A,B]]|^2)}{d}\right)^{\frac{1}{2}}.
\end{equation}
For $H=H_1+H_2+\dots+H_L$, we define
\begin{equation}
U_2[l_1] := e^{-iH_{l_1}t/2} e^{-i\sum_{l=l_1+1}^LH_{l}}e^{-iH_{l_1}t/2},\quad U_0[l_1]:=e^{-it\sum_{l=l_1}^LH_{l}}, \quad
\end{equation}
and similarly obtain the inequality
\begin{align}
    R_{\ell_2}(U_2,U_0)\le \sum_{l_1=1}^{L-1} R(U_2[l_1], U_0[l_1])).
\end{align}
Consequently, we obtain
\begin{equation}
R_{\ell_2}\le     \frac{t^3}{12} \sum_{l_1=1}^{L}
\left(\frac{\tr|[\sum_{l_2=l_1+1}^L H_{l_2},[\sum_{l_2=l_1+1}^L H_{l_2},H_{l_1}]]|^2}{d}\right)^{\frac{1}{2}} +  \frac{t^3}{24}  \sum_{l_1=1}^{L}\left(\frac{\tr|[H_{l_1},[H_{l_1},\sum_{l_2=l_1+1}^L H_{l_2}]]|^2}{d}\right)^{\frac{1}{2}}.
\end{equation}
For a long-time evolution over time $t$, we apply Lemma~\ref{Lemma:segment1} to complete the proof.
\end{proof}

%%%%%%%%%%%%%%%%%%%%%%%%%%%%%%%%%%%%%%%%%%%%%%%%%%
\section{Error interference with nearest-neighbor interacting Hamiltonians}
\label{Sec:interference}
In general, the average distance of PF1 from the exact time evolution operator for evolution time $t$ can be bounded as
\begin{equation}
R(\mathscr{U}_1^r(t),U_0(t))\le r  R(\mathscr{U}_1(t/r),U_0(t/r)).
\end{equation}
However, this bound does not leverage the fact that errors from different time steps may interfere. This fact can be used to achieve superior worst-case error bounds for PF1 with nearest-neighbor Hamiltonians \cite{Tran_2020}.

Figure~\ref{Fig:PF12} suggests that the empirical average error for PF1 suggests also benefits from destructive error interference for nearest-neighbor Hamiltonians,
so we apply the analysis of \cite{Tran_2020} to the average case. Specifically, in this section, we show that the average error of PF1 with Hamiltonian $H=A+B$ can be further tightened as
\begin{equation}
R(\mathscr{U}_1^r(t/r),U_0(t))\approx \mathcal O\left(\frac{r}{t}  R(\mathscr{U}_1(t/r),U_0(t/r))\right).
\end{equation}
The main idea is to apply the following approximations:
\begin{equation}
\begin{aligned}
&\mathscr{M}_r\approx   \sum_{j=0}^{r-1} (U_0^\dagger)^{j}\mathscr{M}_1 U_0^{j},
~\mathscr{M}_1\approx [B,A]\frac{t^2}{2r^2}=[H,A]\frac{t^2}{2r^2},\\
&\sum_{j=0}^{r-1} (U_0^\dagger)^{j}[H,A] U_0^{j}
\approx \frac{r}{t}\int_0^t \d x \, U_{-x}[H,A]U_x
= \frac{r}{t} \big(U_{-t}(iA)U_{t}-iA\big),
\end{aligned}
\end{equation}
where $ \mathscr{U}_1^r(t/r)=U_0(t)(\id+ \mathscr{M}_r)$, $U_0=e^{-iHt/r}$ is the ideal evolution for one segment, and $U_x=e^{-iHx}$.
The rigorous result is as follows.

\begin{theorem}[Interference bound]\label{Th:interference}
Consider the Hamiltonian $H=\sum_{j,j+1}H_{j,j+1}$ where
$H_{j,j+1}$ acts nontrivially on qubits $j, j+1$, and $\|H_{j,j+1}\|\le 1$.
Let
$\mathscr{U}_1(t)=e^{-iAt}e^{-iBt}$ where $A=\sum_{\text{odd}~j} H_{j,j+1}$, $B=\sum_{\text{even}~j} H_{j,j+1}$.
Then, provided $nt^2/r$ is less than a small constant, the average $\ell_2$  norm error for a 1-design input ensemble has the upper bound
\begin{equation}
 R_{\ell_2}(\mathscr{U}^r_1(t/r),U_0(t))=O\left( \sqrt{n}\left(\frac{t}{r}+\frac{t^3}{r^2}\right)\right).
\end{equation}
\end{theorem}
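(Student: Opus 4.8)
The plan is to adapt the destructive-interference analysis of Ref.~\cite{Tran_2020} to the average case, carrying the Frobenius norm (in place of the spectral norm) through every step and invoking Theorem~\ref{Th:l2} only at the very end. Write $\mathscr{U}_1^r(t/r)=U_0(t)(\id+\mathscr{M}_r)$ with $U_0:=e^{-iHt/r}$ the ideal one-segment evolution and $\mathscr{U}_1(t/r)=U_0(\id+\mathscr{M}_1)$. Expanding $\bigl(U_0(\id+\mathscr{M}_1)\bigr)^r$ and isolating the part that is linear in $\mathscr{M}_1$ gives
\begin{equation*}
\mathscr{M}_r=\mathscr{L}_r+\mathscr{E}_r,\qquad \mathscr{L}_r:=\sum_{j=0}^{r-1}(U_0^{\dagger})^{j}\,\mathscr{M}_1\,U_0^{j},
\end{equation*}
where $\mathscr{E}_r$ collects the terms of degree $\ge 2$ in $\mathscr{M}_1$. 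In parallel, pushing the integral representation of the one-step error (cf.\ Lemma~\ref{Lemma:AB}) one order further yields $\mathscr{M}_1=\tfrac{t^2}{2r^2}[B,A]+\mathscr{R}_1$ with $[B,A]=[H,A]$ and $\tfrac1{\sqrt d}\|\mathscr{R}_1\|_F=\mathcal{O}\bigl(\sqrt n\,(t/r)^3\bigr)$, using that $\tfrac1{\sqrt d}\|[A,[A,B]]\|_F$ and $\tfrac1{\sqrt d}\|[B,[A,B]]\|_F$ are $\mathcal{O}(\sqrt n)$. By Theorem~\ref{Th:l2}, $R_{\ell_2}\le\tfrac1{\sqrt d}\|\mathscr{M}_r\|_F\le\tfrac1{\sqrt d}\|\mathscr{L}_r\|_F+\tfrac1{\sqrt d}\|\mathscr{E}_r\|_F$, and I would estimate the two pieces separately.

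\emph{Interference term.} Substituting $\mathscr{M}_1=\tfrac{t^2}{2r^2}[H,A]+\mathscr{R}_1$ splits $\mathscr{L}_r$ into $\tfrac{t^2}{2r^2}\sum_{j=0}^{r-1}(U_0^{\dagger})^{j}[H,A]U_0^{j}$ plus $\sum_{j=0}^{r-1}(U_0^{\dagger})^{j}\mathscr{R}_1U_0^{j}$, the latter having Frobenius norm at most $r\|\mathscr{R}_1\|_F$ and hence contributing $\mathcal{O}(\sqrt n\,t^3/r^2)$ to $R_{\ell_2}$. For the former, the identity $\int_0^t e^{iHx}[H,A]e^{-iHx}\,\mathrm{d}x=-i\bigl(e^{iHt}Ae^{-iHt}-A\bigr)$ exhibits $\sum_{j=0}^{r-1}(U_0^{\dagger})^{j}[H,A]U_0^{j}$ as a left-endpoint quadrature, with step $t/r$, of $\tfrac rt\int_0^t e^{iHx}[H,A]e^{-iHx}\,\mathrm{d}x$. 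The integral equals $-i\bigl(e^{iHt}\bar A\,e^{-iHt}-\bar A\bigr)$, where $\bar A$ is the traceless part of $A$ (the identity component cancels); since the odd bonds $\{(j,j+1):j\text{ odd}\}$ act on mutually disjoint qubit pairs, $\tfrac1{\sqrt d}\|\bar A\|_F=\bigl(\tfrac1d\sum_{j\text{ odd}}\tr|\bar H_{j,j+1}|^2\bigr)^{1/2}=\mathcal{O}(\sqrt n)$, so this exact term has Frobenius norm $\mathcal{O}\bigl(\tfrac tr\sqrt n\,\sqrt d\bigr)$ and contributes $\mathcal{O}(\sqrt n\,t/r)$ to $R_{\ell_2}$. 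The quadrature error is at most $\tfrac t2\|[H,[H,A]]\|_F$ in Frobenius norm (mean-value estimate), which after the $\tfrac{t^2}{2r^2}$ prefactor is $\mathcal{O}(\sqrt n\,t^3/r^2)$ using $\tfrac1{\sqrt d}\|[H,[H,A]]\|_F=\mathcal{O}(\sqrt n)$.

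\emph{Higher-order term.} From $\mathscr{U}_1^{\,j}=U_0^{\,j}(\id+\mathscr{M}^{(j)})$ and the one-step factorisation one obtains the recursion $\mathscr{M}^{(j+1)}=\widetilde{\mathscr{M}}_1^{(j)}+\mathscr{M}^{(j)}+\widetilde{\mathscr{M}}_1^{(j)}\mathscr{M}^{(j)}$ with $\widetilde{\mathscr{M}}_1^{(j)}:=(U_0^{\dagger})^{j}\mathscr{M}_1U_0^{j}$, whence $\mathscr{E}_r=\sum_{j=1}^{r-1}\widetilde{\mathscr{M}}_1^{(j)}\mathscr{M}^{(j)}$. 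The hypothesis that $nt^2/r$ is bounded gives, via $\|\mathscr{M}^{(j)}\|\le j\|\mathscr{M}_1\|=\mathcal{O}(nt^2/r)$, that $\|\mathscr{M}^{(j)}\|=\mathcal{O}(1)$ for every $j\le r$; writing $\mathscr{M}^{(j)}=\mathscr{L}_j+\mathscr{E}_j$ and using that the partial sums $\mathscr{L}_j$ obey (by the previous paragraph, with $r\mapsto j$ and the same telescoping) $\tfrac1{\sqrt d}\|\mathscr{L}_j\|_F=\mathcal{O}(\sqrt n\,t/r)$ and $\|\mathscr{L}_j\|=\mathcal{O}(1)$, I would bound the dominant part $\sum_j\widetilde{\mathscr{M}}_1^{(j)}\mathscr{L}_j$ by putting a spectral norm on $\widetilde{\mathscr{M}}_1^{(j)}$ and a Frobenius norm on $\mathscr{L}_j$: $\tfrac1{\sqrt d}\bigl\|\sum_j\widetilde{\mathscr{M}}_1^{(j)}\mathscr{L}_j\bigr\|_F\le\bigl(\sum_j\|\widetilde{\mathscr{M}}_1^{(j)}\|\bigr)\max_j\tfrac1{\sqrt d}\|\mathscr{L}_j\|_F=\mathcal{O}(nt^2/r)\cdot\mathcal{O}(\sqrt n\,t/r)=\mathcal{O}(\sqrt n\,t/r)$, which is absorbed into the leading term; the remaining contributions $\sum_j\widetilde{\mathscr{M}}_1^{(j)}\mathscr{E}_j$ (and the $\mathscr{R}_1$ cross terms) form a series whose successive levels shrink by a factor $\mathcal{O}(nt^2/r)<1$ and are therefore negligible. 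Collecting the three estimates gives $R_{\ell_2}(\mathscr{U}_1^r(t/r),U_0(t))=\mathcal{O}\bigl(\sqrt n(\tfrac tr+\tfrac{t^3}{r^2})\bigr)$.

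The hard part will be the rigorous control of the higher-order term $\mathscr{E}_r$: one must show that iterating the interference identity genuinely converges and that every level stays $\mathcal{O}\bigl(\sqrt n(t/r+t^3/r^2)\bigr)$ without ever incurring a spurious factor of $\sqrt d$, which forces a careful apportionment of spectral versus Frobenius norms across long operator products, together with the smallness hypothesis $nt^2/r=\mathcal{O}(1)$ (and, strictly, a preliminary reduction to $t=\Omega(1)$, since for $t=\mathcal{O}(1)$ the worst-case bound of Ref.~\cite{childs2020theory} already implies the claim). A secondary, more routine obstacle is the bookkeeping needed to check that $\tfrac1{\sqrt d}\|[A,B]\|_F$, $\tfrac1{\sqrt d}\|[H,[H,A]]\|_F$, $\tfrac1{\sqrt d}\|[A,[A,B]]\|_F$, and similar nested commutators are all $\mathcal{O}(\sqrt n)$ by the bounded-overlap orthogonality of their local summands, even though the corresponding spectral norms are only $\mathcal{O}(n)$; this orthogonality is what powers the $\sqrt n$ (rather than $n$) scaling throughout.
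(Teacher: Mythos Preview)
Your proposal is correct and reaches the same conclusion as the paper, but by a genuinely different and more elementary route.

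The paper follows the machinery of Ref.~\cite{Tran_2020} closely: it writes $\mathscr{M}_1=U_0^\dagger(-[H,S]-V)$ with the full power series $S=\sum_{k\ge2}\tfrac{(-it)^k}{k!r^k}S_k$ and $V=\sum_{k\ge3}\tfrac{(-it)^k}{k!r^k}V_k$, invokes the exact telescoping identity $\sum_{j=0}^{a-1}(U_0^\dagger)^j[H,S]U_0^j=\tfrac{r}{t}\sum_{k\ge0}I_{at/r}(F^{\circ k}(S))$, and then bounds $\tr|S|^2$, $\tr|V|^2$, and each $\tr|F^{\circ k}(S)|^2$ by explicit term-counting (Lemmas~\ref{Lemma:SV}--\ref{Lemma:Fk}). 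The higher-degree contributions are grouped by degree: $\mathscr{M}_r=\sum_k\Delta_k$, and each $\Delta_k$ is factored as $\sum_iP_i\Delta_1^*Q_i$ with $\|P_i\|\le\|\mathscr{M}_1\|^{k-1}$ (Lemma~\ref{Lemma:deltak}). You instead keep only the leading piece $\mathscr{M}_1\approx\tfrac{t^2}{2r^2}[H,A]$, treat $\sum_j(U_0^\dagger)^j[H,A]U_0^j$ directly as a left-Riemann approximation of $\tfrac{r}{t}\int_0^t e^{iHx}[H,A]e^{-iHx}\,\d x=-i\tfrac{r}{t}\bigl(e^{iHt}\bar Ae^{-iHt}-\bar A\bigr)$, control the quadrature error by a single derivative $[H,[H,A]]$, and handle the higher-degree pieces recursively via $\mathscr{M}^{(j+1)}=\widetilde{\mathscr{M}}_1^{(j)}+\mathscr{M}^{(j)}+\widetilde{\mathscr{M}}_1^{(j)}\mathscr{M}^{(j)}$ and a geometric series in $r\|\mathscr{M}_1\|=\mathcal O(nt^2/r)$. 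Your argument is shorter and sidesteps the $F^{\circ k}$ and $S_k,V_k$ bookkeeping entirely; the cost is that it yields only the asymptotic statement, whereas the paper's approach delivers concrete prefactors (which are used in Appendix~\ref{Sec:numerical_inter} to plot the interference curve in Fig.~\ref{Fig:PF12}).

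One small correction: your parenthetical ``preliminary reduction to $t=\Omega(1)$'' does not work---for $t=\mathcal O(1)$ the worst-case bound of Ref.~\cite{childs2020theory} gives only $\mathcal O(nt^2/r)$, which is not $\mathcal O(\sqrt n\,t/r)$. Fortunately your main argument nowhere uses this reduction, so you can simply drop the remark.
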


\begin{proof}
We denote the multiplicative error for $r$ segments and $1$ segment as $\mathscr{M}_r$ and $\mathscr{M}_1$, respectively. Here $U_0=e^{-iHt/r}$ is the ideal evolution in one segment. We have
\begin{equation}\label{}
\begin{aligned}
  \mathscr{M}_r&=(U_0^\dagger)^r[U_0(I+ \mathscr{M}_1)]^r-\id\\
&=\underbrace{\sum_{j=0}^{r-1} (U_0^\dagger)^{j}\mathscr{M}_1 U_0^{j}}_{\Delta_1}+\underbrace{\sum_{j_1=1}^{r-1}\sum_{j_2=1}^{r-j_1} (U_0^\dagger)^r U_0^{j_1}\mathscr{M}_1U_0^{j_2}\mathscr{M}_1U_0^{r-j_1-j_2}}_{\Delta_2}+\sum_{k=3}^{r}\Delta_k,
\end{aligned}\end{equation}
where $\Delta_k$ represents the $k$th-order term of $\mathscr{M}_1$, with
$\|\Delta_k\|=\mathcal O(r^k \|\mathscr{M}_1\|)$.
Assuming $r\|\mathscr{M}_1\|=\mathcal O(nt^2/r)$ is less than a small constant, we can use Lemmas~\ref{Lemma:Delta} and \ref{Lemma:deltak} below to show that
\begin{equation}
\tr(\mathscr{M}_r\mathscr{M}_r^{\dagger})   \le \sum_{k,k'=1}^{r} |\tr(\Delta_k\Delta_{k'}^{\dagger}) |\le   \biggl(\sum_{k,k'=1}^{r} r^{k+k'-2} \|\mathscr{M}_1\|^{k+k'-2}  \biggr)\mathcal O\biggl(d n\left(\frac{t}{r}+\frac{t^3}{r^2}\right)^2\biggr)= \mathcal O\biggl(d n\left(\frac{t}{r}+\frac{t^3}{r^2}\right)^2\biggr),
\end{equation}
where $d$ is the dimension of the Hilbert space acted on by the Hamiltonian.
Therefore, by Theorem~\ref{Th:l2}, we have
\begin{equation}
R_{\ell_2}(\mathscr{U}^r_1(t/r),U_0(t))\le \frac{1}{\sqrt{d}}\|\mathscr{M}(t)   \|_F=   \mathcal O\biggl( \sqrt{n}\left(\frac{t}{r}+\frac{t^3}{r^2}\right)\biggr)
\end{equation}
as claimed.
\end{proof}

Next we prove Lemma~\ref{Lemma:Delta} (using intermediate Lemmas~\ref{Lemma:SV} and \ref{Lemma:Fk}) and Lemma \ref{Lemma:deltak}.

\begin{lemma}\label{Lemma:Delta}
If $\frac{tn}{r}$ is less than a small constant, we have
\begin{equation}
\tr(\Delta_1 \Delta_{1}^{\dagger})=\mathcal O\biggl(d n\left(\frac{t}{r}+\frac{t^3}{r^2}\right)^2\biggr).
\end{equation}
\end{lemma}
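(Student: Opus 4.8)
The plan is to bound $\frac{1}{\sqrt{d}}\|\Delta_1\|_F$ directly, since $\tr(\Delta_1\Delta_1^\dagger)=\|\Delta_1\|_F^2$. Two preliminaries. First, the identity components of the $H_{j,j+1}$ contribute only a global phase that cancels in $U_0^\dagger\mathscr{U}_1$ (and hence in $\mathscr{M}_1$), so I may assume each $H_{j,j+1}$ is traceless, its norm staying $\mathcal{O}(1)$. Second — the geometric heart of the argument — I will use the estimate that if $\Omega=\sum_\alpha\Omega_\alpha$ with each $\Omega_\alpha$ traceless and supported on $\mathcal{O}(1)$ consecutive sites, with $\mathcal{O}(n)$ summands each overlapping the support of only $\mathcal{O}(1)$ others, then $\frac{1}{\sqrt{d}}\|\Omega\|_F=\mathcal{O}(\sqrt{n})$. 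This follows from $\|\Omega\|_F^2=\sum_{\alpha,\beta}\tr(\Omega_\alpha^\dagger\Omega_\beta)$, where disjoint-support pairs vanish by tracelessness and each of the $\mathcal{O}(n)$ surviving pairs contributes at most $\|\Omega_\alpha\|_F\|\Omega_\beta\|_F=\mathcal{O}(d)$ by Cauchy–Schwarz (Lemma~\ref{Lemma:traceproduct}). By the nearest-neighbor structure, $A=\sum_{\text{odd }j}H_{j,j+1}$, the commutator $[H,A]=[B,A]$ (note $[A+B,A]=[B,A]$), and the iterated commutators $[B,[B,A]]$, $[H,[H,A]]$ are all such local traceless sums, so each has $\frac{1}{\sqrt{d}}\|\cdot\|_F=\mathcal{O}(\sqrt{n})$.

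Next I would handle the leading part. Write $\mathscr{M}_1=\frac{s^2}{2}[B,A]+\mathscr{R}_1$ with $s:=t/r$, the first term being the Baker--Campbell--Hausdorff leading term and $\mathscr{R}_1$ the higher-order remainder, so that $\Delta_1=\frac{s^2}{2}\mathcal{S}+\sum_{j}(U_0^\dagger)^j\mathscr{R}_1 U_0^j$ with $\mathcal{S}:=\sum_{j=0}^{r-1}U_{-js}[H,A]U_{js}$. The point is that $\mathcal{S}$ nearly telescopes: since $U_{-(j+1)s}AU_{(j+1)s}-U_{-js}AU_{js}=U_{-js}(U_{-s}AU_s-A)U_{js}$ and $U_{-s}AU_s-A=is[H,A]-\int_0^s\!\!\int_0^x U_{-y}[H,[H,A]]U_y\,dy\,dx$, summing over $j$ gives $is\,\mathcal{S}=(U_{-t}AU_t-A)+\mathcal{E}'$, where $\mathcal{E}'$ is a sum of $r$ unitarily conjugated double time-integrals of $[H,[H,A]]$. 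By unitary invariance of $\|\cdot\|_F$ and the triangle inequality, $\frac{1}{\sqrt{d}}\|\mathcal{E}'\|_F\le r\cdot\frac{s^2}{2}\cdot\frac{1}{\sqrt{d}}\|[H,[H,A]]\|_F=\mathcal{O}(\sqrt{n}\,t^2/r)$ and $\frac{1}{\sqrt{d}}\|U_{-t}AU_t-A\|_F\le\frac{2}{\sqrt{d}}\|A\|_F=\mathcal{O}(\sqrt{n})$, hence $\frac{1}{\sqrt{d}}\|\tfrac{s^2}{2}\mathcal{S}\|_F=\frac{s}{2}\cdot\frac{1}{\sqrt{d}}\|(U_{-t}AU_t-A)+\mathcal{E}'\|_F=\mathcal{O}\!\big(\sqrt{n}(t/r+t^3/r^2)\big)$.

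It remains to bound the remainder contribution by $\mathcal{O}(\sqrt{n}\,t^3/r^2)$, and this I expect to be the main obstacle. A crude spectral bound $\|\mathscr{R}_1\|=\mathcal{O}(s^3n^2)$ followed by $\frac{1}{\sqrt{d}}\|\sum_j(U_0^\dagger)^j\mathscr{R}_1 U_0^j\|_F\le r\|\mathscr{R}_1\|$ is too lossy by a factor $n^{3/2}$; one must instead show $\frac{1}{\sqrt{d}}\|\mathscr{R}_1\|_F=\mathcal{O}(\sqrt{n}\,s^3)$. Starting from the exact integral representation $\mathscr{M}_1=\int_0^s\!\!\int_0^{\tau_1}W_1[iA,iB]W_2\,d\tau_2\,d\tau_1$ of Lemma~\ref{Lemma:AB} (with $A,B$ ordered to match $e^{-iAs}e^{-iBs}$, and $[iA,iB]=[B,A]=[H,A]$ after the leading integration), I would factor $W_1[iA,iB]W_2=(W_1[iA,iB]W_1^{-1})(W_1 W_2)$ and write $W_1[iA,iB]W_1^{-1}-[iA,iB]$ as a time-integral of conjugated $(\ge 3)$-fold nested commutators of the $H_{j,j+1}$'s — again local traceless sums, hence per-dimension Frobenius norm $\mathcal{O}(s\sqrt{n})$ by the structural estimate — while $W_1 W_2-I=\mathcal{O}(s^2 n)$ in spectral norm contributes only at order $s^4 n^{3/2}$, which is subleading exactly when $tn/r$ is below a small constant (this hypothesis, together with bounding the conjugating factors uniformly, is where it enters). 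Integrating these pieces gives $\frac{1}{\sqrt{d}}\|\mathscr{R}_1\|_F=\mathcal{O}(\sqrt{n}\,s^3)$, whence $\frac{1}{\sqrt{d}}\|\sum_j(U_0^\dagger)^j\mathscr{R}_1 U_0^j\|_F\le r\cdot\mathcal{O}(\sqrt{n}\,s^3)=\mathcal{O}(\sqrt{n}\,t^3/r^2)$. Combining with the bound on $\mathcal{S}$ via the triangle inequality yields $\frac{1}{\sqrt{d}}\|\Delta_1\|_F=\mathcal{O}\!\big(\sqrt{n}(t/r+t^3/r^2)\big)$, i.e.\ $\tr(\Delta_1\Delta_1^\dagger)=\mathcal{O}\!\big(dn(t/r+t^3/r^2)^2\big)$. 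The delicate step is precisely this remainder estimate: reorganizing the ``almost-commutator'' pieces of $\mathscr{R}_1$ into genuine nested commutators so that the $\sqrt{n}$-saving structural estimate applies rather than the lossy spectral bound.
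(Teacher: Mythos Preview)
Your argument is correct and takes a genuinely different route from the paper. The paper imports the full machinery of Tran et al.: it writes $\mathscr{M}_1=U_0^\dagger(-[H,S]-V)$ with $S=\sum_{k\ge2}\frac{(-it/r)^k}{k!}S_k$ and $V=\sum_{k\ge3}\frac{(-it/r)^k}{k!}V_k$ as infinite series, then invokes the exact resummation identity $\sum_j(U_0^\dagger)^j[H,S]U_0^j=\tfrac{r}{t}\sum_k I_t(F^{\circ k}(S))$, and finally bounds the four cross terms $\tr(|\widetilde F_1|^2)$, $\tr(|\widetilde V_1|^2)$, $\tr(\widetilde F_1\widetilde V_1^\dagger)$ via the auxiliary Lemmas~\ref{Lemma:SV} and~\ref{Lemma:Fk}, which in turn track the term counts $\sharp S_k$, $\sharp V_k$ recursively. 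Your approach is much lighter: you split off only the leading BCH term $\tfrac{s^2}{2}[B,A]$, telescope it by hand (your identity $is\,\mathcal S=(U_{-t}AU_t-A)+\mathcal E'$ is exactly the $k=0$ piece of the paper's $I_t$ expansion), and then bound the remainder $\mathscr{R}_1$ using the single-step integral representation together with the key observation that $W_1W_2=e^{iH\tau_1}\mathscr{U}_1(\tau_1)$ so $\|W_1W_2-I\|=\|\mathscr M(\tau_1)\|=\mathcal O(n\tau_1^2)$ rather than $\mathcal O(n\tau_1)$. That quadratic smallness is what makes the $D(W_1W_2-I)$ piece land at $\mathcal O(n^{3/2}s^4)$ and hence be absorbed under $nt/r\lesssim 1$, and it sidesteps the entire $F^{\circ k}$ apparatus. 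What the paper's route buys is explicit series that feed directly into the numerical prefactor computations of Appendix~\ref{Sec:numerical_inter}; what yours buys is a self-contained argument that never leaves the Frobenius norm and never needs Lemmas~\ref{Lemma:SV}--\ref{Lemma:Fk}. One small point worth making explicit in your write-up: when you bound $\|W_1[iA,iB]W_1^{-1}-[iA,iB]\|_F$, the conjugation $e^{-iB\sigma}(\cdot)e^{iB\sigma}$ preserves the local-traceless-sum structure because $B$ is a sum of \emph{commuting} two-site terms, so only finitely many factors act nontrivially on each local piece; this is what keeps the structural $\sqrt n$ estimate uniform in $\sigma$.
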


\begin{proof}
According to Lemma 1 in Ref.~\cite{Tran_2020},
we have
\begin{equation}
 \mathscr{M}_1=U_0^\dagger \delta, ~\delta=-[H,S]-V,
\end{equation}
where
\begin{equation}
 S=\sum_{k=2}^{\infty}  \frac{(-it)^k}{k!r^k} S_k, ~V= \sum_{k=3}^{\infty}  \frac{(-it)^k}{k!r^k} V_k
\end{equation}
with expressions for $S_k$ and $V_k$ given in Lemma 1 of Ref.~\cite{Tran_2020}.
The destructive error interference
is due to the following equation for any positive integer $a$:
\begin{equation}
\sum_{j=0}^{a-1} (U_0^\dagger)^{j}[H,S] U_0^{j}=\frac{r}{t}
\sum_{k=0}^{\infty}
I_{at/r}(F^{\circ k}(S)),
\end{equation}
where \begin{equation}
\begin{aligned}
&I_{t}(X) =\int_0^t U_s[H,X]U_{-s}\,\d s= U_t iX U_{-t}-iX,\\
&F(X)=  -\frac{r}{t} \int_0^{\frac{t}{r}}\d s \int_0^s \d v \, U_{v}[H,X]U_{-v},
\end{aligned}
\end{equation}
and  $F^{\circ k}(S)$ is the $k$th iterate of the function $F$, namely $F^{\circ k}(S)=\underbrace{F(F(\dots F}_{k}(S)))$, with $F^{\circ 0}(S)=S$.
Now we write
\begin{equation}
    \begin{aligned}
\Delta_1&=  U_0^\dagger\sum_{j=0}^{r-1} (U_0^\dagger)^{j}\delta U_0^{j}
= -U_0^\dagger\sum_{j=0}^{r-1} (U_0^\dagger)^{j}[H,S] U_0^{j}-\sum_{j=0}^{r-1} (U_0^\dagger)^{j+1}V U_0^{j}\\
&= -\frac{r}{t}
\sum_{k=0}^{\infty}
U_0^\dagger I_{t}(F^{\circ k}(S))
-\sum_{j=0}^{r-1} (U_0^\dagger)^{j+1}V U_0^{j}\\
&=\underbrace{-\frac{r}{t}
\sum_{k=0}^{\infty}\left[
U_0^\dagger U_t (iF^{\circ k}(S)) U_{-t}-iF^{\circ k}(S)\right]}_{\widetilde{F}_1}- \underbrace{\sum_{j=0}^{r-1} (U_0^\dagger)^{j+1}V U_0^{j}}_{\widetilde{V}_1}.
   \end{aligned}
\end{equation}
Then we write
\begin{equation}
 \tr(\Delta_1 \Delta_{1}^{\dagger})=  \tr(|\widetilde{F}_1|^2)+
 \tr(|\widetilde{V}_1|^2)+\tr(\widetilde{F}_1\widetilde{V}_1^{\dagger})+ \tr(\widetilde{V}_1\widetilde{F}_1^{\dagger})
 \end{equation}
and separately bound each of the four parts:
 \begin{equation}
 \begin{aligned}
 \tr(|\widetilde{F}_1|^2)
 &\le
 \frac{r^2}{t^2}
\sum_{k,k'=0}^{\infty}\left|\tr\left\{\left[
U_0^\dagger U_t (iF^{\circ k}(S)) U_{-t}-iF^{\circ k}(S)\right]
\left[
U_0^\dagger U_t (iF^{\circ k'}(S)) U_{-t}-iF^{\circ k'}(S)\right] \right\}\right| \quad \text{(Triangle inequality)}\\
&\le \frac{r^2}{t^2}
\sum_{k,k'=0}^{\infty}4 \sqrt{\tr(|F^{\circ k}(S)|^2)}\sqrt{\tr(|F^{\circ k'}(S)|^2)} \quad \text{(Lemma~\ref{Lemma:traceproduct})}\\
&=\frac{4r^2}{t^2}
\left(\sum_{k=0}^{\infty} \sqrt{\tr(|F^{\circ k}(S)|^2)}\right)^2\\
&=\mathcal O\left(d\frac{nt^2}{r^2}\right) \quad \text{(Lemma~\ref{Lemma:Fk})};\\
\tr(|\widetilde{V}_1|^2)&\le \sum_{j,j'=0}^{r-1} \left|\tr\left((U_0^\dagger)^{j+1}V U_0^{j}U_0^{-j'}V^{\dagger} U_0^{j'+1}\right)\right| \quad \text{(Triangle inequality)}\\
&\le r^2 \tr(VV^{\dagger}) \quad \text{(Lemma~\ref{Lemma:traceproduct})} \\
&=\mathcal O\left(d\frac{t^6}{r^4}n\right) \quad \text{(Lemma~\ref{Lemma:SV})};\\
|\tr(\widetilde{F}_1\widetilde{V}_1^{\dagger})|&\le
 \frac{r}{t}
\sum_{k=0}^{\infty}\sum_{j=0}^{r-1}\left|\tr\left\{\left[
U_0^\dagger U_t (iF^{\circ k}(S)) U_{-t}-iF^{\circ k}(S)\right]U_0^{-j}V^{\dagger} U_0^{j+1}\right\}\right| \quad \text{(Triangle inequality)}\\
&\le
 \frac{2r^2}{t} \left(\sum_{k=0}^{\infty} \sqrt{\tr(|F^{\circ k}(S)|^2)}\right) \sqrt{\tr(VV^{\dagger})} \quad \text{(Lemma~\ref{Lemma:traceproduct})}\\
 &=\mathcal O\left(dn\frac{t^4}{r^3}\right) \quad \text{(Lemma~\ref{Lemma:SV} and \ref{Lemma:Fk})};
\end{aligned}\label{Eq:FF}
 \end{equation}
 The upper bound on $\tr(\widetilde{V}_1\widetilde{F}_1^{\dagger})$ is similar to that of  $\tr(\widetilde{F}_1\widetilde{V}_1^{\dagger})$ above.
\end{proof}

\begin{lemma}\label{Lemma:SV}
If $\frac{tn}{r}$ is less than a small constant, we have
\begin{equation}
\begin{aligned}
 &\tr(SS^{\dagger})=\mathcal O\left(d\frac{t^4}{r^4}n\right), ~\tr(VV^\dagger)=  \mathcal O \left(d \frac{t^6}{r^6}n\right), ~|\tr(SV^{\dagger})|=\mathcal O\left(d\frac{t^5}{r^5}n\right).
\end{aligned}
\end{equation}
\end{lemma}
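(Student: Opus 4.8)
The plan is to combine the explicit nested-commutator expressions for $S_k$ and $V_k$ from Lemma~1 of Ref.~\cite{Tran_2020} with two structural observations about geometrically local operators on the 1D chain. First, since each $H_{j,j+1}$ is supported on two adjacent qubits and $[H_{j,j+1},H_{j',j'+1}]=0$ whenever $|j-j'|\ge 2$, any nonvanishing nested commutator of depth $k$ built from the $H_{j,j+1}$ collapses to a sum of $O(n)$ local terms, each supported on a block of $O(k)$ consecutive qubits. Hence one can write $S_k=\sum_j S_{k,j}$ and $V_k=\sum_j V_{k,j}$ with $S_{k,j},V_{k,j}$ supported on such a block around site $j$, each a finite sum of commutators and therefore traceless, and with $\|S_{k,j}\|,\|V_{k,j}\|\le C_k$ for some $C_k$ independent of $n$ and growing at most exponentially in $k$ (the growth coming only from counting branchings in the commutator expansion together with the sign terms produced by expanding each commutator). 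Second, for any such decomposition, $\tr(S_k S_k^\dagger)=\sum_{j,j'}\tr(S_{k,j}S_{k,j'}^\dagger)$, and whenever $|j-j'|$ exceeds $O(k)$ the two supports are disjoint, so $S_{k,j}S_{k,j'}^\dagger$ factorizes across the tensor product and its trace is proportional to $\tr(S_{k,j})\tr(S_{k,j'}^\dagger)=0$ by tracelessness. Only the $O(k)$ overlapping pairs $(j,j')$ per $j$ survive, each bounded by $\|S_{k,j}\|_F\|S_{k,j'}\|_F\le d\,C_k^2$ using Lemma~\ref{Lemma:traceproduct} and $\|\cdot\|_F\le\sqrt d\,\|\cdot\|$. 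This gives $\tr(S_kS_k^\dagger)=O(d\,n\,k\,C_k^2)$, so $\|S_k\|_F=O(\sqrt{d\,n}\,\sqrt k\,C_k)$, and the same for $V_k$.

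With these per-order bounds in hand, I would sum the series. By the triangle inequality for the Frobenius norm, $\|S\|_F\le\sum_{k\ge 2}\frac{(t/r)^k}{k!}\|S_k\|_F=O\!\big(\sqrt{d\,n}\sum_{k\ge 2}\frac{(t/r)^k\sqrt k\,C_k}{k!}\big)$. Under the hypothesis that $tn/r$---hence $t/r$---is below a suitable constant, the ratio of consecutive terms of this series is bounded by a constant less than $1$, so the sum is dominated up to a constant factor by its $k=2$ term and equals $O((t/r)^2)$. Therefore $\|S\|_F=O(\sqrt{d\,n}\,(t/r)^2)$ and $\tr(SS^\dagger)=O(d\,t^4 n/r^4)$. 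Since the expansion of $V$ begins at $k=3$, the identical argument yields $\|V\|_F=O(\sqrt{d\,n}\,(t/r)^3)$ and $\tr(VV^\dagger)=O(d\,t^6 n/r^6)$. The cross term then follows from Cauchy--Schwarz (Lemma~\ref{Lemma:traceproduct}): $|\tr(SV^\dagger)|\le\|S\|_F\|V\|_F=O(d\,t^5 n/r^5)$, which is exactly the claimed bound.

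The part I expect to be the main obstacle is the first step: reading off from Lemma~1 of Ref.~\cite{Tran_2020} the precise form of $S_k$ and $V_k$ and verifying (a) that they genuinely decompose into $O(n)$ traceless blocks of support $O(k)$, and (b) that the per-block operator norm $C_k$ is controlled---an exponential bound $C_k\le c^k$ is more than sufficient, though in fact any bound for which $\sum_k (t/r)^k\sqrt k\,C_k/k!$ converges and is dominated by its $k=2$ term for small $t/r$ would do. Everything else---in particular the cancellation of the long-range cross terms, which is precisely what upgrades the naive triangle-inequality estimate $O(d n^2)$ to the asserted $O(d n)$---rests only on tracelessness of the local blocks and should be checked with care, but is otherwise routine.
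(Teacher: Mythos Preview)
There is a genuine gap in your first structural claim. The operators $S_k$ and $V_k$ from Lemma~1 of Ref.~\cite{Tran_2020} are \emph{not} nested commutators and do \emph{not} decompose into $O(n)$ geometrically local, traceless blocks. The recursions used in the paper read $S_{k+1}=S_kH-\sum_{j=0}^{k-1}H^{k-1-j}BH^j$ and $V_{k+1}=[A,[H,S_k]]+AV_k+V_kB$, which involve ordinary operator products. Already at the first nontrivial order, $S_3=BH-(HB+BH)=-HB$: this is a product of two extensive sums and contains $O(n^2)$ Pauli strings such as $H_{1,2}H_{100,101}$ with disconnected support. It cannot be written as a sum of $O(n)$ terms supported on contiguous blocks of length $O(k)$, and it is not traceless either, since $\tr(HB)=\tr(AB)+\tr(B^2)$ and $\tr(B^2)>0$. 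Both ingredients you rely on---the $O(n)$ block count and tracelessness of the blocks---therefore fail at $k=3$, so your cross-term cancellation mechanism does not apply. Consistently, the paper records $\sharp S_k=O(n^{k-1})$, so a per-order bound $\|S_k\|_F=O(\sqrt{dn}\,C_k)$ with $C_k$ independent of $n$ cannot hold.

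The paper's argument saves the needed factor of $n$ by a different mechanism that does not require $S_k$ to be local. Using the recursion to peel off an explicit factor of $H$, one writes $\tr(S_kS_{k'}^\dagger)=\tr(T_{k,k'}H)$ for an operator $T_{k,k'}$ with $\sharp T_{k,k'}=O(n^{k+k'-3})$ Pauli terms. The key observation is that for each Pauli string in $T_{k,k'}$, at most one of the $O(n)$ local terms of $H$ can have matching support, so $|\tr(T_{k,k'}H)|\le d\,\sharp T_{k,k'}=O(d\,n^{k+k'-3})$---one power of $n$ better than the naive $d\,\sharp S_k\,\sharp S_{k'}=O(d\,n^{k+k'-2})$. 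Summing the double series $\sum_{k,k'\ge 2}\frac{(t/r)^{k+k'}}{k!k'!}|\tr(S_kS_{k'}^\dagger)|$ under the hypothesis $tn/r\ll 1$ then yields $\tr(SS^\dagger)=O(dn\,t^4/r^4)$. The bounds on $\tr(VV^\dagger)$ and $|\tr(SV^\dagger)|$ are obtained by the same trace-against-$H$ trick applied to the analogous recursions; your final Cauchy--Schwarz step for the cross term would be fine, but only once the two diagonal bounds are established this way.
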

\begin{proof}
First, we calculate the components in $SS^{\dagger}$, $VV^{\dagger}$, and $SV^{\dagger}$.
For completeness, we rephrase some results in Ref.~\cite[Lemmas S1 and S2]{Tran_2020}. We have
\begin{equation}
    \begin{aligned}
&S_{k+1}= S_k H-\sum^{k-1}_{j=0}H^{k-1-j}BH^j,\qquad V_{k+1}=[A,[H,S_k]]+AV_k+V_kB.
    \end{aligned}
\end{equation}
$S_2=B$, $V_2=0$, and $S_k$, $V_k$ (for $k\ge 3$) can be written as
\begin{equation}
    \begin{aligned}
V_k&=\sum_{i=1}^{n_k} v_{k,i}~\text{with}~n_k\le C e^{k-2}n^{k-2} &
S_k&=\sum_{i=1}^{m_k} s_{k,i}~\text{with}~m_k\le \frac{k(k-1)}{2}n^{k-1},
    \end{aligned}
\end{equation}
where $\|v_{k,i}\|, \|s_{k,i}\|\le 1$. We let $\sharp X$ denote the number of terms in an operator $X$ (e.g., $\sharp S_k=m_k\le \frac{k(k-1)}{2}n^{k-1}$).
Then
\begin{equation}
\begin{aligned}
|\tr(S_kS_{k'}^{\dagger})|&= \biggl|\tr\Big(S_kHS_{k'-1} -S_k\sum^{k'-2}_{j=0}H^jB^{\dagger} H^{k'-2-j}\Big)\biggr|\\
&=\biggl|\tr \Big[(S_{k'-1}S_k- S_k\sum^{k'-2}_{j=0}H^jB^{\dagger} H^{k'-3-j})H\Big]\biggr|\\
&=|\tr (T_{k,k'}H)| \\
&\le d \, \sharp T_{k,k'}\\
&\le d m_k (m_{k'-1}+k'n^{k'-2}) \\
&\le d\frac{k(k-1)k'(k'-1)}{4}n^{k+k'-3}\\
&=\mathcal O (d k^2{k'}^2 n^{k+k'-3}),
\end{aligned}
\end{equation}
where $T_{k,k'}=S_{k'-1}S_k- S_k\sum^{k-1}_{j=0}H^jB^{\dagger} H^{k-2-j}$.
The inequality $\tr (T_{k,k'}H)\le d \, \sharp T_{k,k'}$ follows since for each term in $T_{k,k'}$, $H$ has at most one term $H_{j,j+1}$ with the same support as the term in $T_{k,k'}$.
It is also easy to check that $\tr(S_kS_{2}^{\dagger})\le dm_k\le d\frac{k(k-1)}{2}n^{k-1}$ and $\tr(S_2S_{2}^{\dagger})\le dn$.

Similarly, we have
\begin{equation}
\begin{aligned}
|\tr(S_kV_{k'}^{\dagger})|&\le n_{k'}(m_{k-1}+kn^{k-2})
\le d\frac{Ck(k-1)}{2}e^{k'-2}   n^{k+k'-4}
=\mathcal O (d e^{k'-2}k^2 n^{k+k'-4}).
\end{aligned}
\end{equation}
We have $V_3=[A,[H,B]]$ with $\sharp V_3\le 4n$ and
$\tr(V_3V_3^{\dagger})=\mathcal O(n)$. For $k>3$,
\begin{equation}
\begin{aligned}\label{Eq:vkvk'}
|\tr(V_kV_{k'})|&=|\tr(V_{k'}[A,[H,S_{k-1}]])+ \tr(V_{k'}V_{k-1}H)|\\
&\le |\tr(V_{k'}A[H,X]H)| +|\tr(V_{k'}A[X,H]H)|+ |\tr(V_{k'}V_{k-1}H)|\\
&\le d \, \sharp V_{k'}[ 8(k-1)\sharp X +\sharp V_{k-1}]  \\
&\le  dC^2 e^{k+k'-4}n^{k+k'-5},
\end{aligned}
\end{equation}
where $S_{k-1}=XH$ with $X=S_{k-2}-\sum^{k-1}_{j=0}H^{k-3-j}BH^{j-1}$, which has $\sharp X\le \frac{(k-1)(k-2)}{2}n^{k-3}$ terms.
We have
\begin{equation}
\begin{aligned}
\tr(SS^\dagger)=
\sum_{k,k'=2}^{\infty}   \frac{t^{k+k'}}{k!k'!r^{k+k'}} \tr(S_kS_{k'})\le \frac{d}{n} \left(  \sum_{k=2}^{\infty} \frac{t^kn^{k-1}}{2r^k}  \right)^2\le
dn\frac{t^4}{r^4}\left(\frac{1}{1-\frac{nt}{r}}\right)^2=
\mathcal O (dn\frac{t^4}{r^4}).
\end{aligned}
\end{equation}
Also, $\tr(VV^\dagger)=  \mathcal O (dn \frac{t^6}{r^6})$ and $|\tr(SV^{\dagger})|=\mathcal O(dn\frac{t^5}{r^5}) $.
Note that we need the additional assumption that $\frac{tn}{r}$ is less than a small constant to ensure the convergence of $ \sum_{k=2}^{\infty} \frac{t^kn^{k-1}}{2r^k}$.
\end{proof}

\begin{lemma}\label{Lemma:Fk}
For any nonnegative integers $k,k'$, we have
\begin{equation}
|\tr(F^{\circ k}(S) F^{\circ k'}(S)^\dagger)|= \mathcal O\Big(d\frac{t^{k+k'+4}}{r^{k+k'+4}}n^{k+k'+1}\Big),
\end{equation}
As a consequence, we obtain that $\sum_{k=0}^{\infty} \sqrt{\tr(|F^{\circ k}(S)|^2)}=\mathcal O (\sqrt{dn}\frac{t^2}{r^2})$ with the assumption that $\frac{nt}{r}$ is less than a small constant.
\end{lemma}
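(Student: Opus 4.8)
The plan is to exploit the fact that $H$ commutes with $U_v=e^{-iHv}$, which turns the superoperator $F$ into a rescaled integral of the adjoint action $\mathrm{ad}_H(X):=[H,X]$. First I would record the commutation identity $U_v[H,X]U_{-v}=[H,U_vXU_{-v}]$ (immediate from $[H,U_v]=0$); equivalently $\mathrm{ad}_H$ commutes with conjugation by $U_v$, and $\mathcal A_v(X):=U_v[H,X]U_{-v}=\mathrm{ad}_H(X(v))$ where $X(v):=U_vXU_{-v}$. Using $(\mathrm{ad}_H Y)(v)=\mathrm{ad}_H(Y(v))$ and $Y(v_1)(v_2)=Y(v_1+v_2)$, a short induction gives $\mathcal A_{v_k}\circ\cdots\circ\mathcal A_{v_1}(S)=\mathrm{ad}_H^{\,k}\!\big(S(v_1+\cdots+v_k)\big)$, hence
\[
F^{\circ k}(S)=\Bigl(-\tfrac{r}{t}\Bigr)^{k}\!\int_{\mathcal R_k}\!\!\d\mu\,W(\vec v)\,\mathrm{ad}_H^{\,k}(S)\,W(\vec v)^{\dagger},
\]
where $W(\vec v)=e^{-iH(v_1+\cdots+v_k)}$ is unitary and $\mathcal R_k=\{(s_i,v_i):0\le v_i\le s_i\le t/r,\ 1\le i\le k\}$ has total measure $\bigl((t/r)^2/2\bigr)^{k}$, so the prefactors $(r/t)^k$ and the measure combine to $(t/2r)^{k}$. (The only subtlety here is interchanging the absolutely convergent series defining $S=\sum_{k\ge2}\frac{(-it)^k}{k!r^k}S_k$ with the integrals and with $\mathrm{ad}_H^{\,k}$; this is legitimate because $\mathrm{ad}_H^{\,k}$ is a bounded superoperator and the series converges in Frobenius norm under the hypothesis that $nt/r$ is below a small constant, exactly the regime of Lemma~\ref{Lemma:SV}.)

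Next I would bound the trace by absorbing the conjugating unitaries. By cyclicity of the trace and Lemma~\ref{Lemma:traceproduct} — with the unitaries $W(\vec v')^{\dagger}W(\vec v)$ and its inverse in the roles of $U_1,U_2$, and $M=\mathrm{ad}_H^{\,k}(S)$, $N=\mathrm{ad}_H^{\,k'}(S^{\dagger})$ — together with $\mathrm{ad}_H^{\,k'}(S^{\dagger})^{\dagger}=\pm\,\mathrm{ad}_H^{\,k'}(S)$ and conjugation-invariance of $\|\cdot\|_F$, one obtains
\[
\bigl|\tr\!\bigl(F^{\circ k}(S)F^{\circ k'}(S)^{\dagger}\bigr)\bigr|\le\frac{(t/r)^{k+k'}}{2^{k+k'}}\,\|\mathrm{ad}_H^{\,k}(S)\|_F\,\|\mathrm{ad}_H^{\,k'}(S)\|_F .
\]
The last ingredient is the Frobenius-space operator norm of $\mathrm{ad}_H$: since $H=\sum_{j}H_{j,j+1}$ has $n-1$ summands of operator norm at most $1$, $\|[H,Y]\|_F\le\sum_j\|[H_{j,j+1},Y]\|_F\le 2(n-1)\|Y\|_F$, hence $\|\mathrm{ad}_H^{\,k}(S)\|_F\le\bigl(2(n-1)\bigr)^{k}\|S\|_F$. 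Inserting $\|S\|_F^{2}=\tr(SS^{\dagger})=\mathcal O(dn\,t^{4}/r^{4})$ from Lemma~\ref{Lemma:SV} yields $\bigl|\tr(F^{\circ k}(S)F^{\circ k'}(S)^{\dagger})\bigr|=\mathcal O\!\bigl(d\,n^{k+k'+1}t^{k+k'+4}/r^{k+k'+4}\bigr)$, which is the claimed bound; morally, each extra application of $F$ costs a factor $\mathcal O(nt/r)$.

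Finally, for the stated consequence I would set $k'=k$ and take square roots, giving $\sqrt{\tr(|F^{\circ k}(S)|^{2})}=\mathcal O\!\bigl((nt/r)^{k}\sqrt{dn}\,t^{2}/r^{2}\bigr)$, and then sum the geometric series $\sum_{k\ge0}(nt/r)^{k}$, which converges to $\mathcal O(1)$ precisely because $nt/r$ is assumed below a (small) constant; this gives $\sum_{k\ge0}\sqrt{\tr(|F^{\circ k}(S)|^{2})}=\mathcal O(\sqrt{dn}\,t^{2}/r^{2})$.

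The main obstacle is not a deep difficulty but a conceptual one: one has to notice the commutation identity $U_v[H,X]U_{-v}=[H,U_vXU_{-v}]$, which collapses the $k$-fold composition of $F$ into a single $\mathrm{ad}_H^{\,k}$ conjugated by one unitary. After that the work is bookkeeping — tracking the measure $\bigl((t/r)^2/2\bigr)^{k}$ of the integration region and the signs so that the $(r/t)^k$ prefactors cancel correctly, and checking that the crude estimate $\|\mathrm{ad}_H\|_{F\to F}\le 2(n-1)$ suffices — none of which affects the asymptotics. A reader worried about a sharper $n$-dependence might try to replace $\|\mathrm{ad}_H\|_{F\to F}=\mathcal O(n)$ by a locality-aware count of nonvanishing traces, but this is unnecessary: the crude bound already matches the target exponent $n^{k+k'+1}$.
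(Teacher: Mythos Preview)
Your proof is correct and follows essentially the same approach as the paper: both use the commutation identity $U_v[H,X]U_{-v}=[H,U_vXU_{-v}]$ to reduce $F^{\circ k}(S)$ to nested commutators $\mathrm{ad}_H^{k}(S)$ conjugated by unitaries, strip the unitaries via Lemma~\ref{Lemma:traceproduct}, apply the crude bound $\|\mathrm{ad}_H\|_{F\to F}\le 2n$ together with $\tr(SS^\dagger)=\mathcal O(dn\,t^4/r^4)$ from Lemma~\ref{Lemma:SV}, and sum the resulting geometric series. The only difference is organizational: you write down a closed form for $F^{\circ k}(S)$ up front, whereas the paper obtains the same estimates by a one-step recursion on $\tr(|[H,F^{\circ k}(S)]|^2)$.
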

\begin{proof}
Since $H$ and $U_v$ commute, we can move $H$ into the integrals $\int_0^{\frac{t}{r}}\d s \int_0^s \d v$ and $\int_0^{\frac{t}{r}}\d s' \int_0^{s'} \d v'$ ,
giving
\begin{align}
 \tr\left(|[H,F^{\circ k}(S)]|^2\right)&=\tr\biggl\{\Bigr[H,-\frac{r}{t} \int_0^{\frac{t}{r}}\d s \int_0^s \d v U_{v}[H,F^{\circ k-1}(S)]U_{-v}\Bigl]\Bigl[H,-\frac{r}{t} \int_0^{\frac{t}{r}}\d s' \int_0^{s'} \d v' U_{v'}[H,F^{\circ k-1}(S)]U_{-v'}\Bigr]^{\dagger}\nonumber\biggr\}\\
 & =\frac{r^2}{t^2} \tr\biggl\{\int_0^{\frac{t}{r}}
 \d s \int_0^s \d v U_{v}[H,[H,F^{\circ k-1}(S)]]U_{-v}  \int_0^{\frac{t}{r}}
 \d s' \int_0^{s'} \d v' U_{v'}[H,[H,F^{\circ k-1}(S)]]^{\dagger}U_{-v'}\biggr\}\nonumber\\
 &\le \frac{r^2}{t^2} \int_0^{\frac{t}{r}}\d s \int_0^s \d v \int_0^{\frac{t}{r}}\d s' \int_0^{s'} \d v' \tr\left( \left|[H,[H,F^{\circ k-1}(S)]]\right|^2\right)\nonumber\\
 &=\frac{t^2}{4r^2} \tr\left( \left|[H,[H,F^{\circ k-1}(S)]]\right|^2\right).
\end{align}
We denote a $k$-layer nested commutator by $N^k(S)=\underbrace{[H,[H,\dots,[H}_{k},S]]]$. We have the bound
\begin{equation}
\tr\left( |N^{k}(S)|^2\right)\le 4^{k}n^{2k} \tr(S S^{\dagger}) =\mathcal O\left(d4^{k}\frac{t^4}{r^{4}}
n^{2k+1}\right),
\end{equation}
so
\begin{equation}
\begin{aligned}
 \tr\left(|[H,F^{\circ k-1}(S)]|^2\right)\le\frac{t^{2k-2}}{4^{k-1}r^{2k-2}} \tr\left( |N^{k}(S)|^2\right)=\mathcal O\Big(d\frac{t^{2k+2}}{r^{2k+2}}n^{2k+1}\Big).
\end{aligned}
\end{equation}
When $k,k'\ge 1$, this gives
\begin{equation}
\begin{aligned}
|\tr(F^{\circ k}(S)F^{\circ k'}(S))|&\le  \frac{r^2}{t^2} \int_0^{\frac{t}{r}}\d s \int_0^s \d v \int_0^{\frac{t}{r}}\d s' \int_0^{s'} \d v' |\tr(U_{v}[H,F^{\circ k-1}(S)]U_{-v}U_{v'}[H,F^{\circ k'-1}(S)]U_{-v'})|\\
&\le  \frac{r^2}{t^2} \int_0^{\frac{t}{r}}\d s \int_0^s \d v \int_0^{\frac{t}{r}}\d s' \int_0^{s'} \d v' \sqrt{\tr{|[H,F^{\circ k-1}(S)]|^2}}\sqrt{\tr{|[H,F^{\circ k'-1}(S)]|^2}}\\
&= \frac{t^2}{4r^2} \sqrt{\tr{|[H,F^{\circ k-1}(S)]|^2}}\sqrt{\tr{|[H,F^{\circ k'-1}(S)]|^2}}\\
&=\mathcal O\Big(d\frac{t^{k+k'+4}}{r^{k+k'+4}}n^{k+k'+1}\Big).
\end{aligned}
\end{equation}
When $k=0$ or $k'=0$, $F^{\circ 0}(S)=S$, similarly, we have
\begin{equation}
\begin{aligned}
|\tr(F^{\circ 0}(S)F^{\circ k'}(S))|&\le
 \frac{t^2}{4r^2} \sqrt{\tr{(SS^{\dagger})}}\sqrt{\tr{|[H,F^{\circ k'-1}(S)]|^2}}\\
&=\mathcal O\Big(d\frac{t^{k'+4}}{r^{k'+4}}n^{k'+1}\Big).
\end{aligned}
\end{equation}
When $k, k'=0$, $|\tr(F^{\circ k}(S)F^{\circ k'}(S))|=|\tr(SS^{\dagger})|=\mathcal O\Big(d\frac{t^{4}}{r^{4}}n\Big)$ because of Lemma.~\ref{Lemma:SV}.

We conclude the result
\begin{equation}
|\tr(F^{\circ k}(S) F^{\circ k'}(S)^\dagger)|= \mathcal O\Big(d\frac{t^{k+k'+4}}{r^{k+k'+4}}n^{k+k'+1}\Big),
\end{equation}
for any nonnegative integers $k,k'$. As a special case, when $k=k'$, we have
$|\tr(|F^{\circ k}(S)|^2)|=\mathcal O(d\frac{t^{2k+4}}{r^{2k+4}}n^{2k+1})$.
Then, $\sum_{k=0}^{\infty} \sqrt{\tr(|F^{\circ k}(S)|^2)}\le
\mathcal O (\sqrt{dn}\frac{t^2}{r^2})\frac{1}{1-\mathcal O(n\frac{t}{r})}=\mathcal O (\sqrt{dn}\frac{t^2}{r^2})$ using the assumption that $\frac{nt}{r}$ is less than a small constant.
 \end{proof}

The above three lemmas establish the upper bound for the leading term $\tr(\Delta_1 \Delta_{1}^{\dagger})$.
In the following lemma, we bound the other, higher-order terms.

\begin{lemma}\label{Lemma:deltak}
For any positive integers $k,k'$, we have
\begin{equation}
   | \tr(\Delta_k \Delta_{k'}^{\dagger})|=\mathcal O\biggl(d \frac{n^{k+k'-1} t^{2(k+k'-2)} }{r^{2(k+k'-2)}}\left(\frac{t}{r}+\frac{t^3}{r^2}\right)^2\biggr).
\end{equation}
\end{lemma}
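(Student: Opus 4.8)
The plan is to bound $\tr(\Delta_k\Delta_{k'}^{\dagger})$ by reducing it to the single-segment interference estimate already available for $\Delta_1$ (Lemma~\ref{Lemma:Delta}): I will isolate one ``telescoped'' copy of $\mathscr{M}_1$ on the $\Delta_k$ side and one on the $\Delta_{k'}$ side, each of which carries a Frobenius-norm gain, and control the remaining $k+k'-2$ copies crudely through the operator norm $\|\mathscr{M}_1\|=\mathcal{O}(nt^2/r^2)$ recorded in the proof of Theorem~\ref{Th:interference}.

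First I would rewrite $\Delta_k$, the $k$th-order term in $\mathscr{M}_1$ in the expansion of $[U_0(\id+\mathscr{M}_1)]^r(U_0^{\dagger})^r$, in the ``interaction-picture'' form
\[
\Delta_k=\sum_{0\le b_k<b_{k-1}<\cdots<b_1\le r-1}\widetilde{\mathscr{M}}_1^{(b_1)}\widetilde{\mathscr{M}}_1^{(b_2)}\cdots\widetilde{\mathscr{M}}_1^{(b_k)},\qquad \widetilde{\mathscr{M}}_1^{(b)}:=(U_0^{\dagger})^b\mathscr{M}_1 U_0^b ,
\]
which follows from the same absorption of powers of $U_0$ that already produced $\Delta_1=\sum_{b=0}^{r-1}\widetilde{\mathscr{M}}_1^{(b)}$ and the displayed formula for $\Delta_2$: a generic term $(U_0^{\dagger})^r U_0^{a_1}\mathscr{M}_1 U_0^{a_2-a_1}\mathscr{M}_1\cdots$ with $1\le a_1<\cdots<a_k\le r$ is reorganized by inserting $U_0^{r-a_i}(U_0^{\dagger})^{r-a_i}=\id$ after each $\mathscr{M}_1$, setting $b_i:=r-a_i$.

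Second, I would establish that \emph{every} tail sum $G_m:=\sum_{b=m}^{r-1}\widetilde{\mathscr{M}}_1^{(b)}$, and likewise every prefix sum, satisfies $\|G_m\|_F=\mathcal{O}\bigl(\sqrt{dn}\,(t/r+t^3/r^2)\bigr)$ uniformly in $m$. Since $G_m=\Delta_1-\sum_{b=0}^{m-1}\widetilde{\mathscr{M}}_1^{(b)}$, it suffices to treat prefix sums, and the proof of Lemma~\ref{Lemma:Delta} applies essentially verbatim: with $\mathscr{M}_1=U_0^{\dagger}\delta$ and $\delta=-[H,S]-V$, the telescoping identity $\sum_{j=0}^{a-1}(U_0^{\dagger})^j[H,S]U_0^j=\frac{r}{t}\sum_k I_{at/r}(F^{\circ k}(S))$ and the bound $\|I_a(X)\|_F\le 2\|X\|_F$ hold for any $a$, the $V$-contribution is still bounded by $r\|V\|_F=\mathcal{O}(\sqrt{dn}\,t^3/r^2)$, and the estimates of Lemmas~\ref{Lemma:SV} and \ref{Lemma:Fk} for $S$, $V$, and the iterates $F^{\circ k}(S)$ are unchanged. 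Then I would peel the outermost index off each side, writing $\Delta_k=\sum_{0\le b_k<\cdots<b_2\le r-2}G_{b_2+1}\,\widetilde{\mathscr{M}}_1^{(b_2)}\cdots\widetilde{\mathscr{M}}_1^{(b_k)}$ and dually for $\Delta_{k'}^{\dagger}$, expand $\tr(\Delta_k\Delta_{k'}^{\dagger})$ into the at most $\binom{r-1}{k-1}\binom{r-1}{k'-1}$ resulting summands, and apply the triangle inequality together with $|\tr(XY)|\le\|X\|_F\|Y^{\dagger}\|_F$ (Lemma~\ref{Lemma:traceproduct}). Since conjugation by $U_0$ preserves the operator norm, $\|\widetilde{\mathscr{M}}_1^{(b)}\|=\|\mathscr{M}_1\|$, so each summand is at most $\|G_{b_2+1}\|_F\|\mathscr{M}_1\|^{k-1}\cdot\|G_{b_2'+1}\|_F\|\mathscr{M}_1\|^{k'-1}$. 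Using $\|G_m\|_F=\mathcal{O}(\sqrt{dn}(t/r+t^3/r^2))$, $\|\mathscr{M}_1\|=\mathcal{O}(nt^2/r^2)$, and $\binom{r-1}{k-1}\binom{r-1}{k'-1}\le r^{k+k'-2}/((k-1)!(k'-1)!)$, the sum over all index tuples gives $|\tr(\Delta_k\Delta_{k'}^{\dagger})|=\mathcal{O}\bigl(dn\,r^{k+k'-2}\|\mathscr{M}_1\|^{k+k'-2}(t/r+t^3/r^2)^2\bigr)$; this is exactly the per-term contribution that is summed (as a geometric series in $r\|\mathscr{M}_1\|=\mathcal{O}(nt^2/r)$) in the proof of Theorem~\ref{Th:interference}, and substituting the single-segment estimate $\|\mathscr{M}_1\|=\mathcal{O}(nt^2/r^2)$ reproduces the stated $n$- and $t$-dependence.

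The delicate step is the second one: one must check that the destructive interference for $\Delta_1$ — the reorganization of $r$ conjugated commutators into a telescoping family of operators of the form $\tfrac{r}{t}(U_t iXU_{-t}-iX)$ plus a small remainder — survives intact when the summation range is an arbitrary subinterval of $\{0,\dots,r-1\}$, and that the auxiliary estimates of Lemmas~\ref{Lemma:SV} and \ref{Lemma:Fk} (whose convergence already needs $nt/r$ below a small constant) are unaffected. The remaining combinatorics — tracking which unitary gets absorbed where and counting the index tuples — is routine, but it must be arranged so that on each side exactly one copy of $\mathscr{M}_1$ supplies the Frobenius-norm gain while the other $k+k'-2$ copies contribute only operator-norm factors $\mathcal{O}(nt^2/r^2)$; this is what keeps the $\binom{r}{k}$-fold summation from overwhelming the $\|\mathscr{M}_1\|^{k+k'-2}$ decay, given that $nt^2/r$ is below a small constant.
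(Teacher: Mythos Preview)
Your proposal is correct and takes essentially the same approach as the paper. The only cosmetic difference is which copy of $\mathscr{M}_1$ is isolated: you peel off the \emph{outermost} factor $b_1$ and call the resulting partial sum $G_{b_2+1}$, while the paper peels off the \emph{innermost} factor $j_k$, writing $\Delta_k=\sum_i P_i\,\Delta_1^* Q_i$ with $\|P_i\|\le\|\mathscr{M}_1\|^{k-1}$, $Q_i$ unitary, and $\Delta_1^*$ a partial sum of conjugated $\mathscr{M}_1$'s. In both cases the key observation---that the Frobenius-norm interference bound of Lemma~\ref{Lemma:Delta} holds uniformly for any partial sum $\sum_{j\in J}(U_0^{\dagger})^j\mathscr{M}_1 U_0^j$ over a subinterval $J\subset\{0,\dots,r-1\}$---is exactly what you identify as the delicate step, and the paper handles it with the same one-line remark (``can be bounded via the same procedure in Lemma~\ref{Lemma:Delta}'') that your second paragraph spells out.
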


\begin{proof}
We have
\begin{equation}
\begin{aligned}
\Delta_k&=\sum_{j_1=1}^{r-k+1}\sum_{j_2=1}^{r-j_1}\dots \sum_{j_k=1}^{r-j_1\dots-j_{k-1}}
(U_0^\dagger)^{r-1} U_0^{j_1}\mathscr{M}_1U_0^{j_2}\mathscr{M}_1\dots U_0^{j_k}\mathscr{M}_1 U_0^{r-j_1\dots-j_{k-1}}\\
&=\sum_{j_1=1}^{r-k+1}\sum_{j_2=1}^{r-j_1}\dots \sum_{j_k=0}^{r-j_1\dots-j_{k-1}-1}
(U_0^\dagger)^{r-1} U_0^{j_1}\mathscr{M}_1U_0^{j_2}\mathscr{M}_1\dots U_0U_0^{j_k}\mathscr{M}_1 U_0^{-j_k}U_0^{r-j_1\dots-j_{k-1}+j_k}\\
&=\sum_{i=(j_1,\dots,j_{k-1})} \sum_{j_k=0}^{r-j_1\dots-j_{k-1}-1}  P_iU_0^{j_k}\mathscr{M}_1 U_0^{-j_k} Q_i,
\end{aligned}
\end{equation}
where $\|P_i\|\le \|\mathscr{M}_1\|^{k-1}$ and $Q_i$ is unitary.
The operator $\Delta_1^* := \sum_{j_k=0}^{r-j_1\dots-j_{k-1}-1}  U_0^{j_k}\mathscr{M}_1 U_0^{-j_k}$
can be bounded via the same procedure in Lemma~\ref{Lemma:Delta}, giving
$\tr(|\Delta_1^*|^2)=\mathcal O\Bigl(d n\left(\frac{t}{r}+\frac{t^3}{r^2}\right)^2\Bigr).$ Then we have
\begin{equation}
\begin{aligned}
\tr(\Delta_k \Delta_{k'}^{\dagger})&=\sum_{i,i'} \tr\left( P_i\Delta_1^* Q_i Q_{i'}^{\dagger} (\Delta_1^*)^{\dagger}P_{i'}^{\dagger}\right)\\
&\le \sum_{i,i'} \sqrt{\tr(P_{i}^{\dagger}P_i\Delta_1^* Q_i Q_{i}^{\dagger} (\Delta_1^*)^{\dagger})}\sqrt{\tr(P_{i'}^{\dagger}P_{i'}\Delta_1^* Q_{i'} Q_{i'}^{\dagger} (\Delta_1^*)^{\dagger})}\\
&\le \sum_{i,i'}\|P_i\|\|P_{i'}\|\tr(|\Delta_1^*|^2)\\
&\le   r^{k+k'-2} \|\mathscr{M}_1\|^{k+k'-2}  \tr(|\Delta_1^*|^2)\\
&=       O\biggl(d \frac{n^{k+k'-1} t^{2(k+k'-2)} }{r^{2(k+k'-2)}}\left(\frac{t}{r}+\frac{t^3}{r^2}\right)^2\biggr),
\end{aligned}
\end{equation}
where in the last equation, we used the fact that $\|\mathscr{M}_1\|= \mathcal O(\frac{nt^2}{r^2})$.
\end{proof}

This concludes our analysis of the asymptotic average-case performance of PF1 and PF2. In Appendix~\ref{Sec:app}, we give a concrete prefactor for the interference bound.

%%%%%%%%%%%%%%%%%%%%%%%%%%%%%%%%%%%%%%%%%%%%%%%%%%

\section{Comparing bounds with and without the Cauchy-Schwarz inequality}\label{sec:cauchycompare}

According to Eq.~\eqref{eq:l2upp}, the Haar-averaged error in terms of the $\ell_2$ norm can be expressed as
\begin{equation}\label{eq:harrexact}
    R_{\ell_2}^{\mathrm{Haar}}(U_0,U)
    = \mathbb{E}_{\psi\in\mathrm{Haar}} \sqrt{2 - \bra{\psi} U^\dagger U_0 \ket{\psi} - \bra{\psi} U_0^\dagger U \ket{\psi}}
    = \mathbb{E}_{\psi\in\mathrm{Haar}} \sqrt{\bra{\psi} \mathscr{M}\mathscr{M}^\dagger \ket{\psi}}
\end{equation}
where $\mathscr{M} := U_0^\dag U - \mathbb{I}$.
Our analysis in Appendix~\ref{Sec:Average} upper bounds this quantity using the Cauchy-Schwarz inequality.
However, in principle there exist tighter upper bounds that may be evaluated using Eq.~\eqref{eq:exactbound} below, or estimated through sampling.

Here we explore the effect of using the Cauchy-Schwarz inequality in bounding $R_{\ell_2}^{\mathrm{Haar}}(U_0,U)$. Our goal is to understand when it is advantageous to use
a tighter bound that might be harder to calculate,
and when the improvement is negligible.
In this section, all averages are over the Haar measure.

Let $G$ be a positive semidefinite operator with spectrum $\{\lambda_j\}_{j=1}^d$, satisfying
\begin{equation}
    R_{\ell_2}^{\mathrm{Haar}}(U_0,U)
    \leq \mathbb{E}_{\psi\in\mathrm{Haar}} \sqrt{\bra{\psi} G \ket{\psi}}.
\end{equation}
For example, $G$ could be $\mathscr{M}\mathscr{M}^\dagger$, in which case the inequality is actually an equality. At the end of this section, we give an example of a bound for the first-order product formula approximation $U = e^{-itB}e^{-itA}$ of $U_0 = e^{-it(A+B)}$ using $G \ne \mathscr{M}\mathscr{M}^\dagger$. The following exact formula holds \cite[Section 4.1]{Jones_1991}:
\begin{equation}\label{eq:exactbound}
    \mathbb{E}_{\psi}  \sqrt{\bra{\psi} G \ket{\psi}} = \frac{\sqrt{\pi}}{2} \frac{\Gamma(d)}{\Gamma(d+\frac{1}{2})} \sum_{j=1}^d \frac{\lambda_j^{d-\frac{1}{2}}}{\prod_{k\neq j} (\lambda_j - \lambda_k)}.
\end{equation}
In the case of degenerate eigenvalues, we can take limits of the right-hand side.

Using the Cauchy-Schwarz inequality, we can upper bound this quantity by
\begin{equation}\label{eq:cauchybound}
    \mathbb{E}_{\psi}  \sqrt{\bra{\psi} G \ket{\psi}} \leq \sqrt{\frac{\mathrm{Tr}(G)}{d}} =  \sqrt{\frac{\sum_{j=1}^d \lambda_j}{d}}
\end{equation}
(which holds regardless of degeneracy), which is easier to compute or bound in practice.
Define $\Lambda := \max(\{\lambda_j\}_{j=1}^d) = \| G \|.$ Then the difference between \eqref{eq:exactbound} and the right-hand side of \eqref{eq:cauchybound} is
\begin{align}
   \sqrt{\frac{\sum_{j=1}^d \lambda_j}{d}} - \frac{\sqrt{\pi}}{2} \frac{\Gamma(d)}{\Gamma(d+\frac{1}{2})} \sum_{j=1}^d \frac{\lambda_j^{d-\frac{1}{2}}}{\prod_{k\neq j} (\lambda_j - \lambda_k)}\nonumber
    & = \sqrt{\Lambda} \Biggl( \sqrt{\frac{\sum_{j=1}^d \lambda_j/\Lambda}{d}} - \frac{\sqrt{\pi}}{2} \frac{\Gamma(d)}{\Gamma(d+\frac{1}{2})} \sum_{j=1}^d \frac{(\lambda_j/\Lambda)^{d-\frac{1}{2}}}{\prod_{k\neq j} (\lambda_j - \lambda_k)/\Lambda} \Biggr ) \\
    &=:\sqrt{\Lambda}D(\{\lambda_j\}_{j=1}^d) .
    \label{eq:difdefine}
\end{align}
The factor $D(\{\lambda_j\}_{j=1}^d)<1$ represents the difference in these two bounds due to the distribution of eigenvalues of $G$.

    Before analyzing $D(\{\lambda_j\}_{j=1}^d)$, we show how it relates to error bounds for Hamiltonian simulation. Consider a product formula simulation with $r$ Trotter steps. By the triangle inequality, the overall average error $\epsilon$ satisfies $\epsilon \leq r \delta,$ where $\delta$ is the average error of a single Trotter step. Let $\delta_1$ be the error bound of the form Eq.~\eqref{eq:exactbound}. If $G$ is $\mathscr{M}\mathscr{M}^\dag$ where $\mathscr{M}$ is the multiplicative error of a single Trotter step, then $\delta = \delta_1$, as mentioned before. Let $\delta_2$ be the bound on $\delta_1$ using the right-hand side of Eq.~\eqref{eq:cauchybound}. In this notation,
\begin{equation}\label{eq:totsimerrnc}
    \epsilon \leq r \delta_1 \leq r \delta_2 = r \delta_1 + r \sqrt{\Lambda} D(\{\lambda_j\}_{j=1}^d),
\end{equation}
where $\Lambda = \|G\|$.

Note that as $r$ increases, $\Lambda$ can decrease in a way that depends on the order of the product formula. However, in a limited-resource scenario, $r$ may be fixed. Then we would like to know how large $r \sqrt{\Lambda} D(\{\lambda_j\}_{j=1}^d)$ can be. We may alter $\Lambda$ by changing the total evolution time $t$, but the scaling of $\Lambda$ with $t$ again depends on the product formula used, so we focus on $D(\{\lambda_j\}_{j=1}^d)$.

To help understand when $D(\{\lambda_j\}_{j=1}^d)$ is non-negligible, consider the scenario where one eigenvalue of $G$ is much larger than the rest. This scenario maximizes the gap between the worst-case error (as given by $\sqrt{\|G\|}$) and the bound on the average-case error produced by the trace bound Eq.~\eqref{eq:cauchybound}, which is the square root of the average eigenvalue of $G$. However, it is unclear when the gap between the trace bound and the exact formula of Eq.~\eqref{eq:exactbound} is largest.

To evaluate this situation analytically, we consider the spectrum $\lambda_{\mathrm{one}} = \{\Lambda, 0,0,0,\ldots, 0\}$. Then we have
\begin{equation}\label{eq:allbutonebound}
    D(\lambda_{\mathrm{one}}) = \frac{1}{\sqrt{d}} - \frac{\sqrt{\pi}}{2} \frac{\Gamma(d)}{\Gamma(d+\frac{1}{2})}
    =\biggl(1-\frac{\sqrt{\pi}}{2}\biggr)\frac{1}{\sqrt d} + O(d^{-3/2}).
\end{equation}
Out of several examples we consider numerically, we find that this distribution results in the largest $D(\{\lambda_j\}_{j=1}^d)$. For $d\approx 10^3$, we find $ D(\lambda_{\mathrm{one}}) \lesssim 10^{-2}$. In this case, if the total average error resolution is around $10^{-n}$, provided $r \sqrt{\|G\|} \lesssim 10^{2-n}$, the trace bound (Eq. \eqref{eq:cauchybound}) should give a nearly tight result (assuming $\lambda_{\text{one}}$ indeed corresponds to the worst case).

\subsection{Numerics and conclusions}

We now describe the numerical investigation of $D(\{\lambda_j\}_{j=1}^d)$. Since the formula \eqref{eq:exactbound} involves addition of large positive and negative numbers, which is susceptible to numerical error, we instead calculate it by sampling the integral over the Haar measure.

We consider four scenarios. In the first, there is only one nonzero eigenvalue $\lambda_{\mathrm{one}}$. We compare this case to the exact formula of Eq.~\eqref{eq:allbutonebound} as a check on the integration procedure. Second, we suppose the eigenvalues are equally spaced. Third, we suppose the eigenvalues are entirely degenerate, i.e., all equal to $\Lambda$. Finally, we compute the maximum $D(\{\lambda_j\}_{j=1}^d)$ over $1000$ trials, drawing $\{\lambda_j\}_{j=1}^d$ randomly from the exponential distribution, with probability density function $p(x) = e^{-x}$. This produces eigenvalue distributions that are complementary to the three other scenarios.

As seen in Figure~\ref{fig:CvNC}, our numerics agree with Eq.~\eqref{eq:allbutonebound} in the case of a single nonzero eigenvalue, up to statistical error due to sampling the Haar integral. When the spectrum is entirely degenerate, we see numerically that $D = 0$, so we omit these results from the figure. Both the evenly spaced and randomly sampled scenarios have $D$
less than in the case of a single nonzero eigenvalue. Based on these numerics, we conjecture that the distribution $\lambda_\text{one}$ maximizes $D(\{\lambda_j\}_{j=1}^d)$. We leave the proof of this conjecture as an open question.

We summarize the main points of this analysis, and the consequences of this conjecture. Consider a fixed-resource scenario in which we perform a product formula simulation with $r$ steps and wish to bound the total Haar-averaged Trotter error. If $r\sqrt{\|G\|}$ is negligible relative to the error resolution, then we gain little by using Eq.~\eqref{eq:exactbound} (since $D < 1$).  If $r\sqrt{\|G\|}$ is non-negligible, then assuming $\lambda_\text{one}$ corresponds to the worst case, the gap between the two bounds can be at most
\begin{equation}\label{eq:CNCworstcase}
    r \sqrt{\|G\|}\biggl(\frac{1}{\sqrt{d}} - \frac{\sqrt{\pi}}{2} \frac{\Gamma(d)}{\Gamma(d+\frac{1}{2})} \biggr ).
\end{equation}
Finally, consider when $G = \mathscr{M}\mathscr{M}^\dagger$, where $\mathscr{M}$ is
the multiplicative error between two unitaries. Again, the gap between the trace bound on the average error and the true worst-case error $\sqrt{\|\mathscr{M}\mathscr{M}^\dagger \|}$ is maximized for the distribution $\lambda_\text{one}.$ Therefore the truth of the conjecture would imply that the gap between the true worst-case error and the true average-case error occurs for the distribution $\lambda_\text{one}.$

\begin{figure}
    \centering
    \includegraphics{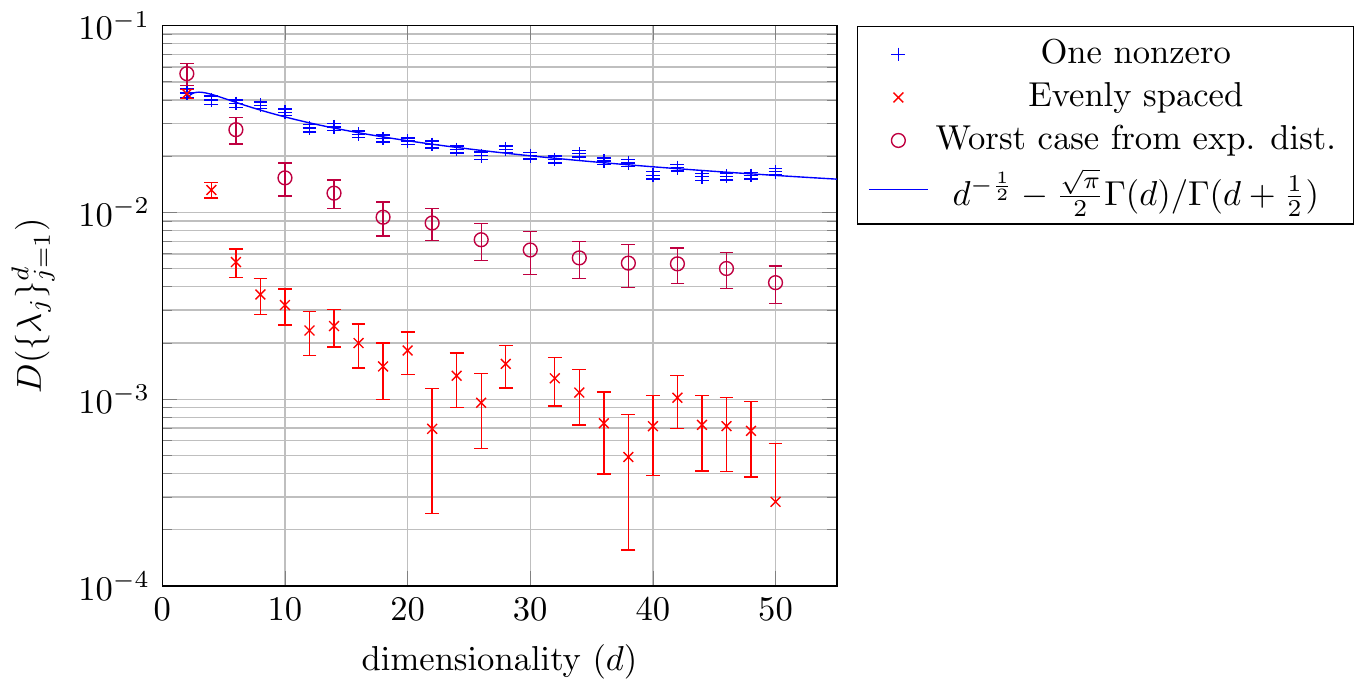}
    \caption{Estimating the difference between the error bound derived using the Cauchy inequality \eqref{eq:cauchybound} and the exact formula \eqref{eq:exactbound}. When $\{\lambda_j\}_{j=1}^d$ is drawn from the exponential distribution, we plot the maximum $D(\{\lambda_j\}_{j=1}^d)$ over 1000 $\{\lambda_j\}_{j=1}^d$, where $D$ is as defined in \eqref{eq:difdefine}.}
    \label{fig:CvNC}
\end{figure}

%=================================================
\subsection{Average error in PF1 without the Cauchy-Schwarz inequality}

We conclude this section with an example, by deriving a bound without using the Cauchy-Schwarz inequality for $R_{\ell_2}$. For simplicity, we only consider the first-order product formula.

It is easier to give a bound by considering the additive error $\mathscr{A} := U - U_0$ instead of the multiplicative error. An operator representation of the additive error for a two-term Hamiltonian $H=A+B$ for PF1 is \cite[Eq.~(117)]{childs2020theory}
\begin{equation}
  \mathscr{A}(t) = \int_0^t \d\tau_1  \int_0^{\tau_1} \d\tau_2~e^{iH\tau_1}e^{-iB(\tau_1-\tau_2)}
[iB,iA] e^{-iB\tau_2}e^{-iA\tau_1}.
\end{equation}
Note that the spectrum of this operator representation, like that of the corresponding multiplicative error, is not straightforward to analyze (relative to, say, that of $[A,B]$) due to the integrals and the unitaries on either side of $[iB,iA]$. However, the operator $G = \frac{t^4}{4}[A,iB]^2$ can be used to more readily bound the average-case error. Using the triangle inequality, Fubini's theorem, and the left-invariance of the Haar measure (in that order), we find
\begin{align}
    R_{\ell_2}(\mathscr{U}_1(t),U_0(t)) & = \mathbb{E}_{\psi}  \| \mathscr{A}(t) \ket{\psi} \| \nonumber \\
    &= \int \d\psi \left \|\int_0^t \d\tau_1 \int_0^{\tau_1} \d\tau_2 \  e^{-i(t-\tau_1)H}e^{-i\tau_1 B}e^{i\tau_2 B}[iB,iA]e^{-i\tau_2 B} e^{-i\tau_1 A}\ket{\psi} \right \| \ \nonumber  \\
    &\leq \int \d\psi \int_0^t \d\tau_1 \int_0^{\tau_1} \d\tau_2 \left \| e^{-i(t-\tau_1)H}e^{-i\tau_1 B}e^{i\tau_2 B}[iB,iA]e^{-i\tau_2 B} e^{-i\tau_1 A}\ket{\psi} \right \| \ \nonumber  \\
    &= \int \d\psi \int_0^t \d\tau_1 \int_0^{\tau_1} \d\tau_2 \left \|[iB,A]e^{-i\tau_2 B} e^{-i\tau_1 A}\ket{\psi} \right \| \ \nonumber  \\
    &=  \int_0^t \d\tau_1 \int_0^{\tau_1} \d\tau_2 \int \d\psi \left \|[iB,A]e^{-i\tau_2 B} e^{-i\tau_1 A}\ket{\psi} \right \| \  \quad \text{(Fubini's Theorem)} \nonumber \\
    &= \int_0^t \d\tau_1 \int_0^{\tau_1} \d\tau_2 \int \d\psi \left \|[iB,A]\ket{\psi} \right \|\  \quad \text{(Left-Invariance of Haar Measure)} \nonumber  \\
    &= \frac{t^2}{2} \int \d\psi \left \|[iB,A]\ket{\psi} \right \| =  \mathbb{E}_{\psi}  \sqrt{\bra{\psi} \biggl( \frac{t^4}{4} [iB,A]^2 \biggr) \ket{\psi}}.
    \label{eq:harrintegrandAB}
\end{align}
Note that unlike in Appendix~\ref{Sec:Average}, we do not apply the Cauchy-Schwarz inequality above. Using this bound, one can apply Eq.~\eqref{eq:exactbound} or sample the integral to get an in-principle tighter bound on $R_{\ell_2}(\mathscr{U}_1(t),U_0(t)),$ compared with Eq.~\eqref{eq:ABtrbound}.

This bound may be generalized to higher-order formulas and Hamiltonians with more summands, but here we derive it only for the purposes of the example. The example model is the one-dimensional Heisenberg model, using the even-odd ordering, as discussed in the following section.

For a simulation with total evolution time $n$ (where $n$ is also the number of qubits on the lattice) and $r$ Trotter steps, we consider the difference term in Eq.~\eqref{eq:totsimerrnc}, which in this case is
\begin{equation}
    r\delta_2 - r\delta_{1} = \frac{n^2}{2r} \bigl\| [iB,A] \bigr \| \ D(\lambda_{[iB,A]^2}).
\end{equation}
We consider this difference relative to the triangle bounds of Figure~\ref{Fig:PF12}. Numerically, we see that $ \| [iB,A]  \| \ \lesssim 10^2 $ for $n\approx 10$.  However, our simulation involves $r \gtrsim 10^5$ Trotter steps. Since $D < 10^{-2}$ (again, assuming $\lambda_\text{one}$ corresponds to the maximum), we see that this difference is bounded by $\approx 10^{-3}/2.$ As the target error is $10^{-3},$ we do not expect to see a significant difference in the bounds with and without the Cauchy-Schwartz inequality. Indeed, we observe negligible difference between the two bounds, so we do not plot the results; the two bounds agree, as in Figure~\ref{Fig:PF12}, up to about $1\%$.

The difference in these bounds would be more relevant for models and simulations in which the multiplicative error has a larger norm, or a norm that scales poorly with the system size. For example, this may occur when the Hamiltonian being simulated is unbounded.

%%%%%%%%%%%%%%%%%%%%%%%%%%%%%%%%%%%%%%%%%%%%%%%%%%
\section{Applications}\label{Sec:app}

Recall that in Appendix~\ref{Sec:generalp} and Appendix~\ref{Sec:PF12}, we show that the average error for PF$p$ is $R=\mathcal O(T_p t^{p+1}/r^p)$, where the quantity $T_p$ defined in Eq.~\eqref{eq:Tp} depends on the Frobenius norms of nested commutators of terms in the Hamiltonian. In this section, we estimate $T_p$ for nearest-neighbor Hamiltonians, general $k$-local Hamiltonians, and power-law interaction Hamiltonians, and thereby evaluate the performance of average-case Hamiltonian simulation algorithms for these systems.
We also describe an application of our techniques to more efficient protocols for measuring out-of-time-order correlators (OTOCs), which provide a probe of scrambling physics.

%=================================================
\subsection{Nearest-neighbor Hamiltonians}\label{Sec:app_nearest}

Consider a nearest-neighbor Hamiltonian $H=\sum_{l=1}^{n-1}H_{l,l+1}$ with $H_{l,l+1}=\sum_i\alpha^i_l \, \sigma_{l}^i\otimes\sigma_{l+1}^i$ acting nontrivially on nearest-neighbor qubits $l$ and $l+1$, with $\max_{l}\|H_{l,l+1}\|\le 1$.
We consider the first-order product formula with summands ordered in an even-odd pattern, bipartitioned as
\begin{equation}
    A=\sum_{j=1}^{\lfloor\frac{n}{2}\rfloor} H_{2j-1,2j},~B=\sum_{j=1}^{\lfloor\frac{n}{2}\rfloor} H_{2j,2j+1}.
\end{equation}
The commutator of these two terms is
\begin{equation}
\begin{aligned}
~[iB,iA]=[A,B]&=\sum_{j,j'}[H_{2j-1,2j},H_{2j',2j'+1}]=\sum_{j} [H_{2j-1,2j},H_{2j,2j+1}]+ [H_{2j-1,2j},H_{2j-2,2j-1}]\\
    &=\sum_{j} \left(H_{\{2j-1,2j,2j+1\}}+H_{\{2j-2,2j-1,2j\}} \right).
\end{aligned}
\end{equation}
Here $[H_{2j-1,2j},H_{2j',2j'+1}]$ can be nonzero only when their supports overlap (i.e., $j=j'$ or $j=j'+1$) and we let $H_{\{i,j,k\}}$ denote the operator whose support is the subset of qubits $\{i,j,k\}$. Each term has the upper bound $\|H_{\{2j-1,2j,2j+1\}}\|\le 2$.

To upper bound the Frobenius norm of the commutator, we compute
\begin{equation}
\tr(|[iB,iA]|^2)=\tr([A,B][A,B]^{\dagger} )  =\sum_{j,j'}\sum_{k,k'=0,1} \tr\left(H_{\{2j+k-2,2j+k-1,2j+k\}}H_{\{2j'+k'-2,2j'+k'-1,2j+k'\}}^{\dagger}\right).
\end{equation}
For each pair of terms $H_{\{2j+k-2,2j+k-1,2j+k\}}$ and $H_{\{2j'+k'-2,2j'+k'-1,2j+k'\}}^{\dagger}$, the trace of the product can be nonzero only when their supports overlap, since the trace of a Pauli matrix is zero.
Thus for each $H_{\{2j+k-2,2j+k-1,2j+k\}}$, there are at most five different
$H_{\{2j'+k'-2,2j'+k'-1,2j+k'\}}^{\dagger}$ satisfying $\tr\bigl(H_{\{2j+k-2,2j+k-1,2j+k\}}H_{\{2j'+k'-2,2j'+k'-1,2j+k'\}}^{\dagger}\bigr)\neq0$.
Consequently, we have
\begin{equation}
\begin{aligned}
 \tr([iB,iA]^2)\le 2\cdot 5 d\sum_{j}\sum_{k=0,1} \left\|H_{\{2j+k-2,2j+k-1,2j+k\}}\right\|\le 40dn=\mathcal O(dn),
\end{aligned}
\end{equation}
where $d=2^n$.
Thus for PF1, we have
$T_1=\mathcal O(\sqrt{n})$ and $ R_{\ell_2}(\mathscr{U}_1^r(t/r),U_0(t))= \mathcal{O}(\frac{\sqrt{n}t^2}{r})$.
For comparison, the worst-case error is
$\mathcal O(\frac{nt^2}{r})$.

To understand the performance of the PF$p$ method for general $p$, we can similarly bound $T_p$ as follows. Consider a $p$-layer nested commutator,
$[X_1,[X_2,\dots,[X_p,X_{p+1}]]]$, where each $X_j$ can be $A$ or $B$. Without loss of generality, we choose $X_{p+1}=B$.
Then the commutator can be expressed as
\begin{equation}
  [X_p,B]=\sum_{j=1}^{\lfloor\frac{n}{2}\rfloor} [X_p, H_{2j,2j+1}] = \sum_{j=1}^{\lfloor\frac{n}{2}\rfloor}
  H_{\{2j-1,\dots,2j+2\}},
\end{equation}
where $H_{\{2j-1,\dots,2j+2\}}$ acts only on qubits from $2j-1$ to $2j+2$ (at most 4 qubits). Its spectral norm satisfies $\|H_{\{2j-1,2j+2\}}\|\le 2\cdot 3=6$ because $X_p$ at most has three operators ($H_{2j-1,2j}, H_{2j,2j+1}, H_{2j+1,2j+2}$) that could have overlap with $H_{2j,\dots,2j+1}$.

We repeat this procedure for all $p$ layers of the commutator.
In each step, $[X, H_{\{j_1,\dots, j_2\}}]$ can only expand the support of $H_{\{j_1,\dots, j_2\}}$ to $H_{\{j_1-1,\dots, j_2+1\}}$, increasing by at most two qubits. Furthermore, $H_{\{j_1,\dots, j_2\}}$ overlaps with at most $j_2-j_1+2$ operators in $X$ (from $H_{\{j_1-1, j_1\}}$ to $H_{\{j_2, j_2+1\}}$), and its spectral norm has the upper bound $\|H_{\{j_1-1,\dots, j_2+1\}}\|\le 2 \|H_{\{j_1,\dots, j_2\}}\|\cdot (j_2-j_1+2)$.
Therefore
\begin{equation}
 [X_1,[X_2,\dots,[X_p,X_{p+1}]]]=\sum_{j=1}^{\lfloor\frac{n}{2}\rfloor} H_{\{2j-p,\dots,2j+1+p\}},
\end{equation}
where each term $H_{\{2j-p,\dots,2j+1+p\}}$ acts on at most $2p+2$ qubits and $\|H_{\{2j-p,\dots,2j+1+p\}}\|\le
2(2p+1) \|H_{\{2j-p+1,2j+p\}}\|< 4(p+1) \|H_{\{2j-p+1,2j+p\}}\|<
4^p (p+1)!$.

Next we consider the Frobenius norm $\tr (|[X_1,[X_2,\dots,[X_p,X_{p+1}]]]|^2)$.
We have
\begin{equation}
|[X_1,[X_2,\dots,[X_p,X_{p+1}]]]|^2=\sum_{j,j'=1}^{\lfloor\frac{n}{2}\rfloor} H_{\{2j-p,\dots,2j+1+p\}} H_{\{2j'-p,\dots,2j'+1+p\}} ^{\dagger}.
\end{equation}
Each $H_{\{2j-p,\dots,2j+1+p\}}$ overlaps the $4p+3$
terms $H_{\{2j-3p-1,\dots,2j-p\}} ^{\dagger}, H_{\{2j-3p,\dots,2j-p+1\}} ^{\dagger},\dots, H_{\{2j+p+1,\dots,2j+3p+2\}} ^{\dagger}$.
Since $\|H_{\{2j-p,\dots,2j+1+p\}}\|<
4^p (p+1)!$, we have
\begin{equation}
\left\|[X_1,[X_2,\dots,[X_p,X_{p+1}]]]\right\|_F^2=\tr \bigl(|[X_1,[X_2,\dots,[X_p,X_{p+1}]]]|^2\bigr) \le dn(4p+3) 4^{2p} (p+1!)^2=\mathcal O(dn),
\end{equation}
where the last step follows since we treat $p$ as a constant.

Overall, we find that
\begin{equation}
T_p=\sum_{l_1,\dots,l_{p+1}=1}^L \frac{1}{\sqrt{d}} \left\|[H_{l_1},[H_{l_2},\dots,[H_{l_p},H_{l_{p+1}}]]]\right\|_F
\le 2^{p+1} \mathcal O(\sqrt{n})= \mathcal O(\sqrt{n}),
\end{equation}
so the PF$p$ method has average error
\begin{equation}
\label{Eq:nearPerror}
R_{\ell_2}(\mathscr{U}_p^r(t/r),U_0(t))=\mathcal O\left( \sqrt{n}\frac{ t^{p+1}}{r^p}\right)
\end{equation}
and corresponding gate complexity
\begin{equation}
\label{Eq:nearPgates}
G_R(\mathscr{U}_p^r(t/r),U_0(t))=\mathcal O\left( n^{1+\frac{1}{2p}}t^{1+\frac{1}{p}}\right).
\end{equation}
For comparison, from Ref.~\cite{childs2020theory}, the worst-case error is
\begin{equation}
W(\mathscr{U}_p^r(t/r),U_0(t))=\mathcal O\left( n\frac{ t^{p+1}}{r^p}\right)
\end{equation}
and the corresponding gate complexity is
\begin{equation}
G_W(\mathscr{U}_p^r(t/r),U_0(t))=
\mathcal O\left( n^{1+\frac{1}{p}}t^{1+\frac{1}{p}}\right).
\end{equation}
In particular, for PF1 and PF2, we have the gate complexity results shown in Table~\ref{table:nn}.

\begin{table}[htb]
\begin{tabular}{|l|l|l|}
\hline
Order & Worst-case error & Average error \\ \hline
$p=1$   &   $\mathcal O\left(n^2t\right) $        &    $ \mathcal O\left(n^{1.5} t\right)  $         \\ \hline
$p=2 $  & $\mathcal O\left(n^{2.5}t^{1.5}\right) $        &    $ \mathcal O\left(n^{2.25} t^{1.5}\right)  $       \\ \hline
\end{tabular}
\caption{Gate complexity of PF1 and PF2 for nearest-neighbor Hamiltonians.\label{table:nn}}
\end{table}

%=================================================
\subsection{\texorpdfstring{$k$}{k}-local Hamiltonians}

Now consider a $k$-local Hamiltonian
\begin{equation}
    H=\sum_{j_1,j_2,\dots,j_k} H_{j_1,j_2,\dots,j_k},
\end{equation}
where $H_{j_1,j_2,\dots,j_k}$ acts nontrivially on qubits $j_1,j_2,\dots,j_k$.
For simplicity, we define the tuples $\Vec{j} := (j_1,j_2,\dots,j_k)$ and $\Vec{j}\setminus l := (j_1,\dots,j_{l-1},j_{l+1},\dots,j_k)$.
We let $S(A)$ denote the support of the operator $A$.
In this section, we first explore the average error of $k$-local Hamiltonians with PF1 and PF2 according to Theorem~\ref{Th:PF1} and~\ref{Th:PF2}. Then we consider the general PF$p$ results according to Theorem~\ref{Th:general}.

It is clear that $[H_{\Vec{i}},H_{\Vec{j}}]$ is nonzero only when there are positions $m,l$ such that $i_m=j_l$.
For a given $H_{\Vec{i}}$ and $H_{\Vec{j}}$, there are $k^2$ ways to choose $m$ and $l$ in $\Vec{i}$ and $\Vec{j}$. The trace of the product of two Pauli strings is nonzero only when they have the same support.
Thus $\tr\bigl( [H_{\Vec{i}},H_{\Vec{j}}][H_{\Vec{i}},H_{\Vec{j'}}]\bigr) $ is nonzero only when the support of $[H_{\Vec{i}},H_{\Vec{j}}]$ is the same as that of $ [H_{\Vec{i}},H_{\Vec{j'}}]$.
We can break the support of $[H_{\Vec{i}},H_{\Vec{j}}]$ into two parts,  $S(H_{\Vec{i}})$ and $S([H_{\Vec{i}},H_{\Vec{j}}])\setminus S(H_{\Vec{i}})$.
Suppose one qubit acting on the $l$th entry in $\Vec{j}$ overlaps $S(H_{\Vec{i}})$.
Then the set $S([H_{\Vec{i}},H_{\Vec{j}}])\setminus S(H_{\Vec{i}})$ contains at most $k-1$ qubits that do not appear in $S(H_{\Vec{i}})$. The set $S(H_{\Vec{j'}})$ has to contain these $k-1$ qubits to satisfy $S([H_{\Vec{i}},H_{\Vec{j}}])=S([H_{\Vec{i}},H_{\Vec{j'}}])$, so there must be some permutation $\pi$ that maps the $k-1$ relevant entries of $\Vec{j'}$
to $S([H_{\Vec{i}},H_{\Vec{j}}])\setminus S(H_{\Vec{i}})$, with $\Vec{j'}\setminus l'=\pi(\Vec{j}\setminus l)$.
Thus we have the upper bound
\begin{equation}
\begin{aligned}
\tr\bigl(|[H_{\Vec{i}},\sum_{\Vec{j}} H_{\Vec{j}}]|^2\bigr)
 &=\tr \biggl( \bigl|\sum_{\Vec{j}}[H_{\Vec{i}},H_{\Vec{j}}]\bigr|^2 \biggr)
 =\tr \left(\sum_{\Vec{j},\Vec{j'}}[H_{\Vec{i}},H_{\Vec{j}}][H_{\Vec{i}},H_{\Vec{j'}}]^{\dagger}\right)\\
 &\le 4\tr(H_{\Vec{i}}H_{\Vec{i}}^{\dagger})\sum_{\Vec{j}\colon S(H_{\Vec{j}})\cap S(H_{\Vec{i}})\ne \emptyset} \|H_{\Vec{j}}\|\sum_{\Vec{j'}\colon S(H_{\Vec{j'}})\cap S(H_{\Vec{i}})\ne \emptyset,\,S([H_{\Vec{i}},H_{\Vec{j}}])=S([H_{\Vec{i}},H_{\Vec{j'}}]) } \|H_{\Vec{j'}}\|\\
&\le 4 k^4  \tr(H_{\Vec{i}}H_{\Vec{i}}^{\dagger}) \max_{l,j_l} \sum_{\Vec{j}\setminus l}  \|H_{\Vec{j}}\|\left(\max_{l',j'_l}
\sum_{\Vec{j'}\setminus l'=\pi(\Vec{j}\setminus l)} \|H_{\Vec{j'}}\|\right), \\
\end{aligned}\label{eq:comtraceineq}
\end{equation}
where the first inequality is due to the second inequality in Lemma~\ref{Lemma:traceupper} below.
Here the factor of $k^4$ comes from the fact that there are at most $k^2$ choices for the overlapping qubit in $[H_{\Vec{i}},H_{\Vec{j}}]$ and $[H_{\Vec{i}},H_{\Vec{j'}}]$.

When more than one entry of $\Vec{j}$ overlaps $S(H_{\Vec{i}})$, a similar analysis holds. For example, suppose two entries $l_1$ and $l_2$ overlap. Similarly, we have the upper bound
\begin{equation}
\begin{aligned}
\tr\bigl(|[H_{\Vec{i}},\sum_{\Vec{j}} H_{\Vec{j}}]|^2\bigr)
&\le 4k^8 \tr(H_{\Vec{i}}H_{\Vec{i}}^{\dagger}) \max_{l_1,l_2,j_{l_1},j_{l_2}} \sum_{\Vec{j}\setminus l_1,l_2}  \|H_{\Vec{j}}\|\left(\max_{l_1',l_2',j_{l_1}',j_{l_2}'}
\sum_{\Vec{j'}\setminus l_1',l_2'=\pi(\Vec{j}\setminus l_1,l_2)} \|H_{\Vec{j'}}\|\right). \\
\end{aligned}
\end{equation}
Then due to the inclusion relationship, we have
\begin{equation}
\max_{l_1,l_2,j_{l_1},j_{l_2}} \sum_{\Vec{j}\setminus l_1,l_2} \|H_{\Vec{j}}\| \le \max_{l,j_l} \sum_{\Vec{j}\setminus l}  \|H_{\Vec{j}}\|,
\end{equation}
and similarly
\begin{equation}
\sum_{\Vec{j'}\setminus l_1',l_2'=\pi(\Vec{j}\setminus l_1,l_2)} \|H_{\Vec{j'}}\| \le \sum_{\Vec{j'}\setminus l'=\pi(\Vec{j}\setminus l)} \|H_{\Vec{j'}}\|.
\end{equation}
Other cases with multi-qubit overlap are analogous.  Since we consider $k=\mathcal O(1)$, it is sufficient to only explore single-qubit overlap to bound the asymptotic error.

For simplicity, we introduce norms $\|H\|_{F}$, $\normH{H}^2_{\mathrm{per}}$ that appear in the above inequality and recall the norms $\|H\|_1$, $\normH{H}_1$ introduced in Ref.~\cite{childs2020theory}:
\begin{equation}\label{eq:perNorm}
\begin{aligned}
\|H\|_1&:=\sum_{\Vec{i}}\|H_{\Vec{i}}\|,\\
\|H\|_{1, F}&:= \sum_{\Vec{i}} \|H_{\Vec{i}}\|_{F}=\sum_{\Vec{i}}\sqrt{\tr(H_{\Vec{i}} H_{\Vec{i}}^{\dagger})}\le \sqrt{d}\|H\|,\\
\normH{H}_1&:=\max_{l,j_l} \sum_{\Vec{j}\setminus l}  \|H_{\Vec{j}}\|,\\
\normH{H}^2_{\mathrm{per}} &:= \max_{l,j_l} \sum_{\Vec{j}\setminus l}  \|H_{\Vec{j}}\|\left(\max_{l',j'_l}
\sum_{\Vec{j'}\setminus l'=\pi(\Vec{j}\setminus l)} \|H_{\Vec{j'}}\|\right).\\
\end{aligned}
\end{equation}
Using these norms, the inequality \eqref{eq:comtraceineq} can be rewritten as
\begin{equation}
\begin{aligned}\label{Eq:tracepermu}
\tr\bigl([H_{\Vec{i}},\sum_{\Vec{j}} H_{\Vec{j}}]^2\bigr)\le 4k^4\|H_{\Vec{i}}\|_F^2
\normH{H}^2_{\mathrm{per}}.
\end{aligned}
\end{equation}
In the following, we describe our average error analysis in terms of $\|H\|_{1, F}$ and $\normH{H}^2_{\mathrm{per}}$ and show that our improvement stems from the differences between $\frac{1}{\sqrt{d}}\|H\|_{1,F}$, $\|H\|$, $\normH{H}_1$, and $\normH{H}_{\mathrm{per}}$.

For the PF1 method, recalling the result in Theorem \ref{Th:PF1}, the average error has the upper bound
\begin{equation}
\begin{aligned}
R_{\ell_2}(\mathscr{U}^r_1(t/r),U_0(t))
&\le \frac{t^2}{2r}\sum_{l_1=1}^{L-1}\frac{1}{\sqrt{d}}\|[H_{l_1},\sum_{l_2=l_1+1}^{L}H_{l_2}]\|_F \\
&\le \frac{t^2 k^2}{r \sqrt{d}} \sum_{l_1=1}^{L-1}\sqrt{\tr(H_{l_1} H_{l_1}^{\dagger})} ~\normH{H}_{\mathrm{per}}\\
&= \frac{t^2 k^2}{r \sqrt{d}} \|H\|_{1,F}\,\normH{H}_{\mathrm{per}} \\
&=\mathcal O\Big( \frac{1}{\sqrt{d}}\|H\|_{1,F}\,\normH{H}_{\mathrm{per}}\frac{t^2}{r}\Big),
\end{aligned}
\end{equation}
where $d$ is the dimension of the Hilbert space acted on by the Hamiltonian $H$.

Next we consider the PF2 method and recall the result of Theorem~\ref{Th:PF2}:
  \begin{equation}
  \begin{aligned}
 & R_{\ell_2}(\mathscr{U}^r_2(t/r),U_0(t))\\
 & \le     \frac{t^3}{12r^2} \sum_{l_1=1}^{L}
\left(\frac{\tr\bigl(|[\sum_{l_2=l_1+1}^L H_{l_2},[\sum_{l_2=l_1+1}^L H_{l_2},H_{l_1}]]|^2\bigr)}{d}\right)^{\frac{1}{2}} +  \frac{t^3}{24r^2}  \sum_{l_1=1}^{L}\left(\frac{\tr\bigl(|[H_{l_1},[H_{l_1},\sum_{l_2=l_1+1}^L H_{l_2}]]|^2\bigr)}{d}\right)^{\frac{1}{2}}.
  \end{aligned}\label{eq:pf2klocal}
\end{equation}
The first term in the above inequality can be upper bounded as
\begin{equation}
\begin{aligned}
\tr\Bigl(|[\sum_{l_2=l_1+1}^L H_{l_2},[\sum_{l_2=l_1+1}^L H_{l_2},H_{l_1}]]|^2\Bigr)
&=\sum_{l_2=l_1+1}^L \sum_{l_2'=l_1+1}^L\tr\left( \Bigl[H_{l_2},\bigl[\sum_{l_2=l_1+1}^L H_{l_2},H_{l_1}\bigr]\Bigr]\Bigl[H_{l_2'},\bigl[\sum_{l_2=l_1+1}^L H_{l_2},H_{l_1}\bigr]\Bigr]^{\dagger}   \right)\\
&\le 4\|[\sum_{l_2=l_1+1}^L H_{l_2},H_{l_1}]\|^2_F \left(\sum_{l_2=l_1+1, ~S(H_{l_2})\cap S(H_{l_1})\ne \emptyset}^L\|H_{l_2}\|\right)^2\\
&\le 4k^4\|[\sum_{l_2=l_1+1}^L H_{l_2},H_{l_1}]\|^2_F  \normH{H}_1^2 \\
&\le 16k^8 \|H_{l_1}\|_F^2 \normH{H}_1^2  \normH{H}_{\mathrm{per}}^2.
\end{aligned}
\end{equation}
Here the first inequality follows from multiple applications of the second inequality in Lemma~\ref{Lemma:traceupper} by taking $X=[ \sum_{l_2=l_1+1}^L H_{l_2},H_{l_1}]$, $Y=H_{l_2}$, and $Z=H_{l_2'}$. The second inequality is due to the definition of the induced 1-norm in Eq.~\eqref{eq:perNorm}.
The third inequality is due to Eq.~\eqref{Eq:tracepermu}.

Similarly, the second term in \eqref{eq:pf2klocal} has the upper bound
\begin{equation}
\begin{aligned}
\tr\bigl(|[H_{l_1},[H_{l_1},\sum_{l_2=l_1+1}^L H_{l_2}]]|^2\bigr)
&\le 4\|H_{l_1}\|^2\tr\bigl(|[[H_{l_1},\sum_{l_2=l_1+1}^L H_{l_2}]|^2\bigr)\\
&\le  16  k^4\|H_{l_1}\|^2\tr(H_{l_1}H_{l_1}^{\dagger})
 \max_{l,j_l} \sum_{\Vec{j}\setminus l}  \|H_{\Vec{j}}\|\left(\max_{l',j'_l}
\sum_{\Vec{j'}\setminus l'=\pi(\Vec{j}\setminus l)} \|H_{\Vec{j'}}\|\right) \\
&= 16k^4\|H_{l_1}\|^2\|H_{l_1}\|_F^2 \normH{H}_{\mathrm{per}}^2.
\end{aligned}
\end{equation}
Here the first inequality is due to the first inequality in Lemma~\ref{Lemma:traceupper} with $X=H_{l_1}$ and $Y=[H_{l_1},\sum_{l_2=l_1+1}^L H_{l_2}]$. The second inequality also applies Eq.~\eqref{Eq:tracepermu}.

Adding all ${l_1}$ terms, we find
\begin{align}
\sum_{l_1=1}^{L}
\left(\frac{\tr\bigl(|[\sum_{l_2=l_1+1}^L H_{l_2},[\sum_{l_2=l_1+1}^L H_{l_2},H_{l_1}]]|^2\bigr)}{d}\right)^{\frac{1}{2}}
&\le 4k^4\frac{1}{\sqrt{d}}\|H\|_{1, F} \normH{H}_1\normH{H}_{\mathrm{per}}=\mathcal O\left(\frac{1}{\sqrt{d}}\|H\|_{1, F}  \normH{H}_1\normH{H}_{\mathrm{per}}\right); \\
\sum_{l_1=1}^{L}\left(\frac{\tr\bigl(|[H_{l_1},[H_{l_1},\sum_{l_2=l_1+1}^L H_{l_2}]]|^2\bigr)}{d}\right)^{\frac{1}{2}}
&\le 4k^2 \frac{1}{\sqrt{d}} \left(\sum_{l_1=1}^{L}\|H_{l_1}\|\|H_{l_1}\|_{F} \right) \normH{H}_{\mathrm{per}} \nonumber\\
&= \mathcal O\left( \frac{1}{\sqrt{d}}\|H\|_{1, F} \normH{H}_{\mathrm{per}}\max_l\|H_{l}\| \right).
\end{align}
Since $\max_l \|H_{l}\|\le \normH{H}_1$, we have
\begin{equation}
R_{\ell_2}(\mathscr{U}_2^r(t/r),U_0(t))=\mathcal O\left(\frac{1}{\sqrt{d}}\|H\|_{1,F} \normH{H}_1\normH{H}_{\mathrm{per}}\frac{t^3}{r^2}\right).
\end{equation}

Finally, we consider general PF$p$ methods, with average error $R_{\ell_2}= \mathcal O\left(T_p{t^{p+1}}/{r^{p}}\right)$ where
\begin{equation}
T_p:=\sum_{l_1,\dots,l_{p+1}=1}^L
\frac{1}{\sqrt{d}} \left\|[H_{l_1},[H_{l_2},\dots,[H_{l_p},H_{l_{p+1}}]]] \right\|_F.
\end{equation}

The PF1 and PF2 analysis in Theorems~\ref{Th:PF1} and \ref{Th:PF2} includes a Frobenius norm of a sum of commutators that leads to $\normH{H}_{\mathrm{per}}$. However, in the higher-order cases described in Theorem~\ref{Th:general}, all the sums are outside of the Frobenius norm. Thus we give a different analysis that does not introduce $\normH{H}_{\mathrm{per}}$.

First we consider the one-layer commutator
\begin{equation}
 \sum_{l_1,l_2}
\frac{1}{\sqrt{d}}\left(\tr(|[H_{l_1},H_{l_{2}}]|^2\right)^{\frac{1}{2}}\le 2k^2\frac{1}{\sqrt{d}}\normH{H}_1  \sum_{l_2}\sqrt{\tr(H_{l_{2}} H_{l_{2}}^{\dagger})}=\mathcal O(\frac{1}{\sqrt{d}}\|H\|_{1, F}\normH{H}_1 ).
\end{equation}
Then we prove the general result by induction, assuming that for up to order $p-1$,
\begin{align}
\sum_{l_1,\dots,l_{p}=1}^L\frac{1}{\sqrt{d}}\left(\tr(|[H_{l_1},[H_{l_2},\dots,[H_{l_{p-1}},H_{l_{p}}]|^2)\right)^{\frac{1}{2}}=\mathcal O(\frac{1}{\sqrt{d}}\|H\|_{1,F} \normH{H}_1^{p-1}) .
\end{align}
Noting that the operator $[H_{l_1},\dots,[H_{l_p},H_{l_{p}}]]$ is supported on at most $k+p(k-1)$ qubits, we have
\begin{equation}
  \begin{aligned}
&\frac{1}{\sqrt{d}} \sum_{l_1,\dots,l_{p+1}=1}^L\left(\tr(|[H_{l_1},[H_{l_2},\dots,[H_{l_p},H_{l_{p+1}}]|^2)\right)^{\frac{1}{2}}\\
&\quad\le
2 k(k+p(k-1)) \normH{H}_1 \sum_{l_2,\dots,l_{p+1}=1}^L\frac{1}{\sqrt{d}}\left(\tr(|[H_{l_2},\dots,[H_{l_p},H_{l_{p+1}}]|^2)\right)^{\frac{1}{2}} \\
&\quad=\mathcal O( \frac{1}{\sqrt{d}}\|H\|_{1,F}\normH{H}_1^{p} ).
\end{aligned}
\end{equation}

We summarize the average errors for PF1, PF2, and PF$p$ in Table~\ref{table:klocal}. For comparison, we also list the corresponding worst-case errors for $k$-local Hamiltonians.
This table shows that for $p > 2$, the improvement only comes from the difference between $\|H\|_{1, F}$ and $\|H\|_{1}$. If each term $H_l$ is a tensor product of Pauli operators, then $H_lH_l^{\dagger}=I$. In this case we have $\frac{1}{\sqrt{d}}\|H_l\|_{F}=\|H_l\|$, so $\frac{1}{\sqrt{d}}\|H\|_{1,F}=\|H\|_1$.
However, if the $H_l$s are sums of many Pauli operators, there can be a significant difference. For example, suppose
\begin{equation}
H_1= \sum_{i,j} X_iX_j,\
H_2= \sum_{i,j} Y_iY_j,\
H_3= \sum_{i,j} Z_iZ_j.
\end{equation}
Then we have $\|H\|_1=3n^2$, $\frac{1}{\sqrt{d}}\|H\|_{1, F}=3n$.

\begin{table}[htb]
\begin{tabular}{|l|l|l|}
\hline
Order & Worst-case error & Average error \\ \hline
$p=1$  &   $\mathcal O\left(\frac{t^2}{r}\|H\|_1 \normH{H}_1\right) $        &    $ \mathcal O\left(\frac{t^2}{r}\frac{1}{\sqrt{d}}\|H\|_{1, F} \normH{H}_{\mathrm{per}}\right)  $         \\ \hline
$p=2$   &     $\mathcal O\left(\frac{t^3}{r^2}\|H\|_1 \normH{H}_1^2\right)$        &       $\mathcal O\left(\frac{t^3}{r^2}\frac{1}{\sqrt{d}}\|H\|_{1, F}\normH{H}_1 \normH{H}_{\mathrm{per}}\right)$        \\ \hline
$p>2$ &      $\mathcal O\left(\frac{t^{p+1}}{r^p}\|H\|_1 \normH{H}_1^p\right)$    &  $\mathcal O\left(\frac{t^{p+1}}{r^p}\frac{1}{\sqrt{d}}\|H\|_{1, F}\normH{H}_1^p\right)$ \\
\hline
\end{tabular}
\caption{Error scaling for $k$-local Hamiltonians. }\label{table:klocal}
\end{table}

To further articulate our results, we consider a simple $k$-local Hamiltonian $H=\sum_{l_1,\dots,l_k}H_{l_1,\dots,l_k}$ acting on $n$ qubits, where each term has norm $\|H_{l_1,\dots,l_k}\|= 1$. Then we have $\|H\|\le n^k$,
$\frac{1}{\sqrt{d}}\|H\|_{1, F}\le n^k$,
$\normH{H}_{\mathrm{per}}=\mathcal O(n^{\frac{k-1}{2}})$, and $\normH{H}_1=\mathcal O(n^{k-1})$. In this case there is no difference between the upper bounds for $\|H\|_1$ and $\frac{1}{\sqrt{d}}\|H\|_{1, F}$. However, for the first- and second-order cases, we have the following comparison:
\begin{equation}
\begin{aligned}
&R_{\ell_2}(\mathscr{U}^r_1(t/r),U_0(t))=\mathcal O\left(\frac{t^2}{r}n^{\frac{3k-1}{2}}\right), ~W(\mathscr{U}^r_1(t/r),U_0(t))=\mathcal O\left(\frac{t^2}{r}n^{2k-1}\right);\\
&R_{\ell_2}(\mathscr{U}_2^r(t/r),U_0(t))=\mathcal O\left(\frac{t^3}{r^2}n^{\frac{5k-3}{2}}\right), ~W(\mathscr{U}_2^r(t/r),U_0(t))=\mathcal O\left(\frac{t^3}{r^2}n^{3k-2}\right).
\end{aligned}
\end{equation}
Here the improvement comes from the difference between $\normH{H}_1$ and $\normH{H}_{\mathrm{per}}$.

We conclude by showing the inequalities used in the above analysis.

\begin{lemma}\label{Lemma:traceupper}
For any complex matrices $X$, $Y$, and $Z$, we have
\begin{equation}
\begin{aligned}
 \tr |[X,Y]|^2 &\le 4\|X\|^2\tr(YY^{\dagger}), \\
 |\tr([X,Y][X,Z]^{\dagger})| &\le 4\|Y\|\|Z\|\tr(XX^{\dagger}).
\end{aligned}
\end{equation}
\end{lemma}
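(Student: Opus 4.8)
The plan is to reduce both inequalities to two elementary facts: (i) the Frobenius norm is submultiplicative against the operator norm, i.e.\ $\|AB\|_F\le\|A\|\,\|B\|_F$ and $\|AB\|_F\le\|B\|\,\|A\|_F$ for any matrices $A,B$; and (ii) the Cauchy--Schwarz inequality for the Hilbert--Schmidt inner product, which is exactly the first inequality of Lemma~\ref{Lemma:traceproduct}, namely $|\tr(MN)|\le\|M\|_F\,\|N\|_F$ (writing $\|M\|_F=\sqrt{\tr(MM^\dagger)}$). Fact~(i) I would justify in one line: $\|AB\|_F^2=\tr(A^\dagger A\,BB^\dagger)\le\|A\|^2\tr(BB^\dagger)$ since $A^\dagger A\le\|A\|^2\id$ and $BB^\dagger\ge 0$, and symmetrically $\le\|B\|^2\tr(A^\dagger A)$ since $BB^\dagger\le\|B\|^2\id$.

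For the first inequality, I would expand $[X,Y]=XY-YX$, apply the triangle inequality for $\|\cdot\|_F$, and then fact~(i) to each term:
\begin{equation*}
\|[X,Y]\|_F\le\|XY\|_F+\|YX\|_F\le\|X\|\,\|Y\|_F+\|X\|\,\|Y\|_F=2\|X\|\,\|Y\|_F .
\end{equation*}
Squaring gives $\tr|[X,Y]|^2=\|[X,Y]\|_F^2\le 4\|X\|^2\,\|Y\|_F^2=4\|X\|^2\tr(YY^\dagger)$, which is the claim.

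For the second inequality, I would start from fact~(ii) with $M=[X,Y]$ and $N=[X,Z]^\dagger$, giving $|\tr([X,Y][X,Z]^\dagger)|\le\|[X,Y]\|_F\,\|[X,Z]\|_F$ (using $\tr([X,Z]^\dagger[X,Z])=\tr([X,Z][X,Z]^\dagger)$ by cyclicity). Then I would bound each factor exactly as before but keeping the \emph{operator} norm on $Y$ (resp.\ $Z$) and the Frobenius norm on $X$: $\|[X,Y]\|_F\le\|XY\|_F+\|YX\|_F\le 2\|Y\|\,\|X\|_F$, and likewise $\|[X,Z]\|_F\le 2\|Z\|\,\|X\|_F$. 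Multiplying the two bounds yields $|\tr([X,Y][X,Z]^\dagger)|\le 4\|Y\|\,\|Z\|\,\|X\|_F^2=4\|Y\|\,\|Z\|\tr(XX^\dagger)$, as desired.

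I do not anticipate a genuine obstacle here; the only point that needs care is to invoke submultiplicativity of $\|\cdot\|_F$ against $\|\cdot\|$ on the correct side of each product (so that the operator norm lands on the intended factor), and to make sure the Hilbert--Schmidt Cauchy--Schwarz step is applied to $[X,Y]$ and $[X,Z]^\dagger$ rather than to $X$, $Y$, $Z$ individually. Both constants $4$ come from the two-term triangle inequality for each commutator.
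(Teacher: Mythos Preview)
Your proof is correct, and it is cleaner than the paper's. The paper handles the first inequality by expanding $\tr|[X,Y]|^2$ into four terms $2\tr(X^\dagger X YY^\dagger)-\tr(Y^\dagger X^\dagger YX)-\tr(XYX^\dagger Y^\dagger)$ and then bounding the diagonal and cross terms separately, invoking Lemma~\ref{Lemma:traceproduct} on the cross terms; for the second inequality it expands $(XY-YX)(Z^\dagger X^\dagger - X^\dagger Z^\dagger)$ into four traces and bounds each. You instead bound $\|[X,Y]\|_F$ \emph{before} squaring, via the triangle inequality and the mixed submultiplicativity $\|AB\|_F\le\|A\|\,\|B\|_F$, and for the second inequality you first apply Hilbert--Schmidt Cauchy--Schwarz to factor into $\|[X,Y]\|_F\|[X,Z]\|_F$ and then reuse the same commutator estimate. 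Your route is more modular (the second inequality recycles the first argument with the roles of operator and Frobenius norm swapped), avoids tracking cross terms, and makes it transparent that the constant $4=2\cdot 2$ comes from two applications of the two-term triangle inequality. The paper's term-by-term expansion is slightly more hands-on but yields the same constant; neither approach dominates, though yours would generalize more readily (e.g.\ to Schatten-$p$ norms via the analogous mixed H\"older bound).
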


\begin{proof}
We have
\begin{align}
 \tr \bigl(|[X,Y]|^2\bigr)&= 2\tr (X^{\dagger}XYY^{\dagger})
-\tr(Y^{\dagger}X^{\dagger}YX)-   \tr(XYX^{\dagger}Y^{\dagger})\nonumber \\
&\le 2\|X\|^2\tr(YY^{\dagger})+ 2\sqrt{\tr(YY^{\dagger})} \sqrt{\tr( XX^{\dagger}YXX^{\dagger}Y^{\dagger})} \\
&\le
4\|X\|^2\tr(YY^{\dagger}),
\end{align}
and
\begin{align}
|\tr([X,Y][X,Z]^{\dagger})|&= |\tr (XY-YX)(Z^{\dagger}X^{\dagger}-X^{\dagger}Z^{\dagger})| \nonumber\\
&\le |\tr(XYZ^{\dagger}X^{\dagger})|+ |\tr(XYX^{\dagger}Z^{\dagger})|+|\tr(YXZ^{\dagger}X^{\dagger})|+|\tr(YXX^{\dagger}Z^{\dagger})| \nonumber\\
&\le 2\|Y\|\|Z\|\tr(XX^{\dagger})+ 2\sqrt{\tr(XX^{\dagger})} \sqrt{\tr( Y^{\dagger}YX^{\dagger}Z^{\dagger}ZX)} \\
&\le 4\|Y\|\|Z\|\tr(XX^{\dagger}).
\end{align}
The last inequality is due to Lemma~\ref{Lemma:traceproduct}.
\end{proof}

%=================================================
\subsection{Power-law interactions}\label{Sec:App_powerlaw}

Now we consider power-law interactions on a $D$-dimensional lattice $\Lambda \subset \mathbb{R}^D$, with a 2-body interaction of the form
\begin{equation}\label{eq:power}
H=\sum_{i,j} H_{i,j},\ \|H_{i,j}\| \le \left\{
\begin{aligned}
&1 ~&i=j  \\
&\frac{1}{\|i-j\|^{\alpha}} ~&i\neq j
\end{aligned}
\right.
\end{equation}
for some $\alpha>0$,
where $i,j$ are two sites on the lattice and $\|i-j\|$ is the Euclidean distance.

Such a Hamiltonian is 2-local. Using the definition of the norm $\normH{H}_{\mathrm{per}}$ in Eq.~\eqref{eq:perNorm}, we have
\begin{equation}
\begin{aligned}
\normH{H}^2_{\mathrm{per}}&=\max_{i}\sum_{j}  \|H_{i,j}\|\left(\max_{i'}
\|H_{i',j}\|\right)\le \max_{i}\sum_{j}  \|H_{i,j}\|=\normH{H}_1.
\end{aligned}
\end{equation}
Here the first equality follows from the symmetry of the labels $i,j$ in $H$ and the definition of the permutation norm.
The inequality holds because the interaction strength decreases with distance, and is upper bounded by 1 according to Eq.~\eqref{eq:power}.
The third equality follows from the definition of the induced 1-norm in Ref.~\cite{childs2020theory} and from Eq.~\eqref{eq:perNorm}.

Applying the results in Table~\ref{table:klocal}
and using $\frac{1}{\sqrt{d}}\|H\|_{1,F}\le \|H\|_1$, we have the following comparison to the worst-case performance of PF1 and PF2:
\begin{equation}\label{}
\begin{aligned}
&R_{\ell_2}(\mathscr{U}_1^r(t/r),U_0(t))=\mathcal O\left(\frac{t^2}{r}\|H\|_1~\normH{H}_1^{1/2}\right), ~W(\mathscr{U}_1^r(t/r),U_0(t))  =\mathcal O\left(\frac{t^2}{r}\|H\|_1\normH{H}_1\right);\\
&R_{\ell_2}(\mathscr{U}_2^r(t/r),U_0(t))=\mathcal O\left(\frac{t^3}{r^2}\|H\|_1~\normH{H}_1^{3/2}\right),~W(\mathscr{U}_2^r(t/r),U_0(t))  =\mathcal O\left(\frac{t^3}{r^2}\|H\|_1\normH{H}_1^2\right).\\
\end{aligned}
\end{equation}

Using Lemma F.1 of Ref.~\cite{childs2020theory}, we have
\begin{equation}
\normH{H}_1\leq \begin{cases}
\mathcal O (n^{1-\alpha/ D})\ &0\le \alpha<D,\\
\mathcal O (\log(n))\ &\alpha=D,\\
\mathcal O (1)\ &\alpha>D
\end{cases}
\end{equation}
and $\|H\|_1\leq n\normH{H}_1$.
Therefore, the average-case and the worst-case results are
\begin{equation}\label{}
\begin{aligned}
&R_{\ell_2}(\mathscr{U}_1^r(t/r),U_0(t))= \mathcal O\left(\frac{t^2}{r}n\normH{H}_1^{1.5}\right)=\left\{\begin{aligned}
&\mathcal O \left(\frac{t^2}{r}n^{\frac{5}{2}-\frac{3\alpha}{2 D}}\right)\ &0\le \alpha<D,\\
&\mathcal O \left(\frac{t^2}{r}n\log^{\frac{3}{2}}(n)\right)\ &\alpha=D,\\
&\mathcal O  \left(\frac{t^2}{r}n\right)\ &\alpha>D,\\
\end{aligned}
\right.\\
&W_{\ell_2}(\mathscr{U}_1^r(t/r),U_0(t))= \mathcal O\left(\frac{t^2}{r}n\normH{H}_1^{2}\right)=\left\{\begin{aligned}
&\mathcal O \left(\frac{t^2}{r}n^{3-\frac{2\alpha}{ D}}\right)\ &0\le \alpha<D,\\
&\mathcal O \left(\frac{t^2}{r}n\log^{2}(n)\right)\ &\alpha=D,\\
&\mathcal O  \left(\frac{t^2}{r}n\right)\ &\alpha>D,\\
\end{aligned}
\right.\\
\end{aligned}
\end{equation}
for PF1, and
\begin{equation}\label{}
\begin{aligned}
&R_{\ell_2}(\mathscr{U}_2^r(t/r),U_0(t))= \mathcal O\left(\frac{t^3}{r^2}n\normH{H}_1^{2.5}\right)=\left\{\begin{aligned}
&\mathcal O \left(\frac{t^3}{r^2}n^{\frac{7}{2}-\frac{5\alpha}{2 D}}\right)\ &0\le \alpha<D,\\
&\mathcal O \left(\frac{t^3}{r^2}n\log^{\frac{5}{2}}(n)\right)\ &\alpha=D,\\
&\mathcal O  \left(\frac{t^3}{r^2}n\right)\ &\alpha>D,\\
\end{aligned}
\right.\\
&W_{\ell_2}(\mathscr{U}_2^r(t/r),U_0(t))= \mathcal O\left(\frac{t^3}{r^2}n\normH{H}_1^{3}\right)=\left\{\begin{aligned}
&\mathcal O \left(\frac{t^3}{r^2}n^{4-\frac{3\alpha}{ D}}\right)\ &0\le \alpha<D,\\
&\mathcal O \left(\frac{t^3}{r^2}n\log^{3}(n)\right)\ &\alpha=D,\\
&\mathcal O  \left(\frac{t^3}{r^2}n\right)\ &\alpha>D,\\
\end{aligned}
\right.\\
\end{aligned}
\end{equation}
for PF2.

%=================================================
\subsection{Out-of-time-order correlators}\label{sec:OTOC}

The (infinite temperature) out-of-time-order correlator (OTOC) for
two commuting local observables $A$ and $B$ of a $d$-dimensional system is defined as
\begin{equation}\label{eq:otocDef}
\begin{aligned}
\langle O(t) \rangle:=    \langle B^{\dagger}(t) A(0)^{\dagger}  B(t) A(0) \rangle=\frac1{d}\tr (B^{\dagger}(t) A(0)^{\dagger}  B(t) A(0)),
\end{aligned}
\end{equation}
where $B(t) := e^{iHt}Be^{-iHt}$ is the operator $B$ in the Heisenberg picture. OTOCs are of interest since they can be used to characterize quantum information scrambling and quantum chaotic behavior in many-body systems \cite{shenker2014black,maldacena2016bound}.

For concreteness, consider an $n$-qubit system with local Pauli operators $A$ and $B$. In particular, suppose $A=Z_1\otimes \id_2^{\otimes (n-1)}$ and $B=\id_2^{\otimes (n-1)}\otimes X_n$ (acting nontrivially on the first and the last qubit, respectively).
The OTOC can be estimated using the quantum circuit shown in Fig.~\ref{fig:OTOCcircuit}, which has been demonstrated in an NMR experimental platform \cite{li2017measuring}.
First, the system is prepared in the initial state $\rho_{\mathrm{in}}=\phi\otimes \id_{d_1}/d_1$, where $\phi=\ket{0}\bra{0}$ is the state of the first qubit and the other $n-1 = \log_2 d_1$ qubits are maximally mixed.
Then the unitary evolution $V_0=e^{iHt}B^{\dag}e^{-iHt}$ is applied (note that $B=\id_2^{\otimes (n-1)}\otimes X_n$ is unitary). Finally, the observable $A=Z_1\otimes \id_2^{\otimes (n-1)}$ is measured on the first qubit. This procedure effectively measures the OTOC since
\begin{equation}\label{eq:otoc}
\begin{aligned}
\langle O(t) \rangle & =\frac1{d}\tr \left[V_0 (Z_1\otimes \id_2^{\otimes (n-1)}) V^{\dagger}_0 (Z_1\otimes \id_2^{\otimes (n-1)})\right]\\
& =\frac1{d}\tr \left[V_0 (d\rho_{\mathrm{in}}-\id_d) V^{\dagger}_0 (Z_1\otimes \id_2^{\otimes (n-1)})\right]\\
& =\tr [V_0 \rho_{\mathrm{in}}V^{\dagger}_0 Z_1],
\end{aligned}
\end{equation}
where $d=2^n$ and in the last line we use the fact that $\tr [V_0 \id_d V^{\dagger}_0 Z_1]=0$. In the last line we omit the identity operators on the trailing $n-1$ qubits for simplicity.

\begin{figure}[!htb]
    \centering
    \includegraphics[width=0.5\textwidth]{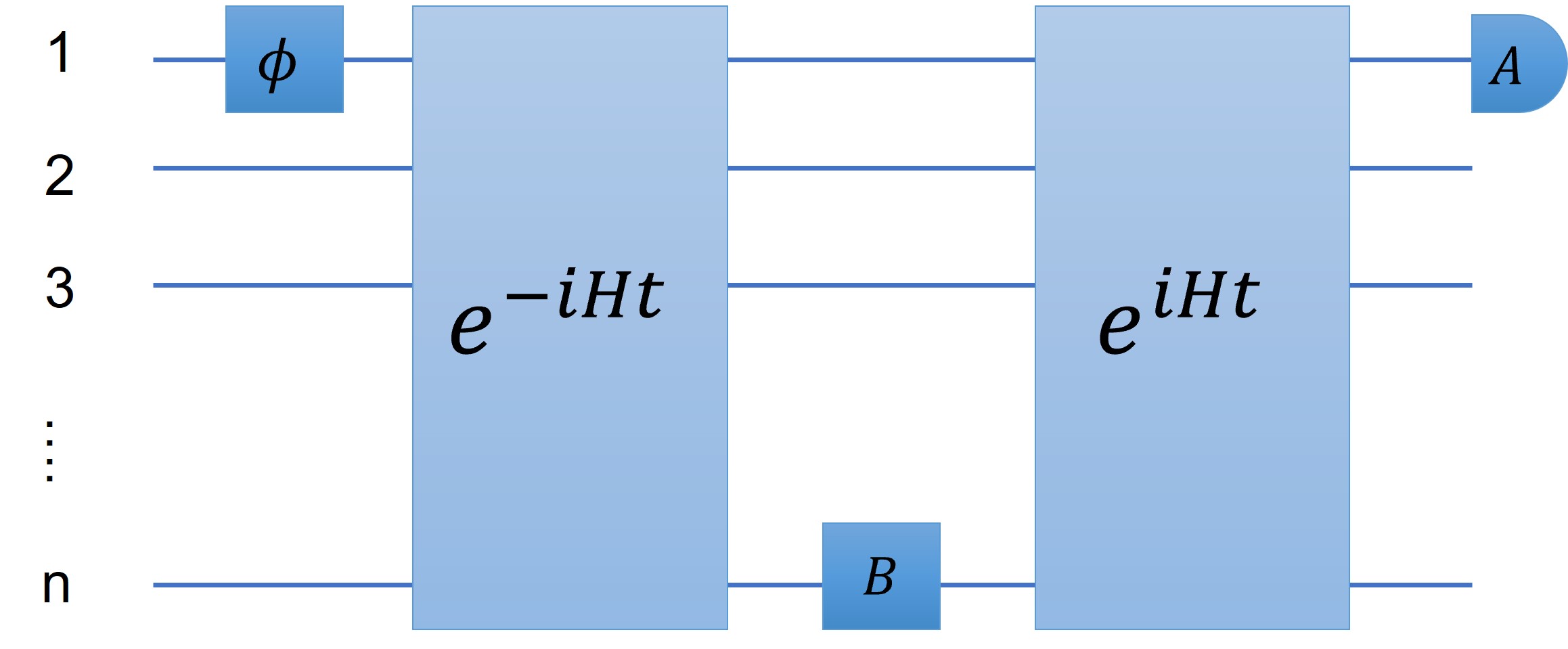}
    \caption{Quantum circuit for measuring an OTOC.}\label{fig:OTOCcircuit}
\end{figure}

The operations $e^{iHt}$ and $e^{-iHt}$ can be approximated using digital quantum simulation methods such as product formulas. For instance, in Ref.~\cite{li2017measuring}, the authors experimentally realized PF2.
Compared with the ideal evolution $V_0$, the approximated evolution $V=V_0(\id+\mathscr{M})$ has multiplicative error $\mathscr{M}$.
Since the input state $\rho_{\mathrm{in}}$ can be viewed as a mixture of
states with the first qubit in the state $\phi$ and the other $n-1$ qubits randomly chosen from a basis, i.e.,
\begin{equation}
\rho_{\mathrm{in}}= \frac{1}{d_1}\sum_{j=0}^{d_1-1}\phi \otimes \ket{j}\bra{j},
\end{equation}
the algorithmic error from a product formula approximation can be bounded using our techniques as
\begin{equation}
\begin{aligned}
 |\langle O(t) \rangle- \langle \widetilde{O(t)} \rangle|   & =|\tr (V_0 \rho_{\mathrm{in}} V^{\dagger}_0 Z_1)-\tr (V \rho_{\mathrm{in}} V^{\dagger} Z_1)|\\
 &\le \| V_0 \rho_{\mathrm{in}} V^{\dagger}_0- V \rho_{\mathrm{in}} V^{\dagger} \|_1 \\
 &=\Big\| V_0 (\phi\otimes \frac{\id_{d_1}}{d_1}) V^{\dagger}_0- V (\phi\otimes \frac{\id_{d_1}}{d_1}) V^{\dagger} \Big\|_1\\
 &\le \frac{1}{d_1}\sum_{j=0}^{d_1-1}\Big\| V_0 (\phi\otimes \ket{j}\bra{j} ) V^{\dagger}_0- V  (\phi\otimes \ket{j}\bra{j}  ) V^{\dagger} \Big\|_1\\
 &\le \frac{2}{d_1}\sum_{j=0}^{d_1-1}\| (V_0-V)\ket{\phi}_{1}\ket{j}\|_{\ell_2}\\
 &=2R_{\ell_2}^{\mathrm{sub},1}(V,V_0),
\end{aligned}
\end{equation}
where the second inequality is by the triangle inequality of the trace norm, the third inequality is by the relation between the trace norm and $\ell_2$ norm, and the last inequality is due to Eq.~\eqref{eq:subsysError}.

If $U_0(t)=e^{-iHt}$ and $U_0^{\dag}(t)=e^{iHt}$ are approximated via PF2, then in a small time segment $t/r$, we have
\begin{align}
U_0(t/r)=e^{-iHt/r}\approx \mathscr{U}_2(t/r)=\overrightarrow{\prod_l}e^{-iH_lt/2r} \overleftarrow{\prod_l}e^{-iH_lt/2r}.
\end{align}
Let us denote the multiplicative error by $m$, i.e., $\mathscr{U}_2(t/r)=U_0(t/r)(\id+m)$.
By Lemma \ref{Lemma:subsystem} and the triangle inequality, the error in the long-time evolution can be upper bounded as
$R_{\ell_2}^{\mathrm{sub},1}(V(t),V_0(t))\le \frac{2r}{\sqrt{d_1}} \|m\|_F$.

As a result, the total error is at most $ |\langle O(t) \rangle- \langle \widetilde{O(t)} \rangle|\le \frac{4r}{\sqrt{d_1}} \|m\|_F$.
If $H$ is a nearest-neighbor Hamiltonian acting on $n$ qubits, then according to Appendix \ref{Sec:app_nearest} and Ref.~\cite{childs2020theory}, $\|m\|_F=\mathcal O(\sqrt{n d_1}(t/r)^3)$ and  $\|m\|=\mathcal O(n(t/r)^3)$.
Thus the error can be upper bounded by
\begin{equation}
    |\langle O(t) \rangle- \langle \widetilde{O(t)} \rangle|\le \frac{4r}{\sqrt{d_1}} \|m\|_F=\mathcal O(\sqrt{n} t^3/r^2).
\end{equation}
To ensure Trotter error at most $\varepsilon$, we choose $r=\mathcal O(n^{0.25}t^{1.5}/\varepsilon)$ Trotter steps, giving gate complexity $G=\mathcal O(n^{1.25}t^{1.5}/\varepsilon)$.

On the other hand, if we use the conventional worst-case analysis,
the algorithmic error is
\begin{equation}
\begin{aligned}
    |\langle O(t) \rangle- \langle \widetilde{O(t)} \rangle|& =|\tr (V_0 \rho_{\mathrm{in}} V^{\dagger}_0 Z_1)-\tr (V \rho_{\mathrm{in}} V^{\dagger} Z_1)|\\
    &\le |\tr (V_0 \rho_{\mathrm{in}} V^{\dagger}_0 Z_1)-\tr (V_0 \rho_{\mathrm{in}} V^{\dagger} Z_1)|+|\tr (V_0 \rho_{\mathrm{in}} V^{\dagger} Z_1)-\tr (V_0 \rho_{\mathrm{in}} V_0^{\dagger} Z_1)|\\
    &=|\tr [(V_0-V)\rho_{\mathrm{in}} V^{\dagger}_0 Z_1]|+|\tr [(V-V_0)^{\dagger} Z_1V_0\rho_{\mathrm{in}} ]|.
    \end{aligned}
\end{equation}
Both terms can be bounded by
\begin{equation}
\begin{aligned}
    |\tr [(V_0-V)\rho_{\mathrm{in}} V^{\dagger}_0 Z_1]|&\le \|V_0-V\|\ \|\rho_{\mathrm{in}} V^{\dagger}_0 Z_1\|_1\\
    &=\|\mathscr{M}(t) \| \tr(\sqrt{\rho_{\mathrm{in}}^2})\\
    &=\|\mathscr{M}(t) \|.
    \end{aligned}
\end{equation}
Using the triangle inequality, we have
\begin{equation}
 |\langle O(t) \rangle- \langle \widetilde{O(t)} \rangle|\le 2\|\mathscr{M}(t) \|\le 2r\|\mathscr{M}(t/r) \|\le 4r\|m\|= \mathcal O(nt^3/r^2).
\end{equation}
To ensure error at most $\varepsilon$, we divide the evolution into$r=\mathcal O(n^{0.5}t^{1.5}/\varepsilon)$ segments, giving gate complexity $G=\mathcal O(n^{1.5}t^{1.5}/\varepsilon)$.
Thus we see that our techniques tighten the error and thereby reduce the required number of Trotter steps and gate complexity in the OTOC measurement.

In an ensemble quantum computer such as an NMR experiment, the maximally mixed state can be directly prepared. However, for other quantum platforms,
the initial input state may be sampled as
$\psi_{\mathrm{in}}= \phi_{[1]} \otimes \psi_{[n-1]}$,
where the state $\psi_{[n-1]}\in \mc{E}$ is drawn from an $(n-1)$-qubit 1-design $\mc{E}$ so that $\mathbb{E}_{\psi_{[n-1]}\in \mc{E}}(\psi_{\mathrm{in}})=\phi_{[1]} \otimes{\id_{d_1}}/{d_1}$. Thus, in addition to the algorithmic error discussed above, one should also account for statistical error. Consider the OTOC
\begin{equation}
\hat{O}_{\psi_{\mathrm{in}}}:=\langle O(t)\rangle_{\psi_{\mathrm{in}}}=\tr (V \psi_{\mathrm{in}} V^{\dagger} Z_1)
\end{equation}
as a random variable with probability distribution determined by $\mc{E}$. The expectation value is $\mathbb{E} _{\psi_{\mathrm{in}}}\langle O(t)\rangle_{\psi_{\mathrm{in}}}=\langle\tilde{O}(t)\rangle$. Suppose one samples $K$ random inputs to obtain identically distributed random variables $\{\hat{O}_1,\hat{O}_2,\cdots \hat{O}_K\}$ and uses the sample average $\hat{O}_{\mathrm{{est}}}=\sum_i \hat{O}_i/K$ as an unbiased estimator of $\langle\tilde{O}(t)\rangle$. For any input $\psi_{\mathrm{in}}$, we have $|\langle O(t)\rangle_{\psi_{\mathrm{in}}}|\leq1$, so by Hoeffding's inequality, we have
\begin{equation}
\mathrm{Pr}\left(|\hat{O}_{\mathrm{{est}}}-\langle\tilde{O}(t)\rangle|\geq \epsilon\right)\leq 2\exp\left(-\frac1{2}K\epsilon^2\right).
\end{equation}
Then the total error can be upper bounded by summing the algorithmic and statistical errors, giving $|\hat{O}_{\mathrm{{est}}}-\langle O(t)\rangle|\leq |\hat{O}_{\mathrm{{est}}}-\langle\tilde{O}(t)\rangle|+|\langle O(t)\rangle-\langle\tilde{O}(t)\rangle|$.

\subsection{Trace estimation}\label{App_trace}
In the one clean qubit model \cite{PhysRevLett.81.5672},
one can estimate $\tr(U)$ for an unitary operator $U$ using only a single pure qubit (and many maximally mixed ones). If
$U = e^{-iHt} \approx \mathscr{U}^r_p(t/r)$ is a Hamiltonian evolution that is approximated with product formulas, then our method can be directly used to bound the Trotter error
\begin{equation}
\begin{aligned}
\varepsilon: =|\tr(e^{-iHt})-\tr(\mathscr{U}_p^r(t/r))| \le r|\tr(e^{-iHt/r}-\mathscr{U}_p(t/r))|\le r \sqrt{d}\|\mathscr M(t/r)\|_F.
\end{aligned}
\end{equation}
Here the first inequality is due to the triangle inequality and the second inequality follows from the Cauchy inequality as in Lemma~\ref{Lemma:traceproduct}.
The error can the be bounded for various types of Hamiltonians as discussed in the previous sections. For example, if $H$ is a nearest-neighbor Hamiltonian acting on $n$ qubits, then according to Appendix~\ref{Sec:app_nearest}, we have $\|\mathscr M(t/r)\|_F=\mathcal O(\sqrt{n d}t^{p+1}/r^{p+1})$, so the error is $\varepsilon=\mathcal O(\sqrt{n}2^nt^{p+1}/r^{p})$.
Therefore a gate complexity of $\mathcal O(n^{1+\frac{1}{2p}}t^{1+\frac{1}{p}}2^{\frac{n}{p}})$ suffices to ensure that the total error is at most a small constant.

\section{Numerical results}

In Figures~\ref{Fig:PF12} and~\ref{Fig:PFpower}, we show numerical results illustrating the extent to which our theoretical analysis captures the actual average-case performance of product formulas.
Figure~\ref{Fig:PF12} shows that for the one-dimensional Heisenberg model in Eq.~\eqref{Eq:heisenberg}, the asymptotic scaling of the PF1 interference bound (red curve) derived in Appendix~\ref{Sec:numerical_inter} is close to the empirical performance. For PF2, the asymptotic scaling of the bound from Appendix~\ref{Sec:numerical_near_tri} is also close to the empirical performance. In Figure~\ref{Fig:PFpower}, we consider the one-dimensional Heisenberg model  with  power-law interactions of exponent $\alpha=0$ or $4$, as defined in Eq.~\eqref{Eq:powerlaw}. For $\alpha=0$, the asymptotic scaling using the bound of Appendix~\ref{Sec:numerical_power_theo} agrees well with the empirical results; for $\alpha=4$,
the asymptotic scaling using the bound of Appendix~\ref{Sec:numerical_power_theo} for PF2 also agrees well with the empirical results, but for PF1 there is a substantial gap between the theoretical and empirical curves. This might be due to destructive error interference between Trotter steps, but we leave a detailed investigation as a problem for future research.

Note that for all our average-case empirical results except those described in part b of Appendix~\ref{sec:numerical_empirical}, we
generate 20 random inputs according to Haar measure. All the average error results were generated using the Julia programming language, except for the calculations described in Appendix~\ref{sec:cauchycompare}, which were generated using Mathematica. The worst-case empirical results were generated using Matlab. See \url{https://github.com/zhaoqthu/Hamiltonian-simulation-with-random-inputs} for the source code used to produce these numerical results.
All extrapolations were obtained using the polynomial curve fitting function (polyfit) in Matlab, neglecting data with small $n$ that showed clear deviation from the apparent asymptotic trend.

In this section, we first present the detailed calculation of various theoretical bounds for specific models: nearest-neighbor Hamiltonians in Appendix~\ref{Sec:Numerical_nearest} and the one-dimensional Heisenberg model with  power-law interactions in Appendix~\ref{Sec:numerical_power}.
In Appendix~\ref{sec:numerical_empirical}, we present further empirical results describing the scaling of the interference bound with $t$ (Figure~\ref{fig:trotter_heisenberg}), the performance of higher-order formulas, the effect of choosing other 1-design input distributions,
and statistical fluctuations in the error with Haar-random inputs.

%=================================================
\subsection{Nearest-neighbor Hamiltonians}\label{Sec:Numerical_nearest}
Consider the one-dimensional Heisenberg model Hamiltonian with a random magnetic field $h_{j}\in [-1,1]$ at each site $j \in \{1,\ldots,n\}$,
\begin{equation}\label{Eq:heisenberg}
    H = \sum^{n-1}_{j=1} \left(X_j X_{j+1} + Y_jY_{j+1} + Z_j Z_{j+1}\right)+ \sum^{n}_{j=1}h_{j}Z_{j}.
\end{equation}
The summands of this Hamiltonian can be partitioned into two groups in an even-odd pattern \cite{Childs2019Product}, giving $H=A+B$ with
\begin{equation}
    \begin{aligned}
    A &= \sum^{\lfloor\frac{n}{2}\rfloor}_{j=1} \left(X_{2j-1} X_{2j} + Y_{2j-1} Y_{2j} + Z_{2j-1}  Z_{2j}\right)+ \sum^{\lfloor\frac{n+1}{2}\rfloor}_{j=1}  h_{2j-1}Z_{2j-1} ,\\
    B &= \sum^{\lceil\frac{n}{2}\rceil-1}_{j=1} \left(X_{2j} X_{2j+1} + Y_{2j} Y_{2j+1} + Z_{2j} Z_{2j+1}  \right)+ \sum^{\lceil\frac{n-1}{2}\rceil}_{j=1} h_{2j}Z_{2j}.
\end{aligned}
\end{equation}

\subsubsection{Interference bound}
\label{Sec:numerical_inter}
Recall that the interference bound in Theorem~\ref{Th:interference} shows that the error is at most
\begin{equation}
 R_{\ell_2}(\mathscr{U}^r_1(t/r),U_0(t))=O\left( \sqrt{n}\left(\frac{t}{r}+\frac{t^3}{r^2}\right)\right).
\end{equation}
We investigate the empirical error of the interference bound by varying the evolution time $t$ with fixed $r$ and $n$, as shown in Fig.~\ref{fig:trotter_heisenberg}.
We observe that the error scales linearly in $t$ for small $t$ and cubically for larger $t$, with a transition around $t\approx \sqrt{r}$, in agreement with Theorem~\ref{Th:interference}.

\begin{figure}[!htb]
    \centering
    \includegraphics[width=0.5\textwidth]{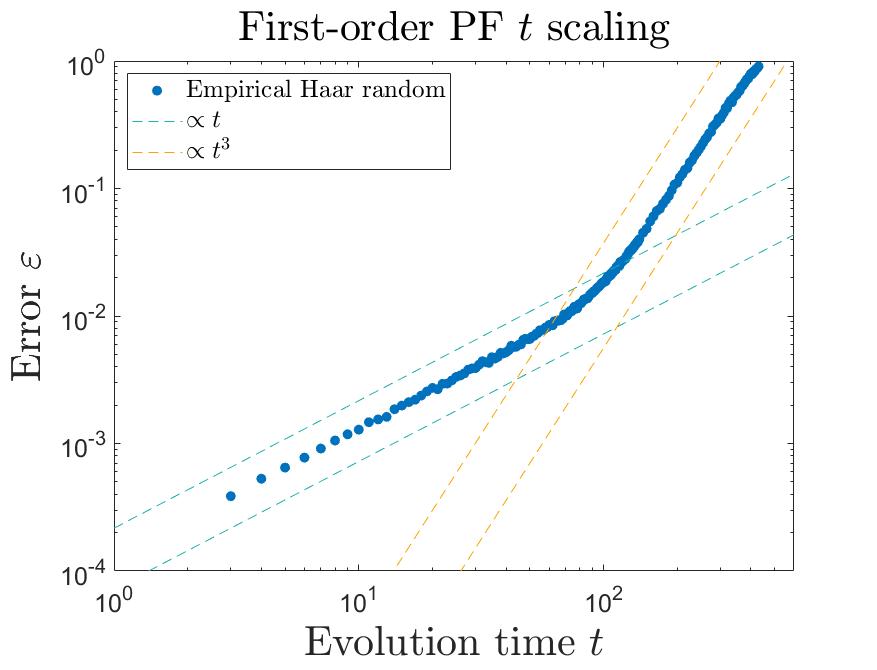}
    \caption{Algorithmic error of the first-order product formula for the one-dimensional Heisenberg model in Eq.~\eqref{Eq:heisenberg}
    with different evolution time $t$. Here we choose $n=6$ and $r=10000$. }
   \label{fig:trotter_heisenberg}
\end{figure}

To obtain a concrete prefactor for the interference bound, we calculate it in more detail as follows.
As shown in the proof of Theorem~\ref{Th:interference},  we have the following inequalities:
\begin{equation}\label{ap:eq:PF1infer}
\begin{aligned}
R_{\ell_2}((\mathscr{U}^r_1(t/r),U_0(t))&\le \frac{1}{\sqrt{d}}\|\mathscr{M}_r(t)   \|_F=  \frac{1}{\sqrt{d}}\sqrt{ \tr(\mathscr{M}_r\mathscr{M}_r^{\dagger}) };\\
\tr(\mathscr{M}_r\mathscr{M}_r^{\dagger})   &\le \sum_{k,k'=1}^{r} |\tr(\Delta_k\Delta_{k'}^{\dagger}) |\le   \left(\sum_{k,k'=1}^{r} r^{k+k'-2} \|\mathscr{M}_1\|^{k+k'-2}  \right)\tr(\Delta_1\Delta_{1}^{\dagger})\\
 &\le  \tr(\Delta_1\Delta_{1}^{\dagger})\left(\frac{1}{1-r\|\mathscr{M}_1\|}\right)^2;\\
 \tr(\Delta_1\Delta_{1}^{\dagger})&\le \left[\frac{2r}{t}
\left(\sum_{k=0}^{\infty} \sqrt{\tr(|F^{\circ k}(S)|^2)}\right)+ r \sqrt{\tr(VV^{\dagger})}\right]^2.
\end{aligned}
\end{equation}
According to the proofs of Lemmas~\ref{Lemma:SV} and \ref{Lemma:Fk}, we can bound the two terms in $\tr(\Delta_1\Delta_{1}^{\dagger})$ as
\begin{equation}
\begin{aligned}
\sum_{k=0}^{\infty} \sqrt{\tr(|F^{\circ k}(S)|^2)} &\le    \sqrt{\tr(SS^{\dagger})}\left(\frac{1}{1-t\|H\|/r}\right)\le \sqrt{dn}\frac{t^2}{r^2} \left(\frac{1}{1-t\|H\|/r}\right)^2,\\
\tr(VV^{\dagger})&\le \frac{d}{36n} \left(C^2 e^2 \|H\|^2 \frac{t^6}{r^6}\right) \left[ \left(\frac{1}{1-e\|H\|t/r}\right)^2-1\right]+ \tr(|[A,[H,B]]|^2)\frac{t^6}{36r^6}\\
&\approx \mathcal O\left(dn^2\frac{t^7}{r^7}+ dn\frac{t^6}{r^6}\right),
\end{aligned}
\end{equation}
where $C=\frac{2048}{e^2(e-1)}$ and  $\|H\|\le 4n$.
In the bound for $\tr(VV^{\dagger})$, we directly calculate $V_3V_{3}^{\dagger}$ with $V_3=[A,[H,B]]$, and apply Eq.~\eqref{Eq:vkvk'} for the other terms $V_kV_{k'}^{\dagger}$.
Combining these bounds on the individual terms, we obtain the interference bound. Note that the term $\tr(|[A,[H,B]]|^2)$ is evaluated numerically.

We use the above concrete bound to draw the theoretical interference bound curve in Figure~\ref{Fig:PF12}. Compared to the empirical curve, the theoretical interference bound performs well for small system sizes, although we expect it to deviate for larger $n$. This is because to keep the error below a threshold $\varepsilon$, the quantity $r\|\mathscr{M}_1\|\le \frac{t^2}{2r}\|[A,B]\|= \mathcal O(\frac{nt^2}{r})$ in the second line of Eq.~\eqref{ap:eq:PF1infer} must be sufficiently small, which is the assumption in Theorem~\ref{Th:interference}.
Consequently, we choose the number of segments to be
\begin{equation}
r=\max\{\mathcal O(\|[A,B]\|t^2/2),   \mathcal O(n^{0.5}t/\varepsilon), \mathcal O(n^{0.25}t^{1.5}/\varepsilon^{0.5})    \}.
\end{equation}
For large $t$ and $n$, the first term dominates, giving $r= \mathcal O(\|[A,B]\|t^2)$. However, in our classically tractable regime $n\le 20$, the second term dominates, giving $r=\mathcal O(n^{0.5}t/\varepsilon)$.

%=================================================
\subsubsection{Triangle bound}
\label{Sec:numerical_near_tri}

For the triangle inequality with $H=A+B$, we directly use
Eqs.~\eqref{Eq;PF1AB} and \eqref{Eq:PF2AB}, evaluating $\tr(|[B,[B,A]]|^2)$, $\tr(|[A,[B,A]]|^2)$, and $\tr(|[B,A]|^2)$ numerically.

\subsubsection{Counting bound}\label{app_count_near}

For the counting bound, we also use Eqs.~\eqref{Eq;PF1AB} and \eqref{Eq:PF2AB} but do not evaluate the trace numerically.
We expand $[A,B]$ as the sum of 8 different types of Pauli strings, namely $X_jY_{j+1}Z_{j+2}$, $Z_jY_{j+1}X_{j+2}$, $Y_jX_{j+1}Z_{j+2}$,   $Z_jX_{j+1}Y_{j+2}$, $X_jZ_{j+1}Y_{j+2}$, $Y_jZ_{j+1}X_{j+2}$, $X_jY_{j+1}$, and $Y_jX_{j+1}$, with the coefficients $2i$ or $-2i$.
To make the trace nonzero, the Pauli operator in $[A, B]$ should be the same as that of $[A,B]^{\dagger}$. Thus each term only has one choice. Then $\tr([A,B]^2)\le d\cdot 8\cdot 2^2\cdot n$. It follows that $T_1'\le 4\sqrt{2}n$ and $R_{\ell_2}\le 2\sqrt{2}nt^2/r$. The results for $T'_2$ are obtained similarly.

\subsection{Power-law interactions}\label{Sec:numerical_power}

Consider the one-dimensional Heisenberg model with power-law interactions of exponent $\alpha>0$ and a random magnetic field $h_j\in [-1,1]$, and the Hamiltonian shows
\begin{equation}\label{Eq:powerlaw}
    H = \sum^{n-1}_{j=1} \sum^n_{k = j+1} \frac{1}{|j - k|^\alpha} \left(X_j X_{k} + Y_j Y_{k} + Z_j Z_{k}\right)+\sum^{n}_{j=1}  h_j Z_j.
\end{equation}
In Appendix~\ref{Sec:App_powerlaw} we have shown the asymptotic scaling of the simulation error for PF1 and PF2 methods.
To compare with the empirical performance, we apply Theorem~\ref{Th:PF1} with the X-Y-Z ordering \cite{childs2018toward} of the Hamiltonian and calculate the prefactors. Specifically, we write the Hamiltonian into
$H=H_X+H_Y+H_Z$ with $H_X$ only containing the terms with Pauli $X$ operators, and similarly for $H_Y$ and $H_Z$. We further denote $H_Z=H_{Z1}+H_{Z2}$, where $H_{Z1}$ is the power-law term and $H_{Z2}$ is the magnetic field term.

\subsubsection{Triangle bound}
\label{Sec:numerical_power_theo}

According to Theorem~\ref{Th:PF1}, the average error is
\begin{equation}\label{Eq:t1'}
R_{\ell_2}(\mathscr{U}^r_1(t/r),U_0(t))\le \frac{t^2}{2r}T_1',
\end{equation}
where
\begin{equation}\label{ap:eq:Pxyz}
\begin{aligned}
T_1'
=\left(\frac{\tr\big(|[H_X,H_Y+H_Z]|^2\big)}{d}\right)^{\frac{1}{2}}+\left(\frac{\tr\big(|[H_Y,H_Z]|^2\big)}{d}\right)^{\frac{1}{2}}.
\end{aligned}
\end{equation}
One can numerically evaluate the above commutators and thus get a concrete bound. The bound for PF2 (Theorem~\ref{Th:PF2}) can be obtained similarly.

\subsubsection{Counting bound}\label{app_count_power}

To compute the theoretical bound in Eq.~\eqref{Eq:t1'}, we still need to calculate the trace of the commutator numerically, which is classically intractable for large systems. To explore the performance of these bounds for large $n$, here we analytically bound $T_1'$, giving an error bound that we call the counting bound.
We first expand the second term as
\begin{equation}\label{Eq:powerYZ}
\begin{aligned}
\frac1{d}\tr\bigl([H_Y,H_Z][H_Y,H_Z]^{\dag}\bigr)&=\frac1{d}\tr\bigl([H_Y,H_{Z1}+H_{Z2}][H_Y,H_{Z1}+H_{Z2}]^{\dag}\bigr)\\
&=\frac1{d} \tr\bigl(|[H_Y,H_{Z1}]|^2\bigr) + \frac1{d} \tr\bigl(|[H_Y,H_{Z2}]|^2\bigr) -\frac2{d}\tr\bigl([H_Y,H_{Z1}][H_Y,H_{Z2}]\bigr).
\end{aligned}
\end{equation}
For the first term in Eq.~\eqref{Eq:powerYZ}, $[H_Y,H_{Z1}]$ can be expressed as a sum of terms of the form $[Y_jY_k, Z_{j'}Z_{k}]=2iY_jZ_{j'}X_k$.
To make the trace nonzero, the Pauli operator in $[H_Y,H_{Z1}]$ should be the same as that of $[H_Y,H_{Z1}]^{\dag}$. Therefore
\begin{equation}\label{Eq:powerHYZ}
\begin{aligned}
\frac1{d}\tr\bigl([H_Y,H_{Z1}][H_Y,H_{Z1}]^{\dag}\bigr)&=\sum_{j<k,~j'< k', \mathrm{one~eq}}  \left[2\frac{1}{|j-k|^{\alpha}}\frac{1}{|j'-k'|^{\alpha}}\right]^2= \sum_{j,j',k}  4\frac{1}{|j-k|^{2\alpha}}\frac{1}{|j'-k|^{2\alpha}}.
\end{aligned}
\end{equation}
Here ``$\mathrm{one~eq}$'' means that
the pair $(j,k)$ and $(j',k')$ have one coincidence. Similarly, for the second term in Eq.~\eqref{Eq:powerYZ}, $[H_Y,H_{Z2}]$ can be expressed as a sum of terms of the form $[Y_jY_k, h_kZ_{k}]=2ih_kY_jX_k$, so
\begin{equation}\label{Eq:powerHYZ2}
\begin{aligned}
\frac1{d}\tr\bigl([H_Y,H_{Z2}][H_Y,H_{Z2}]^{\dag}\bigr)&=\sum_{j<k}  \left(2\frac{1}{|j-k|^{\alpha}}\right)^2 (h_j^2+h_k^2)\le \sum_{j<k} 8\frac{1}{|j-k|^{2\alpha}},
\end{aligned}
\end{equation}
using $h_j^2\leq 1$.
For the third term, we have $\tr\bigl([H_Y,H_{Z1}][H_Y,H_{Z2}]\bigr)=0$ since the Pauli operators in the first and second commutator have weight 3 and 2, respectively.

We can then consider the first term in Eq.~\eqref{ap:eq:Pxyz} and expand it in the same way as in Eq.~\eqref{Eq:powerYZ}, giving
\begin{equation}
\begin{aligned}
\frac1{d}\tr\big(|[H_X,H_Y+H_Z]|^2\big)=\frac1{d} \tr\bigl(|[H_X,H_Y]|^2\bigr) + \frac1{d} \tr\bigl(|[H_X,H_Z]|^2\bigr) -\frac2{d}\tr\bigl([H_X,H_Y][H_X,H_Z]\bigr).
\end{aligned}
\end{equation}
The first two terms are similar to the previous cases.
In particular, $\tr\big(|[H_X,H_Y]|^2\big)=\tr\big(|[H_Y,H_{Z1}]|^2\big)<\tr\big(|[H_Y,H_Z]|^2\big)$ and $\tr\big(|[H_X,H_Z]|^2\big)=\tr\big(|[H_Y,H_Z]|^2\big)$.
We claim that the last term is negative:
\begin{equation}
\begin{aligned}
 -\frac2{d}\tr\bigl([H_X,H_Y][H_X,H_Z]\bigr)&=-\frac2{d}\tr\bigl([H_X,H_Y][H_X,H_{Z1}]\bigr)<0,
 \end{aligned}
\end{equation}
To see this, observe that
the term in the first commutator $[H_X,H_Y]$ has the form $2iX_jZ_kY_l$, and the term in the second commutator $[H_X,H_{Z1}]$ has the form $-2iX_{j'}Z_{k'}Y_{l'}$. To get a nonzero trace, these expressions must coincide, which leads to an overall negative sign.
Actually it shows exactly
\begin{equation}
\begin{aligned}
 -\frac2{d}\tr\bigl([H_X,H_Y][H_X,H_Z]\bigr)&=-\frac2{d}\tr\bigl([H_X,H_Y][H_X,H_{Z1}]\bigr)\\
 &=-\sum_{j,k,l} 8\frac{1}{|j-k|^{\alpha}}\frac{1}{|k-l|^{\alpha}}\frac{1}{|j-l|^{\alpha}}\frac{1}{|l-k|^{\alpha}}.
 \end{aligned}
\end{equation}
As a result, we can upper bound $T_1'$ in Eq.~\eqref{ap:eq:Pxyz} as
\begin{equation}
\begin{aligned}
T_1'&= \left(\frac{\tr\bigl(|[H_X,H_Y]|^2\bigr)+ \tr\bigl(|[H_X,H_Z]|^2\bigr)}{d}\right)^{\frac{1}{2}} +\left(\frac{\tr\bigl(|[H_Y,H_Z]|^2\bigr)}{d}\right)^{\frac{1}{2}}\\
&< (\sqrt{2}+1)\left(\frac{\tr\bigl(|[H_Y,H_Z]|^2\bigr)}{d}\right)^{\frac{1}{2}}\\
&< 2(\sqrt{2}+1)\left[\sum_{j,j',k}  \frac{1}{|j-k|^{2\alpha}}\frac{1}{|j'-k|^{2\alpha}}+\sum_{j<k}\frac{2}{|j-k|^{2\alpha}}\right]^{\frac{1}{2}},
\end{aligned}
\end{equation}
where we insert the results in Eq.~\eqref{Eq:powerHYZ} and \eqref{Eq:powerHYZ2}.

%=================================================

\subsection{Other empirical results}\label{sec:numerical_empirical}

We conclude in this section by presenting some numerical results that shed light on other aspects of the average-case performance of product formulas.

\paragraph{Higher-order formulas}
In addition to the results for PF1 and PF2 described above, we also test the empirical performance of PF4 and PF6. Because we do not have concrete prefactors for our theoretical results in these cases, we only show the empirical data and the corresponding extrapolation curves. See Fig.~\ref{fig:PF46heisenberg} for the nearest-neighbor case in Eq.~\eqref{Eq:heisenberg} and Fig.~\ref{fig:PF46power} for the one dimensional Heisenberg model with power-law interactions $\alpha=0$ and $4$ in Eq.~\eqref{eq:power}.
For comparison, we list the empirical scaling and theoretical scaling in Table~\ref{tab:pf46}.

\begin{figure}[!htb]
    \centering
    \includegraphics[width=0.6\textwidth]{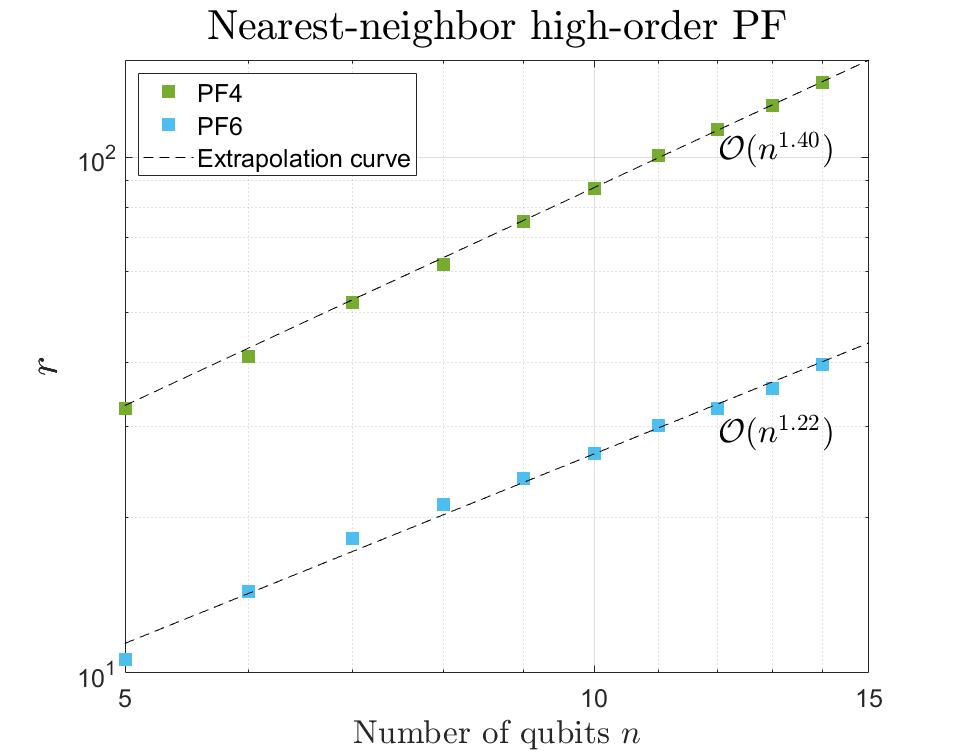}
    \caption{PF4 and PF6 for nearest-neighbor Hamiltonian}
   \label{fig:PF46heisenberg}
\end{figure}

\begin{figure}[!htb]
    \centering
    \includegraphics[width=1\textwidth]{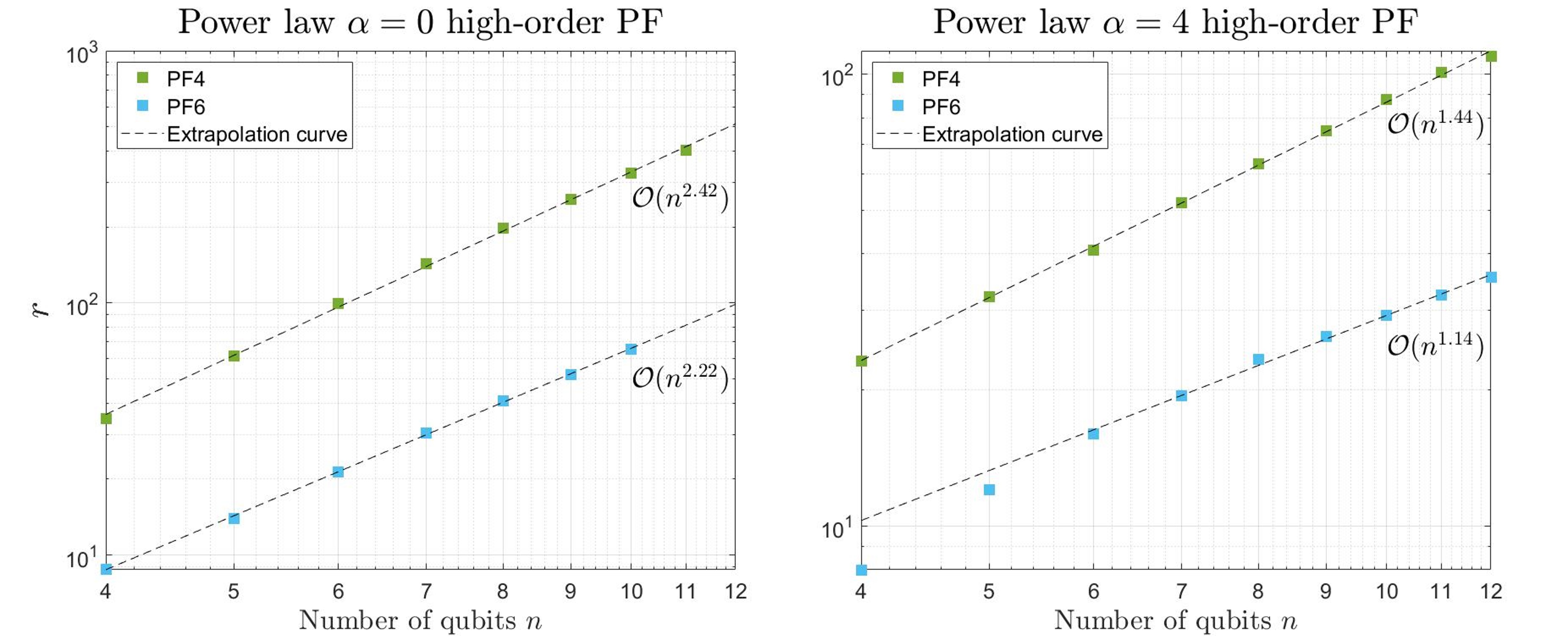}
    \caption{PF4 and PF6 for Hamiltonians with power-law interactions $\alpha=0,4$.}
   \label{fig:PF46power}
\end{figure}

\begin{table}[!htb]
\begin{tabular}{|c|c|c|c|c|c|c|}
\hline
\multirow{2}{*}{Order} & \multicolumn{2}{c|}{Nearest-neighbor}      & \multicolumn{2}{c|}{Power-law  $\alpha=4$}      & \multicolumn{2}{c|}{Power-law  $\alpha=0$}      \\ \cline{2-7}
                       & Empirical & Theoretical & Empirical & Theoretical & Empirical & Theoretical \\ \hline
$p=4$                    &    $\mathcal O(n^{1.40})$        &   $\mathcal O(n^{1.375})$            &  $\mathcal O(n^{1.44})$          &    $\mathcal O(n^{1.50})$             &     $\mathcal O(n^{2.42})$       &        $\mathcal O(n^{2.75})$         \\ \hline
$p=6$                    &     $\mathcal O(n^{1.22})$       &   $\mathcal O(n^{1.25})$            &  $\mathcal O(n^{1.14})$          &    $\mathcal O(n^{1.17})$            &      $\mathcal O(n^{2.22})$       &     $\mathcal O(n^{2.5})$            \\ \hline
\end{tabular}
\caption{Empirical asymptotic scaling and theoretical scaling of $r$ for  nearest-neighbor, and power-law interaction $\alpha=0,4$ Hamiltonians with PF4 and PF6.}\label{tab:pf46}
\end{table}

\paragraph{Other 1-design inputs}
Except for this paragraph, all the numerical examples are obtained according to Haar-random inputs for simplicity. However, our results can also be applied to other 1-design input ensembles. Here we numerically test the performance of two other ensembles: (1) local Haar-random inputs, i.e., $\bigotimes_{i=1}^n u_i\ket{0}^{\otimes n}$ with each $u_i$ being a single-qubit Haar-random unitary, and (2) a uniformly random computational basis state, i.e., a locally random state where each $u_i$ is chosen independently from $\{\id,X\}$, each with probability $1/2$. We compare the performance of PF1 for these two 1-design inputs with Haar-random inputs in Figure~\ref{fig:otherinputs}, and conclude that they perform similarly.

\begin{figure}[!htb]
    \centering
    \includegraphics[width=0.6\textwidth]{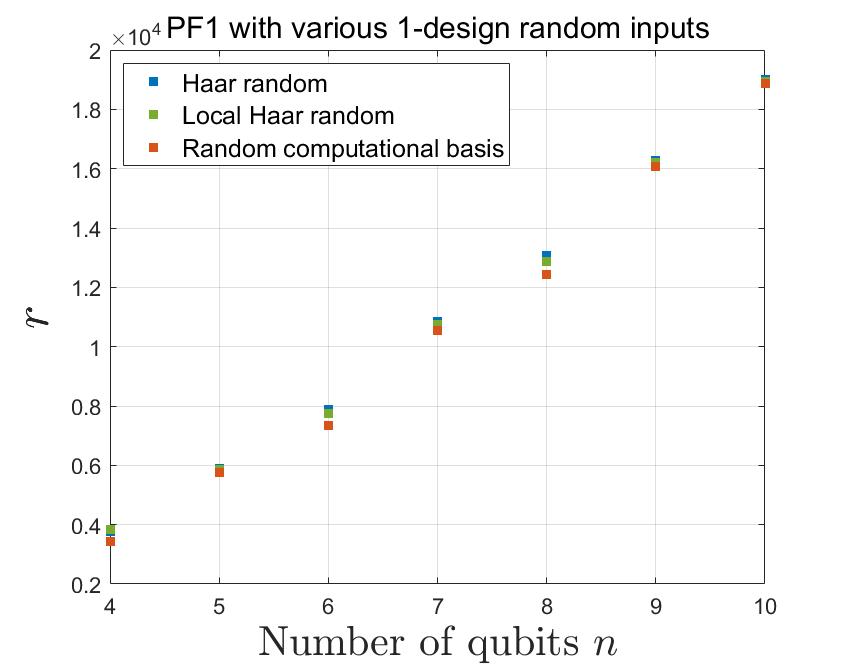}
    \caption{Comparison of three 1-design input ensembles: Haar random, locally random, and random computational basis state.}
    \label{fig:otherinputs}
\end{figure}

\paragraph{Statistical fluctuations}
We also test the statistical fluctuations of Haar-random inputs.
For a given Hamiltonian $H$, in all the numerical tests, we generate 20 Haar-random inputs and obtain an average Trotter number $r(t,\varepsilon,H)=\min\{ r: \mathbb{E}_{\psi}\|\mathscr{U}^r_p(t/r)\ket{\psi} - e^{-itH}\ket{\psi}\|_{2}\le \varepsilon \}$ such that the average error is below $\varepsilon=10^{-3}$.
We define the variable $\sqrt{S}(\psi)=\|\mathscr{U}^r_p(t/r)\ket{\psi} - e^{-itH}\ket{\psi}\|_{2}$. When $r=r(t,\varepsilon,H)$, the mean value of this variable is $\mathbb{E}_{\psi}(\sqrt{S}(\psi))=R_{\ell_2}(e^{-itH},\mathscr{U}^r_p(t/r))=\varepsilon=10^{-3}$. We numerically compute the standard deviation of $\sqrt{S}(\psi)$ $\mathcal{SD}(\sqrt{S}(\psi))$ for 20 Haar-random inputs. We test the one-dimensional Heisenberg model with PF1 and PF2 and the power-law interaction model with PF1, as shown in Fig.~\ref{fig:SD}. We observe that when $\mathbb{E}_{\psi}(\sqrt{S}(\psi))=10^{-3}$, the standard deviation decreases with the dimension of the system $d=2^n$ and its extrapolation curves match $\mathcal{SD}(\sqrt{S}(\psi))=\mathcal O(d^{-0.5})$.

In our theoretical analysis, we have not obtained the mean value or standard deviation of $\sqrt{S}(\psi)$. Instead, we estimate properties of its square $S(\psi)=2- \braket{\psi|(\mathscr{U}^r_p(t/r))^{\dagger}e^{-iHt}+e^{iHt}\mathscr{U}^r_p(t/r)|\psi}$, namely the mean value and the standard deviation of $S(\psi)$.
From Lemma~\ref{Lemma:S}, when $\mathbb{E}_{\psi}(S(\psi))$ reaches a constant, its standard deviation has the scaling
$\mathcal{SD}(S(\psi))=\mathcal O(d^{-0.5})$. This suggests that the relationship between the mean value and the standard deviation of $\sqrt{S}$ is similar to that of $S$.
\begin{figure}[!htb]
    \centering
    \includegraphics[width=1\textwidth]{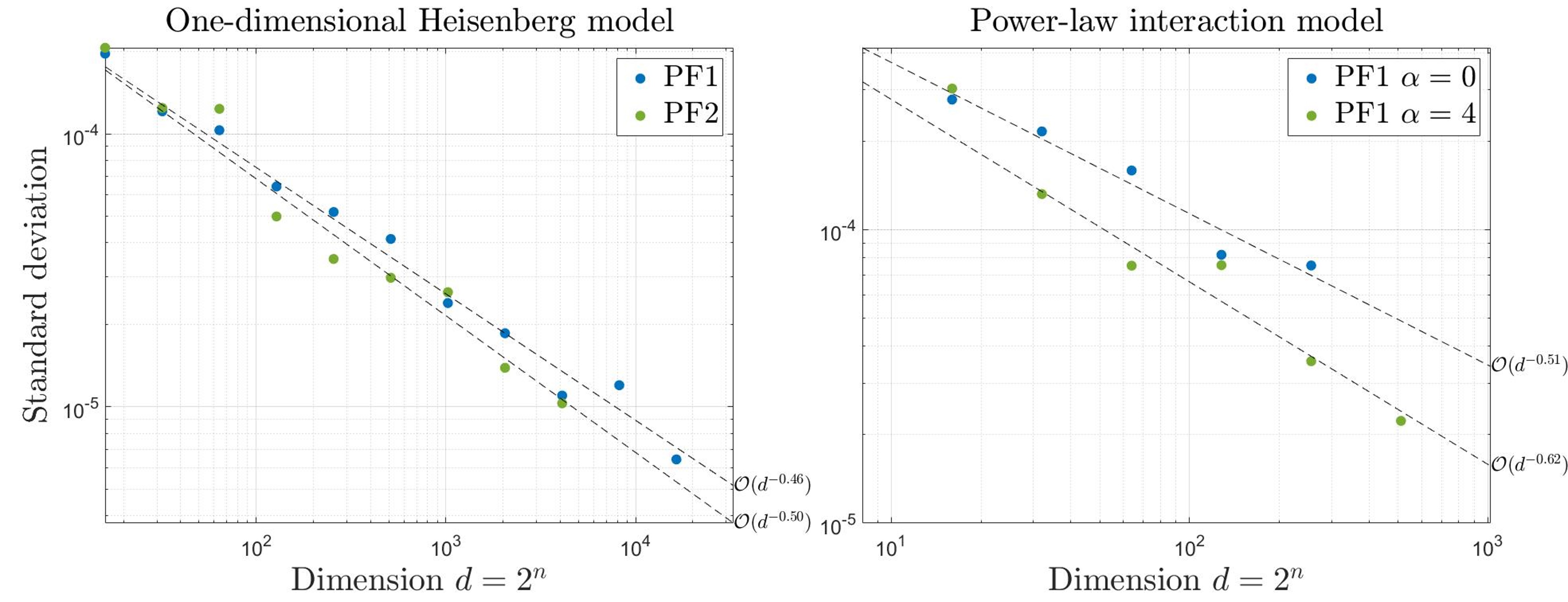}
    \caption{Standard deviation of algorithmic error for Haar-random inputs.}
   \label{fig:SD}
\end{figure}

\end{appendix}
\end{document}